\newcommand{\pdfver}{arxiv} 
\ifdefstring{\pdfver}{camred}{
  \setcopyright{rightsretained}
  \acmPrice{}
  \acmDOI{10.1145/3571205}
  \acmYear{2023}
  \copyrightyear{2023}
  \acmSubmissionID{popl23main-p85-p}
  \acmJournal{PACMPL}
  \acmVolume{7}
  \acmNumber{POPL}
  \acmArticle{12}
  \acmMonth{1}
  \received{2022-07-07}
  \received[accepted]{2022-11-07}
}{
  \acmJournal{PACMPL}
  \acmVolume{0}
  \acmNumber{0}
  \acmArticle{0}
  \acmYear{2023}
  \acmMonth{1}
  \acmDOI{10.1145/nnnnnnn.nnnnnnn}
  \startPage{1}
  \setcopyright{rightsretained}
}
\newcommand{\mathcalcommand}[1]{\mathcal{#1}}
\newcommand{\mcC}{\mathcalcommand{C}}
\DeclareMathAlphabet{\mathpzc}{T1}{pzc}{m}{it}
\newcommand*{\commentout}[1]{}
\newcommand{\rev}[1]{{#1}}
\definecolor{lred}{rgb}{1.0, 0.5, 0.5}
\definecolor{lorange}{rgb}{1.00, 0.90, 0.20}
\definecolor{lgreen}{rgb}{0.35, 0.95, 0.35}
\definecolor{lime}{rgb}{0.9, 1.0, 0.6}
\newcommand*{\hltred}[1]{{\setlength{\fboxsep}{2pt}\colorbox{lgreen}{#1}}}
\newcommand*{\hltorg}[1]{{\setlength{\fboxsep}{2pt}\colorbox{lorange}{#1}}}
\newcommand*{\hlmAux}[1]{{\setlength{\fboxsep}{2pt}\colorbox{lgreen}{$\displaystyle#1$}}}
\newcommand*{\hlm}[1]{
  \mathchoice
      {\hlmAux{\displaystyle{#1}}}
      {\hlmAux{\textstyle{#1}}}
      {\hlmAux{\scriptstyle{#1}}}
      {\hlmAux{\scriptscriptstyle{#1}}}%
}
\newcommand*{\hlnAux}[1]{{\setlength{\fboxsep}{2pt}\colorbox{lorange}{$\displaystyle#1$}}}
\newcommand*{\hln}[1]{
  \mathchoice
      {\hlnAux{\displaystyle{#1}}}
      {\hlnAux{\textstyle{#1}}}
      {\hlnAux{\scriptstyle{#1}}}
      {\hlnAux{\scriptscriptstyle{#1}}}%
}
\newcommand{\ind}[1]{\smash{\mathbf{1}_{[{#1}]}}}
\newcommand*{\dom}{\mathrm{dom}}
\newcommand*{\defeq}{\triangleq} 
\DeclareMathOperator*{\argmax}{argmax}
\newcommand{\lip}{\mathrm{Lip}}
\newcommand*{\B}{\mathbb{B}}
\newcommand*{\D}{\mathbb{D}}
\newcommand*{\E}{\mathbb{E}}
\newcommand*{\N}{\mathbb{N}}
\newcommand*{\R}{\mathbb{R}}
\newcommand*{\cD}{\mathcal{D}}
\newcommand*{\cL}{\mathcal{L}}
\newcommand*{\cN}{\mathcal{N}}
\newcommand*{\cP}{\mathcal{P}}
\newcommand*{\cT}{\mathcal{T}}
\newcommand*{\aD}{\smash{\cD^\sharp}}
\newcommand*{\db}[1]{\ensuremath{{\llbracket #1 \rrbracket}}}
\newcommand*{\edb}[1]{\ensuremath{\smash{{\llparenthesis #1 \rrparenthesis}^\sharp}}}
\newcommand*{\cdb}[1]{\ensuremath{\smash{{\llbracket #1 \rrbracket}^\sharp}}}
\newcommand*{\ctr}[2]{\ensuremath{{\overline{#1}^{#2}}}}
\newcommand*{\code}[1]{\mathtt{{#1}}}
\newcommand*{\cname}{\code{name}}
\newcommand*{\ctrue}{\code{true}}
\newcommand*{\cdist}[1]{\code{dist}_\code{#1}}
\newcommand*{\cpdf}[1]{\code{pdf}_\code{#1}}
\newcommand*{\cnor}{\cdist{N}} 
\newcommand*{\cpdfnor}{\cpdf{N}}
\newcommand*{\csample}{\code{sam}}
\newcommand*{\cobs}{\code{obs}}
\newcommand*{\cskip}{\code{skip}}
\newcommand*{\cif}{\code{if}}
\newcommand*{\celse}{\code{else}}
\newcommand*{\cwhile}{\code{while}}
\newcommand*{\cstr}[1]{\texttt{"}\mathtt{#1}\texttt{"}}
\newcommand*{\smath}[1]{\mathit{{#1}}}
\newcommand*{\scond}{\smath{cond}}
\newcommand*{\sseq}{\smath{seq}}
\newcommand*{\strue}{\smath{true}}
\newcommand*{\sfalse}{\smath{false}}
\newcommand*{\fv}{\smath{fv}}
\newcommand*{\fix}{\smath{fix}}
\newcommand*{\norm}[1]{{\lVert}{#1}{\rVert}_2}
\newcommand*{\sdom}[1]{\mathsf{{#1}}}
\newcommand*{\Fn}{\sdom{Fn}}
\newcommand*{\Name}{\sdom{Name}}
\newcommand*{\NameEx}{\sdom{NameEx}}
\newcommand*{\State}{\sdom{St}}
\newcommand*{\Str}{\sdom{Str}}
\newcommand*{\Var}{\sdom{Var}}
\newcommand*{\AVar}{\sdom{AVar}}
\newcommand*{\PVar}{\sdom{PVar}}
\newcommand*{\DistEx}{\sdom{DistEx}}
\newcommand*{\LamEx}{\sdom{LamEx}}
\newcommand*{\op}{\mathit{op}}
\newcommand*{\pr}{\mathit{pr}}
\newcommand*{\like}{\mathit{like}}
\newcommand*{\val}{\mathit{val}}
\newcommand*{\sampled}{\mathit{cnt}}
\newcommand*{\pfun}[2]{\ensuremath{\smash{ \ifstrempty{#2} {p_{#1}} {p_{#1}^{\langle#2\rangle}} }}}
\newcommand*{\vfun}[1]{\ensuremath{\smash{ v_{#1} }}}
\newcommand*{\repname}{\smath{rv}}
\newcommand*{\Statesub}{\State_{\square}}
\newcommand*{\getbase}[1]{\smath{#1}}
\newcommand*{\getpvar}{\getbase{pvars}}
\newcommand*{\getval }{\getbase{vals}}
\newcommand*{\getpr  }{\getbase{prs}}
\newcommand*{\getpvarsub}{\getpvar_\square}
\newcommand*{\getvalsub }{\getval _\square}
\newcommand*{\getprsub  }{\getpr  _\square}
\newcommand*{\getprsubi }[1]{\smash{\getpr_\square^{\langle #1 \rangle}}}
\newcommand*{\noerr}{\smath{noerr}}
\newcommand*{\used }{\smath{used}}
\newcommand*{\usedm}{\smath{used}_{-}}
\newcommand{\oo}{\raisebox{-0.1pt}{\scalebox{1.3}{$\circ$}}}
\newcommand{\xx}{\hltred{$\times$}}
\newcommand{\xxadm}{\hltorg{$\times$}}
\newcommand{\ours}{\mathit{ours}}
\newcommand{\mytt}[1]{{\tt\small #1}}
\begin{document}

\title[Smoothness Analysis and Selective Reparameterisation]
      {Smoothness Analysis for Probabilistic Programs with Application to Optimised Variational Inference}


\author{Wonyeol Lee}
\affiliation{
  \department{Computer Science}          
  \institution{Stanford University}      
  \country{USA}                          
}
\email{wonyeol@cs.stanford.edu}          

\author{Xavier Rival}
\affiliation{
  \institution{INRIA Paris, Département d'Informatique de l'ENS, and CNRS, PSL University, Paris}          
  \country{France}                   
}
\email{rival@di.ens.fr}         

\author{Hongseok Yang}
\affiliation{
  \department{School of Computing and Kim Jaechul Graduate School of AI}        
  \institution{KAIST} 
  \country{South Korea}                  
}
\email{hongseok.yang@kaist.ac.kr}          
\affiliation{
  \department{Discrete Mathematics Group}
  \institution{Institute for Basic Science (IBS)}
  \country{South Korea}                 
}

\begin{abstract}
We present a static analysis for discovering differentiable or more generally smooth parts of a given probabilistic program, and show how the analysis can be used to improve the pathwise gradient estimator, one of the most popular methods for posterior inference and model learning. Our improvement increases the scope of the estimator from differentiable models to non-differentiable ones without requiring manual intervention of the user; the improved estimator automatically identifies differentiable parts of a given probabilistic program using our static analysis, and applies the pathwise gradient estimator to the identified parts while using a more general but less efficient estimator, called score estimator, for the rest of the program. Our analysis has a surprisingly subtle soundness argument,
partly due to the misbehaviours of some target smoothness properties when viewed from the perspective of program analysis designers. For instance, some smoothness properties, such as partial differentiability and partial continuity, are not preserved by function composition, and this makes it difficult to analyse sequential composition soundly without heavily sacrificing precision. We formulate five assumptions on a target smoothness property, prove the soundness of our analysis under those assumptions, and show that our leading examples satisfy these assumptions.
We also show that by using information from our analysis instantiated for differentiability,
\rev{our improved gradient estimator satisfies an important differentiability requirement and thus computes the correct estimate on average (i.e., returns an unbiased estimate) under a regularity condition.}
Our experiments with representative probabilistic programs in the Pyro language show that our static analysis is capable of identifying smooth parts of those programs accurately, and making our improved pathwise gradient estimator exploit all the opportunities for high performance in those programs.
\end{abstract}

\begin{CCSXML}
<ccs2012>
   <concept>
       <concept_id>10011007.10010940.10010992.10010993</concept_id>
       <concept_desc>Software and its engineering~Correctness</concept_desc>
       <concept_significance>500</concept_significance>
       </concept>
   <concept>
       <concept_id>10011007.10010940.10010992.10010998.10011000</concept_id>
       <concept_desc>Software and its engineering~Automated static analysis</concept_desc>
       <concept_significance>500</concept_significance>
       </concept>
   <concept>
       <concept_id>10002950.10003648.10003662.10003664</concept_id>
       <concept_desc>Mathematics of computing~Bayesian computation</concept_desc>
       <concept_significance>500</concept_significance>
       </concept>
   <concept>
       <concept_id>10002950.10003648.10003670.10003675</concept_id>
       <concept_desc>Mathematics of computing~Variational methods</concept_desc>
       <concept_significance>500</concept_significance>
       </concept>
 </ccs2012>
\end{CCSXML}

\ccsdesc[500]{Software and its engineering~Correctness}
\ccsdesc[500]{Software and its engineering~Automated static analysis}
\ccsdesc[500]{Mathematics of computing~Bayesian computation}
\ccsdesc[500]{Mathematics of computing~Variational methods}

\keywords{{smoothness}, static analysis, probabilistic programming, variational inference}  

\maketitle


\section{Introduction}
\label{sec:intro}

Probabilistic programs define models from machine learning and statistics, and are used to analyse datasets from a wide range of applications~\cite{Stan,InferNet,Tabular,goodman_uai_2008,Mansinghka-venture14,wood-aistats-2014,TolpinMYW16,Hakaru,psi:cav:16,TranKDRLB16,SalvatierWF16,siddharth2017learning,pmlr-v84-ge18b,TranHMSVR18,BinghamCJOPKSSH19}. These programs are written in languages with special runtimes, called inference engines, which can be used to answer probabilistic queries, such as posterior inference and marginal likelihood estimation, or to learn model parameters in those programs, such as weights of neural networks. Whether a probabilistic program is useful for, for instance, discovering a hidden pattern in a given dataset or making an accurate prediction largely lies in these inference engines. These engines should compute accurate approximations or good model parameters within a fixed time budget. It is, thus, not surprising that substantial research efforts have been made to develop efficient inference algorithms and their implementations (as inference engines)~\cite{WingateGSS11,WingateSG11,WingateBBVI13,KucukelbirRGB15,NoriHRS14,ChagantyNR13,Mansinghka-venture14,SchulmanHWA15,RitchieHG16,ZhouYTR20,HoltzenBM20,van2018introduction}.

We are concerned with smoothness\footnote{%
In mathematics, ``smoothness'' typically refers to the {\em specific} property of functions: being infinitely differentiable.
In this paper, we override the term to denote a {\em set} of properties of functions describing well-behavedness (e.g., differentiability).%
} properties of probabilistic programs,
which have been exploited by performant posterior-inference and model-learning algorithms and engines. For instance, when probabilistic programs are differentiable (in the sense that they define differentiable unnormalised densities), their posteriors can be inferred by Hamiltonian Monte Carlo~\cite{Neal-HMC}, one of the best performing MCMC algorithms. Also, in that case, their posteriors and model parameters can be inferred or learnt using the pathwise gradient estimator~\cite{KingmaICLR14,RezendeICML14},  a popular technique for estimating the gradient of a function using samples. We also point out that the need for smoothness arises in a broader context of machine learning and computer science; Lipschitz continuity is one of the desired or at least recommended properties for neural networks \cite{pmlr-v70-arjovsky17a,KimPM21}, and also differentiability commonly features as a requirement for pieces of code inside simulation software and cyber physical systems, where differential equations are used to specify the environment~\cite{Platzer18}.

We present a static analysis that enables optimised posterior inference and model learning for probabilistic programs. We develop a static analysis that discovers differentiable or more generally smooth parts of given probabilistic programs, and show how the analysis can be used to improve the pathwise gradient estimator. Our improvement increases the scope of the estimator from differentiable to non-differentiable models, without requiring any intervention from the user; the improved estimator automatically identifies differentiable parts of probabilistic programs using our static analysis, and applies the pathwise gradient estimator to the identified parts while using a more general but less efficient estimator, called score estimator~\cite{WilliamsMLJ1992,RanganathGB14}, for the rest of the programs.

Our static analysis for smoothness has a surprisingly subtle soundness argument, partly due to the misbehaviours of some target smoothness properties when viewed from the perspective of program analysis designers. For instance, some smoothness properties, such as partial differentiability and partial continuity, are not preserved by function composition, and this makes it difficult to analyse sequential composition soundly without heavily sacrificing precision. In fact, overlooking such misbehaviours has been a source of errors in published static analyses for continuity~\cite{cgl:popl:10,cgl:jacm:12}.\footnote{{%
  The analysis in \cite{cgl:popl:10} infers the continuity property for multivariate programs, but it incorrectly joins two input-variable sets if a program is continuous with respect to each set {\em jointly}. Such a rule would hold if 
  separate per-input-variable continuity were considered, but it does not hold for multivariate joint continuity. Conversely, the analysis in \cite{cgl:jacm:12} considers a {\em per-input-variable} definition of continuity, but incorrectly assumes that this per-input-variable continuity is preserved by function composition.
  \rev{These (and other) issues make the two analyses unsound in several aspects (see \cref{sec:intro-more} for details).}
  We do not claim that these unsoundness issues are hard to fix. Instead, our point is that a similar issue may be introduced easily and remain undetected due to the subtlety in the soundness of a static analysis.%
}} We formulate five assumptions that clearly identify what a smoothness property should satisfy in order to avoid unsound analysis. Interestingly, these assumptions also determine what the property is allowed to violate.
For instance, they reveal that the smoothness property does not have to be closed under the limits of chains of smooth (partial) functions, although the closure under such limits, called admissibility, has often been used to justify proof rules about or static analysis of loops. Dispensing with the admissibility requirement broadens the scope of our program analysis non-trivially; some useful smoothness properties from mathematics fail to meet the requirement.

Our variant of the pathwise gradient estimator works by program transformation and non-standard execution.
It first transforms given probabilistic programs based on the results of our static analysis.
\rev{Then, our estimator executes the transformed programs according to a standard (sampling) semantics, and collects sampled values during the execution. Finally, using the collected values, the estimator
executes the original programs according to a non-standard (density) semantics this time. During the execution,}
our estimator computes a quantity involving differentiation, which becomes the estimate of the target gradient.
\rev{We prove that our estimator satisfies an important differentiability requirement and thus, under a regularity condition, it is correct}:
the computed estimate is unbiased, i.e., it is the target gradient when averaged over random choices made during execution. 

Our static analysis and variant of the pathwise gradient estimator have been implemented for a subset of the Pyro probabilistic programming language~\cite{BinghamCJOPKSSH19}. They have been successfully applied to the 13 representative Pyro examples, which include advanced models with deep neural networks, such as attend-infer-repeat~\cite{EslamiNIPS16} and single-cell annotation using variational inference~\cite{xu2021probabilistic}.
\rev{For each of these examples, Pyro provides a (default) selective use of the pathwise gradient estimator but without any correctness guarantee.
  Our analysis and improved estimator automatically reproduced those uses,
  and proved that in those use cases, the estimator satisfies an important differentiability requirement and it is, thus, correct (i.e., unbiased) under a regularity condition (which needs to be discharged separately).}

\rev{We summarise the main contributions of the paper:
\begin{itemize}
\item
  We present a program analysis for smoothness properties such as differentiability, and explain a subtle soundness argument for the analysis. Our argument identifies five assumptions for target smoothness properties, which are violated by some well-known smoothness properties and can help detect and prevent soundness errors in static analyses for smoothness properties~(\cref{s:sa:new}).  
\item
  We present a gradient estimator for probabilistic programs that improves the well-known pathwise gradient estimator using our program analysis.
  We also prove that our estimator satisfies an important differentiability requirement  and it is, thus, correct (namely unbiased) under a regularity condition (\cref{sec:generalised-gradient-estimator-new} and \cref{sec:var-sel-alg}).
\item We show that our program analysis and gradient estimator can be successfully applied to representative probabilistic programs in Pyro,
  and can prove that existing unproved optimisations for these programs satisfy the differentiability requirement (\cref{sec:impl-eval}).
\end{itemize}
The appendix (i.e., \cref{sec:intro-more}--\cref{sec:impl-eval-more}) includes omitted proofs and details, and can be found in \cite{LeeRY22}.}

\section{Informal Description of Basic Concepts and Our Approach}
\label{sec:overview}

We start by describing informally basic concepts and the goal of our approach, which we hope helps the reader to see the big picture of our technical contributions. To  simplify presentation, we use toy examples in the section. But we emphasise that our approach has been applied to representative Pyro programs that describe advanced machine-learning models with deep neural networks.


\vspace{2mm}
\noindent{\bf Probabilistic programming and variational inference.}\
In a probabilistic programming language (PPL), a program expresses a probabilistic model.
As an example, consider the program $c_m$ in \cref{fig:eg-model-guide}, which
describes a probabilistic model of the random variables $z_1$ and $z_2$ in $\R$
by specifying their \emph{unnormalised} density
\begin{align*}
 & p_{c_m}(z_1,z_2) = \cN(z_1; 0, 5) \cdot \cN(z_2; z_1, 3) \cdot (\ind{z_2 > 0} \cdot \cN(0; 1,1) + \ind{z_2 \leq 0} \cdot \cN(0; -2,1)),
\end{align*}
where $\cN(x; a,b)$ is the probability density of a normal distribution with mean $a$ and variance $b$,
and $\ind{\varphi}$ is the indicator function that returns $1$ if $\varphi$ holds and $0$ otherwise.
The first two $\cN$ factors in the equation come from the sample commands ($\csample$) in $c_m$. They are called prior distributions, and describe prior knowledge on two random variables named $z_1$ and $z_2$. The last factor comes from the if and observe commands ($\cif$ and $\cobs$), which express that an unnamed random variable is sampled and \rev{observed to be $0$} and its distribution is $\cnor(1,1)$ or $\cnor(-2,1)$ depending on whether $z_2$ is positive or not. This factor is called likelihood, and it states information about $z_1$ and $z_2$ that comes from an observed data point $0$. Ignore the third arguments of the sample commands of $c_m$ for now, which have no effect on $p_{c_m}$; they will be explained later.

\begin{figure*}[t]
  \begin{subfigure}[c]{\textwidth}
    \centering
    \small
    \begin{align*}
      c_m & =  \left(\,
      \begin{alignedat}{3}
        & \rlap{$x_1 := \csample(\cstr{z_1}, \cnor(0  , 5), \lambda y.y);$}
        \\[-0.5ex]
        & \rlap{$x_2 := \csample(\cstr{z_2}, \cnor(x_1, 3), \lambda y.y);$}
        \\[-0.5ex]
        & \cif \, (x_2 > 0) \, && \{ \cobs(\cnor(1,1), 0) \}
        \\[-0.5ex]
        & \celse\, && \{ \cobs(\cnor(-2,1), 0) \} 
      \end{alignedat}
      \,\,\right),
      &
      c_g & = \left(\,
      \begin{alignedat}{2}
        & x_1 := \csample(\cstr{z_1}, \cnor(\theta_1, 1), \lambda y.y);
        \\[-0.5ex]
        & x_2 := \csample(\cstr{z_2}, \cnor(\theta_2, 1), \lambda y.y)
        \end{alignedat}
      \,\right)\!.
    \end{align*}
  \end{subfigure}
  \vspace{-1.0em}
  \caption{A model $c_m$ and a guide $c_g$ in a PPL. Here $\cnor(a,b)$ is the distribution expression, and denotes the normal distribution with mean $a$ and variance $b$. }
  \label{fig:eg-model-guide}
\end{figure*}

\begin{figure*}
  \vspace{-0.3em}
  \begin{subfigure}[c]{\textwidth}
    \centering
    \small
    \begin{align*}
      c_g' &= \left(\,
      \begin{alignedat}{2}
        & x_1 := \csample(\cstr{z_1}, \hlm{\cnor(0, 1)}, \hlm{\lambda y.y + \theta_1});
        \\[-1.1ex]
        & x_2 := \csample(\cstr{z_2}, \hlm{\cnor(0, 1)}, \hlm{\lambda y.y + \theta_2})
        \end{alignedat}
      \,\right)\!,
      &
      c_g'' & = \left(\,
      \begin{alignedat}{2}
        & x_1 := \csample(\cstr{z_1}, \hlm{\cnor(0, 1)}, \hlm{\lambda y.y + \theta_1});
        \\[-0.5ex]
        & x_2 := \csample(\cstr{z_2}, \cnor(\theta_2, 1), \lambda y.y)
        \end{alignedat}
      \,\right)\!.
    \end{align*}
  \end{subfigure}
  \vspace{-1.0em}
  \caption{A fully (or selectively) reparameterised guide $c_g'$ (or $c_g''$).}
  \vspace{-0.7em}
  \label{fig:eg-repar}
\end{figure*}

The purpose of writing $c_m$ in a PPL, called \emph{model},
is to infer its \emph{normalised} probability density 
\begin{align*}
  \overline{p}_{c_m}(z_1,z_2) \defeq {p_{c_m}(z_1,z_2)} / {\smash{\medint\int} p_{c_m}(z_1,z_2)\, dz_1dz_2},
\end{align*}
also called normalised posterior density.
Intuitively, this normalised density
brings together two types of information about $z_1$ and $z_2$, the first from their prior distributions (expressed in the first and second lines of $c_m$), and the second from the observed data point $0$ that depends on $z_1$ and $z_2$ (the third and fourth lines of $c_m$).
This inference task is called \emph{posterior inference} problem.
Among a wide range of approaches to the problem, we focus on the approach called \emph{variational inference}, which forms the core of the recent combination of PPLs and deep learning.

In variational inference, we posit another program $c_g$, called \emph{guide},  that is simpler than $c_m$ and parameterised by $\theta$. Then, we approximate the normalised density of $c_m$ by $c_g$ with an optimal choice of $\theta$.
For instance, consider the program $c_g$ in \cref{fig:eg-model-guide}.
The program specifies the following already-normalised probability density
\begin{align*}
 p_{c_g, \theta}(z_1,z_2) &= \cN(z_1; \theta_1, 1) \cdot \cN(z_2; \theta_2, 1).
\end{align*}
It can serve as a guide program for $\smash{ \overline{p}_{c_m} }$.
To best approximate $\smash{ \overline{p}_{c_m} }$ by $p_{c_g, \theta}$,
variational inference aims at finding $\theta$ that minimises some notion of the discrepancy (called KL divergence) between $p_{c_g,\theta}$ and $\smash{ \overline{p}_{c_m} }$,
or equivalently that maximises the objective function $\cL$ (called evidence lower bound):
\begin{align*}
  & \arg\max_\theta \cL(\theta)
  & \text{for}
  \
  \cL(\theta) \defeq \E_{p_{c_g, \theta}(z_1,z_2)} \left[  f_\theta(z_1,z_2) \right]
  \ \text{with}\ 
  f_\theta(z_1,z_2) \defeq \log (p_{c_m}(z_1,z_2) / p_{c_g,\theta}(z_1,z_2)).
\end{align*}
A standard way to solve this optimisation problem is to apply the gradient-ascent algorithm:
starting from an initial value $\smash{ \theta^{(0)} }$ of $\theta$, compute $\smash{ \theta^{(t)} }$ iteratively by
$\theta^{(t+1)} \defeq \theta^{(t)} + \eta \cdot \nabla_\theta \cL(\theta^{(t)})$,
and return $\smash{ \theta^{(T)} }$ for a sufficiently large $T \in \N$. Here $\eta \in \R_{>0}$ denotes a learning rate.

A challenging part in the 
algorithm is
to compute $\nabla_\theta \cL(\theta)$. 
An exact computation of the gradient is mostly intractable due to the expectation inside $\cL$,
which hinders the gradient from having a closed-form formula.
Hence, in practice, we rather \emph{estimate} (not exactly compute) the gradient via a Monte Carlo method:
draw a random sample $(\hat{z}_1,\hat{z}_2)$ from some distribution $q_\theta$, apply some function $g_\theta$ to the sample,
and use the result as an estimate to the gradient, i.e.,
\begin{equation}
  \label{eq:grad-expt-estm}
  g_\theta(\hat{z}_1,\hat{z}_2) \approx \nabla_\theta \cL(\theta)
  \quad\text{for a sample $(\hat{z}_1,\hat{z}_2)$ drawn from $q_\theta(\hat{z}_1,\hat{z}_2)$}.
\end{equation}
An important desired property of such a gradient estimator is \emph{unbiasedness}, which states that the estimate is accurate in expectation:
$\E_{q_\theta(z_1,z_2)}[g_\theta(z_1,z_2)] = \nabla_\theta \cL(\theta)$.
\rev{The property is necessary for the algorithm to converge to a local optimum, and is, thus, desired.}

\vspace{2mm}
\noindent{\bf Gradient estimators for variational inference: SCE, PGE, and SPGE.}\ 
A standard estimator for 
$\nabla_\theta \cL(\theta)$ is the \emph{score estimator} (SCE)~\cite{WilliamsMLJ1992,RanganathGB14}, which is unbiased under mild conditions.
It estimates $\nabla_\theta \cL(\theta)$ by using the recipe in \cref{eq:grad-expt-estm}
with 
$q_\theta(z_1,z_2) = p_{c_g,\theta}(z_1,z_2)$
and
\begin{align*}
 g_\theta(z_1,z_2) &= f_\theta(z_1,z_2) \cdot \nabla_\theta \log q_\theta(z_1,z_2). 
\end{align*}
That is, the estimator draws a 
sample from the guide distribution $p_{c_g, \theta}$
and applies the above $g_\theta$ to obtain a gradient estimate.%
\footnote{\rev{The log term in $g_\theta$ comes from the well-known log-derivative trick:
  $\nabla_\theta q_\theta(z_1,z_2) = q_\theta(z_1,z_2) \cdot \nabla_\theta \log q_\theta(z_1,z_2)$.}}
It is applicable to a wide range of model-guide pairs while remaining unbiased,
but it is known to have a large approximation error (i.e., have a large variance).

The \emph{pathwise gradient estimator} (PGE)~\cite{KingmaICLR14,RezendeICML14} is another standard gradient estimator,
which is known to have a smaller approximation error than the SCE
and thus has been a preferred option against the SCE.
The PGE requires an additional program $c_g'$ that is a $\theta$-independent reparameterisation of the guide $c_g$.
A program $c'$ is said to be \emph{$\theta$-independent} if the probability densities of the sampled random variables in $c'$ are $\theta$-independent.
It is called a \emph{reparameterisation} of $c$ if $c$ and $c'$ sample the same set of random variables
and they have the same semantics on those variables in the following sense: when there are $n$ random variables, 
for any measurable $h : \R^n \to \R$,
we have $\E_{p_c(z)} [ h(v_c(z)) ] = \E_{p_{c'}(z)} [ h(v_{c'}(z)) ]$,
where $p_c : \R^n \to \R$ is the probability density of all $n$ random variables in $c$,
and $v_c : \R^n \to \R^n$ is the so called value function of $c$,
which applies the lambda functions in the third arguments of $c$'s sample commands to the corresponding random variables.
For example, $c_g'$ in \cref{fig:eg-repar} is a $\theta$-independent reparameterisation of $c_g$ for $\theta = (\theta_1,\theta_2)$. 
It has the probability 
density $p_{c_g'}(z_1,z_2) = \cN(z_1; 0, 1) \cdot \cN(z_2; 0, 1)$, and the value function
 $v_{c_g', \theta}(z_1,z_2) = (z_1 + \theta_1, z_2 + \theta_2)$.
Note that $p_{c_g'}$ does not depend on $\theta$, as required by the $\theta$-independence of $c_g$.
We can show this $c_g'$ is a reparameterisation of $c_g$ in \cref{fig:eg-model-guide}
by using the fact that $v_{c_g}$ is the identity function and $y=x + a$ for $x$ drawn from $\cN(x; 0,1)$ 
follows the distribution $\cN(y; a,1)$.

Given a reparameterised guide $c_g'$, the PGE estimates $\nabla_\theta \cL(\theta)$
by again following the recipe in \cref{eq:grad-expt-estm} this time with 
$q'(z_1,z_2) = p_{c_g'}(z_1,z_2)$ and
\begin{align*}
  g'_\theta(z_1,z_2) &= \nabla_\theta f_\theta(z'_1,z'_2) \quad\text{for}\ (z'_1,z'_2) =v_{c_g',\theta}(z_1,z_2).
\end{align*}
This estimator differs from the SCE in two aspects. First,
a random sample is drawn from a reparam\-eterised-guide distribution $p_{c_g'}$, not from $p_{c_g, \theta}$. Next,
the estimation function $g'_\theta$ computes the derivative of $f_\theta(z'_1,z'_2)$ with respect to $\theta$ and $(z'_1,z'_2)$
(not with respect to $\theta$ only), since the argument of $f_\theta(-)$ in $g'_\theta$ depends on $\theta$ via $v_{c_g',\theta}$.\footnote{%
By the chain rule, \rev{we have the following for each $i \in \{1,2\}$:}
\belowdisplayskip=0pt
\belowdisplayshortskip=0pt
\begin{multline*}
  \hspace{-2em}
  \smash{
\frac{\partial f_\theta(z'_1,z'_2)}{\partial \theta_i}
\hspace{-0.0ex}{{}={}}\hspace{-0.6ex}
\left.
\frac{\partial f_\theta(x_1,x_2)}{\partial \theta_i}\right|{}_{\begin{subarray}{l}(x_1,x_2,\theta)\\ = (z'_1,z'_2,\theta)\end{subarray}} 
\hspace{-0.4ex}{{}+{}}\hspace{-0.1ex}
\left\langle
\left(
\left.\frac{\partial f_\theta(x_1,x_2)}{\partial x_1}\right|{}_{\begin{subarray}{l}(x_1,x_2,\theta)\\ = (z'_1,z'_2,\theta)\end{subarray}}
\hspace{-0.3ex},
\left.\frac{\partial f_\theta(x_1,x_2)}{\partial x_2}\right|{}_{\begin{subarray}{l}(x_1,x_2,\theta) \\= (z'_1,z'_2,\theta)\end{subarray}}\right)
\hspace{-0.3ex},
\left(\left.
\frac{\partial v_{c_g',\theta}(y_1,y_2)}{\partial \theta_i}\right|{}_{\begin{subarray}{l}(y_1,y_2,\theta) \\= (z_1,z_2,\theta)\end{subarray}}\right)\right\rangle
\hspace{-0.3ex}.
  }
  \hspace{-2em}
\end{multline*}%
}

While having a small approximation error, the PGE requires more than the SCE to ensure the unbiasedness.
An important additional requirement for the PGE
is that (i) $p_{c_m}(z_1,z_2)$ and $p_{c_g,\theta}(z_1,z_2)$ should be differentiable in $\theta$ and $z_1,z_2$
and (ii) $v_{c_g', \theta}(z_1,z_2)$ be differentiable in $\theta$ for all $z_1,z_2$.
The requirement is imposed
partly to ensure that no differentiation error arises in computing $g'_\theta$.
This differentiability requirement, however, can be easily violated if a model or a guide starts to use branches or loops.
For instance, it is violated by our example in \cref{fig:eg-model-guide,fig:eg-repar}
as $p_{c_m}(z_1,z_2)$ is not differentiable in $z_2$. 
This violation makes the PGE biased for the example, i.e.,
\begin{align*}
  \E_{q'_\theta(z_1,z_2)}[g'_\theta(z_1,z_2)]
  &= (\cdots, {\textstyle \frac{1}{3}}(\theta_1 -\theta_2))
  \neq (\cdots, {\textstyle \frac{1}{3}}(\theta_1 -\theta_2) + {\textstyle \frac{3}{2}} \cN(-\theta_2; 0,1))
  = \nabla_\theta \cL(\theta),
\end{align*}
and thus causes the gradient-ascent algorithm to converge to a suboptimal $\theta$:
applying the PGE to the example produces a suboptimal solution $\theta = (0,0)$,
whereas the optimal solution is $\theta \approx (0.95, 1.52)$.

The \emph{selective pathwise gradient estimator} (SPGE)~\cite{SchulmanHWA15} combines the two previous gradient estimators to alleviate their limitations:
one has a large approximation error, and the other imposes a strong requirement for unbiasedness. The SPGE requires an additional program $c_g''$ that is a reparameterisation of the guide $c_g$
but needs not be $\theta$-independent (unlike the PGE).
An instance of $c_g''$ for our example is given in \cref{fig:eg-repar}, which changes the sample command for $z_1$ in $c_g$ but keeps the one for $z_2$.
Note that the changed sample command for $z_1$ in $c_g''$ uses a $\theta$-independent probability distribution.
Typically, $c_g''$ is obtained by \emph{selecting} a subset of the random variables in $c_g$
and changing the sample commands for the selected variables such that their probability distributions become $\theta$-independent;
the sample commands for the unselected remain as they are.
Given $c_g''$, the SPGE estimates $\nabla_\theta \cL(\theta)$
by following the recipe in \cref{eq:grad-expt-estm} with 
$q''_\theta(z_1,z_2) = p_{c_g'', \theta}(z_1,z_2)$ and
\begin{equation}
  \label{eq:spge-toy}
  g''_\theta(z_1,z_2) = \nabla_\theta f_\theta(z''_1,z''_2) + f_\theta(z''_1,z''_2) \cdot \nabla_\theta \log q''_\theta(z_1,z_2)
  \quad \text{for}\ (z''_1,z''_2) = v_{c_g'',\theta}(z_1,z_2). 
\end{equation}
Note that the estimation function $g''_\theta$ consists of two terms, which come from that of the PGE and the SCE. 
The second term adjusts the PGE to correctly account for unchanged random variables
\rev{(e.g., $z_2$ in the example of \cref{fig:eg-repar})}.

By allowing a guide that makes only some selected (not all) random variables $\theta$-independent, the SPGE offers two advantages at the same time:
it achieves a smaller approximation error than the SCE, and imposes a weaker requirement for unbiasedness than the PGE. 
In particular, the differentiability requirement for the SPGE is weaker than that for the PGE, which is as follows for our example in \cref{fig:eg-model-guide,fig:eg-repar}:
(i) $p_{c_m}(z_1,z_2)$ and $p_{c_g, \theta}(z_1,z_2)$ be differentiable in $\theta$
and $z_1$ (but they may be non-differentiable in $z_2$);
and (ii) $v_{c_g'', \theta}(z_1,z_2)$ and $p_{c_g'',\theta}(z_1,z_2)$ 
be differentiable in $\theta$ for all $z_1,z_2$.
This requirement holds, and as a result, the SPGE with this $c_g''$ is unbiased
(whereas the PGE with the given $c_g'$ is biased as seen before).
\rev{You can find in \cref{sec:overview-more} a table summarising and comparing SCE, PGE, and SPGE.}


\vspace{2mm}
\noindent{\bf Variable-selection problem for SPGE.}\ 
To maximize the advantages offered by the SPGE, we consider the following algorithmic problem:
\begin{definition}[SPGE Variable-Selection Problem; Informal]
  \label{def:var-sel-prob}
  Assume that we are given a model $c_m$, a guide $c_g$, and a \emph{reparameterisation plan} $\pi$, i.e., a map from sample commands to sample commands.
  Then, find automatically a large subset $S$ of random variables such that if we let $\smash{\ctr{c_g}{\pi,S}}$ be the result of $\pi$-transforming every sample command in $c_g$ that defines a random
  variable in $S$, then $\smash{ \ctr{c_{g}}{\pi,S} }$ is a reparameterisation of $c_g$
  and $\smash{ (c_m, c_g, \ctr{c_g}{\pi,S}) }$ satisfies the differentiability requirement for the SPGE.
  \qed
\end{definition}

An instantiation of the problem for our example is that $c_m$ and $c_g$ are programs in \cref{fig:eg-model-guide} and $\pi$ transforms commands of the form
$y \,{:=}\, \csample(n, \cnor(e',1), \lambda y.y)$ to 
$y \,{:=}\, \csample(n, \cnor(0,1),\lambda y.y+e')$,
while leaving all the other sample commands as they are.  In this instantiation, the condition in the problem is met by $S = \emptyset$ and $S = \{z_1\}$, and the latter option is preferred due to its size.
Note that the solution $S=\{z_1\}$ yields the guide $c_g''$ in \cref{fig:eg-repar}, that is, $\smash{ \ctr{c_g}{\pi,S} = c_g'' }$.

Existing PPLs, when applying the SPGE, choose an $S$ without checking the differentiability requirement,
and this can make the requirement easily violated.
For instance, given a model-guide pair, in one of its standard settings, 
Pyro automatically applies the SPGE with $S$ being the set of all continuous random variables in the guide.
This choice of $S$, however, does not guarantee the requirement is met.
For our example in \cref{fig:eg-model-guide}, Pyro chooses $S=\{z_1,z_2\}$, but this $S$ violates the requirement;
due to this, the SPGE becomes biased and Pyro returns a suboptimal $\theta=(0,0)$.

In the rest of the paper, we will present our solution to the SPGE variable-selection problem. A core component of our solution is a general static analysis framework for smoothness properties  such as differentiability \rev{(\cref{s:sa:new})}, which our solution uses to discharge the differentiability requirement for the SPGE correctly and automatically. As we briefly mentioned in the introduction, automatically analysing the smoothness properties of a program in a sound manner is surprisingly subtle. Our analysis framework identifies five assumptions for smoothness properties, and prove that the analysis is sound if a target smoothness property satisfies these assumptions. 

Our solution for the SPGE variable-selection problem~\rev{(\cref{sec:var-sel-alg})} runs the static analysis on given $c_m$ and $c_g$, and computes a maximal set $S'$ of random variables in which $p_{c_m}$ and $p_{c_g,\theta}$ are differentiable. Then, it heuristically searches for a subset of $S'$ starting from $S'$ itself such that $\smash{ \ctr{c_g}{\pi,S'} }$ satisfies the differentiability requirement.  For instance, for our example in \cref{fig:eg-model-guide}, our differentiability analysis infers that
$p_{c_m}$ and $p_{c_g,\theta}$ are differentiable in $\{z_1\}$ and $\{z_1,z_2\}$, respectively. 
From this, we set $S' = \{z_1\}$, run our analysis again on $\smash{ \ctr{c_g}{\pi,S'} }$, and get confirmation that $\smash{ \ctr{c_g}{\pi,S'}} $ meets the requirement, i.e., $\smash{ p_{\ctr{c_g}{\pi,S'},\theta} }$ and $\smash{ v_{\ctr{c_g}{\pi,S'},\theta} }$ are differentiable in $\theta$. Thus, this $S'$ becomes the final result. In fact, this first-round success appeared in our experiments~\rev{(\cref{sec:impl-eval})}: our implementation shows on all tested examples that the initial choice of $S'$ is indeed valid in the above sense so that no subset search is necessary.

We point out that to mathematically develop and analyse our solution for the SPGE variable-selection problem, 
we formalise the SPGE in the PPL setting and formally derive a sufficient condition for its unbiasedness, which includes the differentiability requirement~\rev{(\cref{sec:generalised-gradient-estimator-new})}.


\section{Setup}
\label{sec:setup}
We use a simple imperative probabilistic programming language, which models the core of popular imperative PPLs, such as Pyro. Programs in the language describe densities, which are sometimes unnormalised (i.e., they do not integrate to $1$). In this section, we describe the syntax and semantics of the language, and also variational inference for the language.

\vspace{2mm}
\noindent{\bf Syntax of a simple imperative PPL.}\
Let $\PVar$ be a finite set of program variables, $\Str$ be a finite set of strings, and $\Fn$ be a set of function symbols that represent measurable maps of type $\R^k \to \R$. The language has the following syntax:
\begin{align*}
\hspace{-0.5em}
\text{Real Expr.}\
e & {} ::= x \mid r \mid \op(e_1,\ldots,e_k)
&
\text{Boolean Expr.}\
b & {} ::= \ctrue \mid e_1 < e_2 \mid b_1 \wedge b_2 \mid \neg b \hspace{4em}
\\[-0.5ex]
\hspace{-0.5em}
\text{Name Expr.}\
n & {} ::= \cname(\alpha,e)
&
\text{Distribution Expr.}\
d & {} ::= \cnor(e, e')
\\[-0.5ex]
\hspace{-0.5em}
\text{Command}\
c & \rlap{${} ::= \cskip \mid x \,{:=}\, e \mid c;c'
\mid \cif\,b\,\{c\}\,\celse\,\{c'\} \mid \cwhile\,b\,\{c\} \mid
x\,{:=}\,\csample(n, d, \lambda y.e)
\mid \cobs(d, r)$}
\end{align*}
Here $x$, $r$, $\op$, and $\alpha$ stand for a program variable in $\PVar$, a real number, a function symbol in $\Fn$, and a string in $\Str$, respectively. 

The language has four kinds of expressions, which denote maps from states to values of appropriate types. All the real and boolean expressions are standard. The name expressions $n$ denote the identifiers of drawn samples (i.e., random variables).
They are built by appending an integer (obtained by the floor of a real) to a string in $\Str$;
e.g., $\cname(\cstr{z},3.2)$ denotes the name $(\cstr{z},3)$.%
\footnote{\rev{The $\cname$ construct has the second argument to easily support the sampling of (conditionally) {i.i.d.} random variables.}}
The distribution expression $\cnor(e,e')$ denotes the normal distribution with mean $e$ and variance $e'$.
The language supports standard commands for imperative computation, and additionally has sample and observe for probabilistic programming.
The sample command $x:=\csample(n,d,\lambda y.e)$ creates a random variable
named $n$ by drawing a sample $r$ from $d$; then, it transforms $r$ to $e[r/y]$
and stores the result in the program variable $x$.
In the programs written by the user of 
the language, only the identity function $\lambda y.y$ appears as the third argument of the sample commands. But as we explain later, when a program is constructed from another by a gradient estimator, such as the SPGE, it may contain sample commands with non-identity function arguments. The observe command $\cobs(d,r)$ describes that an unnamed random variable is drawn from $d$ and is immediately observed to have the value~$r$. Computationally, $\cobs(d,r)$ calculates the probability density of $d$ at $r$ and updates a variable that tracks the product of these probabilities from all the observations, by multiplying the variable with the calculated density.


\vspace{2mm}
\noindent{\bf Density semantics of the PPL.\footnotemark}
We use a semantics of our language where commands are interpreted as calculators for densities, which may be unnormalised. Commands transform states, but in so doing, they compute densities of sampled random variables. More precisely, in the semantics, a command starts with an initial state that fixes not just the values of program variables but also those of all the random variables that are to be sampled during execution. When the command runs, it calculates the densities of those random variables at their given initial values, and also computes the probability density of all the observations, called \emph{likelihood}. The product of all these densities and the likelihood becomes the so called \emph{unnormalised posterior density}.
\footnotetext{\rev{%
  Our semantics is an instance (or variant) of existing density semantics (e.g., \cite{LeeYRY20}),
  and is different from sampling semantics (e.g., \cite{StatonYWHK16}).
  Although the density semantics and the sampling semantics have different presentations,
  they are closely related and equivalent in a formal sense (see, e.g., \cite{LeeYRY20}).
  We use the density semantics instead of the sampling semantics,
  because the gradient estimator (\cref{sec:generalised-gradient-estimator-new}) of our interest performs computation on (unnormalised) densities
  and it is easier for a program analysis (\cref{s:sa:new}) to work with the density semantics than the sampling semantics.}}

Let $\N$ be the set of natural numbers. Fix $N \in \N$ with $N \geq 1$.
Formally, the semantics uses the states of the following form:
\begin{align*}
\mu \in \Name & \defeq \{(\alpha, i) \mid \alpha \in \Str, i \in \N \cap [0, N)\},
\\[-0.5ex]
a \in \AVar &\defeq \{\like\} \cup \{\pr_\mu, \val_\mu, \sampled_\mu \mid \mu \in \Name \},
\\[-0.5ex]
u,v \in \Var & \defeq \Name \uplus \PVar \uplus \AVar,
\\[-0.5ex]
\sigma \in \State & \defeq [\Var \to \R],
\quad
\State[K] \defeq [K \to \R] \ \text{for}\ K \subseteq \Var.
\end{align*}
Here $\sigma(\mu)$ for $\mu \in \Name$ is the initial value of the random variable $\mu$, which is used by the sample command and does not change during execution.
For technical simplicity, the set $\Name$ has the restriction that the integer part of a name must be in $[0,N)$.\footnote{%
  \rev{This restriction is often respected by probabilistic programs in practice,
    since they commonly sample random variables whose number is uniformly bounded over all traces;
    note, however, that it is not always respected (e.g., by programs from Bayesian nonparametrics).}
  The uniform bound $N$ can often be found by a simple static analysis.
  This restriction along with the finiteness of $\PVar$ and $\Str$ implies the finiteness of $\Var$,
  and this makes our technical development simpler since $\sigma \in \State$ becomes a function on a finite-dimensional space.
  \rev{Lifting the restriction would make the technical development more complicated,
    since this would require $\State$ to be isomorphic to $[\R^\infty \to \R]$ or $\biguplus_k [\R^k \to \R]$
    and the former (or latter) choice of $\State$ makes defining differentiability (or formalising our program analysis) nontrivial;
    we leave it as future work.}}
The set $\AVar$ consists of four types of auxiliary variables.
The auxiliary variable $\like$ stores the likelihood (i.e., the probability density of all the observations), and its value is initialised to $1$ and changes whenever the observe command $\cobs(d,r)$ runs; the new value becomes the old times the density of the probability distribution $d$ at $r$.
The other auxiliary variables $\pr_\mu$, $\val_\mu$, and $\sampled_\mu$ are associated with a random variable $\mu$,
\rev{standing for the ``prior'', ``value'', and ``counter'' of $\mu$}.
They are initialised with $\cN(\sigma(\mu); 0,1)$ (i.e., the density of the standard normal distribution at $\sigma(\mu)$),
$\sigma(\mu)$, and $0$, respectively, and get updated by the sample command $x:=\csample(n,d,\lambda y.e)$ where $n$ denotes $\mu$. The command increases $\sampled_\mu$ by $1$, so as to record the occurrence of a sampling event for $\mu$. Then, it
looks up the given value $\sigma(\mu)$ of the random variable $\mu$,
transforms the value to $e[\sigma(\mu)/y]$, and stores the result in $x$ and $\val_\mu$.  Finally, the command computes the density of the distribution $d$ at the looked-up value $\sigma(\mu)$, and updates $\pr_\mu$ with this density.
The unnormalised posterior density (i.e., the joint density of all the random variables and observations) is then obtained by multiplying at the end of program execution the values of $\like$ and $\pr_\mu$ for all $\mu \in \Name$.

The formal semantics of expressions is standard, and has the following types:
\begin{align*}
  \db{e} &: \State \to \R, &
  \db{b} &: \State \to \B, &
  \db{n} &: \State \to \Name, &
  \db{d} &: \State \to \D.
\end{align*}
Here
 $\B$ is the set of booleans, i.e., $\strue$ and $\sfalse$,
and $\D$ is the set of positive probability-density functions on $\R$, i.e., a subset of $[\R \to (0,\infty)]$ whose elements
are measurable functions that integrate to $1$. The semantics is defined for a minor extension of the set of expressions 
where non-program variables are allowed to appear, such as $(\mu+x)$. The interpretation of expressions is mostly standard.
We show only the case for the name expressions $n \equiv \cname(\alpha,e)$:
$\db{\cname(\alpha,e)}\sigma \defeq \mathit{create\_name}(\alpha,\db{e}\sigma)$,
\rev{where $\mathit{create\_name} : \Str \times \R \to \Name$ is an operator converting a string-real pair to a name,
  defined by $\mathit{create\_name}(\alpha,r) = (\alpha, \min\{\max\{\lfloor r \rfloor, 0\}, N-1\})$.}%
\footnote{\rev{There are multiple valid ways to convert a string-real pair to a name (i.e., to define $\mathit{create\_name}$); we choose just one of them.}}

\rev{Note that the types of the semantics of expressions imply that}
the evaluation of an expression always produces a value of the right type. In particular, 
it never generates an error. For instance, when an argument of an operator $\op$ or a distribution constructor $\cnor$ is outside its intended domain as 
in $\log(-3)$ and $\cnor(0, -2)$, or when the integer part of a name expression is outside $[0, N)$ as in $\cname(\cstr{z},-1)$, our semantics
does not generate an error. Instead, it returns some pre-chosen default value of the right type. 
This slightly unusual way of handling errors is also adopted in our semantics of commands to be presented shortly, and it 
lets us avoid the complexity caused by error handling when we formalise variational inference and develop our program analysis 
for smoothness properties. 

The formal semantics of commands is also mostly standard with the handling of errors via default values, 
although its interpretation of sample and observe commands deserves special attention.
Let $\bot$ be an element not in $\State$, and define $\State_\bot$ to be the usual lifting of $\State$ with $\bot$.
That is, $\State_\bot$ is a partially-ordered set $\State \uplus \{\bot\}$ with the following order: for $\xi,\xi' \in \State_\bot$,
we have $\xi \sqsubseteq \xi'$ if and only if $\xi = \bot$ or $\xi = \xi'$.
We write the standard lifting of a function $f : \State \to \State_\bot$ by $f^\dagger : \State_\bot \to \State_\bot$
(i.e., $f^\dagger(\xi) \defeq \text{if}\ (\xi = \bot)\ \text{then}\ \xi\ \text{else}\ f(\xi)$). The semantics of a command $c$ is a map $\db{c} : \State \rightarrow \State_\bot$, and is defined inductively as shown below:
\begin{align*}
  \db{\cskip}\sigma &\defeq \sigma,
  \\[-0.5ex]
  \db{x := e}\sigma &\defeq \sigma[x\mapsto \db{e}\sigma], 
  \\[-0.5ex]
  \db{c;c'}\sigma &\defeq \smash{\db{c'}^\dagger}(\db{c}\sigma),
  \\[-0.5ex]
  \db{\cif\,b\,\{c\}\, \celse\,\{c'\}}\sigma &\defeq
  \text{if } (\db{b}\sigma = \strue)
  \text{ then }
  \db{c}\sigma 
  \text{ else }
  \db{c'}\sigma,
  \\[-0.5ex]
  \db{\cwhile\,b\,\{c\}}\sigma 
  & \defeq 
  (\fix\ F)(\sigma)
  \quad \text{where}\
  F(f)(\sigma)    
  \defeq
  \text{if } (\db{b}\sigma=\strue)         
  \text{ then }
  \smash{f^\dagger}(\db{c}\sigma)  
  \text{ else }
  \sigma,
  \\[-0.5ex]
  \db{x := \csample(n,d,\lambda y.e')}\sigma 
  &\defeq 
  \sigma[x \mapsto r,\, \val_\mu \mapsto r,
    \pr_\mu \mapsto \db{d}\sigma(\sigma(\mu)),\, \sampled_\mu \mapsto \sigma(\sampled_\mu)+1]
  \\[-0.5ex]
  & \phantom{ {} \defeq 
  (\fix\ F)(\sigma)
  \quad } \text{where}\
  \mu \defeq \db{n}\sigma
  \ \text{and}\ 
  r \defeq {} \db{e'[\mu/y]}\sigma,
  \\[-0.5ex]
  \db{\cobs(d,r)}\sigma &\defeq  \sigma[\like\mapsto \sigma(\like) \cdot \db{d}\sigma(r)].
\end{align*}
The interpretation uses the least fixed-point operator $\fix$ for continuous maps $F$ on the function space $[\State \to \State_\bot]$, where the function space is ordered pointwise and continuity means the one with respect to this order.
\rev{It also uses the notation $e'[e''/y]$ to denote the substitution of $y$ in $e'$ by $e''$.}
According to this interpretation, $x := \csample(n,d,\lambda y.e')$ increments the $\sampled_\mu$ variable for the name $n=\mu$ so that the variable, which has $0$ initially, records the number of times that the random variable with the same name $n$ is sampled during execution. 

Having some $\sampled_\mu$ variable increased by $2$ or larger at some point of execution is not an intended behaviour of a command $c$. That is, if $c$ is a well-designed command, every random variable with a fixed name should be sampled at most once during the execution of $c$. This intended behaviour of commands plays an important role in our results, and we refer to it using the following terminology.
\begin{definition}
An always-terminating command $c$ \emph{does not have a double-sampling error} if for any $\sigma \in \State$, we have $\db{c}\sigma(\sampled_{\mu})  - \sigma(\sampled_\mu) \leq 1$ for all $\mu \in \Name$.
\qed
\end{definition}

\vspace{-2mm}
\rev{\begin{example}[Density semantics]
    Consider the following state $\sigma \in \State$: 
    $\sigma \defeq [x \mapsto 0, \allowbreak y \mapsto 0, \allowbreak
      (\cstr{a},0) \allowbreak \mapsto 2, \allowbreak (\cstr{b},0) \allowbreak \mapsto 4, \allowbreak
      \sampled_{(\cstr{a},0)} \allowbreak \mapsto 0, \allowbreak \sampled_{(\cstr{b},0)} \allowbreak \mapsto 0, \cdots],$
    where $x,y \in \PVar$ denote program variables and $(\cstr{a},0), (\cstr{b},0) \in \Name$ denote random variables.
    Note that $\sigma$ consists of three parts:
    the $\PVar$ part says that the values of $x$ and $y$ are both 0;
    the $\Name$ part says that the values of $(\cstr{a},0)$ and $(\cstr{b},0)$ are 2 and 4;
    and the $\AVar$ part says that $(\cstr{a},0)$ and $(\cstr{b},0)$ have not been sampled yet.
    
    Next, consider a command
    $c \equiv \big(x := \csample(\cname(\cstr{a},0), \allowbreak \cnor(-3,1), \allowbreak \lambda z.z); \allowbreak \;
    y := \csample(\cname(\cstr{b},0), \allowbreak \cnor(5,1), \allowbreak \lambda z.z)\big).$
    Given the command $c$ and the input state $\sigma$, our density semantics computes the following output state $\sigma' \in \State$:
    $\sigma' \defeq \db{c}\sigma = [x \mapsto 2, \allowbreak y \mapsto 4, \allowbreak
      (\cstr{a},0) \allowbreak \mapsto 2, \allowbreak (\cstr{b},0) \allowbreak \mapsto 4, \allowbreak
      \sampled_{(\cstr{a},0)} \allowbreak \mapsto 1, \allowbreak \sampled_{(\cstr{b},0)} \allowbreak \mapsto 1, \allowbreak
      \pr_{(\cstr{a},0)} \allowbreak \mapsto \cN(2;-3,1), \allowbreak \pr_{(\cstr{b},0)} \allowbreak \mapsto \cN(4;5,1), \cdots].$
    The input/output states $\sigma$ and $\sigma'$ illustrate two aspects of our semantics.
    First, the semantics uses the $\Name$ part of an input state to determine the sampled values of sample commands:
    $x$ and $y$ in $\sigma'$ take the values of $\sigma((\cstr{a},0))=2$ and $\sigma((\cstr{b},0))=4$.
    Second, the semantics records, in the $\AVar$ part of an output state, the densities of sample commands:
    $\pr_{(\cstr{a},0)}$ and $\pr_{(\cstr{b},0)}$ in $\sigma'$ store the densities of the two sample commands in $c$ evaluated at 2 and 4. 
    \qed
\end{example}}


\noindent{\bf Variational inference.}\
We consider the most common form of \emph{variational inference} for Pyro-like PPLs where we are asked to learn a good approximation of the posterior of a given model, i.e., the conditional distribution of the model given a dataset. Typically, a parameterised approximate posterior is given in variational inference, and learning corresponds to finding good values of those parameters. A popular approach is to measure the quality of parameter values by the so called evidence lower bound (ELBO), and to optimise ELBO.

To translate what we have described so far to our context, we need to explain a general recipe for generating a density $p_c$
for a command~$c$, which is in general unnormalised (i.e., does not integrate to $1$). The recipe specifies $p_c$ \rev{as follows:}
for each $\sigma_\theta \in \State[\theta]$ \rev{with $\theta \subseteq \PVar$},
  $p_{c,\sigma_\theta} : \State[\Name] \to [0,\infty)$ is defined by
\begin{align}
 \label{eq:density-of-c}
 p_{c,\sigma_\theta}(\sigma_n) & \defeq
 \begin{cases}
   \db{c}\sigma(\like) \cdot \prod_{\mu \in \Name} \db{c}\sigma(\pr_\mu)
   & \text{if $\db{c}\sigma \in \State$ and $\db{c}\sigma(\sampled_\mu) \leq 1$ for all $\mu$}
   \\
   0 & \text{otherwise}
 \end{cases}
\end{align}
where $\sigma = \sigma_0 \oplus \sigma_\theta \oplus \sigma_n \in \State$, and
the $\oplus$ operator combines two real-valued maps with disjoint domains in the standard way. Also,
$\sigma_0 \in \State[(\PVar \setminus \theta) \uplus \AVar]$ maps $\like$ to $1$,
and $\pr_\mu$ to $\cN(\sigma_n(\mu);0,1)$ and $\val_\mu$ to $\sigma_n(\mu)$ for every $\mu \in \Name$,
and all other variables to $0$.
Here $\State[\Name]$ is understood as a measurable space constructed by taking the product of the $|\Name|$ copies of the measurable space $\R$
and the integral is taken over the uniform measure on $\State[\Name]$ (i.e., the product of the $|\Name|$ copies of the Lebesgue measure on $\R$).

In variational inference in our PPL context, we are given two commands $c_m$ and $c_g$, called \emph{model} and \emph{guide}. We assume that (i) these commands always terminate and do not have a double-sampling error, (ii) some variables $\theta = \{\theta_1,\ldots,\theta_k\} \subseteq \PVar$ that only appear in $c_g$, not in $c_m$, are identified as parameters to be optimised,
and (iii) the density $\smash{ p_{c_g,\sigma_\theta} }$ of the guide $c_g$ integrates to $1$ and defines a probability distribution.%
\footnote{%
In practice, one more assumption is required: the set of random variables sampled from the model should be the same as that from the guide.
This assumption can be checked automatically, e.g., by \cite{LeeYRY20,LewCSCM20}.
In this work, however, this assumption is always satisfied:
all random variables in $\Name$ are sampled by a sample command or at the beginning (via initialisation).
}
Given the model-guide pair $(c_m,c_g)$,
a popular approach for variational inference is to solve the following optimisation problem approximately,
\begin{align}
  \label{eqn:ELBO-pre}
  \argmax_{\sigma_\theta}\, \E_{p_{c_g,\sigma_\theta}(\sigma_n)}
  \left[\log (p_{c_m}(\sigma_n)/p_{c_g,\sigma_\theta}(\sigma_n))\right]\!,
\end{align}
when the expectation is well-defined for all $\sigma_\theta$. The objective of this optimisation is the ELBO that we mentioned earlier. Here $p_{c_m}$ means $p_{c_m,\sigma'_\theta}$ for some/any $\sigma'_\theta$; the choice of $\sigma'_\theta$ does not matter since $c_m$ does not access the parameters $\theta$ and so
$p_{c_m,\sigma'_\theta} = p_{c_m,\sigma''_\theta}$ for all $\sigma'_\theta,\sigma''_\theta \in \State[\theta]$. 

Often variational inference is applied when the model $c_m$ is parameterised as well. In those cases, it asks for finding good parameters of the model $c_m$ as well as those of the guide $c_g$. So, an algorithm for variational inference this time simultaneously learns a good model for given observations and a good approximate posterior for the learnt model. This more general form of variational inference  can be easily accommodated in our setup. We just need to drop the condition that the parameters may not appear in $c_m$, and to use $p_{c_m,\sigma_\theta}$ instead of $p_{c_m}$ in the optimisation objective in \cref{eqn:ELBO-pre}:
\begin{align}
  \label{eqn:ELBO}
  \argmax_{\sigma_\theta} \cL(\sigma_\theta)
  \ \text{ for }\ 
  \cL(\sigma_\theta) \defeq
  \E_{p_{c_g,\sigma_\theta}(\sigma_n)}
  \left[\log (p_{c_m,\sigma_\theta}(\sigma_n)/p_{c_g,\sigma_\theta}(\sigma_n))\right]\!.
\end{align}
The rest of the paper focuses on this general form of variational inference \rev{(often called model learning)}. 


\section{Selective Pathwise Gradient Estimator}
\label{sec:generalised-gradient-estimator-new} 

 
We consider a gradient-based algorithm for the optimisation problem in \cref{eqn:ELBO}.
The algorithm finds a good $\sigma_\theta$ by repeatedly estimating
the gradient of the optimisation objective at the current $\sigma_\theta$,
\[
\mathrm{grad\_est}(\sigma_\theta) \approx
\nabla_\theta
\E_{p_{c_g,\sigma_\theta}(\sigma_n)}
  \left[\log (p_{c_m,\sigma_\theta}(\sigma_n)/p_{c_g,\sigma_\theta}(\sigma_n))\right]\!,
\]
and updating $\sigma_\theta$ with the estimate under a learning rate $\eta > 0$, that is,
$\sigma_\theta \leftarrow \sigma_\theta + \eta \cdot \mathrm{grad\_est}(\sigma_\theta)$.
Note that the core of the algorithm lies in the computation of $\mathrm{grad\_est}(\sigma_\theta)$.

In this section, we describe a particular algorithm for the gradient computation, called \emph{selective pathwise gradient estimator (SPGE)}, which is often regarded as the algorithm of choice and corresponds to the inference algorithm developed for stochastic computation graphs \cite{SchulmanHWA15} and implemented for Pyro. Our description of the SPGE takes the often-ignored aspect of customising the SPGE algorithm for PPLs seriously,
and it is accompanied with a novel formal analysis of the customisation \rev{(\cref{sec:prog-trans} and \cref{sec:grad-estm-prog-trans})}.
Our analysis clearly identifies information about probabilistic programs that is useful for this customised SPGE algorithm, and prepares the stage for our program analysis for smoothness properties in \cref{s:sa:new} \rev{(\cref{sec:grad-estm-prog-trans} and \cref{sec:local-lips-req})}. 

\subsection{Program Transformation}
\label{sec:prog-trans}

We start by describing a program transformation that changes some sample commands in a given probabilistic program. This transformation is used crucially by the SPGE. 

The key component of the transformation is a partial function $\pi$ called \emph{reparameterisation plan}, which has the type
$\NameEx \times \DistEx \times \LamEx \rightharpoonup \DistEx \times \LamEx$.
Here $\NameEx$, $\DistEx$, and $\LamEx$ denote the sets of name expressions, distribution expressions, and lambda expressions of the form $\lambda y.e$, respectively. The plan $\pi$ specifies how we transform sample commands. Concretely, assume that we are given $x := \csample(n,d,\lambda y.e)$. We check whether $\pi(n,d,\lambda y.e)$ is defined or not. If not, we  keep the original sample command. Otherwise, \rev{if $\pi(n,d,\lambda y.e)$} is $(d',\lambda y'.e')$, we replace the command with $x := \csample(n,d',\lambda y'.e')$. 

A natural extension of this intended transformation of $\pi$ leads to the following program transformation for a general command $c$, denoted by $\ctr{c}{\pi}$:
\begin{align*}
  \smash{ \ctr{\cskip}{\pi} }
  &\defeq \cskip,
  \\[-0.5ex]
  \smash{ \ctr{x:=e}{\pi} }
  &\defeq x:=e,
  \\[-0.5ex]
  \smash{ \ctr{c;c'}{\pi} }
  &\defeq \smash{ \ctr{c}{\pi};\ctr{c'}{\pi}, }
  \\[-0.5ex]
  \smash{ \ctr{\cif\, b\, \{c\}\, \celse\, \{c'\}}{\pi} }
  &\defeq \smash{ \cif\, b\, \{\ctr{c}{\pi}\}\, \celse\, \{\ctr{c'}{\pi}\}, }
  \\[-0.5ex]
  \smash{ \ctr{\cwhile\ b\ \{c\}}{\pi} }
  &\defeq \smash{ \cwhile\ b\ \{\ctr{c}{\pi}\}, }
  \\[-1.2ex]
  \smash{ \ctr{x:=\csample(n,d,l)}{\pi} }
  &\defeq
  \begin{cases}
    x:=\csample(n, d', l')
    & \text{if}\ \exists (d',l').\, \pi(n,d,l) = (d',l')
    \\[-0.5ex]
    x:=\csample(n,d,l)
    & \text{otherwise},
  \end{cases}
  \\[-0.5ex]
  \smash{ \ctr{\cobs(d,r)}{\pi} }
  &\defeq \cobs(d,r).
\end{align*}
The transformation recursively traverses $c$, and applies $\pi$ to all the sample commands in $c$.
Note that for any $\pi$, there exists a total function $\pi'$ such that $\smash{ \ctr{c}{\pi} = \ctr{c}{\pi'} }$ for all $c$; the $\pi'$ coincides
with $\pi$ in the domain of $\pi$, and outside of this domain, it is the identity function. But such $\pi'$ loses
information about the domain of $\pi$, which plays a crucial role in our formalisation of the SPGE.

We are primarily interested in semantics-preserving instances of $\ctr{\,\cdot\,}{\pi}$. The next definition helps us to identify such instances.
\begin{definition}
  \label{def:valid-plan}
A reparameterisation plan
$\pi$ is \emph{valid} if
for all $n \in \NameEx$, $d,d' \in \DistEx$, and $(\lambda y.e),(\lambda y'.e') \in \LamEx$ such that
$\pi(n,d,\lambda y.e) = (d',\lambda y'.e')$,
the following condition holds: for all states $\sigma \in \State$ and measurable subsets $A \subseteq \R$,
  \begin{equation}
  \label{eqn:valid-plan}
    \int \ind{\rev{\db{e[r/y]}\sigma} \in A} \cdot \db{d}\sigma(r) \,dr
    = 
    \int \ind{\rev{\db{e'[r/y']}\sigma} \in A} \cdot \db{d'}\sigma(r) \,dr.
  \makeatletter\displaymath@qed\,
  \end{equation}
\end{definition}
\noindent
The condition says that the distribution obtained by sampling from $d$ and applying $\lambda y.e$
is the same as that obtained by sampling from $d'$ and applying $\lambda y'.e'$.
An example of a widely-used valid reparameterisation plan
maps its input as follows, whenever defined:
$\pi_0(n,\,\cnor(e_1,e_2),\,\lambda y.e_3) = (\cnor(0,1),\, \lambda y.e_3[(y \times \sqrt{e_2} + e_1)/y])$,
where we assume $y$ does not appear in $e_1$ and $e_2$, the substitution in $\pi_0$ expresses the composition of two functions
$\lambda y.e_3$ and $\lambda y.(y \times \sqrt{e_2} + e_1)$, and $\sqrt{-}$ denotes a square-root operator that handles non-positive arguments in the same way as $\cnor(e,-)$ does:
if $\db{e_2}\sigma \leq 0$ and $\db{\cnor(e_1,e_2)}\sigma = \lambda r.\, \cN(r;\db{e_1}\sigma, r_2)$ for some $r_2 > 0$,
then $\db{\sqrt{e_2}}\sigma = \sqrt{r_2}$. The above plan satisfies the condition in \cref{eqn:valid-plan}, because $y \times \sqrt{r_2} + r_1$ with a sample $y$ from $\cN(0,1)$ is distributed by $\cN(r_1,r_2)$.

We now  show that $\ctr{\,\cdot\,}{\pi}$ with a valid $\pi$ preserves semantics.
For a command $c$
and $\sigma_\theta \in \State[\theta]$, define the \emph{value function} $\vfun{c,\sigma_\theta} : \State[\Name] \to \State[\Name]$ as follows:
\[
  \vfun{c, \sigma_\theta}(\sigma_n)(\mu) 
  {} \defeq {} 
  \text{let $\sigma \defeq  \sigma_0 \oplus \sigma_\theta \oplus \sigma_n$ in }
  \begin{cases}
    \db{c}\sigma(\val_\mu)
    & \text{if $\db{c}\sigma \in \State$ and $\db{c}\sigma(\sampled_{\mu'}) \leq 1$ for all $\mu'$}
    \\
   0  &  \text{otherwise}
  \end{cases}
\]
where $\sigma_0 \in \State[(\PVar \setminus \theta) \uplus \AVar]$ maps $\like$ to $1$, and
$\pr_\mu$ to $\cN(\sigma_n(\mu);0,1)$ and $\val_\mu$ to $\sigma_n(\mu)$ for every $\mu \in \Name$,
and it also maps all the other variables to $0$.
The value function basically applies the lambda functions in $c$'s sample commands to the corresponding random variables.
The next theorem proves that if $\pi$ is valid,
 the program transformation $\ctr{\,\cdot\,}{\pi}$ preserves the semantics in the sense that 
the integral of a function $h$ remains the same under $c$ and $\ctr{c}{\pi}$ for any $c$.
Note that the two integrals in the theorem are connected via the value functions of $c$ and~$\ctr{c}{\pi}$.
\begin{theorem}
  \label{thm:unbiased-val}
  Let $\pi$ be a valid reparameterisation plan, and $c$ be a command.
  Then, for all $\sigma_\theta \in \State[\theta]$ and all measurable $h: \State[\Name] \to \R$, we have
  \begin{align*}
    { \int d\sigma_n \Big(\pfun{c,\sigma_\theta}{}(\sigma_n) \cdot h(\vfun{c, \sigma_\theta}(\sigma_n))\Big) }
    &=
    \int d\sigma_n \Big(\pfun{\ctr{c}{\pi},\sigma_\theta}{}(\sigma_n) \cdot h(\vfun{\ctr{c}{\pi}, \sigma_\theta}(\sigma_n))\Big)
  \end{align*}
  where the left integral is defined if and only if the right integral is defined.
\end{theorem}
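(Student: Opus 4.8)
The natural approach is structural induction on the command $c$, proving a slightly stronger statement: for every measurable $h:\State[\Name]\to\R$ and every $\sigma_\theta$, the integral equality holds, and moreover the left side is defined iff the right side is. The base cases ($\cskip$, $x:=e$, $\cobs(d,r)$) are immediate, since the transformation is the identity there, so $p$ and $v$ are literally unchanged. The interesting atomic case is the single sample command $x:=\csample(n,d,\lambda y.e)$ where $\pi(n,d,\lambda y.e)=(d',\lambda y'.e')$ is defined. Here I would unfold the density semantics: running the command from $\sigma = \sigma_0\oplus\sigma_\theta\oplus\sigma_n$ sets $\val_\mu$ to $\db{e[\sigma_n(\mu)/y]}\sigma$ and $\pr_\mu$ to $\db{d}\sigma(\sigma_n(\mu))$, leaving $\like$ and the other $\pr_{\mu'}$ at their initial values, and increments $\sampled_\mu$ to exactly $1$. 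Writing out the left integral and integrating out the coordinate $\sigma_n(\mu)$ first (Tonelli/change of variables on that one factor), the integrand collapses to exactly the pattern appearing in the validity condition \cref{eqn:valid-plan}, with $A$ chosen via a standard measure-theoretic argument (first for indicators $h=\ind{\cdot\in A}$, then by linearity and monotone convergence to general measurable $h$, in both its nonnegative and signed forms, tracking the "defined iff defined" clause). Validity of $\pi$ then rewrites that factor into the $d'$, $e'$ version, which reassembles into the right integral.

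For the compound cases, sequencing $c;c'$ is the crux and I expect it to be the main obstacle. The difficulty is that $\ctr{c;c'}{\pi} = \ctr{c}{\pi};\ctr{c'}{\pi}$, but the density $p_{c;c',\sigma_\theta}$ is not a simple product of $p_{c,\sigma_\theta}$ and $p_{c',\sigma_\theta}$; it is defined through the composed state transformer $\db{c'}^\dagger\circ\db{c}$, and the bookkeeping variables ($\like$, $\pr_\mu$, $\sampled_\mu$) accumulate across the two commands. The clean way to handle this is to strengthen the induction hypothesis so it is stated not about $p_c$ and $v_c$ directly but about the state transformer $\db{c}$: roughly, that for any measurable test function $H$ on output states, $\int d\sigma_n\, (\text{init weight})\cdot H(\db{c}\sigma)$ equals the corresponding integral with $\ctr{c}{\pi}$, where the "weight" packages the partial densities and the test function is evaluated at the whole output state (with a side condition ruling out double-sampling, i.e., the "otherwise" branch). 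Phrased this way, sequencing is just: apply the IH for $c$ with the test function $H'(\xi) = (\text{continuation for } c')$, then apply the IH for $c'$ pointwise in the intermediate state. One must check the no-double-sampling side conditions are preserved and that the $\sampled_\mu$ counters behave correctly — the transformation never changes the name $n$ of a sample command, so $\sampled_\mu$ is updated on exactly the same set of traces in $c$ and $\ctr{c}{\pi}$.

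The remaining compound cases follow the same template once the strengthened IH is in place: for $\cif\,b\,\{c\}\,\celse\,\{c'\}$, note $b$ is unchanged by $\pi$, so the branch taken depends only on $\sigma$ (not on which version of the program we run), and we split the integral over the two measurable regions $\{\db{b}\sigma=\strue\}$ and its complement and apply the IH for $c$ and $c'$ respectively. For $\cwhile\,b\,\{c\}$, I would use the fixed-point characterization: $\db{\cwhile\,b\,\{c\}}$ is the $\fix$ of the functional $F$, approximated by the chain $F^k(\bot)$ obtained by unrolling $k$ times, and similarly for the transformed loop with functional $F'$. By the if-case and the sequencing case, each finite unrolling $F^k(\bot)$ satisfies the integral equality against $F'^k(\bot)$; then pass to the limit using that the semantics is the sup of these approximants and monotone convergence of integrals (here one uses nonnegativity of densities for the $h\geq 0$ reduction, and the "defined iff defined" clause controls the signed case). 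The subtle points throughout are (i) getting the strengthened IH statement exactly right so that all four compound cases compose, and (ii) the measure-theoretic reduction from general $h$ to indicators while preserving the integrability bookkeeping — neither is deep, but the loop case plus the definedness clause require care.
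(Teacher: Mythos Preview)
Your overall architecture---structural induction, trivial atomic cases, the sample case via the validity condition extended from indicators by monotone convergence, \texttt{if} by splitting on the (unchanged) guard, and \texttt{while} via finite unrollings plus monotone convergence---matches the paper. You are also right that the sequencing case is where the work lies and that the induction hypothesis must be strengthened.

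The gap is in the form of the strengthened hypothesis. Your proposal (``for any measurable $H$ on output states, $\int d\sigma_n\,(\text{weight})\cdot H(\db{c}\sigma)$ equals the same for $\ctr{c}{\pi}$'') is too strong to survive the sample base case: if $H$ may depend on the raw name coordinate $\xi(\mu)$, then for a reparameterisation such as $\pi(n,\cnor(0,1),\lambda y.y)=(\cnor(0,4),\lambda y.y/2)$ the choice $H(\xi)=\xi(\pr_\mu)\cdot\xi(\mu)^2$ gives $\int r^2\cN(r;0,1)\,dr=1$ on one side and $\int r^2\cN(r;0,4)\,dr=4$ on the other. The validity condition only equates pushforwards of the \emph{value} $e[r/y]$ under $r\sim d$, so the test function must be restricted to program variables and the $\val_\mu$ outputs. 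Once you make that restriction, a second problem appears: your plan ``apply the IH for $c'$ pointwise in the intermediate state'' requires the hypothesis for $c'$ to be applicable from the state $\db{c}\sigma$, which is not of the canonical form $\sigma_0\oplus\sigma_\theta\oplus\sigma_n$ (the auxiliary variables $\pr_\mu$ have already been overwritten by $c$).

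The paper resolves both issues by moving to \emph{substates} $\xi_n\in\Statesub[\Name]$ that record only the variables actually sampled, and by defining $\getprsub(c)$, $\getvalsub(c)$, $\getpvarsub(c)$ as functions of an \emph{arbitrary} starting $\sigma_p\in\State[\PVar]$ together with such a $\xi_n$. The inductive lemma (their Lemma~C.3) is then: for every measurable $g:\State[\PVar]\times\Statesub[\Name]\to\R$ and every $\sigma_p$, the integral of $\getprsub(c)\cdot g(\getpvarsub(c),\getvalsub(c))$ over $\xi_n$ is invariant under $c\mapsto\ctr{c}{\pi}$. Sequencing now goes through cleanly via a decomposition lemma that splits $\xi_n$ into the parts sampled by $c'$ and by $c''$; the $\sigma_p$-parameterisation lets the hypothesis for $c''$ be invoked at the intermediate program-variable state $\getpvarsub(c')(\sigma_p,\xi_n')$. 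A separate lemma (their Lemma~C.2) connects this substate integral back to the full-$\sigma_n$ integral of the theorem. This substate machinery is the main technical content you are missing; the rest of your outline is fine.
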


\begin{remark}
\label{remark:p-probability}
One immediate yet important consequence of the theorem is that if $p_{c,\sigma_\theta}$ is a probability density, so is $p_{\ctr{c}{\pi}, \sigma_\theta}$. This consequence will be used in \cref{sec:grad-estm-prog-trans} and the proof of \cref{thm:unbiased-grad} later. 
\qed
\end{remark}

\subsection{Gradient Estimator via Program Transformation}
\label{sec:grad-estm-prog-trans}

Let $c$ be a command that always terminates and does not have a double-sampling error,
and let $\sigma_\theta \in \State[\theta]$. We define the \emph{partial density function $\pfun{c,\sigma_\theta}{S}$ of $c$ over a subset $S \subseteq \Name$} as {%
\begin{align*}
  \pfun{c, \sigma_\theta}{S} &
  {} : \State[\Name] \to (0, \infty),
  &
  \pfun{c, \sigma_\theta}{S}(\sigma_n) & 
  {} \defeq {} \prod_{\mu \in S} \db{c}(\sigma_0 \oplus \sigma_\theta \oplus \sigma_n)(\pr_\mu),
\end{align*}%
}%
where $\sigma_0$ is set as in the definition of $\pfun{c,\sigma_\theta}{}$ in \cref{eq:density-of-c}. The partial density $\pfun{c,\sigma_\theta}{S}$ is essentially the full density
$\pfun{c,\sigma_\theta}{}$ in \cref{eq:density-of-c} with the omission of the factors not mentioned in $S$. Intuitively, it computes the density of the random
variables in $S$ conditioned on the random variables outside of $S$.

The SPGE computes an approximate gradient of the objective $\cL$ in \cref{eqn:ELBO} using the program transformation 
in the previous subsection. Its inputs are a model $c_m$, a guide $c_g$, parameters $\theta$ to optimise, and 
a reparameterisation plan $\pi$,
where
{\abovedisplayskip=\topsep
  \abovedisplayshortskip=\topsep
  \belowdisplayskip=\topsep
  \belowdisplayshortskip=\topsep
\begin{align}
  \label{eq:spge-input-assume}
  \begin{array}{r@{\hskip\labelsep}l@{\hskip\leftmargin}}
    \text{\labelitemi} & \text{$c_m$, $c_g$, and $\ctr{c_g}{\pi}$ always terminate and do not have a double-sampling error, and} \\
    \text{\labelitemi} & \text{$c_g$ defines the normalised probability density $\pfun{c_g,\sigma_\theta}{}$ for all $\sigma_\theta \in \State[\theta]$.}
  \end{array}
\end{align}}%
Given these inputs, the SPGE computes an approximate gradient in three steps. First, it defines the set $\repname(\pi) \subseteq \Name$ of random variables to be reparameterised:
\begin{align*}
  \repname(\pi) \defeq \{(\alpha,i) \in \Name \,\mid\, {} 
  (\cname(\alpha,\_), \_, \_) \in \dom(\pi)\}
\end{align*}
where $\_$ means some existentially quantified (meta) variable. Second, the SPGE transforms the guide $c_g$ to $\ctr{c_g}{\pi}$,
and draws a sample $\hat{\sigma}_n$ from $\pfun{\ctr{c_g}{\pi},\sigma_\theta}{}$.%
\footnote{%
In practice, the SPGE often draws a fixed number of independent samples $\smash{ \hat{\sigma}^{(1)}_n,\ldots,\hat{\sigma}^{(M)}_n }$ from $\smash{ \pfun{\ctr{c_g}{\pi},\sigma_\theta}{} }$
and computes $\smash{ \frac{1}{M} \sum_{i = 1}^M \mathrm{grad\_est}(\sigma_\theta; \hat{\sigma}^{(i)}_n) }$
as an estimate of $\nabla_\theta \cL(\sigma_\theta)$.
The presented results hold for this more general case as well.%
}
Drawing a sample $\hat{\sigma}_n$ makes sense here since $\pfun{\ctr{c_g}{\pi},\sigma_\theta}{}$ is a probability density (i.e., it normalises to $1$) {by \cref{remark:p-probability}}. Another important point is that drawing $\hat{\sigma}_n$ can be done simply by executing $\ctr{c_g}{\pi}$ in the standard sampling semantics (not in our density semantics), where each sample command is interpreted as a random draw, not as a density calculator. Third, the SPGE computes the following approximation of $\nabla_\theta \cL(\sigma_\theta)$ and returns it as a result:
\begin{equation}
\begin{aligned}
  \label{eq:sel-grad-est}
  \mathrm{grad\_est}\rev{({\sigma}_\theta; \hat{\sigma}_n)} & \defeq
      \left(\nabla_\theta \log \pfun{c_g, \sigma_\theta}{\Name \setminus \repname(\pi)}(\sigma_n')\right)
      \cdot \log (\pfun{c_m,\sigma_\theta}{}(\sigma_n')/\pfun{c_g,\sigma_\theta}{}(\sigma_n'))
   \\
   & \qquad{}
    - \nabla_\theta \log{\pfun{c_g,\sigma_\theta}{\repname(\pi)}(\sigma_n')}
    + \nabla_\theta \log{\pfun{c_m,\sigma_\theta}{}(\sigma_n')},
    \qquad
    \text{for}\ \sigma'_n \defeq \vfun{\ctr{c_g}{\pi}, \sigma_\theta}(\hat{\sigma}_n).
\end{aligned}
\end{equation} 
Recall that if a command $c$ always terminates, both the partial density $\pfun{c,\sigma_\theta}{S}(\sigma_n)$ and the full density $p_{c,\sigma_\theta}(\sigma_n)$ can be computed simply by executing $c$ in our semantics and calculating the defining formulas of both densities from the final state of the execution. Thus, all the terms in $\mathrm{grad\_est}$ can be computed by executing $c_g$ and $c_m$ according to our density semantics or differentiating the results of these executions via, for instance, automatic differentiation as done in Pyro.
\rev{Note that $\mathrm{grad\_est}$ applies two non-trivial optimisations, when compared with the (naive) SPGE explained in \cref{eq:spge-toy}:
  its  first term involves a partial density of $c_g$ instead of the full density of $\ctr{c_g}{\pi}$,
  and its second term involves (again) a partial density of $c_g$ instead of the full density of $c_g$.}

Is the SPGE correct in any sense? The answer depends on its inputs. If the inputs satisfy the requirements that we will explain soon, the result of the SPGE is precisely $\nabla_\theta \cL(\sigma_\theta)$ on average, that is,
$\nabla_\theta \cL(\sigma_\theta) = \E[ \mathrm{grad\_est}(\sigma_\theta; \hat{\sigma}_n)]$,
where the expectation is taken over the sample $\hat{\sigma}_n$
used by the SPGE. This property is called \emph{unbiasedness}, and it plays the crucial role for ensuring that
parameters updated iteratively with estimated gradients converge to a local optimum.

Let us now spell out the requirements on the inputs of the SPGE. To do so, we need to introduce one further concept for the reparameterisation plans $\pi$.
\begin{definition}
  A reparameterisation plan $\pi$ is \emph{simple}
  if for all $(n,d,\lambda y.e)$ and $(n',d',\lambda y'.e')$ in $\NameEx \times \DistEx \times \LamEx$ such that
  $n$ and $n'$ have the same string part, we have
  $(n,d,\lambda y.e) \in \dom(\pi) \iff (n',d',\lambda y'.e') \in \dom(\pi)$.
  \qed
\end{definition}
\noindent
The simplicity is one of the requirements that the SPGE imposes on $\pi$.
\rev{It ensures the following property of the set $\repname(\pi)$, which the SPGE relies on when computing $\mathrm{grad\_est}$ by \cref{eq:sel-grad-est}:
  $\repname(\pi)$ (and $\Name \setminus \repname(\pi)$) over-approximates the set of random variables
  that, if sampled, are (and are not) reparameterised by $\ctr{\,\cdot\,}{\pi}$.}
Specifically, it forbids $\pi$ from using any syntax-specific information of the arguments of a sample command when it decides whether to transform the command or not.
All the requirements of the SPGE, including the simplicity just explained, are summarised in the next theorem.
\begin{theorem}
  \label{thm:unbiased-grad}
  \rev{Let $c_m$, $c_g$, and $\pi$ be the inputs to the SPGE (i.e., they satisfy the assumptions in \cref{eq:spge-input-assume}).}
  Suppose that $\cL(\sigma_\theta)$ and $\nabla_\theta \cL(\sigma_\theta)$ are well-defined for every $\sigma_\theta \in \State[\theta]$.
  Further, assume that every sample command in $c_g$ has $\lambda y.y$ as its third argument, 
  and $c_g$ does not have observe commands.
  Then,  for all $\sigma_\theta \in \State[\theta]$,
  \begingroup
  \addtolength{\abovedisplayskip}{-0.4em}
  \begin{align}
    \label{eq:spge-unbiased}
    & \nabla_\theta \cL(\sigma_\theta) = 
    \E_{\pfun{\ctr{c_g}{\pi},\sigma_\theta}{}(\hat{\sigma}_n)}\left[ \mathrm{grad\_est}(\sigma_\theta; \hat{\sigma}_n)\right]
  \end{align}
  \endgroup
  if $\pi$ satisfies the following requirements: 
  \begin{enumerate}[label=(R\arabic*),ref=(R\arabic*),itemindent=0.5em]
    \item \label{cond:pi}
  $\pi$ is valid and simple.
  \item \label{cond:dens-diff}
    The below functions from $\State[\theta] \times \State[\Name]$ to $(0,\infty)$ are differentiable in $\theta \cup \repname(\pi)$ jointly:
    \begin{align*}
      (\sigma_\theta, \sigma_n) 
      &\longmapsto \pfun{c_m, \sigma_\theta}{}(\sigma_n),
      &
      (\sigma_\theta, \sigma_n) 
      &\longmapsto \pfun{c_g, \sigma_\theta}{\repname(\pi)}(\sigma_n),
      &
      (\sigma_\theta, \sigma_n)
      &\longmapsto \pfun{c_g, \sigma_\theta}{\Name \setminus \repname(\pi)}(\sigma_n).
    \end{align*}
  \item \label{cond:val-diff}
  For all $\sigma_n \in \State[\Name]$,
    the below functions on $\State[\theta]$ are differentiable in $\theta$ jointly:
    \begin{align*}
      \sigma_\theta &\longmapsto \vfun{\ctr{c_g}{\pi}, \sigma_\theta}(\sigma_n), 
      &
      \sigma_\theta &\longmapsto \pfun{\ctr{c_g}{\pi}, \sigma_\theta}{\repname(\pi)}(\sigma_n), 
      &
      \sigma_\theta &\longmapsto \pfun{\ctr{c_g}{\pi}, \sigma_\theta}{\Name \setminus \repname(\pi)}(\sigma_n). 
    \end{align*}
  \item \label{cond:grad-dens-zero}
    For all $\sigma_\theta \in \State[\theta]$ and $\sigma_n \in \State[\Name]$, we have
    $\nabla_\theta \pfun{\ctr{c_g}{\pi}, \sigma_\theta}{\repname(\pi)}(\sigma_n) = 0$.
  \item \label{cond:int-diff}
    The below equations hold for all $\sigma_\theta \in \State[\theta]$:
    \begingroup
    \addtolength{\abovedisplayskip}{-0.3em}
    \addtolength{\belowdisplayskip}{0.05em}
    \begin{align*}
      \nabla_\theta \int  d\sigma_n \Big(\pfun{\ctr{c_g}{\pi}, \sigma_\theta}{}(\sigma_n)\Big)
      &= \int  d\sigma_n \nabla_\theta \Big(\pfun{\ctr{c_g}{\pi}, \sigma_\theta}{}(\sigma_n)\Big),
      \\[-0.45em]
      \nabla_\theta \int d \sigma_n \Big(
      \pfun{{\ctr{c_g}{\pi}},\sigma_\theta}{}(\sigma_n)
      \cdot \log\frac{\pfun{c_m, \sigma_\theta}{}({\sigma_n'})}{\pfun{c_g, \sigma_\theta}{}({\sigma_n'})}\Big)
      &=
      \int d \sigma_n\,\nabla_\theta \Big(
      \pfun{{\ctr{c_g}{\pi}},\sigma_\theta}{}(\sigma_n)
      \cdot \log\frac{\pfun{c_m, \sigma_\theta}{}({\sigma_n'})}{\pfun{c_g, \sigma_\theta}{}({\sigma_n'})} \Big).
    \end{align*}
    \endgroup
    In the second equation, we write $\sigma'_n$ for $\,\vfun{\ctr{c_g}{\pi}, \sigma_\theta}({\sigma_n})$.
  \end{enumerate}
\end{theorem}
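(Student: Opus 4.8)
# Proof Proposal for Theorem \ref{thm:unbiased-grad}

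\textbf{Overall strategy.} The plan is to show that the expectation on the right-hand side of \cref{eq:spge-unbiased} equals $\nabla_\theta \cL(\sigma_\theta)$ by (i) rewriting $\cL(\sigma_\theta)$ using the semantics-preservation result of \cref{thm:unbiased-val} so that the integral is taken against the transformed guide density $\pfun{\ctr{c_g}{\pi},\sigma_\theta}{}$ with the argument $\sigma'_n = \vfun{\ctr{c_g}{\pi},\sigma_\theta}(\sigma_n)$; (ii) pushing $\nabla_\theta$ inside the integral using \ref{cond:int-diff}; (iii) differentiating the integrand by the product and chain rules, which is legitimate by the pointwise differentiability in \ref{cond:dens-diff} and \ref{cond:val-diff}; and (iv) massaging the resulting expression into exactly $\pfun{\ctr{c_g}{\pi},\sigma_\theta}{}(\sigma_n) \cdot \mathrm{grad\_est}(\sigma_\theta;\sigma_n)$, using the log-derivative trick, the factorisation of densities into partial densities over $\repname(\pi)$ and its complement, and crucially \ref{cond:grad-dens-zero} to kill the term coming from differentiating $\pfun{\ctr{c_g}{\pi},\sigma_\theta}{\repname(\pi)}$ through $\theta$ in the "prior" factor. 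Throughout, the simplicity and validity of $\pi$ from \ref{cond:pi} are what license the splitting $\pfun{\ctr{c_g}{\pi},\sigma_\theta}{} = \pfun{\ctr{c_g}{\pi},\sigma_\theta}{\repname(\pi)} \cdot \pfun{\ctr{c_g}{\pi},\sigma_\theta}{\Name\setminus\repname(\pi)}$ and the fact that on the un-reparameterised variables the transformed and original densities agree, i.e. $\pfun{\ctr{c_g}{\pi},\sigma_\theta}{\Name\setminus\repname(\pi)}$ relates to $\pfun{c_g,\sigma_\theta}{\Name\setminus\repname(\pi)}$ via the value function.

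\textbf{Step-by-step.} First I would write $\cL(\sigma_\theta) = \E_{\pfun{c_g,\sigma_\theta}{}(\sigma_n)}[\log(\pfun{c_m,\sigma_\theta}{}(\sigma_n)/\pfun{c_g,\sigma_\theta}{}(\sigma_n))]$ and apply \cref{thm:unbiased-val} with $h$ chosen as the log-ratio composed with $\vfun{}$-appropriate bookkeeping (using also \cref{remark:p-probability} so both sides are genuine probability densities); this yields $\cL(\sigma_\theta) = \int d\sigma_n\, \pfun{\ctr{c_g}{\pi},\sigma_\theta}{}(\sigma_n)\cdot \log(\pfun{c_m,\sigma_\theta}{}(\sigma'_n)/\pfun{c_g,\sigma_\theta}{}(\sigma'_n))$ with $\sigma'_n = \vfun{\ctr{c_g}{\pi},\sigma_\theta}(\sigma_n)$. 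Second, apply $\nabla_\theta$ and move it inside via the second identity of \ref{cond:int-diff}. Third, differentiate the integrand $\pfun{\ctr{c_g}{\pi},\sigma_\theta}{}(\sigma_n)\cdot \log(\cdots)$ by the product rule; the first resulting summand is $\nabla_\theta(\pfun{\ctr{c_g}{\pi},\sigma_\theta}{}(\sigma_n))\cdot\log(\cdots)$, which via the log-derivative trick becomes $\pfun{\ctr{c_g}{\pi},\sigma_\theta}{}(\sigma_n)\cdot\nabla_\theta\log\pfun{\ctr{c_g}{\pi},\sigma_\theta}{}(\sigma_n)\cdot\log(\cdots)$. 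Now split $\log\pfun{\ctr{c_g}{\pi},\sigma_\theta}{} = \log\pfun{\ctr{c_g}{\pi},\sigma_\theta}{\repname(\pi)} + \log\pfun{\ctr{c_g}{\pi},\sigma_\theta}{\Name\setminus\repname(\pi)}$; by \ref{cond:grad-dens-zero} the first piece has zero $\theta$-gradient, and the second piece equals $\nabla_\theta\log\pfun{c_g,\sigma_\theta}{\Name\setminus\repname(\pi)}(\sigma'_n)$ after accounting for the value-function substitution (this is where I would need a small lemma relating the partial density of $\ctr{c_g}{\pi}$ over un-reparameterised variables to that of $c_g$, which follows from the structure of $\ctr{\cdot}{\pi}$ on sample commands and the fact that $\pi$ only touches variables in $\repname(\pi)$). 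This matches the first line of \cref{eq:sel-grad-est}. The second summand from the product rule is $\pfun{\ctr{c_g}{\pi},\sigma_\theta}{}(\sigma_n)\cdot\nabla_\theta\log(\pfun{c_m,\sigma_\theta}{}(\sigma'_n)/\pfun{c_g,\sigma_\theta}{}(\sigma'_n))$; applying the chain rule through $\sigma'_n = \vfun{\ctr{c_g}{\pi},\sigma_\theta}(\sigma_n)$ (differentiability of which is \ref{cond:val-diff}) and regrouping gives the remaining terms $-\nabla_\theta\log\pfun{c_g,\sigma_\theta}{\repname(\pi)}(\sigma'_n) + \nabla_\theta\log\pfun{c_m,\sigma_\theta}{}(\sigma'_n)$ plus contributions that reassemble into the full derivative; care is needed because $\nabla_\theta\log\pfun{c_g,\sigma_\theta}{}(\sigma'_n) = \nabla_\theta\log\pfun{c_g,\sigma_\theta}{\repname(\pi)}(\sigma'_n) + \nabla_\theta\log\pfun{c_g,\sigma_\theta}{\Name\setminus\repname(\pi)}(\sigma'_n)$ and only the $\repname(\pi)$-part survives into the second line of $\mathrm{grad\_est}$ because the complement part has already been absorbed. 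Finally, dividing back through by $\pfun{\ctr{c_g}{\pi},\sigma_\theta}{}(\sigma_n)$ inside the expectation gives exactly $\E[\mathrm{grad\_est}(\sigma_\theta;\hat\sigma_n)]$.

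\textbf{Main obstacle.} The hard part will be the careful accounting of which partial-density factors, evaluated at which argument ($\sigma_n$ vs. $\sigma'_n$), appear with which sign — in particular verifying that the $\log$-derivative-trick term arising from the "prior" factor of $\ctr{c_g}{\pi}$ over $\repname(\pi)$ genuinely vanishes by \ref{cond:grad-dens-zero} \emph{and} that it is the original-guide partial density $\pfun{c_g,\sigma_\theta}{\Name\setminus\repname(\pi)}$ (not the transformed one) that shows up in the first line of $\mathrm{grad\_est}$. This requires a precise statement, which I would isolate as a lemma, that for a simple valid $\pi$, on the un-reparameterised coordinates the execution of $\ctr{c_g}{\pi}$ agrees with that of $c_g$ after the value-function reparameterisation of the $\repname(\pi)$-coordinates, so that $\pfun{\ctr{c_g}{\pi},\sigma_\theta}{\Name\setminus\repname(\pi)}(\sigma_n) = \pfun{c_g,\sigma_\theta}{\Name\setminus\repname(\pi)}(\vfun{\ctr{c_g}{\pi},\sigma_\theta}(\sigma_n))$ and similarly that the value function of $\ctr{c_g}{\pi}$ restricted to $\Name\setminus\repname(\pi)$ is the identity composed with that of $c_g$. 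Establishing this cleanly — probably by structural induction on $c_g$ using the simplicity hypothesis to ensure $\pi$'s decision to transform depends only on the string part of the name — is the technical crux; once it is in hand, the rest is bookkeeping with the product rule, chain rule, and the two interchange-of-derivative-and-integral identities of \ref{cond:int-diff}.
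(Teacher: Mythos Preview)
Your overall strategy matches the paper's proof closely: rewrite $\cL$ via \cref{thm:unbiased-val}, interchange by \ref{cond:int-diff}, apply the product rule and log-derivative trick, split densities by $\repname(\pi)$ versus its complement, kill one term by \ref{cond:grad-dens-zero}, and use the key lemma relating $\pfun{\ctr{c_g}{\pi},\sigma_\theta}{\Name\setminus\repname(\pi)}(\sigma_n)$ to $\pfun{c_g,\sigma_\theta}{\Name\setminus\repname(\pi)}(\sigma'_n)$. That lemma is exactly what the paper isolates (as \cref{lem:dens-misc}), together with the density-decomposition lemma for observe-free commands (\cref{lem:dens-decomp}).

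There is, however, a genuine gap in your bookkeeping. After the product rule, the second summand contributes
\[
-\nabla_\theta \log \pfun{c_g,\sigma_\theta}{}(\sigma'_n)
= -\nabla_\theta \log \pfun{c_g,\sigma_\theta}{\repname(\pi)}(\sigma'_n)
  - \nabla_\theta \log \pfun{c_g,\sigma_\theta}{\Name\setminus\repname(\pi)}(\sigma'_n),
\]
and only the first of these two pieces appears in $\mathrm{grad\_est}$. The second piece is \emph{not} ``absorbed'' into the first summand: the first summand carries $\nabla_\theta\log\pfun{c_g,\sigma_\theta}{\Name\setminus\repname(\pi)}(\sigma'_n)$ \emph{multiplied by} $\log(\pfun{c_m}{}/\pfun{c_g}{})$, so there is no algebraic cancellation. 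What actually happens is that the orphan term has zero expectation,
\[
\E_{\pfun{\ctr{c_g}{\pi},\sigma_\theta}{}(\sigma_n)}\big[\nabla_\theta \log \pfun{c_g,\sigma_\theta}{\Name\setminus\repname(\pi)}(\sigma'_n)\big]=0,
\]
and this is a separate sub-argument: convert back to $\pfun{\ctr{c_g}{\pi},\sigma_\theta}{\Name\setminus\repname(\pi)}(\sigma_n)$ via your key lemma, add back the vanishing $\repname(\pi)$-part using \ref{cond:grad-dens-zero}, recombine into $\nabla_\theta\log\pfun{\ctr{c_g}{\pi},\sigma_\theta}{}(\sigma_n)$, undo the log-derivative trick, and then use the \emph{first} equation of \ref{cond:int-diff} together with $\int \pfun{\ctr{c_g}{\pi},\sigma_\theta}{}(\sigma_n)\,d\sigma_n=1$ (from \cref{remark:p-probability}) to conclude $\int \nabla_\theta \pfun{\ctr{c_g}{\pi},\sigma_\theta}{}(\sigma_n)\,d\sigma_n = \nabla_\theta 1 = 0$. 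You only invoke the second identity of \ref{cond:int-diff}; the first one is precisely what closes this gap.
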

\noindent
To be clear, $f : \State[K] \to \R^n$ for $K \subseteq \Var$ is said to be {\em differentiable in $K' \subseteq K$ jointly}
if for any $\tau \in \State[K \setminus K']$, $f|_{[\tau]} : \State[K'] \to \R^n$ is (jointly) differentiable,
where $f|_{[\tau]}(\sigma) \defeq f(\sigma \oplus \tau)$.

In this work, we focus on the requirements \cref{cond:dens-diff,cond:val-diff} about smoothness.
They require that five 
density functions of $c_m$, $c_g$, and $\ctr{c_g}{\pi}$,
and the value function of $\ctr{c_g}{\pi}$ be differentiable in certain variables.
We will develop a program-analysis framework to check these differentiability requirements soundly and automatically (\cref{s:sa:new}),
and will describe an algorithm to the SPGE variable-selection problem, 
using the developed analysis framework (\cref{sec:var-sel-alg}).
The remaining requirements \cref{cond:pi}, \cref{cond:grad-dens-zero}, and \cref{cond:int-diff} are of less interest in this work. 
\cref{cond:pi,cond:grad-dens-zero} can be guaranteed by simple syntactic checks and our way of constructing $\pi$ 
(\cref{lem:suff-cond-grad-dens-zero}). \cref{cond:int-diff} follows from \cref{cond:dens-diff} and \cref{cond:val-diff}
under a regularity condition on densities (\cref{lem:suff-cond-int-diff}) which is usually satisfied in practice;
\rev{in some cases, however, the condition might not hold,
  and we leave it as future work to automatically discharge the condition (which is about the integrability of local Lipschitz constants of densities)
  or more generally \cref{cond:int-diff}.%
  \footnote{\rev{Some sufficient conditions for a regularity condition similar to the one considered here
  have been studied for certain classes of model-guide pairs, e.g., in \cite{LeeYRY20}.}}}

We point out that Pyro uses the SPGE in their inference engine, but without checking the above requirements.
In particular, its default option simply uses the $\pi$ that transforms all the continuous random variables in a guide,
and this can easily violate the requirements and make the SPGE biased. 


\subsection{Local Lipschitzness for Relaxed Requirements}
\label{sec:local-lips-req}

In \cref{thm:unbiased-grad}, we considered the requirements \cref{cond:dens-diff,cond:val-diff}
about the differentiability of density and value functions,
as a sufficient condition for the unbiasedness of the SPGE.
They are, however, sometimes too strong to hold in practice due to the use of popular non-differentiable functions (e.g., ReLU).
As we will see in \cref{sec:impl-eval}, the requirements are indeed violated by some representative Pyro programs
even though the conclusion of \cref{thm:unbiased-grad} holds for those programs (i.e., the estimated gradients
by the SPGE for those programs are unbiased).

To validate the unbiasedness of the SPGE for more examples in practice,
we consider the following relaxation of the requirements \cref{cond:dens-diff,cond:val-diff}, which changes differentiability to local Lipschitzness:
\begin{enumerate}[label=(R\arabic*'),ref=(R\arabic*'),itemindent=0.8em]
\setcounter{enumi}{1}
\item \label{cond:dens-lips}
  The functions in \cref{cond:dens-diff} are locally Lipschitz in $\theta \cup \repname(\pi)$ jointly.
\item \label{cond:val-lips}
  For every $\sigma_n \in \State[\Name]$,
  the functions in \cref{cond:val-diff} are locally Lipschitz in $\theta$ jointly.
\end{enumerate}
Here \rev{$f : V \to \R^m$ for $V \subseteq \R^n$} is \emph{Lipschitz}
if there is $C>0$ such that $\|f(x)-f(x')\|_2 \leq C\|x-x'\|_2$ for~all \rev{$x,x' \in V$};
and $f$ is \emph{locally Lipschitz}
if for all $x \in \R^n$, there is an open neighborhood $U \subseteq \R^n$ of $x$ such that
$f|_U : U \to \R^m$ is Lipschitz.
Further, $f : \State[K] \to \R^m$ for $K \subseteq \Var$ is {\em locally Lipschitz in $K' \subseteq K$ jointly}
if for any $\tau \in \State[K \setminus K']$, $f|_{[\tau]} : \State[K'] \to \R^m$ is locally Lipschitz,
where $f|_{[\tau]}(\sigma) \defeq f(\sigma \oplus \tau)$.
Although differentiability does not imply local Lipschitzness,
\emph{continuous} differentiability does. Since most differentiable functions used in practice are
continuously differentiable, asking for \cref{cond:dens-lips,cond:val-lips} amounts to relaxing the requirements
of \cref{cond:dens-diff,cond:val-diff} in practice.

We choose local Lipschitzness as an alternative to differentiability in \cref{cond:dens-diff,cond:val-diff} for two main reasons.
First, local Lipschitzness is satisfied by most functions used in practice, which can even be non-differentiable (e.g., ReLU).
This would allow us to validate the unbiasedness of the SPGE for more programs, even when they use non-differentiable functions.
Second, using local Lipschitzness in \cref{cond:dens-diff,cond:val-diff} does not break the results given in \cref{sec:grad-estm-prog-trans},
even though local Lipschitzness is more practically permissive than differentiability (as explained above).
\rev{In particular, we have a counterpart of \cref{thm:unbiased-grad}
  that uses \cref{cond:dens-lips} and \cref{cond:val-lips} instead of \cref{cond:dens-diff} and \cref{cond:val-diff}:
  \begin{theorem}
    \label{thm:unbiased-grad-lip}
    Consider the setup of \cref{thm:unbiased-grad}.
    Then, for all $\sigma_\theta \in \State[\theta]$,
    \cref{eq:spge-unbiased} holds
    if $\pi$ satisfies the requirements \cref{cond:pi}, \cref{cond:dens-lips}, \cref{cond:val-lips}, \cref{cond:grad-dens-zero}, and \cref{cond:int-diff}.
  \end{theorem}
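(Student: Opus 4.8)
The plan is to reuse the proof of \cref{thm:unbiased-grad} essentially verbatim, observing that the only role played there by the differentiability hypotheses \cref{cond:dens-diff,cond:val-diff} is to guarantee certain \emph{pointwise} identities (the log-derivative rule $\nabla_\theta\log p = (\nabla_\theta p)/p$, the product rule inside $\mathrm{grad\_est}$, and the chain rule through the substitution $\sigma_n\mapsto\vfun{\ctr{c_g}{\pi},\sigma_\theta}(\sigma_n)$) at the points where the relevant functions are evaluated, and that all of these identities survive if ``differentiable'' is weakened to ``differentiable Lebesgue-almost-everywhere''. The remaining hypotheses \cref{cond:pi}, \cref{cond:grad-dens-zero}, \cref{cond:int-diff} are shared by both theorems and are used unchanged; in particular the two interchanges of $\nabla_\theta$ and $\int d\sigma_n$ are \emph{assumed} via \cref{cond:int-diff}, so they are never re-derived from smoothness of the densities and nothing there needs revisiting.

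The single new analytic ingredient is \textbf{Rademacher's theorem}: a locally Lipschitz map $\R^n\to\R^m$ is differentiable Lebesgue-a.e. Applying it to the three joint maps in \cref{cond:dens-lips} shows $\pfun{c_m,\sigma_\theta}{}$, $\pfun{c_g,\sigma_\theta}{\repname(\pi)}$, $\pfun{c_g,\sigma_\theta}{\Name\setminus\repname(\pi)}$ are differentiable in $\theta\cup\repname(\pi)$ at a.e.\ point of $\State[\theta]\times\State[\Name]$, and applying it to the maps in \cref{cond:val-lips} shows, for each $\sigma_n$, that $\sigma_\theta\mapsto\vfun{\ctr{c_g}{\pi},\sigma_\theta}(\sigma_n)$ and the two reparameterised partial densities are differentiable in $\theta$ a.e. I would then transcribe the computation from the proof of \cref{thm:unbiased-grad}: use \cref{cond:int-diff} to move $\nabla_\theta$ inside the two integrals defining $\nabla_\theta\cL(\sigma_\theta)$; expand the integrands by the product and log-derivative rules; use \cref{cond:grad-dens-zero} together with the validity and simplicity of $\pi$ from \cref{cond:pi} (and \cref{thm:unbiased-val}, \cref{remark:p-probability}) to rewrite the reparameterised factors in terms of $\pfun{c_g,\sigma_\theta}{\repname(\pi)}$ and $\pfun{c_g,\sigma_\theta}{\Name\setminus\repname(\pi)}$; and recognise the resulting integral as $\E_{\pfun{\ctr{c_g}{\pi},\sigma_\theta}{}(\hat\sigma_n)}[\mathrm{grad\_est}(\sigma_\theta;\hat\sigma_n)]$. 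Each manipulation is a pointwise identity that now holds only a.e., but since every quantity involved is either an integral against Lebesgue measure on $\State[\Name]$ or the expectation against $\pfun{\ctr{c_g}{\pi},\sigma_\theta}{}$ (which, being a probability density, is absolutely continuous w.r.t.\ Lebesgue measure by \cref{remark:p-probability}), almost-everywhere validity is exactly what the argument consumes.

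The point needing genuine care — and the step I expect to be the main obstacle — is that $\mathrm{grad\_est}(\sigma_\theta;\hat\sigma_n)$ evaluates the densities of $c_m$ and $c_g$ not at $\hat\sigma_n$ but at $\sigma_n'=\vfun{\ctr{c_g}{\pi},\sigma_\theta}(\hat\sigma_n)$ and differentiates through this substitution, so one must check that the Lebesgue-null non-differentiability loci produced by Rademacher are still null \emph{after pulling back along the value function} and at the \emph{given} $\sigma_\theta$ (the theorem asserts the identity for all $\sigma_\theta$, whereas Rademacher a priori only gives ``a.e.\ $\sigma_\theta$''). The tool for this is \cref{thm:unbiased-val}: taking $h=\ind{-\in A}$ there shows that the pushforward of $\pfun{\ctr{c_g}{\pi},\sigma_\theta}{}$ along $\vfun{\ctr{c_g}{\pi},\sigma_\theta}$ equals $\pfun{c_g,\sigma_\theta}{}$, hence is absolutely continuous w.r.t.\ Lebesgue measure; therefore any Lebesgue-null set of ``output'' values $\sigma_n'$ — including the non-differentiability locus of the densities at the fixed $\sigma_\theta$ — pulls back to a set of $\hat\sigma_n$ of $\pfun{\ctr{c_g}{\pi},\sigma_\theta}{}$-measure zero. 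Combined with the a.e.-in-$\theta$ differentiability of $\sigma_\theta\mapsto\vfun{\ctr{c_g}{\pi},\sigma_\theta}(\hat\sigma_n)$, and with the fact that $\mathrm{grad\_est}$ enters only through an expectation (so its value on an exceptional null set of $\hat\sigma_n$ is immaterial once its integrability — a by-product of \cref{cond:int-diff}, as in \cref{thm:unbiased-grad} — is in hand), this makes the chain-rule expansion of $\mathrm{grad\_est}$ (of the shape spelled out in the chain-rule footnote of \cref{sec:overview}) valid for $\pfun{\ctr{c_g}{\pi},\sigma_\theta}{}$-almost every $\hat\sigma_n$, which is all the rest of the argument requires.
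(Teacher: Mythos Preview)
Your proposal is correct and takes the same route as the paper: both transplant the proof of \cref{thm:unbiased-grad}, replacing the everywhere-differentiability justifications by almost-everywhere ones---you invoke Rademacher directly, the paper invokes the packaged \cref{lem:diff-rule-loclip} (a.e.\ validity of the $+$, $\times$, $\log$ rules for locally Lipschitz inputs), which is the same mechanism.

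One remark: your final paragraph's concern about the chain rule through $\vfun{\ctr{c_g}{\pi},\sigma_\theta}$ and the pullback of null loci is more than the argument actually needs. In the derivation of \cref{thm:unbiased-grad}, the terms $\nabla_\theta\log\pfun{c_m,\sigma_\theta}{}(\sigma_n')$ and their siblings are carried intact and never decomposed via the chain rule through $v$; the only differentiation rules applied are those for $+$, $\times$, $\log$---precisely the scope of \cref{lem:diff-rule-loclip}, where the \emph{outer} function is everywhere differentiable and \cref{lem:chain-rule-diff-loclip} applies without any pullback-of-null-sets issue. Your absolute-continuity observation (pushforward of $\pfun{\ctr{c_g}{\pi},\sigma_\theta}{}$ along $\vfun{\ctr{c_g}{\pi},\sigma_\theta}$ equals $\pfun{c_g,\sigma_\theta}{}$, via \cref{thm:unbiased-val}) is correct, but the paper's proof does not call on it.
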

  \noindent
  We can obtain \cref{thm:unbiased-grad-lip} thanks to three properties of local Lipschitzness:
  locally Lipschitz functions are closed under function composition, have well-defined gradients almost everywhere,
  and satisfy the chain rule almost everywhere in restricted settings (\cref{lem:diff-rule-loclip}).
  Although the latter two properties are weaker than the corresponding properties of differentiability
  (i.e., differentiable functions always have well-definedness gradients and satisfy the chain rule),
  they are still strong enough to prove the theorem.
  
  As in \cref{sec:grad-estm-prog-trans}, we focus on the requirements \cref{cond:dens-lips} and \cref{cond:val-lips}
  and less on \cref{cond:pi}, \cref{cond:grad-dens-zero}, and \cref{cond:int-diff}.
  Note that \cref{cond:pi} and \cref{cond:grad-dens-zero} can be checked syntactically in the same way as discussed in \cref{sec:grad-estm-prog-trans}.
  On the other hand, \cref{cond:int-diff} follows from \cref{cond:dens-lips} and \cref{cond:val-lips}
  now under two (instead of one) regularity conditions on densities (\cref{lem:suff-cond-int-diff-lip}),
  where the first condition is the one mentioned in \cref{sec:grad-estm-prog-trans}
  and the second (new) condition is about some form of almost-everywhere differentiability of densities.
  Although we believe that the two regularity conditions would usually hold in practice,
  they can be violated in some cases that involve locally Lipschitz, non-differentiable functions;
  hence, it would be worthwhile to devise an automatic way of checking the two conditions or more generally \cref{cond:int-diff},
  which we leave as future work.}

Because local Lipschitzness property has wider coverage than differentiability in practice while ensuring that the results in \cref{sec:grad-estm-prog-trans} remain valid,
our implementation and experiments consider the option of using \cref{cond:dens-lips,cond:val-lips}
as well as that of using \cref{cond:dens-diff,cond:val-diff};
\rev{see \cref{:4:inst}--\cref{sec:impl-eval} for details.}

\section{Program Analysis for Smoothness}
\label{s:sa:new} \label{s:sa:n}

Recall that our goal is to develop an algorithm for the SPGE variable-selection problem in \cref{def:var-sel-prob}, which asks for finding a large set $S$ of random variables with a certain property when given a model $c_m$, a guide $c_g$, and a reparameterisation plan $\pi_0$. When rephrased using the terminologies that we covered so far, finding such an $S$ amounts to finding a restriction $\pi$ of the given $\pi_0$ such that $(c_m, c_g,\pi)$ satisfies the requirements in \cref{thm:unbiased-grad} \rev{(or \ref{thm:unbiased-grad-lip})}. Thus, the key for developing a desired algorithm for the problem lies in constructing an automatic method for proving that the requirements in \cref{thm:unbiased-grad} \rev{(or \ref{thm:unbiased-grad-lip})}, in particular, 
 the smoothness requirements \cref{cond:dens-diff,cond:val-diff} \rev{(or \cref{cond:dens-lips,cond:val-lips})} are met.
In this section, we propose a program analysis for smoothness properties, which can help find $\pi$ that meets \rev{the smoothness requirements}, 
and which, together with the optimiser in the next section, leads to an algorithm for solving the SPGE variable-selection problem.

We first define a parametric abstraction for smoothness properties (\cref{:1:abs}).
We then describe a program analysis based on this abstraction
and prove the soundness of the analysis (\cref{:2:sa}). 
We finally instantiate the analysis to differentiability and local Lipschitzness (\cref{:4:inst}).
The results in this section are not limited to PPLs,
but are applicable to general imperative programming languages.

\subsection{Parametric Abstraction for Smoothness Properties}
\label{:1:abs}
At a high level, our parametric abstraction for smoothness properties is built out of two components. 
The first is a predicate over commands that expresses a target smoothness property but in a conditional form. The predicate is parameterised by two sets of variables, $K$ for the input variables and $L$ for the output variables. Intuitively, the predicate holds for a command if conditioning the input variables outside of $K$ to any fixed values and varying only the ones in $K$ makes the command a smooth function on the output variables in $L$. Our program analysis tracks a conditional smoothness property formalised by this predicate, and in so doing, it identifies a smooth part of a given command, even when the command fails to be so with respect to some variables.

The second component is also a predicate over commands, but it deals with dependency, instead of smoothness. It is again parameterised by $K$ and $L$, and expresses that to compute the output variables in $L$, a command accesses only the input variables in~$K$. Our program analysis tracks dependency formalised by this predicate, so as to achieve high precision, especially when handling sequential composition.
To see this, imagine that we want to check the differentiability of a sequence $c;c'$.
A natural approach is to use the chain rule. If the dependency-tracking part of our analysis is missing, in order to establish that the output on a variable $v$ by the sequence is differentiable in an input variable $u$, the analysis should show the output on $v$ by the second command $c'$ is differentiable in \emph{all} variables, and the output on any variable by the first command $c$ is differentiable in $u$. This requirement on $c$ and $c'$ is too strong. Often, $c'$ uses only a small number of variables to compute $v$, and it is sufficient to require that just on those used variables, $c$ should be differentiable in $u$. Similarly, $c$ commonly updates only a small number of variables using $u$, and it is enough to require that just in those $u$-dependent variables, the second command $c'$ is differentiable when computing the output $v$. The dependency-tracking part lets our analysis carry out such reasoning and achieve better precision. Formally, this means our analysis uses a version of reduced product~\cite{cc:popl:79} between dependency analysis and the analysis that tracks the target smoothness property.

We now formally describe each of these components as well as their combination.

\subsubsection{Family of Smoothness Predicates.} Our program analysis assumes that a target smoothness property is specified in terms of a family of predicates,
\[
\phi = (\phi_{K,L} : K,L \subseteq \Var),
\]
where $\phi_{K,L}$ is a set of partial functions from $\State[K]$ 
to $\State[L]$ (i.e., $\phi_{K,L} \subseteq [\State[K] \rightharpoonup \State[L]]$).
\begin{example}[Differentiability]
  \label{e:1:diff}
  In the instantiation of our program analysis for differentiability, we use the
   family $\smash{ \phi^{(d)} }$ where for all $K,L \subseteq \Var$, a partial function $f : \State[K] \to \State[L]$ belongs to $\smash{ \phi^{(d)}_{K,L} }$ 
   if and only if (i) $\dom(f)$ is open and (ii) $f$ is (jointly) differentiable in its domain.
   \qed
\end{example}
At first, one may wonder why we use a family of $\phi_{K,L}$ predicates
instead of a single predicate $\phi_0$ over $[\State \to \State_\bot]$. The reason is that, as mentioned above, the analysis aims at a conditional
variant of the traditional notion of smoothness.
For instance, instead of checking that a function $f : \State \to \State_\bot$
is differentiable on $\State \setminus f^{-1}(\{\bot\})$, the analysis proves differentiability conditioned
on certain variables being fixed:
if we fix the input variables in $\Var \setminus K$
and vary just those in $K$ in the initial state, and look at the output variables in $L$ only,
then the function $f$ becomes differentiable, although it might not be so when all input/output variables are considered.
To express this, we need the whole family of $\phi_{K,L}$ predicates.

This notion of conditional differentiability is similar to, but
not the same as, so called partial differentiability.
Partial differentiability in $K$ says that, for every
$v \in K$, if we fix all the input variables except $v$,
including those in $K \setminus \{v\}$, and consider the output variables in
$L$ only, $f$ becomes differentiable.%
\footnote{\rev{Conditional differentiability extends partial differentiability
  in the sense that the latter can be expressed as a conjunction of the former (but not vice versa):
  $f$ is partially differentiable in $K$ if and only if $f$ is conditionally differentiable in $\{v\}$ for all $v \in K$.}}
As we will show in \cref{remark:phi-examples-nonexamples}, the set of partially-differentiable functions 
is not closed under a certain operator, but we need the closure to ensure
that our program analysis is sound. Our conditional
differentiability does not suffer from this issue. 

\subsubsection{Smoothness Abstraction.}
Based on the family $\phi$, we build a predicate $\Phi$ that captures
the smoothness of commands. The $\Phi$ constrains functions from $\State$ to $\State_\bot$, unlike $\phi_{K,L}$.
For $K,L \subseteq \Var$ with $K \supseteq L$, define $\pi_{K,L}$ to
be the projection from $\State[K]$ to $\State[L]$.
\begin{definition}
The \emph{smoothness abstraction} $\Phi$ is the predicate over a function $f \in [\State \to \State_\bot]$
and variable sets $K,L \subseteq \Var$. It is satisfied by $(f,K,L)$
if for all  $\tau \in \State[\Var \setminus K]$, the predicate $\phi_{K,L}$ holds
for the following partial function $g : \State[K] \rightharpoonup \State[L]$:
$\dom(g) \defeq \{\sigma \in \State[K] \,\mid\, f(\sigma \oplus \tau) \in \State \}$
and
$g(\sigma) \defeq (\pi_{\Var,L} \circ f)(\sigma \oplus \tau) \ \text{for}\ \sigma \in \dom(g)$.
We denote the satisfaction of $\Phi$ by \[{} \models \Phi(f,K,L). \makeatletter\displaymath@qed\]
\end{definition}
\noindent
Note that the function $g$ is constructed from $f$ by fixing the $\Var
\setminus K$ part of the input state to~$\tau$, and looking at only the
$L$ part of the output.
This construction is precisely the one used in the informal definition of conditional
differentiability described above, and its use reflects the fact that our
program analysis attempts to prove a conditional smoothness property.

\subsubsection{Dependency Abstraction.}
Our abstract domain has a component 
for tracking dependency between input-output variables. Dependency here means that 
a given input variable is used for computing a given output variable. We define 
a predicate $\Delta$ that has a similar format as $\Phi$. Intuitively, $\Delta(f,K,L)$ holds
if and only if the $L$ part of the output of $f$ depends at most on the $K$ part of the input to $f$.
To define $\Delta$ formally, for $K \subseteq \Var$, let $\sim_K$ be the following
equivalence relation over states:
$\sigma \sim_K \sigma'$ if and only if $\sigma(v) = \sigma'(v)$ for all $v \in K$.
\begin{definition}
The \emph{dependency abstraction} $\Delta$ is the predicate on $f \in [\State \to \State_\bot]$ and $K,L \subseteq \Var$ that holds if for all $\sigma,\sigma' \in \State$ with $\sigma \sim_K
\sigma'$, we have
$(f(\sigma) \in \State \,{\iff}\, f(\sigma') \in \State)$
and $(f(\sigma) \in \State
\,{\implies}\, f(\sigma) \sim_L f(\sigma'))$.
We denote the satisfaction of $\Delta$ by \[{} \models \Delta(f,K,L). \makeatletter\displaymath@qed\]
\end{definition}

\subsubsection{Combined Abstraction.}
We bring together the two abstractions that we just defined, and construct the final abstract domain $\aD$ used by our program analysis.

Intuitively, each element of $\aD$ is a predicate on a function $f \in [\State \to \State_\bot]$ expressed by the conjunction 
of the following form: $\bigwedge_{i = 1}^m \Phi(f,K_i,L_i) \wedge \bigwedge_{j = 1}^n \Delta(f,K'_j,L'_j)$.
A direct but naive way of implementing this intuition is to let $\aD$ be the collection of all the constraints of this form, but it permits too many constraints and leads to a costly program analysis. We take a more economical alternative that further restricts the allowed form of the constraints. The alternative requires that the conjunction from above should be constructed out of two mappings $p$ and $d$ from output variables to input variable sets, and a set $V$ of input variables.
The $p$ component describes smoothness, and the $d$ and $V$ components dependency.
They together encode the constraint
\[
\bigwedge_{v \in \Var} \Phi(f,p(v),\{v\})
\wedge
\bigwedge_{u \in \Var} \Delta(f,d(u),\{u\})
\wedge
\Delta(f,V,\emptyset).
\] 
Thus, a function $f \in [\State \to \State_\bot]$ satisfies the constraint encoded by $p$, $d$, and $V$ if (i) for every output variable $v$, when we fix the values of all the input
variables outside of $p(v)$, the (partial) function $\sigma \longmapsto f(\sigma)(v)$
is smooth (e.g., differentiable); (ii) for every output variable $u$, the (partial) function $\sigma \longmapsto f(\sigma)(u)$ does not access any variable outside of $d(u)$ to compute the value of $u$; and (iii) the values of input variables in $V$ determine whether $f$ returns $\bot$  or not.

\begin{definition}
The \emph{abstract domain} $\aD$ consists of triples 
$(p,d,V) \in [\Var \to \cP(\Var)]^2 \times \cP(\Var)$,
called \emph{abstract state}, such that 
$p(v) \supseteq d(v)^c$ and $d(v) \supseteq V$ for all $v \in \Var$, where $-^c$ is the standard operation for set complement. That is,
\begin{align*}
\aD \defeq \{(p,d,V) 
\in [\Var \to \cP(\Var)]^2 \times \cP(\Var)
\, \mid \,p(v) \supseteq d(v)^c\ \text{and}\ d(v) \supseteq V\ \text{for all } v \in \Var\}.
\end{align*}
We order abstract states as follows:
$(p,d,V) \sqsubseteq (p',d',V')$ if and only if
$V \subseteq V'$ and for all $v \in \Var$,
$p(v) \supseteq p'(v)$ and
$d(v) \subseteq d'(v)$.
These abstract states are concretised by 
$\gamma : \aD \to \cP([\State \rightarrow \State_\bot])$:
\begin{equation}
\label{eqn:definition-gamma}
  f \in 
  \gamma(p,d,V)
  \!\iff\!
  {\!\!}\models \Delta(f,V,\emptyset),\
  {\!\!}\models \Phi(f,p(v),\{v\}),\
  \text{and}\ {\!}\models \Delta(f,d(v),\{v\})\
  \text{for all $v \in \Var$}.
  \makeatletter\displaymath@qed\,
\end{equation}
\end{definition}
\noindent
Note that the definition of $\aD$ contains two conditions. The first condition $p(v) \supseteq d(v)^c$ comes from our assumption that if a function does not depend on a variable $u$, it is smooth in $u$. This and other assumptions of the analysis will be explained shortly in \cref{subsec:soundness-assumptions}. The other condition $d(v) \supseteq V$ originates from the relationship that if $\Delta(f,K,\{v\})$ holds, so does $\Delta(f,K,\emptyset)$.
 
\begin{example}[Differentiability]
  \label{e:2:diff}
  Consider the setup of \cref{e:1:diff} and the program
  $ c \equiv (
    y := x \ast x;\
    \cif\ ( x \geq 0 ) \ \{ s := 1 \}\ 
    \celse \ \{ s := -1 \} ).
    $
  Let $(p,d,V)$ be the smallest abstract state that describes the program.
  In this program, $s$ is not differentiable in $x$, but  $y$ is. So,  $p(s) = \Var \setminus \{x\} \supseteq \{y,s\}$ and $p(y) = \Var \supseteq \{x,y,s\}$. Note that $p(s)$ contains the input variables $s$ and $y$ because by not depending on those input variables, the output $s$ is differentiable in those variables. 
  For the dependency part, we have $d(s)=d(y)=\{x\}$ and $V = \emptyset$.
  \qed
\end{example}

\subsection{Parametric Static Program Analysis}
\label{:2:sa}
Our program analysis is based on abstract interpretation
\cite{cc:popl:77}, and computes an approximation of the concrete semantics
$\db{c}$ of a given command $c$ using the abstract domain $\aD$.
We formalise this computation by the \emph{abstract semantics} of $c$, which defines
$\cdb{c} \in \aD$ by induction on the structure of commands, and
over-approximates $\db{c}$ in the sense of the concretization $\gamma$ in \cref{eqn:definition-gamma}.

\subsubsection{Analysis Definition.} \Cref{f:asem} shows the abstract semantics of $\cdb{c}$. The overall structure of the semantics follows the standard compositional semantics of an imperative language. For instance, the abstract semantics of sequential composition is defined in terms of those of constituent commands, and the semantics of a loop is the least fixed point of a monotone operator over $\aD$. However, the specifics of the semantics include non-standard details, and we spell them out by going through the defining clauses of $\cdb{c}$.

\begin{figure}[t]
  \begin{equation*}
  \hspace{-1.1em}
  \begin{array}{r@{}l}
    \cdb{\cskip}
    & {} \defeq (\lambda v.\Var,\, \lambda v.\{v\},\, \emptyset),
    \\[+1.0ex]
    \cdb{x \,{:=}\, e}
    & {} \defeq  (
    (\lambda v.\,v \,{\equiv}\, x\,?\, \edb{e} : \Var),\,
    (\lambda v.\,v \,{\equiv}\, x\,?\, \fv(e) : \{v\}),\,
    \emptyset),
    \\[+1.0ex]
    \cdb{c;c'}
    & {} \defeq {} 
    \text{let}\
    (p,d,V) \defeq \cdb{c} \ \text{and}\
    (p',d',V') \defeq \cdb{c'}\ \text{in}\
    (p'',d'',V'')
    \\
    \text{where}\
    p''(v) \hspace{-1em}\! & \hspace{1em} {} \defeq {} (V  \cup p_\cap(d'(v))^c \cup d_\cup(p'(v)^c))^c,
    \\
    d''(v) \hspace{-1em}\! & \hspace{1em} {} \defeq {} V \cup d_\cup(d'(v)), \text{ and }
    V'' \defeq V \cup d_\cup(V'),
    \\[+1.0ex]
    \cdb{\cif\,b\,\{c\}\, \celse\,\{c'\}}
    & {} \defeq {}
    \text{let}\ (p,d,V) \defeq \cdb{c}\ \text{and}\ (p',d',V') \defeq \cdb{c'}
    \ \text{in}\ (p'', d'',  V'')
    \\
    \text{where}\ 
    p''(v) \hspace{-1em}\! & \hspace{1em} {} \defeq {} \fv(b)^c\cap p(v) \cap p'(v),
    \\
    d''(v) \hspace{-1em}\! & \hspace{1em} {} \defeq {} \fv(b) \cup d(v) \cup d'(v), \text{ and }
    V'' \defeq \fv(b) \cup V \cup V',
    \\[+1.0ex]
    \cdb{\cwhile\,b\,\{c\}}
    & {} \defeq {}
    \text{let}\ (p,d,V) \defeq \cdb{c}\ \text{in}\ \fix\ \smash{F^\sharp}
    \\
    \text{where}\
    \smash{F^\sharp}(p_0,d_0,V_0) \hspace{-1em}\! & \hspace{1em} {} \defeq {} (p',d',V'),
    \\
    p'(v) \hspace{-1em}\! & \hspace{1em} {} \defeq {} \fv(b)^c \cap
    (
    V \cup 
    p_\cap(d_0(v))^c \cup d_\cup(p_0(v)^c))^c,
    \\
    d'(v) \hspace{-1em}\! & \hspace{1em} {} \defeq {} \fv(b) \cup  (V \cup  d_\cup(d_0(v))) \cup \{v\}, \text{ and }
    V' \defeq \fv(b) \,{\cup}\, (V \,{\cup}\, d_\cup(V_0)),
    \\[+1.0ex]
    \cdb{\cobs(\cnor(e_1,e_2),r)}
    & {} \defeq {} (p,d,\emptyset)\quad
    \\
    \text{where}\ 
    p(v) \hspace{-1em}\! & \hspace{1em} {} \defeq {} (v \equiv \like)\, ?\, \edb{\like \times \cpdfnor(r; e_1, e_2)} : \Var,
    \\
    \text{and}\
    d(v) \hspace{-1em}\! & \hspace{1em} {} \defeq {} (v \equiv \like)\, ?\, \{\like\} \cup \fv(e_1) \cup \fv(e_2) : \{v\},
    \\[+1.0ex]
    \cdb{x\,{:=}\,\csample(n,\cnor(e_1,e_2), \lambda y.e')} & {} \defeq (p,d,\emptyset)
    \quad\text{for $n = \cname(\alpha,r)$ with $r \in \R$}
    \\
    \text{where}\
    \mu \hspace{-1em}\! & \hspace{1em} {} \defeq {} \mathit{create\_name}(\alpha, r),
    \\
    p(v) \hspace{-1em}\! & \hspace{1em} {} \defeq {}
    \begin{cases}
      \edb{e'[\mu/y]}  & \text{if } v \in \{x, \val_\mu\} 
      \\[-0.5ex]
      \edb{\cpdfnor(\mu;e_1,e_2)}  & \text{if } v \equiv \pr_{\mu}
      \\[-0.5ex]
      \edb{\code{\sampled_\mu + 1}}  & \text{if } v \equiv \sampled_{\mu}
      \\[-0.5ex]
      \Var & \text{otherwise},
    \end{cases}
    \\
    \text{and }
    d(v) \hspace{-1em}\! & \hspace{1em} {} \defeq {}
    \begin{cases}
      \fv(e'[\mu/y])  & \text{if } v \in \{x, \val_\mu\}
      \\[-0.5ex]
      \{\mu\} \cup \fv(e_1) \cup \fv(e_2) & \text{if } v \equiv \pr_{\mu}
      \\[-0.5ex]
      \{v\} & \text{otherwise},
    \end{cases}
    \\[-1.0em]
    \\[+1.0ex]
    \cdb{x\,{:=}\,\csample(n, \cnor(e_1,e_2),\lambda y.e')} & {} \defeq (p,d,\emptyset)
    \quad\text{for $n = \cname(\alpha,e)$ with $e \notin \R$}
    \\
    \text{where}\
    p(v) \hspace{-1em}\! & \hspace{1em} {} \defeq {}
    \begin{cases}
      \fv(e)^c \cap \bigcap_{\mu = (\alpha,\_) \in \Name} \edb{e'[\mu/y]}
      & \!\text{if}\ v \equiv x
      \\[-0.5ex]
      \fv(e)^c \cap \edb{e'[\mu/y]}
      & \!\text{if}\ v \equiv \val_\mu\ \text{for $\mu=(\alpha,\_)$}
      \\[-0.5ex]
      \fv(e)^c \cap \edb{\cpdfnor(\mu;e_1,e_2)}
      & \!\text{if } v \equiv \pr_{\mu} \text{ for $\mu=(\alpha,\_)$}
      \\[-0.5ex]
      \fv(e)^c \cap \edb{\sampled_\mu+1}
      & \!\text{if } v \equiv \sampled_{\mu} \text{ for $\mu=(\alpha,\_)$}
      \\[-0.5ex]
      \Var & \!\text{otherwise},
    \end{cases}
    \\
    \text{and }
    d(v) \hspace{-1em}\! & \hspace{1em} {} \defeq {}
    \begin{cases}
      \fv(e) \cup  \bigcup_{\mu=(\alpha,\_) \in \Name} \fv(e'[\mu/y])
      & \!\text{if } v \equiv x
      \\[-0.5ex]
      \fv(e) \cup  \{ \val_\mu \} \cup \fv(e'[\mu/y]) 
      & \!\text{if } v \equiv \val_{\mu} \text{ for $\mu=(\alpha,\_)$}
      \\[-0.5ex]
      \fv(e) \cup \{ \pr_\mu,\mu \} \cup \fv(e_1) \cup \fv(e_2)
      & \!\text{if } v \equiv \pr_{\mu} \text{ for $\mu=(\alpha,\_)$}
      \\[-0.5ex]
      \fv(e) \cup \{ \sampled_\mu \}
      & \!\text{if } v \equiv \sampled_{\mu} \text{ for $\mu=(\alpha,\_)$}
      \\[-0.5ex]
      \{ v \} & \!\text{otherwise}.
    \end{cases}
  \end{array}
  \end{equation*}
  \vspace{-1.5em}
  \caption{Abstract semantics of commands defining $\cdb{c} \in \aD$.}
  \vspace{-0.7em}
  \label{f:asem}
\end{figure}

The definition of $\cdb{\cskip}$ formalises the effect of $\cskip$ on smoothness and dependency. The definition says that $\cskip$ computes each output variable $v$ in a smooth manner in all input variables, and in so doing, it creates the dependency between the variable $v$ to itself at the input state. The $V$ part of $\cdb{\cskip}$ is the empty set since $\cskip$ always terminates. 

The next case is $x := e$. Its abstract semantics records the smoothness and dependency information of the updated variable $x$ by analysing the expression $e$. For the smoothness part, the semantics invokes the subroutine $\edb{e}$ that computes an under-approximation of the set of variables 
in which the expression $e$ is smooth \rev{(and thus over-approximates the smoothness property of $e$)}.%
\footnote{\rev{The subroutine $\edb{e} \subseteq \Var$ is defined inductively on $e$, and differs for different target smoothness properties.
  For instance, if the target property is differentiability,
  we have $\edb{r} \defeq \edb{x} \defeq \Var$, $\edb{e_1 + e_2} \defeq \edb{e_1} \cap \edb{e_2}$, $\edb{\mathrm{ReLU}(e)} \defeq \Var \cap \fv(e)^c$, etc.
  On the other hand, if the target property is local Lipschitzness,
  the subroutine changes for some cases: e.g., $\edb{\mathrm{ReLU}(e)} \defeq \edb{e}$.}}
For the dependency part, $\cdb{x:=e}$ computes the set of all the free variables of $e$ so as to get an over-approximation of all variables that may affect the value of $e$. 
For variables other than $x$, $\cdb{x:=e}$ behaves like $\cdb{\cskip}$.

The abstract semantics of a sequence $c;c'$ composes those of the sub-commands $c$ and $c'$.
It uses the liftings $f_\cup, f_\cap : \cP(\Var) \to \cP(\Var)$
of functions $f$ of type $\Var \to \cP(\Var)$, which are defined as follows:
$f_\cup(V) \defeq \bigcup_{v \in V} f(v)$ and
$f_\cap(V) \defeq \bigcap_{v \in V} f(v)$.
The abstract semantics $\cdb{c;c'}$ constructs the dependency part $d''$ by composing $d$ from $\cdb{c}$ and $d'$ from $\cdb{c'}$ after lifting the former. Note the inclusion of the set $V$ in the definition of $d''$. This is to account for the case that $d'(v)$ in the definition is the empty set; in that case, $d_\cup(d'(v))$ is empty as well and
does not have any information about termination. The smoothness part $p''$ of $\cdb{c;c'}$ is more involved, and implements the intuition described briefly in \cref{:1:abs}.
In order to conclude that input variables in $V_0$ together smoothly affect an output variable $v$ in the computation of $c;c'$, the $p''$ considers the intermediate state after the first command $c$, and forms two groups of variables at that intermediate state: $d'(v)$ and $p'(v)^c$. Note that the desired smoothness property for the input variables in $V_0$ and the output variable $v$ may fail if the first command $c$ uses some variable $u_0 \in V_0$ non-smoothly to update a variable $u'_0$ in $d'(v)$, or it uses some variable $u_1 \in V_0$ to compute the value of a variable $u'_1 \in p'(v)^c$. In the former case, the non-smoothness of $c$ causes an issue, and in the latter case, the non-smoothness of $c'$ causes an issue. The $p''$ collects the input variables that avoid these two failure modes and also do not influence the termination of the sequence. As we show in our soundness theorem, doing so is sufficient because it amounts to using a version of chain rule for the target smoothness property.

The abstract semantics of an if command conservatively assumes that any
 variable in its condition $b$ may affect the value of any output variable (by influencing whether 
the true or false branch of the command gets executed) and this influence is potentially non-smooth.
For every output variable $v$, the smooth set $p''(v)$ for the if command implements this assumption by excluding 
free variables in  $b$, and the computed dependency set does the same but this time by including
free variables in  $b$.

The abstract semantics of a loop computes the least fixed point of a monotone operator $F^\sharp : \aD \to \aD$ using the standard Kleene iteration. 
The operator $F^\sharp$ describes the effect of one iteration of the loop, and it is derived from the standard loop unrolling and our abstract semantics of sequencing and the if command.

The abstract semantics of an observe command $\cobs(\cnor(e_1,e_2),r)$ uses the fact
that the command has the same concrete semantics as the assignment $\like := \like \times \cpdfnor(r;e_1,e_2)$, where $\cpdfnor$ is the density function of the normal distribution. The semantics computes $(p,d,V)$ according to that of the assignment, which we explained earlier.

The final case is the abstract semantics of a sample command. The semantics performs a case analysis on the first argument of $\csample$. If it is a constant expression not involving any variables, then the abstract semantics constructs the name $\mu$ of the sampled random variable, and updates $p$, $d$, and $V$ according to the concrete semantics of the command. Otherwise, the abstract semantics acknowledges that the precise name $\mu$ of the random variable cannot be known statically, and performs so called \emph{weak update} by joining two pieces of information before and after the update of the command in the concrete semantics. Note that the abstract semantics does not require the third argument of $\csample$ should be the identity function. The ability of dealing with a general function in the third argument is needed since our analysis is intended to be applied to programs after the transformation of the SPGE, which may introduce such an argument. 

The abstract semantics is well-defined under the following relatively weak assumption:
\begin{assumption}[Expression analysis and free variables]
\label{assm:expression-analysis-fv}
$\edb{e} \supseteq \fv(e)^c$ for all expressions $e$.
\end{assumption}
\noindent
This assumption is satisfied by the instantiations of the semantics with differentiability and local Lipschitzness, which are used in our implementation. It will be assumed in the rest of the paper.
\begin{theorem}
\label{thm:abstract-semantics-well-formedness}
    If \cref{assm:expression-analysis-fv} holds, then
    for all commands $c$, we have $\cdb{c} \in \aD$, that is, when we let $(p,d,V) \defeq \cdb{c}$, we have $p(v) \supseteq d(v)^c$ and $d(v) \supseteq V$ for
    all variables $v \in \Var$.
\end{theorem}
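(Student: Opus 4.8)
The plan is to prove, simultaneously by structural induction on $c$ following the clauses of \cref{f:asem}, that the triple $(p,d,V) \defeq \cdb{c}$ satisfies $p(v) \supseteq d(v)^c$ and $d(v) \supseteq V$ for every $v \in \Var$ (here $X^c$ abbreviates $\Var \setminus X$). I would repeatedly use monotonicity of the liftings $d_\cup, p_\cap$ together with the identities $p_\cap(X)^c = \bigcup_{u \in X} p(u)^c$ and $d_\cup(X) = \bigcup_{u\in X} d(u)$. For every atomic command ($\cskip$, $x:=e$, $\cobs(\cnor(e_1,e_2),r)$, and both sample clauses) we get $V = \emptyset$, so $d(v)\supseteq V$ is trivial, and each variable $v$ is assigned either $(p(v),d(v)) = (\Var,\{v\})$, for which $p(v) \supseteq d(v)^c$ is immediate, or a pair of the shape $p(v) = W^c \cap E$ and $d(v) = W \cup F$ where $W\subseteq\Var$ and $E \supseteq F^c$ holds by \cref{assm:expression-analysis-fv} (with $E = \edb{e_*}$, $F = \fv(e_*)$ for the relevant expression $e_*$, possibly after intersecting/unioning over names as in the weak-update sample clause — the bound survives since $\bigcap_\mu \edb{e'[\mu/y]} \supseteq \bigcap_\mu \fv(e'[\mu/y])^c$). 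In either shape $d(v)^c = W^c \cap F^c \subseteq W^c \cap E = p(v)$, so all atomic cases hold.

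The conditional and sequential cases are where the inductive hypotheses enter. For $\cif\,b\,\{c\}\,\celse\,\{c'\}$, writing $(p,d,V)$ and $(p',d',V')$ for the branch analyses, the clause gives $p''(v)^c = \fv(b)\cup p(v)^c\cup p'(v)^c \subseteq \fv(b)\cup d(v)\cup d'(v) = d''(v)$ using $p(v)^c\subseteq d(v)$ and $p'(v)^c\subseteq d'(v)$, and $d''(v) \supseteq \fv(b)\cup V\cup V' = V''$ using $d(v)\supseteq V$ and $d'(v)\supseteq V'$. For $c;c'$ with $(p,d,V) = \cdb{c}$ and $(p',d',V') = \cdb{c'}$, the key inclusion $p''(v) \supseteq d''(v)^c$ is shown piecewise from $p''(v)^c = V \cup p_\cap(d'(v))^c \cup d_\cup(p'(v)^c)$: the set $V$ sits inside $d''(v) = V \cup d_\cup(d'(v))$; next $p_\cap(d'(v))^c = \bigcup_{u\in d'(v)} p(u)^c \subseteq \bigcup_{u\in d'(v)} d(u) = d_\cup(d'(v))$ by the hypothesis on $c$; and $d_\cup(p'(v)^c) \subseteq d_\cup(d'(v))$ by the hypothesis on $c'$ (which gives $p'(v)^c \subseteq d'(v)$) together with monotonicity of $d_\cup$. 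The other invariant, $d''(v) = V\cup d_\cup(d'(v)) \supseteq V\cup d_\cup(V') = V''$, follows from $d'(v)\supseteq V'$.

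For the loop $\cwhile\,b\,\{c\}$ I would first note that $\aD$ is a finite poset with $\sqsubseteq$-least element $(\lambda v.\Var,\lambda v.\emptyset,\emptyset)$, and then show that the one-step operator $F^\sharp$ — obtained from the abstract semantics of sequencing and of the conditional via the usual loop unrolling — restricts to a monotone endofunction of $\aD$. Monotonicity of $F^\sharp$ is a componentwise check of the same flavour as the sequencing case. That $F^\sharp$ maps $\aD$ into $\aD$ is again proved exactly as in the conditional and sequential cases: writing $(p',d',V') = F^\sharp(p_0,d_0,V_0)$ with $(p_0,d_0,V_0)\in\aD$ and $(p,d,V) = \cdb{c}$, the inclusion $p'(v)^c = \fv(b) \cup V \cup p_\cap(d_0(v))^c \cup d_\cup(p_0(v)^c) \subseteq \fv(b) \cup V \cup d_\cup(d_0(v)) \cup \{v\} = d'(v)$ follows from the hypothesis $p(u)^c\subseteq d(u)$ on $c$ (yielding $p_\cap(d_0(v))^c \subseteq d_\cup(d_0(v))$) and from $p_0(v)^c\subseteq d_0(v)$ (from $(p_0,d_0,V_0)\in\aD$, yielding $d_\cup(p_0(v)^c)\subseteq d_\cup(d_0(v))$), while $d'(v)\supseteq V'$ follows from $d_0(v)\supseteq V_0$. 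Consequently the Kleene chain starting from $(\lambda v.\Var,\lambda v.\emptyset,\emptyset)$ remains inside $\aD$, stabilises by finiteness of $\Var$, and its limit $\fix\,F^\sharp$ is an element of $\aD$.

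I expect the sequential-composition clause — and hence its reuse inside $F^\sharp$ — to be the only genuinely delicate point, because $p''$ there is defined through a double complement that mixes the smoothness component of one sub-analysis with the dependency component of the other. It is exactly the two structural constraints built into $\aD$ (namely $p(v)\supseteq d(v)^c$, reflecting that a function is smooth in variables it does not depend on, and $d(v)\supseteq V$) that let the two potential ``non-smoothness failure modes'' captured by the complemented set in $p''$ be reabsorbed into the dependency set $d''$. Everything else is routine bookkeeping, driven by \cref{assm:expression-analysis-fv} in the atomic cases and by finiteness of $\Var$ for the fixed point.
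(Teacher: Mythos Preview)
Your proposal is correct and follows essentially the same structural-induction argument as the paper's proof, including the same decomposition of the sequential case into the three inclusions $V \subseteq d''(v)$, $p_\cap(d'(v))^c \subseteq d_\cup(d'(v))$, and $d_\cup(p'(v)^c) \subseteq d_\cup(d'(v))$, and the same treatment of the loop via monotonicity and closure of $F^\sharp$ on $\aD$. Your uniform ``$p(v)=W^c\cap E$, $d(v)=W\cup F$'' packaging of the atomic cases is a mild compression of the paper's explicit case analysis (note that in a few weak-update sub-cases $F$ strictly contains $\fv(e_*)$, which only helps since then $F^c\subseteq\fv(e_*)^c\subseteq\edb{e_*}=E$), but the argument is otherwise identical.
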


\begin{example}[Differentiability]
  Consider the differentiability property and the example program of \cref{e:2:diff}. 
  Let $(p_1,d_1,V_1)$ and $(p_2,d_2,V_2)$ be the results of analysing the first assignment command $y := x\ast x$ 
  and the following if command of the program. Then,
  \[
  (p_1,d_1,V_1) \,{=}\, (\lambda v.\Var,\, \lambda v.\, (v {\equiv} y)\,?\, \{x\} \,{:}\, \{v\},\, \emptyset),
  \ \ 
  (p_2,d_2,V_2) \,{=}\, (\lambda v.\{x\}^c,\, \lambda v.\, (v{\equiv} s)\, ?\, \{x\} \,{:}\, \{x,v\},\, \{x\}).
  \]
  Let $(p,d,V)$ be the analysis result for the entire program. Then,
  \begin{align*}
  p(v) & = \big(V_1 \cup (p_1)_\cap(d_2(v))^c \cup (d_1)_\cup(p_2(v)^c)\big)^c
  = d_1(x)^c = \{x\}^c,
  & 
  V & = V_1 \cup (d_1)_\cup(V_2) = \{x\}.
  \end{align*}
  Also, $d(v) = V_1 \cup (d_1)_\cup(d_2(v)) = \rev{\{x\}}$.
  As shown in \cref{f:asem}, the variable $x$ that may affect the condition
  expression of the if command is removed from the smoothness sets, and
  $p(s)=p(y)=\{x\}^c$. Note that this result is conservative with respect to $y$.
  \qed
\end{example}

\subsubsection{Analysis Soundness and Assumptions.}
\label{subsec:soundness-assumptions}
The soundness of our analysis states that for every command $c$, 
its abstract semantics $\cdb{c} \in \aD$ over-approximates the concrete semantics $\db{c}$ via $\gamma$:
$\db{c} \in \gamma( \cdb{c} )$.
The soundness is conditioned on \cref{assm:expression-analysis-fv} and six new assumptions.
\rev{One of the new assumptions is about the soundness of $\edb{e}$. The remaining five assumptions} are concerned with
the predicate family for the target smoothness property $\phi = (\phi_{K,L} : K,L \subseteq \Var)$, and say that certain 
canonical operators are smooth according to $\phi$ so that using them in the abstract semantics should not cause an issue.
In this subsection, we present the six assumptions one by one, and sketch how those assumptions are related to the soundness.

We start with the assumption that the analysis of each expression $\edb{e}$
 under-approximates the set of variables in
 which the evaluation of $e$ is smooth \rev{(and thus over-approximates the smoothness property of $e$)}.
\begin{assumption}[Expression analysis soundness]
  \label{assm:expression-analysis-soundness}
  For all expressions $e$, variables $x$, subsets $K,L \subseteq \Var$, and states $\tau \in \State[\Var \setminus K]$ such that $K = \edb{e}$ and $L = \{x\}$, if we let $g : \State[K] \to \State[L]$ be the function defined by 
 $g(\sigma) \defeq [x\mapsto \db{e}(\sigma \oplus \tau)]$,
  the function $g$ satisfies $\phi_{K,L}$ (i.e., $g \in \phi_{K,L}$).
\end{assumption}
\noindent{This assumption is used in our soundness argument whenever the abstract semantics uses $\edb{e}$ for computing smoothness information about an expression $e$.} 

The next two assumptions assert the smoothness of the standard operators on the product spaces.
\begin{assumption}[Projection]
\label{assm:projection}
  For all $K,L \subseteq \Var$ with $K \supseteq L$, the projection $\pi_{K,L}$
  satisfies $\phi_{K,L}$.
\end{assumption}
\begin{assumption}[Pairing]
\label{assm:pairing}
  For all $K,L,M \subseteq \Var$ with $L \cap M = \emptyset$,
  if $f \in \phi_{K,L}$ and $g \in \phi_{K,M}$, we have $\langle f, g \rangle \in \phi_{K, L \cup M}$,
  where $\langle f, g \rangle$ is the pairing of two partial functions:
   $\langle f, g\rangle(\sigma) \defeq 
    \text{\rm if}\ (\sigma \in \dom(f) \cap \dom(g))\ 
    \text{\rm then}\ 
    f(\sigma) \oplus g(\sigma)\
    \text{\rm else}\
    \text{\rm undefined}$.
\end{assumption}
\noindent{Note that $\State[L \cup M]$ is isomorphic to $\State[L] \times \State[M]$, the product space that we referred to above.} The assumptions say that the projection  is smooth, and the pairing of smooth functions is smooth.
Our analysis uses \cref{assm:projection} to deal with variables not modified by a command. 
For instance, when analysing an assignment $x:=e$, the analysis uses \cref{assm:projection} and concludes that on every output variable $v$ other than $x$, the assignment is smooth in all the input variables. 
\cref{assm:pairing} is used to justify the handling of a sequence $c;c'$ by our analysis, in particular, the part that the analysis combines smoothness information over multiple output variables after the first command~$c$.

  The projection and pairing assumptions are about how shrinking and expanding output variables affect the target smoothness property. 
  The next restriction assumption is about shrinking the input variables. 
  It validates the weakening of the $K$ part of ${}\models\Phi(f,K,L)$, and is used in the abstract semantics of 
  $c;c'$ (and other composite commands).
\begin{assumption}[Restriction]
  \label{assm:restriction}
  For all $K,K',L \subseteq \Var$ with $K \supseteq K'$, and $\tau \in \State[K \setminus K']$,
  if $f \in \phi_{K,L}$, then we have $g \in \phi_{K',L}$, where
    $g(\sigma) \defeq
    \text{\rm if}\ (\sigma \oplus \tau \in \dom(f))\ 
    \text{\rm then}\  f(\sigma \oplus \tau)\
    \text{\rm else}\ \text{\rm undefined}$.
\end{assumption}

The following assumption says that the function composition preserves smoothness. It is related to the chain rule for differentiation, and used to justify the abstract semantics of a sequence $c;c'$.
\begin{assumption}[Composition]
  \label{assm:composition}
  For all $K,L,M \subseteq \Var$, if $f \in \phi_{K,L}$ and $g \in
  \phi_{L,M}$, we have $g \circ f \in \phi_{K,M}$, where $g \circ f$ is
  the standard composition of two partial functions: 
  $(g \circ f)(\sigma) \defeq
  \text{\rm if}\ 
  (\sigma \in \dom(f) \land f(\sigma) \in \dom(g))\ 
  \text{\rm then}\
  g(f(\sigma))\
  \text{\rm else}\ 
  \text{\rm undefined}$.
\end{assumption}

The final assumption lets the analysis infer smoothness information about the completely-undefined function. It is used to justify the handling of loops by our analysis.
\begin{assumption}[Strictness]
  \label{assm:strictness}
  For all $K,L \subseteq \Var$,
  we have  $(\lambda \sigma \in \State[K].\text{undefined}) \in \phi_{K,L}$.
\end{assumption}

\begin{theorem}[Soundness]
  \label{thm:soundness-analysis}
  If \cref{assm:expression-analysis-fv,assm:expression-analysis-soundness,assm:projection,assm:pairing,assm:restriction,assm:composition,assm:strictness} hold,
  the analysis computes the sound abstraction of the concrete semantics of commands in the following sense:
  for all commands $c$, \[\db{c} \in \gamma( \cdb{c} ).\]
\end{theorem}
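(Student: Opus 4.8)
The plan is to argue by structural induction on $c$, showing in each case that $(p,d,V)\defeq\cdb c$ satisfies the three clauses defining $\gamma$ in \cref{eqn:definition-gamma}: ${}\models\Delta(\db c,V,\emptyset)$, and for every $v\in\Var$ both ${}\models\Phi(\db c,p(v),\{v\})$ and ${}\models\Delta(\db c,d(v),\{v\})$. The two dependency clauses are a routine non-interference argument: unfolding $\db c$ shows that the output value of each $v$, and whether the output is $\bot$, depends only on the input coordinates listed in $d(v)$ (resp.\ $V$), since $\db b$ and $\db e$ depend only on the free variables of $b$ and $e$ and the sample/observe clauses touch only the variables named in \cref{f:asem}; here only \cref{assm:expression-analysis-fv} is needed, and I would dispatch these clauses first. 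The real content is in the smoothness clauses, and this is where the five assumptions on $\phi$ are used; I also record that $\gamma$ is monotone ($a\sqsubseteq a'\Rightarrow\gamma(a)\subseteq\gamma(a')$), the $p$-component of which needs \cref{assm:restriction}.

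\textbf{Atomic commands.}
For $\cskip$, and for every $v\neq x$ in $\db{x:=e}$ (similarly the variables untouched by $\cobs$ and $\csample$), the relevant component of $\db c$ is the projection $\pi_{\Var,\{v\}}$, so ${}\models\Phi(\db c,\Var,\{v\})$ by \cref{assm:projection}. For the modified variable $x$ in $\db{x:=e}$, the $x$-component of $\db c$, after freezing the inputs outside $\edb e$, is precisely the map whose membership in $\phi_{\edb e,\{x\}}$ is asserted by \cref{assm:expression-analysis-soundness}. The observe command reduces to the assignment $\like:=\like\times\cpdfnor(r;e_1,e_2)$, which has the same concrete semantics. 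For $\csample$ with a constant name $\cname(\alpha,r)$ one treats each modified variable like an assignment; for a non-constant name $\cname(\alpha,e)$ the abstract semantics performs a weak update, and soundness follows since for any concrete input the actual name is one particular $\mu$: the assignment reasoning applies for that $\mu$, intersecting the smoothness sets over all candidate names is sound by \cref{assm:restriction}, and deleting $\fv(e)$ is sound because which name is updated is itself a (possibly non-smooth) function of $\fv(e)$.

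\textbf{Sequencing and branching.}
Fix $v$ and $\tau\in\State[\Var\setminus p''(v)]$, writing $(p,d,V)\defeq\cdb c$, $(p',d',V')\defeq\cdb{c'}$, and $p''(v)=(V\cup p_\cap(d'(v))^c\cup d_\cup(p'(v)^c))^c$. Because $p''(v)$ excludes $d_\cup(p'(v)^c)$, for every $w\in p'(v)^c$ the value of $w$ after $c$ depends only on frozen coordinates, hence is a constant $r_w(\tau)$ as the varying state ranges over $\State[p''(v)]$ (using ${}\models\Delta(\db c,d(w),\{w\})$). Since also $p''(v)\subseteq p_\cap(d'(v))=\bigcap_{w\in d'(v)}p(w)$, the IH for $c$ and \cref{assm:restriction} supply, for each $w\in d'(v)\cap p'(v)$, a map in $\phi_{p''(v),\{w\}}$; \cref{assm:pairing} bundles them into $m_1\in\phi_{p''(v),\,d'(v)\cap p'(v)}$. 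From the IH for $c'$, ${}\models\Phi(\db{c'},p'(v),\{v\})$ together with ${}\models\Delta(\db{c'},d'(v),\{v\})$ and the invariant $p'(v)\supseteq d'(v)^c$ of \cref{thm:abstract-semantics-well-formedness} yields, via \cref{assm:restriction} (freezing the $p'(v)^c$-coordinates to the constants $r_w(\tau)$), a map $m_2\in\phi_{d'(v)\cap p'(v),\{v\}}$. By \cref{assm:composition}, $m_2\circ m_1\in\phi_{p''(v),\{v\}}$, and one checks it agrees with the partial map induced by $\db{c;c'}$ at $(p''(v),\{v\})$, including on the part where the map is undefined, which is controlled by $V\cap p''(v)=\emptyset$ and ${}\models\Delta(\db c,V,\emptyset)$ and, where needed, \cref{assm:strictness}. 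The $d''$ and $V''$ formulas follow from the non-interference reasoning above. The if command is easier: $\fv(b)$ is removed from every $p''(v)$ and added to $V''$ and every $d''(v)$, so after freezing $\Var\setminus p''(v)\supseteq\fv(b)$ the command is exactly one of its branches, and the smoothness clause follows from the IH for that branch and \cref{assm:restriction}.

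\textbf{Loops --- the expected obstacle.}
Here $\db{\cwhile\,b\,\{c\}}=\fix F=\bigsqcup_n F^n(\lambda\sigma.\bot)$ while $\cdb{\cwhile\,b\,\{c\}}=\fix F^\sharp$ is the Kleene limit in the finite lattice $\aD$. Noting that $F^\sharp(a)$ equals $\cdb{\cif\,b\,\{c;[a]\}\,\celse\,\{\cskip\}}$, where $[a]$ is treated as a primitive command with $\db{[a]}\in\gamma(a)$, the branching and sequencing cases give one-step soundness $f\in\gamma(a)\Rightarrow F(f)\in\gamma(F^\sharp(a))$; with $\lambda\sigma.\bot\in\gamma(\bot_{\aD})$ (from \cref{assm:strictness} for the smoothness clause, trivially for dependency) and monotonicity of $\gamma$, this gives $F^n(\lambda\sigma.\bot)\in\gamma(\fix F^\sharp)$ for all $n$. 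What remains --- and what I expect to be the delicate point --- is passing to the supremum $\bigsqcup_n F^n(\lambda\sigma.\bot)$, which a naive fixpoint-transfer argument would handle by assuming $\phi$ closed under limits of increasing chains of partial functions, an admissibility hypothesis the paper deliberately avoids. I expect the resolution to exploit that the abstract semantics has removed enough variables from the smoothness sets: writing $(p,d,V)\defeq\cdb c$ and $(\hat p,\hat d,\hat V)\defeq\fix F^\sharp$, the fixpoint equation forces $\hat p(v)^c\supseteq\fv(b)$ and $\hat p(v)^c\supseteq d_\cup(\hat p(v)^c)$, so $\hat p(v)^c$ contains every variable that can influence $b$ at any iteration; hence, after freezing $\Var\setminus\hat p(v)\supseteq\hat p(v)^c$, all runs take the same branch sequence, the number of iterations is a single (possibly infinite) value determined by $\tau$, and the chain of induced partial functions is empty for small $n$ and thereafter constant, stabilising to one induced by some $F^N(\lambda\sigma.\bot)\in\gamma(\fix F^\sharp)$, which lies in $\phi_{\hat p(v),\{v\}}$. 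The dependency clauses survive the supremum anyway because each chain $F^n(\lambda\sigma.\bot)(\sigma)$ stabilises. The main obstacle is making this branch-determinism argument precise --- the induction showing that, conditioned on $\hat p(v)^c$, two runs agree on all of $\hat p(v)^c$ after every iteration --- and thereby verifying that the five assumptions, rather than any admissibility-style closure, are exactly what the soundness proof requires.
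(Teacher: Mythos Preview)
Your proposal is correct and follows essentially the same route as the paper: a structural induction that dispatches the $\Delta$-clauses first as a routine non-interference argument, then proves the $\Phi$-clauses using \cref{assm:projection,assm:pairing,assm:restriction,assm:composition,assm:strictness} for atomic commands, sequencing, and branching, and handles the loop not by any admissibility assumption but by showing that, once the non-smooth inputs $\hat p(v)^c$ are frozen, all iterates take the same branch sequence so the chain of induced partial functions stabilises at a finite stage. The paper packages the argument through a sequence of reusable lemmas (weakening, merging, a sequence lemma corresponding to your $m_2\circ m_1$ construction, and a loop-limit lemma whose key precondition is exactly your branch-determinism claim) and uses the $V$-component $V'_\infty$ of the abstract fixpoint together with a separate proof that $\hat p(v)^c\supseteq V'_\infty$ as its invariant set, whereas you use $\hat p(v)^c$ directly via the fixpoint inclusions $\hat p(v)^c\supseteq\fv(b)\cup V\cup d_\cup(\hat p(v)^c)$---a slightly more direct but equivalent choice.
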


\begin{remark}
\label{remark:admissibility}
A standard method for proving a property of a loop or more generally a recursively defined function is so called Scott induction. In this method, we view a property as a set $\cT$ of state transformers and a loop as the least fixed point of a continuous function $F$ on state transformers. Then, we prove the three conditions: (i) $\cT$ contains the least state transformer, (ii) it is closed under the least upper bound of any increasing sequence of state transformers, and (iii) $\cT$ is preserved by $F$. The first and second conditions are called strictness and admissibility, respectively, and these three conditions imply that the least fixed point of $F$ belongs to $\cT$.

Our soundness proof for the loop case deviates slightly from this standard method. If it followed the method instead, we would need, in addition to the strictness assumption, the following assumption, which corresponds to the second admissibility condition:
\begin{assumption}[Admissibility]
\label{assm:admissibility} 
Let $K,L \subseteq \Var$, and order partial functions in $[\State[K] \rightharpoonup \State[L]]$ by the inclusion of the graphs of partial functions. Then, for every increasing sequence $\{f_n : \State[K] \rightharpoonup \State[L]\}_{n \in \N}$ (i.e., the graph of $f_{n+1}$ includes that of $f_n$ for all $n \in \N$), 
if every $f_n$ satisfies $\phi_{K,L}$, so does the least upper bound $f_\infty$ of the sequence (defined by its graph being the union of the graphs of all $f_n$'s).
\end{assumption}
\noindent
The inclusion of this admissibility assumption would, then, limit the applicability of our program analysis, since some well-known smoothness properties, such as (global) Lipschitz continuity and local boundedness, do not satisfy the assumption, although they satisfy our five assumptions
(\cref{assm:projection,assm:pairing,assm:restriction,assm:composition,assm:strictness}) \rev{(see \cref{tab:target-prop-list})}. On the plus side, the inclusion of the admissibility assumption could enable our analysis to handle loops more accurately, possibly by tracking the impact of the boolean condition of each loop on smoothness more precisely.
\rev{Our soundness proof avoids the admissibility assumption by exploiting the fact that our analysis handles loop conditions conservatively:
  our analysis drops all the variables that loop conditions may depend on from the set of smooth variables,
  so that it avoids finding too precise inductive predicates that can break soundness.%
  \footnote{\rev{To be precise, our analysis does not require the admissibility assumption,
    not because our abstract domain is finite,
    but because given a loop, our analysis finds an inductive predicate that is ``sufficiently admissible'' in the sense that
    it is closed under the least upper bound of any chain {\em that matters for soundness}:
    the chain should be definable by some loop.
    More details are in \cref{sec:pf:thm:soundness-phi}.}}
}
\qed
\end{remark}

%
%

\subsection{Instantiations}
\label{:4:inst}

Our program analysis requires that the family of smoothness predicates should satisfy  \cref{assm:projection,assm:pairing,assm:restriction,assm:composition,assm:strictness}. Although these
assumptions are violated by some smoothness properties, such as partial differentiability and partial continuity, they are met by our leading example 
$\smash{\phi^{(d)}}$ for differentiability (\cref{e:1:diff}), and also by the predicate family $\smash{\phi^{(l)}}$ for 
local Lipschitzness, which is used in our implementation. Recall the definitions of the predicate
families $\smash{\phi^{(d)}}$ and $\smash{\phi^{(l)}}$: for all $K,L \subseteq \Var$, 
\begin{align*} 
    \phi^{(d)}_{K,L}
    & \defeq \{f : \State[K] \rightharpoonup \State[L]
    \mid \dom(f) \text{ is open and } f \text{ is (jointly) differentiable in its domain}\},
    \\[-0.3em]
    \phi^{(l)}_{K,L}
    & 
    \begin{aligned}[t] 
    {} \defeq \{f & : \State[K] \rightharpoonup \State[L]
    \mid {} \dom(f) \text{ is open, and for all $\sigma \in \dom(f)$, there are $C > 0$ and}
               \\[-0.5ex]
               & \text{an open $O \subseteq \dom(f)$  s.t. $\sigma \in O$ and $\norm{f(\sigma_0) - f(\sigma_1)} \leq C \norm{\sigma_0 - \sigma_1}$ for all $\sigma_0, \sigma_1 \in O$}\}. 
    \end{aligned}
\end{align*}
\begin{theorem}
\label{thm:families-satisfy-assumptions}
      Both $\smash{\phi^{(d)}}$ and $\smash{\phi^{(l)}}$ satisfy \cref{assm:projection,assm:pairing,assm:restriction,assm:composition,assm:strictness}. 
\end{theorem}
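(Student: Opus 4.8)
The plan is to verify \cref{assm:projection,assm:pairing,assm:restriction,assm:composition,assm:strictness} directly from the definitions of $\phi^{(d)}$ and $\phi^{(l)}$, handling the two families in parallel since in every case the argument has the same shape: the domain of the partial function built by each construction is open because it is a finite intersection of open sets and/or a preimage of an open set under a continuous map, and the required regularity (joint differentiability, resp.\ joint local Lipschitzness) transfers along that construction by an elementary calculus fact. First I would record two small observations that package the recurring moves: (a) for $\tau \in \State[K \setminus K']$ the map $\iota_\tau : \State[K'] \to \State[K]$, $\sigma \mapsto \sigma \oplus \tau$, is an affine isometric embedding, hence $C^\infty$, $1$-Lipschitz, and such that $\iota_\tau^{-1}$ of an open set is open; and (b) every $f \in \phi^{(d)}_{K,L}$ or $f \in \phi^{(l)}_{K,L}$ is continuous on its open domain (immediate from differentiability, resp.\ local Lipschitzness). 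With these the assumptions follow quickly.

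For \cref{assm:strictness}, the nowhere-defined function has domain $\emptyset$, which is open, and both regularity conditions hold vacuously. For \cref{assm:projection}, $\pi_{K,L}$ is linear with $\norm{\pi_{K,L}\sigma} \le \norm{\sigma}$, so it is globally $C^\infty$ and globally $1$-Lipschitz on the open domain $\State[K]$. For \cref{assm:pairing}, $\dom\langle f,g\rangle = \dom f \cap \dom g$ is open; if $f,g$ are differentiable then $\langle f,g\rangle$ is differentiable coordinatewise, and if $f$ is $C_f$-Lipschitz on an open $O_f \ni \sigma$ and $g$ is $C_g$-Lipschitz on an open $O_g \ni \sigma$ then, using $\norm{u \oplus v}^2 = \norm{u}^2 + \norm{v}^2$, the pair $\langle f,g\rangle$ is $(C_f^2 + C_g^2)^{1/2}$-Lipschitz on the open neighborhood $O_f \cap O_g \subseteq \dom\langle f,g\rangle$. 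For \cref{assm:restriction}, the constructed $g$ equals $f \circ \iota_\tau$, so $\dom g = \iota_\tau^{-1}(\dom f)$ is open by (a); $g$ is differentiable as a composite of differentiable maps; and it is locally Lipschitz because $\iota_\tau$ is an isometry, so a Lipschitz neighborhood $O$ of $\iota_\tau(\sigma)$ inside $\dom f$ pulls back to $\iota_\tau^{-1}(O)$, a neighborhood of $\sigma$ with the same Lipschitz constant.

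The one case deserving genuine care is \cref{assm:composition}; this is precisely where partial differentiability fails, and where joint regularity is doing the work. Here $\dom(g \circ f) = \dom f \cap f^{-1}(\dom g)$, which is open because $\dom f$ is open and $f$ is continuous on $\dom f$ by (b). For $\phi^{(d)}$, joint differentiability of $g \circ f$ on this open set is the multivariate chain rule. For $\phi^{(l)}$, fix $\sigma \in \dom(g \circ f)$, pick an open $O_g \ni f(\sigma)$ inside $\dom g$ on which $g$ is $C_g$-Lipschitz and an open $O_f \ni \sigma$ inside $\dom f$ on which $f$ is $C_f$-Lipschitz, and set $O \defeq O_f \cap f^{-1}(O_g)$, which is open (continuity of $f$) and contains $\sigma$; then for $\sigma_0,\sigma_1 \in O$ we have $\norm{g(f(\sigma_0)) - g(f(\sigma_1))} \le C_g \norm{f(\sigma_0) - f(\sigma_1)} \le C_g C_f \norm{\sigma_0 - \sigma_1}$, so $g \circ f$ is locally Lipschitz. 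Carrying out these five verifications for each of $\phi^{(d)}$ and $\phi^{(l)}$ proves the theorem. Nothing beyond elementary real analysis is needed; the only thing to watch is the bookkeeping that each chosen neighborhood actually sits inside the domain in question, which is exactly why observations (a) and (b) are isolated up front.
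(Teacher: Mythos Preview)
Your proposal is correct and follows essentially the same approach as the paper's proof: a direct case-by-case verification of each assumption, using that domains are open as intersections/preimages of open sets under continuous maps, that differentiability transfers by the chain rule and coordinatewise, and that local Lipschitz constants combine via $\sqrt{C_f^2+C_g^2}$ for pairing and $C_gC_f$ for composition. Your packaging of the affine embedding $\iota_\tau$ and the continuity observation up front is a clean organizational touch, but the substance is identical to the paper's argument.
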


\rev{%
\vspace{-2mm}
\begin{remark}
The requirement of open domain in $\smash{\phi^{(d)}}$ is sometimes too constraining and hurts the accuracy of the analysis. It can, however, be relaxed, and we can generalise $\smash{\phi^{(d)}}$ to the following predicate family $\smash{\phi^{(d')}}$, which corresponds to the standard definition of differentiability on a manifold {\em with boundary} in differential geometry~\cite[Chapter~2]{lee03}: 
\begin{align*}
   \phi^{(d')}_{K,L}
   &
   \begin{aligned}[t]
   \defeq \{f & : \State[K] \rightharpoonup \State[L]
   \,\mid\, {} \text{for all $\sigma \in \dom(f)$, there exist an open $U \subseteq \State[K]$ and $g : U \to \State[L]$}
              \\[-0.5ex]
              & \text{such that $\sigma \in U$, $f = g$ on $U \cap \dom(f)$, and $g$ is (jointly) differentiable}\}.
   \end{aligned}
\end{align*}
Note  the weakening of open-domain requirement in $\smash{\phi^{(d')}}$: the open domain $U$ in the above definition
does not have to be included in $\dom(f)$. The family $\smash{\phi^{(d')}}$ satisfies \cref{assm:projection,assm:pairing,assm:restriction,assm:composition,assm:strictness}, and can lead to a more permissive instantiation of our program analysis than the family $\smash{\phi^{(d)}}$, especially in handling atomic commands, such as assignment, sample, and observe.
We point out that $\smash{\phi^{(d')}}$ does not satisfy \cref{assm:admissibility} (i.e., the admissibility assumption), while $\smash{\phi^{(d)}}$ does satisfy it. As a result, if the handling of loops in our analysis is changed such that loop conditions are analysed more accurately, the analysis may remain sound only for  $\smash{\phi^{(d)}}$, not for $\smash{\phi^{(d')}}$, as explained in \cref{remark:admissibility}. 
\qed
\end{remark}}

\begin{table*}[t]
  \centering
  \renewcommand{\arraystretch}{.85}
  \aboverulesep=0.2ex
  \belowrulesep=0.2ex
  \tabcolsep=1.5ex
  \small
  \caption{%
    Failure cases of the composition assumption.
    For each given $\phi$, we have $f,g \in \phi$ but $g \circ f \notin \phi$,
    where $f$ and $g$ are interpreted as (total or partial) functions from $\State[K]$ to $\State[L]$. 
    Let $c_1 \equiv (y=x^2; z=g(y))$ and $c_2 \equiv (y=x; z=g(x,y))$.
    Then, for each $i$-th $\phi$, $\cdb{c_i}$ incorrectly concludes that $z$ is smooth with respect to $x$.
  }
  \label{tab:phi-counter-eg}
  \vspace{-0.7em}
  \begin{tabular}{lll}
    \toprule
    \multicolumn{1}{l}{$\phi$} & \multicolumn{1}{l}{$f$} & \multicolumn{1}{l}{$g$}
    \\ \midrule
    $\phi^{(d'')}_{K,L}$
    & \begin{tabular}{@{}c@{}}$f(x) = x^2$ defined on $\R$\end{tabular}
    & \begin{tabular}{@{}c@{}}$g(x) = \ind{x>0}$ defined on $[0,1]$\end{tabular}
    \\
    $\phi^{(\mathit{pd})}_{K,L}$,
    $\phi^{(\mathit{pc})}_{K,L}$
    & \begin{tabular}{@{}c@{}}$f(x) = (x,x)$ defined on $\R$\end{tabular}
    & \begin{tabular}{@{}c@{}}$g(x,y) = 
        \Big\{\!\!
        \begin{array}{l@{\;\;}l}
          xy/(x^2+y^2) & \text{if $(x,y) \neq (0,0)$}
          \\
          0 & \text{otherwise}
        \end{array}
        $ defined  on $\R^2$\end{tabular}
    \\ \bottomrule
  \end{tabular}
\end{table*}

\vspace{-2mm}
\begin{remark}
\label{remark:phi-examples-nonexamples}
At this point, the reader might feel that \cref{assm:projection,assm:pairing,assm:restriction,assm:composition,assm:strictness} are satisfied by nearly all smoothness properties. This impression is not accurate. For instance, the composition assumption does not hold for the notions of differentiability of partial functions formalised by the following $\smash{\phi^{(d'')}}$ and $\smash{\phi^{(\mathit{pd})}}$, nor for the partial continuity formalised by $\smash{\phi^{(\mathit{pc})}}$:
  \begin{align*}
    \phi^{(d'')}_{K,L}
    & 
    \defeq \{f : \State[K] \rightharpoonup \State[L] \mid \text{$f$ is (jointly) differentiable in the interior of its domain}\},
    \\[-0.3em]
    \phi^{(\mathit{pd})}_{K,L}
    &
    \defeq \{f : \State[K] \rightharpoonup \State[L] \mid 
    \text{$\dom(f)$ is open, and for all $v \in K$, $f$ is partially differentiable in $v$}\},
    \\[-0.3em]
    \phi^{(\mathit{pc})}_{K,L}
    &
    \defeq \{f : \State[K] \rightharpoonup \State[L] \mid
    \text{$\dom(f)$ is open, and for all $v \in K$, $f$ is partially continuous in $v$}\}. 
  \end{align*}
  \Cref{tab:phi-counter-eg} contains counterexamples that show the failure of the composition assumption for these predicate families. In fact, when instantiated with these families, our program analysis is not sound. The same table shows example programs and incorrect conclusions derived by our analysis.

  \rev{\cref{tab:target-prop-list} shows more target smoothness properties from mathematics,
    and whether each property satisfies \crefrange{assm:projection}{assm:strictness} (and \cref{assm:admissibility}).
    Recall that our program analysis does not require \cref{assm:admissibility} for soundness; the table shows the assumption just for reference.
    The target properties from ``cont.'' to ``real analytic'' in the table (and three more) satisfy \crefrange{assm:projection}{assm:strictness},
    so that our analysis can be immediately applied to those target properties while remaining sound.}
  \qed
\end{remark}

\begin{table*}[t]
  \centering
  \renewcommand{\arraystretch}{.80}
  \aboverulesep=0.25ex
  \belowrulesep=0.25ex
  \tabcolsep=0.5ex
  \small
  \caption{\rev{%
      Various well-known target smoothness properties from mathematics, and whether each of them satisfies \crefrange{assm:projection}{assm:strictness}
      (and \cref{assm:admissibility}).
      Here ``cont.'' and ``diff.'' denote ``continuous'' and ``differentiable''.
      The properties above the double horizontal line are defined such that they include the open-domain requirement
      as in $\smash{\phi^{(d)}}$ and $\smash{\phi^{(l)}}$,
      and the properties below the line are defined without the open-domain requirement.}}
  \label{tab:target-prop-list}
  \vspace{-0.8em}
  \rev{%
  \newcolumntype{X}{>{\centering\arraybackslash\hspace{0pt}}p{39pt}}
  \setlength\doublerulesep{1.0pt}
  \begin{tabular}{lXXXXX|X}
    \toprule
    Target smoothness property  & A3 (proj.) & A4 (pair.) & A5 (rest.) & A6 (comp.) & A7 (stri.) & A8 (admi.)
    \\
    \midrule
    cont. ($\smash{\mcC^0}$)                  & \oo & \oo & \oo & \oo & \oo & \oo \\
    locally Lipschitz ($=\smash{\phi^{(l)}}$) & \oo & \oo & \oo & \oo & \oo & \oo \\
    uniformly cont.                           & \oo & \oo & \oo & \oo & \oo & \xxadm \\
    Lipschitz cont.                           & \oo & \oo & \oo & \oo & \oo & \xxadm \\
    \midrule
    diff. ($=\smash{\phi^{(d)}}$)             & \oo & \oo & \oo & \oo & \oo & \oo \\
    continuously diff. ($\smash{\mcC^1}$)     & \oo & \oo & \oo & \oo & \oo & \oo \\
    smooth ($\smash{\mcC^\infty}$)            & \oo & \oo & \oo & \oo & \oo & \oo \\
    real analytic ($\smash{\mcC^\omega}$)     & \oo & \oo & \oo & \oo & \oo & \oo \\
    \midrule
    partially cont. ($=\smash{\phi^{(pc)}}$)  & \oo & \oo & \oo & \xx & \oo & \oo \\
    partially diff. ($=\smash{\phi^{(pd)}}$)  & \oo & \oo & \oo & \xx & \oo & \oo \\
    \midrule \midrule
    almost-everywhere cont.                   & \oo & \oo & \xx & \xx & \oo & \oo \\
    almost-everywhere diff.                   & \oo & \oo & \xx & \xx & \oo & \oo \\
    coordinatewise non-decreasing             & \oo & \oo & \oo & \oo & \oo & \oo \\
    \midrule
    locally bounded                           & \oo & \oo & \oo & \oo & \oo & \xxadm \\
    bounded                                   & \xx & \oo & \oo & \oo & \oo & \xxadm \\
    \midrule
    Borel measurable                          & \oo & \oo & \oo & \oo & \oo & \oo \\
    locally integrable                        & \oo & \oo & \xx & \xx & \oo & \xxadm \\
    integrable                                & \xx & \oo & \xx & \xx & \oo & \xxadm \\
    \bottomrule
  \end{tabular}}
  \vspace{-0.4em}
\end{table*}


\section{Algorithm for the SPGE Variable-Selection Problem}
\label{sec:var-sel-alg}

We now put together the results from \cref{sec:generalised-gradient-estimator-new} and \cref{s:sa:new}
to formally define and soundly (yet approximately) solve the SPGE variable-selection problem. 
We start with the formal definition of the problem:
\begin{definition}[SPGE Variable-Selection Problem; Formal]
  \label{def:var-sel-prob-formal}
  Assume we are given a model $c_m$, a guide $c_g$, and a (initial) simple reparameterisation plan $\pi_0$ such that
  $c_m$, $c_g$, and $\ctr{c_g}{\pi}$ always terminate and have no double-sampling errors for all $\pi \sqsubseteq \pi_0$. 
  Here we write $\pi \sqsubseteq \pi'$ if the graph of $\pi$ is included in that of $\pi'$.
  Given these $c_m$, $c_g$, and $\pi_0$,
  find a reparameterisation plan $\pi \sqsubseteq \pi_0$ such that
  (i) $\pi$ is simple and satisfies \cref{cond:dens-diff,cond:val-diff} in \cref{sec:grad-estm-prog-trans}, 
  and (ii) $|\repname(\pi)|$ is maximised.
  We say that $\pi$ is a {\em sound solution} if it satisfies (i), and an {\em optimal solution} if it satisfies (i) and (ii).
  \qed
\end{definition}

The input $\pi_0$ in the problem is a newcomer. It fixes a semantics-preserving transformation for all the sample commands. Typically, $\pi_0$ is defined on the entire $\NameEx \times \DistEx \times \LamEx$, and remains fixed across all input model-guide pairs $(c_m,c_g)$. More importantly, it is valid so that the change of any sample command by $\pi_0$ preserves the semantics of the command when we take into account both the second distribution argument and the third lambda argument of the sample command. The validity of $\pi_0$ is inherited by any sound solution $\pi$ of the SPGE variable-selection problem since validity as a property on reparameterisation plans is down-closed with respect to the $\sqsubseteq$ order. In our setup, $\pi_0$ is fixed to be the following reparameterisation plan from \cref{sec:prog-trans}:
\begingroup
\addtolength{\abovedisplayskip}{-0.5pt}
\addtolength{\belowdisplayskip}{-0.5pt}
\begin{align}
\label{eqn:initial-reparameterisation-plan}
  \pi_0(n, \cnor(e_1,e_2), \lambda y.e_3) & \defeq (\cnor(0,1), \lambda y. e_3[(y \times \sqrt{e_2} + e_1) / y])
\end{align}
\endgroup
for all $n \in \NameEx$ and expressions $e_1$, $e_2$, and $e_3$. 

As an example of the SPGE variable-selection problem, consider the problem for the $\pi_0$ in \cref{eqn:initial-reparameterisation-plan} and the model-guide pair $(c_m, c_g)$ given in \cref{fig:eg-model-guide},
where $\cstr{z_i}$ in the figure is interpreted as $\cname(\cstr{z_i}, 0)$.
Then, as discussed in \cref{sec:overview}, the problem has the following optimal solution:
$\pi \defeq \pi_0|_{S \,\times\, \DistEx \,\times\, \LamEx}$ for $S \defeq \{\cname(\alpha,e) \in \NameEx \mid \alpha \not\equiv \cstr{z_2}\}$.

We present an algorithm for computing a sound (yet possibly suboptimal) solution to the problem.
\begin{enumerate}
\item
By running our program analysis instantiated with differentiability (described in \cref{:2:sa,:4:inst}),
compute $(\mathbb{p}_m, \mathbb{d}_m, \mathbb{V}_m) \defeq \cdb{c_m}$ and $(\mathbb{p}_g, \mathbb{d}_g, \mathbb{V}_g) \defeq \cdb{c_g}$,
where we use $\mathbb{p}$, $\mathbb{d}$, and $\mathbb{V}$ for the output of the analysis
to distinguish them from densities $p$ and distributions $d$.

\item Using $\mathbb{p}_m$ and $\mathbb{p}_g$, check
\begingroup
\addtolength{\abovedisplayskip}{-2.5pt}
\addtolength{\belowdisplayskip}{-1pt}
\begin{align}
  \label{eq:var-sel-alg-check1}
  \theta \subseteq K,
  \quad\text{where}\
  K \defeq \mathbb{p}_m(\like) \cap \bigcap_{\mu \in \Name} \mathbb{p}_m(\pr_\mu) \cap \bigcap_{\mu \in \Name} \mathbb{p}_g(\pr_\mu).
\end{align}
\endgroup
If the check fails, return an error message that our analysis cannot discharge \cref{cond:dens-diff} for {\em any} $\pi$,
since the analysis concludes that the density function of $c_m$ or $c_g$ can be non-differentiable in $\theta$ (even when $\repname(\pi)=\emptyset$).
If the check passes, initialise the set of reparameterised random variables by 
\begingroup
\addtolength{\abovedisplayskip}{-0.5pt}
\addtolength{\belowdisplayskip}{-0.5pt}
\[
S \defeq \{(\alpha, i) \in \Name \,\mid\,
\rev{\text{for all}\ i' \in \N,\ (\alpha, i') \in \Name \implies (\alpha, i') \in K}\}.
\]
\endgroup

\item\label{alg:variable-selection-3}  
Using $S$ and $\pi_0$, construct a reparameterisation plan $\pi \sqsubseteq \pi_0$ by $\pi \defeq \pi_0[S]$, where
$\pi_0[S](n,d,l)$ is $\pi_0(n,d,l)$ if $(n,d,l) \in \dom(\pi_0)$, $n = \cname(\alpha, \_)$, and $(\alpha, \_) \in S$; otherwise,
it is undefined.

\item By running the differentiability analysis  on $\ctr{c_g}{\pi}$,
compute $\smash{ (\overline{\mathbb{p}_g}, \overline{\mathbb{d}_g}, \overline{\mathbb{V}_g}) } \defeq \cdb{\ctr{c_g}{\pi}}$
and check
\begin{align}
  \label{eq:var-sel-alg-check2}
  \theta \subseteq \bigcap_{{ \mu \in \Name }} \overline{\mathbb{p}_g}(\pr_\mu) \cap \bigcap_{\mu \in \Name} \overline{\mathbb{p}_g}(\val_\mu).
\end{align}
If the check passes, return $\pi$ as the output of the algorithm.
If not, update $S$ by $S \setminus \{(\alpha, i) \in \Name\}$ after choosing some $(\alpha, \_) \in S$,
and then repeat the above procedure (from the step \eqref{alg:variable-selection-3}, the point where we construct $\pi$ using $S$) until $S$ becomes empty.
\end{enumerate}

Our algorithm computes a sound solution \rev{(in the sense stated in \cref{def:var-sel-prob-formal})},
partly because of the soundness of our program analysis:
\begin{theorem}
  \label{thm:soundness-reparam}
  Let $c_m$, $c_g$, and $\pi_0$ be the inputs to the SPGE variable-selection problem.
  If the above algorithm returns $\pi$ for $(c_m,c_g,\pi_0)$, then $\pi$ is a sound solution for the problem.
\end{theorem}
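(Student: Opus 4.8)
The plan is to verify, for the plan $\pi$ that the algorithm returns, the three properties that make it a \emph{sound solution} in the sense of \cref{def:var-sel-prob-formal}: (a) $\pi \sqsubseteq \pi_0$; (b) $\pi$ is simple; and (c) $(c_m, c_g, \ctr{c_g}{\pi})$ satisfies the differentiability requirements \cref{cond:dens-diff,cond:val-diff}. Write $\pi = \pi_0[S]$ for the final value of $S$. The algorithm returns only after both checks \cref{eq:var-sel-alg-check1} and \cref{eq:var-sel-alg-check2} have succeeded for this $S$ and $\pi$, and recall that, by the hypotheses of the problem, $c_m$, $c_g$, and $\ctr{c_g}{\pi}$ all terminate and have no double-sampling error (the latter because $\pi \sqsubseteq \pi_0$).

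Parts (a) and (b) are bookkeeping. Part (a) is immediate, since $\pi_0[S]$ is by definition a restriction of $\pi_0$. For part (b) I would first record that the set $S$ maintained by the algorithm is always \emph{name-closed}: whether $(\alpha,i) \in S$ depends only on the string $\alpha$ (in step~2 the condition ``$(\alpha,i') \in K$ for all $i'$'' is independent of $i$, and step~4 removes entire $\alpha$-blocks). Since $\pi_0$ is total, hence trivially simple, and membership of $(n,d,l)$ in $\dom(\pi_0[S])$ depends only on the string part of $n$ (via whether the corresponding $\alpha$-block of $S$ is present), the plan $\pi_0[S]$ is simple. Name-closedness also gives $\repname(\pi) = S$ straight from the definitions, and since every $(\alpha,i) \in S$ satisfies in particular $(\alpha,i) \in K$, we get $S \subseteq K$; combined with the passed check $\theta \subseteq K$ of \cref{eq:var-sel-alg-check1} this yields $\theta \cup \repname(\pi) \subseteq K = \mathbb{p}_m(\like) \cap \bigcap_{\mu} \mathbb{p}_m(\pr_\mu) \cap \bigcap_{\mu} \mathbb{p}_g(\pr_\mu)$. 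Likewise the passed check \cref{eq:var-sel-alg-check2} gives $\theta \subseteq \overline{\mathbb{p}_g}(\pr_\mu) \cap \overline{\mathbb{p}_g}(\val_\mu)$ for every $\mu$. (Validity of $\pi$, although not part of ``sound solution'', also follows, since validity is $\sqsubseteq$-downward-closed and $\pi_0$ is valid.)

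The heart of the proof is (c). The tool is \cref{thm:soundness-analysis}, which applies to the differentiability family $\phi^{(d)}$ by \cref{thm:families-satisfy-assumptions}; applied to $c_m$, $c_g$, and $\ctr{c_g}{\pi}$ it gives $\db{c_m} \in \gamma(\cdb{c_m})$, $\db{c_g} \in \gamma(\cdb{c_g})$, $\db{\ctr{c_g}{\pi}} \in \gamma(\cdb{\ctr{c_g}{\pi}})$, hence in particular $\models \Phi(\db{c}, \mathbb{p}(v), \{v\})$ and $\models \Delta(\db{c}, \mathbb{d}(v), \{v\})$ for the relevant commands $c$ and variables $v$. From there I would build a \emph{bridge} from conditional differentiability of the raw transformers $\db{c}(\cdot)(\like)$, $\db{c}(\cdot)(\pr_\mu)$, $\db{c}(\cdot)(\val_\mu)$ to differentiability of the density functions $\pfun{c_m,\sigma_\theta}{}$, $\pfun{c_g,\sigma_\theta}{\repname(\pi)}$, $\pfun{c_g,\sigma_\theta}{\Name \setminus \repname(\pi)}$ and of the value function $\vfun{\ctr{c_g}{\pi},\sigma_\theta}$. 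This rests on: (i) since all three commands terminate and have no double-sampling error, the piecewise definitions of $\pfun{}{}$ and $\vfun{}{}$ collapse to their nontrivial branches, so each of these quantities is, pointwise, a finite product of factors $\db{c}(\sigma_0 \oplus \sigma_\theta \oplus \sigma_n)(\pr_\mu)$ or a tuple of factors $\db{c}(\sigma_0 \oplus \sigma_\theta \oplus \sigma_n)(\val_\mu)$; (ii) for $\phi^{(d)}$, finite products and tuples of differentiable functions on a common open domain are again differentiable (tuples by \cref{assm:pairing}, products by pairing followed by composition with multiplication, which is a $\phi^{(d)}$-map, via \cref{assm:composition}), and every factor is strictly positive, so the products land in $(0,\infty)$; and (iii) the state-building map $(\sigma_\theta,\sigma_n) \mapsto \sigma_0 \oplus \sigma_\theta \oplus \sigma_n$ is a smooth function of its varying coordinates — its auxiliary components are $\pr_\mu \mapsto \cN(\sigma_n(\mu);0,1)$ and $\val_\mu \mapsto \sigma_n(\mu)$, which are smooth — so precomposing by it preserves $\phi^{(d)}$ by \cref{assm:composition}, after \cref{assm:restriction} has cut the smooth set $\mathbb{p}(v)$ down to $\theta \cup \repname(\pi)$ (resp.\ down to $\theta$ for the value function, where $\sigma_n$ is held fixed and the building map is merely affine). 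Substituting the set containments obtained above then discharges \cref{cond:dens-diff} from the analyses of $c_m$ and $c_g$, and \cref{cond:val-diff} from the analysis of $\ctr{c_g}{\pi}$, which completes (c) and hence the proof.

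I expect the main obstacle to be making bridge-step (iii) watertight in the density case: the state-building map varies the auxiliary coordinates $\pr_\mu, \val_\mu$ for $\mu \in \repname(\pi)$, and these need not a priori lie inside the smooth set $\mathbb{p}_c(v)$ reported by the analysis for $v \in \{\like\} \cup \{\pr_{\mu'} : \mu' \in \Name\}$. Settling this needs a supporting lemma about the abstract semantics — essentially that it never reports $\like$ or $\pr_{\mu'}$ as depending on, or as non-smooth in, the auxiliary variables $\pr_\mu, \val_\mu, \sampled_\mu$, which holds because source and $\pi_0$-transformed expressions never mention those variables, so in the abstract semantics they can only propagate through projections and through $\edb{\cpdfnor(\mu;e_1,e_2)}$, which is smooth in them. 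Once that lemma is in place, \cref{assm:restriction} and \cref{assm:composition} finish the argument; everything else follows routinely from \cref{thm:soundness-analysis} and the algebraic closure properties of $\phi^{(d)}$.
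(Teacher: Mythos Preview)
Your proposal is correct and identifies the genuine obstacle precisely, but it resolves that obstacle by a different mechanism than the paper. You close the gap by proving a \emph{syntactic} invariant of the abstract semantics: for every command $c$ built from source (or $\pi_0$-transformed) expressions and every $v$, the auxiliary variables $\pr_\mu,\val_\mu$ lie in $\mathbb{p}_c(v)$ (and never enter $V$ or, for $w\notin\{\pr_\mu,\val_\mu\}$, $\mathbb{d}_c(w)$). This is true, and provable by induction on $c$ as you sketch; with it in hand, \cref{assm:restriction} lets you enlarge the smooth set from $\theta\cup\repname(\pi)$ to include the varying $\pr_\mu,\val_\mu$, and \cref{assm:composition} then absorbs the smooth state-builder. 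The paper instead argues \emph{semantically}: it first obtains differentiability with the auxiliary part $\sigma_a$ held \emph{fixed} (a direct consequence of \cref{thm:soundness-analysis} and weakening), then shows via properties of the concrete semantics (\cref{lem:used-properties}) that the output on $\like$ or $\pr_{\mu'}$ is either independent of the initial $\sigma_a$ or a bare projection of it, and finally invokes a small topological lemma (continuity of $f'(-,-,\sigma_a)$ forces this dichotomy to be uniform in $(\sigma_\theta,\xi_n)$, since the projection is surjective onto $\R$) to conclude differentiability in both branches. Your route avoids that extra topological step but pays with a fresh induction over commands; the paper's route reuses its concrete-semantic lemmas at the cost of the continuity argument. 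Two minor remarks: your claim $\repname(\pi)=S$ uses that $\pi_0$ is total (the paper only assumes $\pi_0$ simple and gets $\repname(\pi)=\repname(\pi_0)\cap S\subseteq S$, which suffices), and your invariant should also track that $V\cap\{\pr_\mu,\val_\mu\}=\emptyset$ and $\mathbb{d}(w)\cap\{\pr_\mu,\val_\mu\}=\emptyset$ for $w\notin\{\pr_\mu,\val_\mu\}$, or the sequence case of the induction does not close.
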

\noindent
\rev{We point out that the theorem holds for the local Lipschitzness case as well (in addition to the differentiability case).
  That is, if the above algorithm runs our program analysis instantiated with local Lipschitzness (instead of differentiability),
  and if it returns an output $\pi$,
  then $\pi$ is a sound solution to the SPGE variable-selection problem
  that uses \cref{cond:dens-lips,cond:val-lips} (instead of \cref{cond:dens-diff,cond:val-diff}).}

Our algorithm solves the problem only approximately:
there is no formal guarantee that it always computes an optimal solution.
The suboptimality may arise due to two approximations:
the overapproximation of our program analysis when it computes differentiability \rev{(or local Lipschitzness)} information,
and the heuristic choices made by our algorithm when the algorithm computes the random-variable set $S$.
We demonstrate, however, that our algorithm finds optimal solutions for all the benchmarks in~\cref{sec:impl-eval}.

Our algorithm calls our program analysis at most $|\{ \alpha \in \Str \mid (\alpha, \_) \in S_0 \}|+2$ times,
where $S_0$ is the initial value of $S$ (i.e., the set of random variables whose sample commands are to be transformed) in the algorithm.
However, for all the benchmarks in \cref{sec:impl-eval},
our algorithm terminated with the initial set $S_0$ and thus called our analysis only $3$ times
(on the model, the guide, and the reparameterised guide according to $S_0$).
\rev{Based on these results and our intuition on the algorithm,
we conjecture that our algorithm always terminates with the initial set $S_0$ under a mild condition on $\pi_0$ and our analysis
(e.g., $\edb{e}$ is computed inductively on $e$).
Since the conjecture is still open, our algorithm might not succeed in the first iteration,
and if so, it continues to search for a sound solution greedily.
Note that there are many other ways to continue the search and our algorithm uses just one of them (as it is linear-time).}

\section{Experimental Evaluation}
\label{sec:impl-eval}


In our experiments, we consider two research questions.
First, can the analysis proposed in \cref{s:sa:n} be instantiated
  and implemented so that it can produce meaningful smoothness results on
  real-world probabilistic programs?
Second, can the algorithm proposed in \cref{sec:var-sel-alg} 
  find near-optimal solutions to the SPGE variable-selection problem on real-world probabilistic programs?
To assess the two questions, we have implemented a static smoothness analyser for Pyro programs based on \cref{s:sa:n},
and a variable selector based on \cref{sec:var-sel-alg} which (approximately) solves the variable-selection problem.%
\footnote{\rev{Our implementation is available at \url{https://github.com/wonyeol/smoothness-analysis}.}}
Our analyser and variable selector are implemented in OCaml, and support a subset of the Pyro PPL and two smoothness properties:
differentiability and local Lipschitzness. 

\vspace{2mm}
\noindent{\bf Implementation.}\ 
Although the analysis described in \cref{s:sa:n} may look simple when considering a basic PPL,
real-world PPLs such as Pyro are of a much higher degree of complexity.
First, they provide a large panel of continuous/discrete probability distributions for sample and observe commands,
and library functions for tensors and neural networks.
Second, programs in real-world PPLs may fail to be smooth for reasons other
than if-else and while commands.
In particular, values sampled from discrete distributions,
and arguments to operators and distribution constructors that are well-defined only on a strict subset of values, may induce non-smoothness.
A straightforward treatment of these will result in an overly
conservative analysis, treating far too many variables as potentially
non-smooth.
Third, Pyro programs typically rely on tensors (of large, statically unknown size) to deal with large datasets,
and it is generally infeasible to reason about each (real-valued) element of tensors individually.
In the following, we discuss how our static analyser addresses these issues and provides sound, useful information
about smoothness of Pyro programs. 

\vspace{1mm}
\noindent{\it Distributions and library functions.}
Our analyser supports 17 distributions (continuous or discrete). 
Each distribution is characterized by a pair $(b,a)$ for a boolean $b$ and an array of booleans $a$,
where $b$ (or $a_i$) denotes whether its probability density is differentiable or locally Lipschitz
with respect to the sampled value (or the $i$-th argument) of the distribution.
For example, a normal distribution is described by \mytt{(true,[true,true])} (assuming that the second argument is positive) and a Poisson distribution by \mytt{(false,[true])}.
Similarly, the analyser supports a large number of PyTorch/Pyro library
functions for tensors and neural networks, and assumes the correct smoothness information about them.
For instance, the ReLU function is considered locally Lipschitz but not differentiable.

 \vspace{1mm}
\noindent{\it Refining smoothness information based on safety pre-analysis.}
Although the expression \mytt{x/y} is generally non-smooth with respect to
\mytt{y} (even if it is well-defined for \mytt{y=0}), if more information is available, for instance that \mytt{y}
always lies in range $[1,10]$, we can safely consider it smooth with respect to
both \mytt{x} and \mytt{y}.
Likewise, the density of a normal distribution is generally non-smooth with respect to the standard deviation argument $\sigma$
(even if it is well-defined for $\sigma \leq 0$),
so more precise smoothness information can be produced when $\sigma$ is known to be always positive.
Thus, establishing precise smoothness information requires to first
establish safety properties related to program operations.
To achieve this, our tool actually performs two analyses in sequence:
(i) a safety pre-analysis infers ranges over all numerical variables
  and marks each argument to an operator or a distribution constructor
  as either ``safe'' or ``potentially unsafe'';
(ii) the program analysis formalised in \cref{s:sa:n}
  utilises information computed in the first phase to produce precise
  smoothness information.
The first phase boils down to a forward abstract interpretation based on
basic abstract domains like intervals and signs~\cite{cc:popl:77}.
It logs safety information for each program statement just like static
analyses for runtime errors and undefined behaviors~\cite{astree:pldi03}.
As formalised in \cref{:2:sa}, the second analysis is compositional.
Due to their different nature, the two analyses need to be done in
sequence.

\vspace{1mm}
\noindent{\it Tensors.}
Pyro programs commonly use nested loops and indexed tensors.
\rev{As the number of scalar values in each tensor is often statically unknown (or known but huge),
treating each scalar as a separate variable is not feasible; so we rely on a conservative summarisation of tensors.
Intuitively, we treat all scalars in a tensor as a ``block'':
e.g., when density might not be smooth with respect to some scalar(s) of a tensor, 
the analysis conservatively concludes that it might not be smooth with respect to all scalars in the tensor.}
In our experiments, this abstraction does not result in any precision loss.

\begin{table*}[t]
  \centering
  \renewcommand{\arraystretch}{.85}
  \aboverulesep=0.2ex
  \belowrulesep=0.2ex
  \small
  \caption{%
    Subset of Pyro examples used in experiments and their key features
    (see \cref{sec:impl-eval-more} for the rest).
    The last five columns show the total number of code lines (excluding comments),
    loops, sample commands, observe commands, and learnable parameters
    (declared explicitly by \mytt{pyro.param} or implicitly by a neural network module).
    Each number is the sum of the counts in the model and guide.
  }
  \label{t:examples}
  \vspace{-1em}
  \begin{tabular}{llrrrrr}
    \toprule
    Name & Probabilistic model & LoC
    & $\cwhile$ & $\csample$ & $\cobs$ & \mytt{param}
    \\ \midrule
    \mytt{spnor} & Splitting normal example in \cref{fig:eg-model-guide}
    & 16 
    & 0 & 2 & 1 & 2 
    \\
    \mytt{sgdef} & Deep exponential family
    & 105 
    & 0 
    & 12 & 1 & 12 
    \\
    \mytt{dmm} & Deep Markov model
    & 112 
    & 3 
    & 2 & 1 & 13 
    \\
    \mytt{mhmm} & Hidden Markov models
    & 137 
    & 1 & 5 & 5 & 12
    \\
    \mytt{scanvi} & Single-cell annotation using variational inference
    & 147 
    & 0 & 7 & 2 & 21
    \\
    \mytt{air} & Attend-infer-repeat
    & 174 
    & 2 
    & 6 & 1 & 16 
    \\
    \mytt{cvae} & Conditional variational autoencoder
    & 205 
    & 0 & 2 & 1 & 15 
    \\ \bottomrule
  \end{tabular}
  \vspace{-0.7em}
\end{table*}

\vspace{2mm}
\noindent{\bf Evaluation.}\
We evaluated our analyser and variable selector on $13$ representative Pyro examples from the Pyro webpage~\cite{pyro:examples}
that use standard SVI engines and contain explicitly written model-guide pairs (without AutoGuide).
They include advanced models with deep neural networks such as attend-infer-repeat \cite{EslamiNIPS16}
and single-cell annotation using variational inference \cite{xu2021probabilistic}.
Additionally, we included the example in \cref{fig:eg-model-guide},
for which Pyro offers an unsound reparameterisation plan.
\cref{t:examples} lists half of these 14 Pyro examples with their code size and conceptual complexity
(see \cref{sec:impl-eval-more} for the rest).
Experiments were performed on a Macbook Pro with 2.3GHz Core i9 and 32GB RAM. 

\begin{table*}[t]
  \centering
  \renewcommand{\arraystretch}{.75}
  \aboverulesep=0.2ex
  \belowrulesep=0.2ex
  \small
  \caption{%
    Results of smoothness analyses.
    ``Manual'' and ``Ours'' denote the number of continuous random variables and learnable parameters
    in which the density of the program is smooth, computed by hand and by our analyser.
    ``Time'' denotes the runtime of our analyser in seconds.
    ``\#CRP'' denotes the total number of continuous random variables and learnable parameters in the program.
    \mytt{-m} and \mytt{-g} denote model and guide.
    We consider $\{(\alpha,i) \in \Name\}$ as one random variable for each $\alpha \in \Name$.
    \rev{See \cref{sec:impl-eval-more} for the rest of results.}
  }
  \label{tab:analysis-result}
  \vspace{-1em}
  \begin{tabular}{lrrrrrrr}
    \toprule
    & \multicolumn{3}{c}{Differentiable}
    & \multicolumn{3}{c}{Locally Lipschitz}
    \\ \cmidrule(lr){2-4} \cmidrule(lr){5-7}
    \multicolumn{1}{l}{Name}
    & \multicolumn{1}{r}{Manual}
    & \multicolumn{1}{r}{Ours}
    & \multicolumn{1}{l}{Time}
    & \multicolumn{1}{r}{Manual}
    & \multicolumn{1}{r}{Ours}
    & \multicolumn{1}{l}{Time}
    & \multicolumn{1}{r}{\#CRP}
    \\ \midrule
    \mytt{spnor-m}   &  1 &  1 & 0.006 &  1 &  1 & 0.009 &  \hltred{2}\!
    \\
    \mytt{spnor-g}   &  4 &  4 & 0.007 &  4 &  4 & 0.008 &  4
    \\
    \mytt{sgdef-m}   &  6 &  6 & 0.003 &  6 &  6 & 0.006 &  6 
    \\
    \mytt{sgdef-g}   & 18 & 18 & 0.016 & 18 & 18 & 0.015 & 18 
    \\
    \mytt{dmm-m}     &  \hltorg{4} &  \hltorg{4} & 0.014 & 10 & 10 & 0.016 & 10 
    \\
    \mytt{dmm-g}     &  \hltorg{4} &  \hltorg{4} & 0.026 &  5 &  5 & 0.020 &  5 
    \\
    \mytt{mhmm-m}    & 10 & 10 & 0.063 & 10 & 10 & 0.075 & 10 
    \\
    \mytt{mhmm-g}    &  6 &  6 & 0.007 &  6 &  6 & 0.008 &  6 
    \\
    \mytt{scanvi-m}  &  \hltorg{6} &  \hltorg{6} & 0.032 & 12 & 12 & 0.032 & 12 
    \\
    \mytt{scanvi-g}  &  \hltorg{8} &  \hltorg{8} & 0.052 & 15 & 15 & 0.058 & 15 
    \\
    \mytt{air-m}     &  \hltorg{1} &  \hltorg{1} & 0.108 &  4 &  4 & 0.105 &  4 
    \\
    \mytt{air-g}     &  \hltorg{3} &  \hltorg{3} & 0.075 & 15 & 15 & 0.072 & \hltred{16}\! 
    \\
    \mytt{cvae-m}    &  \hltorg{3} &  \hltorg{3} & 0.025 &  8 &  8 & 0.027 &  8 
    \\
    \mytt{cvae-g}    &  \hltorg{5} &  \hltorg{5} & 0.031 &  9 &  9 & 0.023 &  9 
    \\ \bottomrule
  \end{tabular}
  \vspace{-0.4em}
\end{table*}

\vspace{1mm}
\noindent{\it Smoothness analyser.}
We assess our smoothness analyser on the 14 Pyro examples for differentiability and local Lipschitzness (\cref{:4:inst}),
and show a subset of results in \cref{tab:analysis-result} (see \cref{sec:impl-eval-more} for the rest).
The results demonstrate that our analysis can cope successfully with real-world Pyro programs.
First, our analysis is accurate.
For all examples, the analysis identifies the exact ground-truth set of random variables and parameters
in which the density of the program is differentiable (or locally Lipschitz).
In many of them, information computed by the pre-analysis is required to achieve these exact results;
e.g., some examples (e.g., \mytt{dpmm} and \mytt{air})
require precise information about which distribution arguments can be proved to be always in the proper range of values.
Second, the runtime of our analysis is low.
Typical probabilistic programming applications are not of a very large
size, and conceptual complexity is generally the main issue, thus the
analysis performance presents no scalability concern.

We draw two more observations from the results.
First, for \mytt{spnor-m} and \mytt{air-g}, the density of each program is not locally Lipschitz in one continuous random variable.
These non-local-Lipschitznesses arise as follows:
for the former, the random variable ({\small $\cstr{\small z_2}$} in \cref{fig:eg-model-guide}) is used in the branch condition of an if-else command
that contains observe commands, thereby creating discontinuity;
and for the latter, the random variable ({\small $\cstr{z\_where}$}) is passed into the denominator of a division operator,
thereby causing a division-by-zero error for some value.

Second, for all the other examples, the density is locally Lipschitz in all continuous random variables and parameters,
but is often non-differentiable in many parameters (and continuous random variables too); see, for instance, \mytt{scanvi} and \mytt{cvae}.
Due to this, the requirement \cref{cond:dens-diff} is not satisfied for these examples
even with the empty reparameterisation plan (corresponding to the score estimator);
that is, if we use the differentiability requirements \cref{cond:dens-diff,cond:val-diff} to validate the unbiasedness of gradient estimators,
even the score estimator cannot be validated for these examples.
From manual inspection, we checked that the non-differentiabilities from these examples
all arise by the use of locally Lipschitz but non-differentiable operators (e.g., \mytt{relu} and \mytt{grid\_sample}).
Since many practical models (and guides) use locally Lipschitz but non-differentiable operators,
this observation strongly suggests that a right smoothness requirement for validating gradient estimators
is not differentiability (which has been used as a standard requirement), but rather local Lipschitzness (e.g., \cref{cond:dens-lips,cond:val-lips}).



\begin{table*}[t]
  \centering
  \renewcommand{\arraystretch}{.80}
  \aboverulesep=0.2ex
  \belowrulesep=0.2ex
  \small
  \caption{%
    Results of variable selections.
    ``Ours-Time'' denote the runtime of our variable selector in seconds.
    ``Ours-Sound'' and ``Pyro $\setminus$ Ours'' denote the number of random variables in the example
    that are in $\pi_\ours$, and that are in $\pi_0$ but not in $\pi_\ours$, respectively,
    where $\pi_\ours$ and $\pi_0$ denote the reparameterisation plans given by our variable selector and by Pyro.
    ``Pyro $\setminus$ Ours'' is partitioned into ``Sound'' and ``Unsound'':
    the latter denotes the number of random variables that make \cref{cond:dens-lips} or \cref{cond:val-lips} violated when added to $\pi_\ours$,
    and the former denotes the number of the rest.
    ``\#CR'' and ``\#DR'' denote the total number of continuous and discrete random variables in the example.
    We consider $\{(\alpha,i) \in \Name\}$ as one random variable for each $\alpha \in \Name$.
    \rev{See \cref{sec:impl-eval-more} for the rest of results.}
  }
  \label{tab:reparam-result}
  \vspace{-0.5em}
  \begin{tabular}{lrrrrr|r}
    \toprule
    & \multicolumn{2}{c}{Ours} & \multicolumn{2}{c}{Pyro $\setminus$ Ours}
    & \multicolumn{1}{c|}{}
    \\ \cmidrule(lr){2-3} \cmidrule(lr){4-5}
    Name
    & Time
    & Sound
    & Sound
    & Unsound
    & \#CR
    & \#DR
    \\ \midrule
    \mytt{spnor}   & 0.021 & 1 & 0 & \hltred{1}\! & 2 & 0
    \\
    \mytt{sgdef}   & 0.034 & 6 & 0 & 0 & 6 & 0 
    \\
    \mytt{dmm}     & 0.054 & 1 & 0 & 0 & 1 & 0 
    \\
    \mytt{mhmm}    & 0.083 & 2 & 0 & 0 & 2 & 1 
    \\
    \mytt{scanvi}  & 0.143 & 3 & 0 & 0 & 3 & 1 
    \\
    \mytt{air}     & 0.247 & 1 & 0 & \hltred{1}\! & 2 & 1 
    \\
    \mytt{cvae}    & 0.063 & 1 & 0 & 0 & 1 & 0 
    \\ \bottomrule
  \end{tabular}
  \vspace{-0.5em}
\end{table*}

\vspace{1mm}
\noindent{\it Variable selector.}
To evaluate our variable selector, we consider the SPGE variable-selection problem with local Lipschitzness requirements,
i.e., the problem that uses \cref{cond:dens-lips,cond:val-lips} in \cref{sec:local-lips-req}
instead of \cref{cond:dens-diff,cond:val-diff} in \cref{sec:grad-estm-prog-trans}.
We do not consider the original problem (with differentiability requirements),
since for many examples the differentiability requirements are not satisfied
even by the empty reparameterisation plan (i.e., score estimator) as observed above.
For an initial reparameterisation plan $\pi_0$ for the problem, we use the plan given by Pyro's default variable selector:
it is defined for all continuous random variables and applies standard reparameterisations
(e.g., \cref{eqn:initial-reparameterisation-plan} for a normal distribution).
In this settings, we apply our variable selector to the problem on the 14 Pyro examples.
\cref{tab:reparam-result} displays the results (only for 7 examples; see \cref{sec:impl-eval-more} for the rest) and compares them with $\pi_0$.

The results demonstrate that for all examples, our variable selector finds the optimal reparameterisation plan with a small runtime.
We also observe that for all cases, it terminates in the first iteration
and calls our smoothness analyser only three times, as mentioned in \cref{sec:var-sel-alg}.
Note that the reparameterisation plan given by Pyro is also optimal for all but two examples.
We emphasise, however, that our variable selector not only finds a reparameterisation plan
but also verifies the local Lipschitzness requirements \cref{cond:dens-lips,cond:val-lips},
whereas Pyro's default variable selector does not do so. 
Indeed, for two examples, Pyro's reparameterisation plan is unsound as it violates the local Lipschitzness requirements.
Hence, these results should be interpreted as: for all but two examples,
our variable selector (and smoothness analyser) \rev{successfully verifies that
  the default gradient estimator used by Pyro satisfies important smoothness-related preconditions for unbiasedness,
  namely the local Lipschitzness requirements.}

The two examples for which Pyro becomes unsound are \mytt{spnor} and \mytt{air}.
Recall that they have two continuous random variables (one for each) in which their densities are not locally Lipschitz.
The unsoundness of Pyro on these examples stems precisely from the fact that it reparameterises the two non-locally-Lipschitz random variables
without checking any local Lipschitzness requirements.



\section{Related Work}

The high-level idea of using program transformation for improved posterior inference and model learning in PPLs has been explored previously~\cite{NoriHRS14,ClaretRNGB13,RitchieSG16,GorinovaMH20,SchulmanHWA15}. In particular, \citet{SchulmanHWA15} proposed a method for implementing the SPGE for stochastic computation graphs via graph transformation, and this method was adopted in the implementation of the same estimator in Pyro and also in our work. However, the method lacks a formal analysis on the implemented estimator especially in the context of probabilistic programs; it does not have a version of \cref{thm:unbiased-grad}, which formally identifies requirements for the unbiasedness of the estimator. Also, the method does not check the required smoothness properties of given probabilistic programs. Our work fills in these gaps. \citet{GorinovaMH20} proposed an automatic technique to transform models in a PPL using the same or closely-related transformation of sample commands in the SPGE. The work is, however, concerned with transforming models and taming their posterior distributions, while ours focuses on transforming guides. Also, the work does not check smoothness properties of transformed models that are required for running efficient inference algorithms, such as Hamiltonian Monte Carlo, on those models, while our work checks those properties using our program analysis.

\rev{Program analyses or type systems for PPLs have been developed to detect common errors~\cite{LewCSCM20,LeeYRY20,WangHR21},
  infer important probabilistic properties such as conditional independence~\cite{GorinovaGSV22},
  or automatically plan inference algorithms \cite{webppl:analysis} as in our work.
  In particular, \citeauthor{webppl:analysis} runs a simple program analysis (checking if there are interleaving sample and observe commands)
  to decide if it is worth applying sequential Monte Carlo.}

The smoothness properties computed by our program analysis, such as differentiability and local Lipschitzness,
fall in the scope of \emph{hyperliveness} in the hierarchy of hyperproperties~\cite{hyper:08}. Intuitively,
 hyperliveness properties are those that cannot be refuted based on any finite counterexample (i.e., made 
 of finitely-many finite execution traces), and counterexamples for differentiability and local Lipschitzness
 should indeed require infinitely-many execution traces due to the use of limit or all neighbouring inputs
 in their definitions. Not so many analyses have considered such hyperliveness properties. Among those, the most relevant to
 our work are the continuity analyses of \citet{cgl:popl:10,cgl:jacm:12}. It uses a program 
 abstraction that is rather similar to ours, but their analyses suffer from soundness issues, \rev{partly}
due to the incorrect joining of continuity sets~\cite{cgl:popl:10} and also to an unsound rule for sequential composition~\cite{cgl:jacm:12}
{(see \cref{sec:intro-more} for details)}.
We do not claim that these issues are difficult to fix. Our point is just that developing program analyses for smoothness properties
requires special care. \citeauthor{cgl:jacm:12}'s work focuses on proving smoothness properties of control software, or revealing 
the unexpected continuity of discrete algorithms. On the other hand, our program analysis is designed to assist variational 
inference and model learning for probabilistic programs.
\rev{\citet{BartheCLG20} proposed a refinement type system, which considers a higher-order functional language
  and ensures that every typable first-order program is continuous in all variables. 
  On the other hand, our program analysis considers a first-order imperative language
  and can prove that a program is continuous in some (not necessarily all) variables. 
  Other existing program analyses for smoothness properties include
  \cite{LaurelYSM22} which over-approximates the Clarke generalised Jacobian,
  and \cite{MangalSNO20} which proves probabilistic Lipschitzness.}

\begingroup
\vspace{-0.5mm}
\begin{acks}                            
  \vspace{-0.5mm}
  \rev{%
  We thank Hangyeol Yu for helping us prove \cref{thm:unbiased-val,thm:unbiased-grad},
  and anonymous reviewers for giving constructive comments.
  Lee was supported by Samsung Scholarship.
  Yang was supported by the Engineering Research Center Program through the National Research Foundation of Korea (NRF)
  funded by the Korean Government MSIT (NRF-2018R1A5A1059921) and also by the Institute for Basic Science (IBS-R029-C1).
  Rival was supported by the French ANR VeriAMOS project.}
\end{acks}
\endgroup

\bibliography{paper}

\AtEndDocument{\newpage\normalsize\appendix}
\AtEndDocument{

\rev{%
\section{Deferred Results in \S\ref{sec:intro}}
\label{sec:intro-more}

\subsection{{Unsoundness of Continuity Analyses in \cite{cgl:popl:10,cgl:jacm:12}}}
\label{sec:intro:remarks}


The continuity analysis in \cite{cgl:popl:10} considers joint continuity,
whereas the continuity analysis in \cite{cgl:jacm:12} considers partial continuity.
That is, given a command $c$, an output variable $v$ of $c$, and some input variables $u_1, \ldots, u_m$ to $c$,
the former analyses whether $v$ is continuous in $\{u_1, \ldots, u_m\}$ jointly,
whereas the latter analyses whether $v$ is continuous in $u_i$ separately for every $1 \leq i \leq m$.

\vspace{2mm}
\noindent{\bf Join and Sequence rules.}\
The former analysis contains a rule called Join \cite[Figure 3]{cgl:popl:10},
and the latter analysis contains a rule called Sequence \cite[Figure 1]{cgl:jacm:12}.
The two rules can be rewritten (with some simplifications) as follows, in terms of functions between $\R^n$:
for any $f,g : \R^n \to \R^n$ and $S, S', T, U \subseteq \{1, \ldots, n\}$,
\begin{gather*}
  \infer[\text{(Join)}]{
    \text{For each $j \in T$, $f_j$ is continuous in $\{x_i \mid i \in S \cup S'\}$}
  }{
    \begin{array}{l}
      \text{For each $j \in T$, $f_j$ is continuous in $\{x_i \mid i \in S\}$}
      \\
      \text{For each $j \in T$, $f_j$ is continuous in $\{x_i \mid i \in S'\}$}
    \end{array}
  }
  \\
  \infer[\text{(Sequence)}]{
    \text{For each $k \in U$, $(g \circ f)_k$ is continuous in $x_i$ for each $i \in S$}
  }{
    \begin{array}{c}
      \text{For each $j \in T$, $f_j$ is continuous in $x_i$ for each $i \in S$}
      \\
      \text{For each $k \in U$, $g_k$ is continuous in $y_j$ for each $j \in T$}
    \end{array}
  }
\end{gather*}
where $f$ and $g$ are functions of variables $x_1, \ldots, x_n$ and $y_1, \ldots, y_n$, respectively,
and $h_i \defeq \mathit{proj}_i \circ h$ for $h : \R^n \to \R^n$ and $i \in \{1, \ldots, n\}$ denotes the $i$-th component of $h$.
As mentioned above, the Join rule analyses joint continuity, while the Sequence rule analyses partial continuity.
Further, the Join rule says that joint continuity is preserved under the union of input variables,
while the Sequence rule says that partial continuity is preserved under the composition of functions.

The two rules, however, are unsound with the following counterexamples.
Let $h : \R^2 \to \R^2$ be the function
\[
h(x_1,x_2) \defeq
\begin{cases}
  (x_1 x_2 / (x_1^2 + x_2^2), x_2) & \text{if $(x_1,x_2) \neq (0,0)$}
  \\
  (0, x_2) & \text{otherwise}.
\end{cases}
\]
Note that $h_1$ is continuous in $x_1$ and in $x_2$ separately, but not in $\{x_1, x_2\}$ jointly.
First, for the Join rule, consider the following $f : \R^2 \to \R^2$ and $S, S', T \subseteq \{1,2\}$:
\[
f(x_1,x_2) \defeq h(x_1,x_2), \quad S \defeq \{1\}, \quad S' \defeq \{2\}, \quad T \defeq \{1,2\}.
\]
Then, the premise of the Join rule holds, and so the conclusion of the rule must hold.
But this is {\em not} the case since $f_1 = h_1$ is not continuous in $\{x_1,x_2\}$.
Hence, the Join rule is unsound.
Next, for the Sequence rule, consider the following $f, g : \R^2 \to \R^2$ and $S, T, U \subseteq \{1,2\}$:
\[
f(x_1,x_2) \defeq (x_1, x_1), \quad g(x_1,x_2) \defeq h(x_1,x_2), \quad S \defeq T \defeq U \defeq \{1,2\}.
\]
Then, the premise of the Sequence rule holds, and so the conclusion of the rule must hold.
But this is {\em not} the case since $(g \circ f)_1$ is not continuous in $x_1$
(due to $(g \circ f)_1(x_1,x_2) = \ind{x_1 \neq 0} \cdot \frac{1}{2}$).
Hence, the Sequence rule is unsound.
These counterexamples show that
joint continuity is {\em not} preserved under the union of input variables,
and partial continuity is {\em not} preserved under the composition of functions.

The two aforementioned counterexamples can be easily translated into programs:
the first becomes $c_1 \equiv (z := h_1(x,y))$ and the second becomes $c_2 \equiv (y:=x;\ z:=h_1(x,y))$,
where $x$, $y$, and $z$ are program variables and $h_1$ is the binary operator defined above.
The analysis in \cite{cgl:popl:10} deduces that in $c_1$, $z$ is continuous in $x$ and $y$ (jointly),
and the analysis in \cite{cgl:jacm:12} deduces that in $c_2$, $z$ is continuous in $x$ (separately).
Both deductions, however, are incorrect as seen above, and the two analyses are thus unsound.

\vspace{2mm}
\noindent{\bf Sequence rule (again).}\
Both of the two analyses in \cite{cgl:popl:10,cgl:jacm:12} contain the Sequence rule (discussed above) which has the following rule as an instance:
for all variables $v_0, v_1, v_2$ that are not necessarily distinct, and for all commands $c_1,c_2$,
\begin{align*}
  \infer[]{
    \text{In $(c_1;c_2)$, $v_2$ is continuous in $v_0$}
  }{
    \begin{array}{l}
      \text{In $c_1$, $v_1$ is continuous in $v_{0}$}
      \\
      \text{In $c_2$, $v_2$ is continuous in $v_{1}$}
    \end{array}
  }
\end{align*}

The above instance of the Sequence rule is, however, unsound because it incorrectly handles the dependencies between variables.
For instance, consider commands $c_1 \equiv (\cif\ (x<0)\ \{y:=0\}\ \celse\ \{y:=1\})$ and $c_2 \equiv (z:=x+y)$,
and variables $v_0 \equiv x$, $v_1 \equiv x$, and $v_2 \equiv z$.
Then, in $(c_1;c_2)$, $z$ is {\em not} continuous in $x$ 
(because $z$ after $(c_1;c_2)$ is $x$ if $x<0$, and is $x+1$ if $x \geq 0$).
However, the premise of the above rule holds for this case,
and so the rule incorrectly concludes that $z$ is continuous in $x$ under $(c_1;c_2)$,
by ignoring the dependency of $z$ on $y$ (which is discontinuous in $x$).

\vspace{2mm}
\noindent{\bf Loop rule.}\
The analysis in \cite{cgl:popl:10} contains a rule called Simple-loop (Figure~5 in the paper) to analyse loops,
and the analysis in \cite{cgl:jacm:12} contains a rule called Loop (Figure~1 in the paper) which is essentially the same as the Simple-loop rule.

The two rules, however, incorrectly assume that (joint or partial) continuity without any restriction on the domain of functions
satisfies the admissibility assumption discussed in \cref{remark:admissibility};
and this in turn makes the rules unsound.
To illustrate the unsoundness, consider a command $c \equiv (\cwhile\ (0<x<1)\ \{ c'\})$
with $c' \equiv (y:=y+f(x);\ x:=g(x))$,
where $f$ and $g$ are the continuous operators defined by:
\begin{align*}
  & f(x) \defeq
  \begin{cases}
    x & \text{if $0<x \leq 1/2$} \\
    1-x & \text{if $1/2<x \leq 1$} \\
    0 & \text{otherwise},
  \end{cases}
  & g(x) \defeq
  \begin{cases}
    0 & \text{if $x \leq 0$} \\
    2x & \text{if $0<x \leq 1/2$} \\
    1 & \text{otherwise}.
  \end{cases}
\end{align*}
Then, the premises of the two rules are satisfied,
mainly because $x$ and $y$ after $c'$ are continuous jointly in $x$ and $y$ before $c'$,
and $x$ and $y$ do not change in $c'$ if $x=0$ or $x=1$ (i.e., at the ``boundary'' of the loop condition of $c$).
Hence, the two rules conclude that $x$ and $y$ after $c$ are continuous in $x$ and $y$ before $c$ (jointly or separately).
However, this is an unsound conclusion because in $c$, $y$ is {\em not} continuous in $x$ at $x=0$:
we have $y'=y$ if $x=0$, but $y' \to y+1$ as $x \to 0^+$ (more precisely, $y' = y$ if $x \leq 0$ or $x \geq 1$, and $y' = y+(1-x)$ if $0<x < 1$),
where $x'$ and $y'$ denote the values of $x$ and $y$ after $c$.%
}
}      
\AtEndDocument{

\section{Deferred Results in \S\ref{sec:overview}}
\label{sec:overview-more}

\subsection{Table Summarising \cref{sec:overview}}

\begin{table*}[!ht]
  \renewcommand{\arraystretch}{.97}
  \aboverulesep=0.4ex
  \belowrulesep=0.4ex
  \small
  \vspace{-0.5em}
  \caption{\rev{Gradient estimators for variational inference, and requirements for each estimator.
      ``Req.'' denotes ``Requirement'' and ``diff.'' denotes ``differentiable''.
      Recall that $f_\theta(z) \defeq \log (p_{c_m}(z) / p_{c_g,\theta}(z))$.}}
  \label{tab:grad-est}
  \vspace{-1em}
  \begin{center}
    \rev{%
    \begin{tabular}{ll|lll}
      \toprule
      & & SCE & PGE & SPGE
      \\ \midrule
      Setup
      & $q_\theta(z)$
      & $p_{c_g,\theta}(z)$
      & $p_{c_g'}(z)$
      & $p_{c_g'',\theta}(z)$
      \\
      & $v_{\theta}(z)$
      & $z$
      & $v_{c'_g,\theta}(z)$
      & $v_{c''_g,\theta}(z)$
      \\
      & $g_\theta(z)$
      & $f_\theta(v_\theta(z)) \cdot \nabla_\theta \log q_\theta(z)$
      & $\nabla_\theta \big( f_\theta(v_\theta(z)) \big)$
      & $\nabla_\theta \big( f_\theta(v_\theta(z)) \big) + f_\theta(v_\theta(z)) \cdot \nabla_\theta \log q_\theta(z) $
      \\ \midrule
      Req.
      & $q_{\theta}(z)$
      & diff.~ in $\theta$
      & -- 
      & diff.~ in $\theta$
      \\
      & $v_{\theta}(z)$
      & --
      & diff.~ in  $\theta$
      & diff.~ in  $\theta$
      \\
      & $p_{c_m}(z)$
      & --
      & diff.~ in $\theta$ and $z$
      & diff.~ in $\theta$ and {``selected'' $z_i$'s}
      \\
      & $p_{c_g,\theta}(z)$
      & --
      & diff.~ in $\theta$ and $z$
      & diff.~ in $\theta$ and {``selected'' $z_i$'s}
      \\ \bottomrule
    \end{tabular}}
  \end{center}
  \vspace{-0.5em}
\end{table*}

\noindent
\cref{tab:grad-est} compares key aspects of the three gradient estimators (SCE, PGE, and SPGE) explained in \cref{sec:overview}.
}   
\AtEndDocument{

\section{Deferred Results in \S\ref{sec:prog-trans}}


\subsection{Proof of \cref{thm:unbiased-val}}
\label{sec:proof:unbiased-val}

We introduce several definitions, state lemmas, and prove \cref{thm:unbiased-val} using the lemmas.
We prove the lemmas in \cref{sec:proof:lemmas:unbiased-val1,sec:proof:lemmas:unbiased-val2}.

Recall the partition $\Var = \PVar \uplus \Name \uplus \AVar$ of $\Var$.
We use the following letters to denote the values of each part:
$\sigma_p \in \State[\PVar]$, $\sigma_n \in \State[\Name]$, and $\sigma_a \in \State[\AVar]$.
Based on the partition, we define the next functions:
\begin{align*}
  \getpr  (c) &: \State[\PVar] \times \State[\Name] \times \State[\AVar] \to [0,\infty),
  \\
  \getpr(c)(\sigma_p, \sigma_n, \sigma_a)
  &\defeq
  \begin{cases}
    \begin{array}{@{}l@{}}
      \db{c}(\sigma_p \oplus \sigma_n \oplus \sigma_a)(\like)
      \\ \;\; \cdot
      \prod_{\mu \in \Name} \db{c}(\sigma_p \oplus \sigma_n \oplus \sigma_a)(\pr_\mu)
    \end{array}
    & \text{if $\noerr(c, \sigma_p \oplus \sigma_n \oplus \sigma_a)$}
    \\
    0 & \text{otherwise},
  \end{cases}
  \\[1.5ex]
  \getval (c) &: \State[\PVar] \times \State[\Name] \times \State[\AVar] \to \State[\Name],
  \\
  \getval(c)(\sigma_p, \sigma_n, \sigma_a)
  &\defeq \begin{cases}
    \lambda \mu \in \Name.\, \db{c}(\sigma_p \oplus \sigma_n \oplus \sigma_a)(\val_\mu)
    & \text{if $\noerr(c, \sigma_p \oplus \sigma_n \oplus \sigma_a)$}
    \\
    \lambda \mu \in \Name.\, 0
    & \text{otherwise},
  \end{cases}
\end{align*}
where $\noerr(c, \sigma)$ is a predicate for a command $c$ and $\sigma \in \State$, defined by
\begin{align*}
  \noerr(c,\sigma)
  & \iff \db{c}\sigma \in \State
  \land \big( \forall \mu \in \Name.\, \db{c}\sigma(\sampled_\mu) - \sigma(\sampled_\mu) \leq 1 \big).
\end{align*}
The predicate $\noerr(c,\sigma)$ says that $c$ terminates for $\sigma$ without a double-sampling error.
The functions $\getpr$ and $\getval$ generalise the density function $p$ and the value function $v$, respectively;
in particular, they do not assume a particular initial state $\sigma_0$ used in \cref{eq:density-of-c}.
We consider the generalisation of $p$ and $v$ so as to enable inductive proofs. 

Although generalising $p$ and $v$, the functions $\getpr$ and $\getval$ are not sufficient to enable inductive proofs
since their inputs and outputs contain some unnecessary parts, which stops induction from working well (especially in the sequential composition case):
namely, the part of $\State[\Name]$ that is not read during execution, and the part of $\State[\AVar]$ that is not updated during execution.
To exclude those unnecessary parts, we first define the set of substates of $\State[\Name]$ as follows: 
\begin{align*}
  \xi_n \in \Statesub[\Name] &\defeq \bigcup_{K \subseteq \Name} \State[K].
\end{align*}
Based on these substates, we define the next functions:
\begin{align*}
  \getprsub  (c) &: \State[\PVar] \times \Statesub[\Name] \to [0,\infty),
  \\
  \getprsub(c)(\sigma_p, \xi_n)
  &\defeq \begin{cases}
    \begin{array}{@{}l@{}}
      \db{c}(\sigma_p \oplus \xi_n \oplus \sigma_r)(\like)
      \\
      \;\; \cdot
      \prod_{\mu \in \dom(\xi_n)} \db{c}(\sigma_p \oplus \xi_n \oplus \sigma_r)(\pr_\mu)
    \end{array}
    & \text{if $\exists \sigma_r.\, \used(c, \sigma_p \oplus \xi_n \oplus \sigma_r, \xi_n)$}
    \\
    0 & \text{otherwise},
  \end{cases}
  \\[1.5ex]
  \getvalsub (c) &: \State[\PVar] \times \Statesub[\Name] \to \Statesub[\Name],
  \\
  \getvalsub(c)(\sigma_p, \xi_n)
  &\defeq \begin{cases}
    \lambda \mu \in \dom(\xi_n).\, \db{c}(\sigma_p \oplus \xi_n \oplus \sigma_r)(\val_\mu)
    & \text{if $\exists \sigma_r.\, \used(c, \sigma_p \oplus \xi_n \oplus \sigma_r, \xi_n)$}
    \\
    \lambda \mu \in \dom(\xi_n).\, 0
    & \text{otherwise},
  \end{cases}
  \\[1.5ex]
  \getpvarsub(c) &: \State[\PVar] \times \Statesub[\Name] \to \State[\PVar],
  \\
  \getpvarsub(c)(\sigma_p, \xi_n)
  &\defeq \begin{cases}
    \lambda x \in \PVar.\, \db{c}(\sigma_p \oplus \xi_n \oplus \sigma_r)(x)
    & \text{if $\exists \sigma_r.\, \used(c, \sigma_p \oplus \xi_n \oplus \sigma_r, \xi_n)$}
    \\
    \lambda x \in \PVar.\, 0
    & \text{otherwise},
  \end{cases}
\end{align*}
where $\used(c, \sigma, \xi_n)$ is a predicate for a command $c$, $\sigma \in \State$, and $\xi_n \in \Statesub[\Name]$, defined by
\begin{align*}
  \used(c, \sigma, \xi_n) \iff
  & \noerr(c, \sigma) \land \big( \sigma(\like)=1 \big)
  \land \big( \xi_n = \sigma|_{\dom(\xi_n)} \big)
  \\
  & {} \land \big( \dom(\xi_n) = \{ \mu \in \Name \mid  \db{c}\sigma(\sampled_\mu) - \sigma(\sampled_\mu) = 1\} \big).
\end{align*}
The predicate $\used(c,\sigma,\xi_n)$ says that
$c$ terminates for $\sigma$ without a double-sampling error,
$\like$ is initialised to $1$ in $\sigma$,
and $\xi_n$ is the $\Name$ part of $\sigma$ that is sampled during the execution of $c$ from $\sigma$.
By using $\used(-,-,-)$, the three functions do not take the unnecessary part of a state as an input,
and do not return the unnecessary part of a state in the output.
The three functions are well-defined. 
\begin{lemma}
  \label{lem:subfns-well-defined}
  $\getprsub$, $\getvalsub$, and $\getpvarsub$ are well-defined,
  i.e., they do not depend on the choice of $\sigma_r$.
\end{lemma}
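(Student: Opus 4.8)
The plan is to prove Lemma~\ref{lem:subfns-well-defined} by showing that, whenever two choices $\sigma_r, \sigma_r' \in \State[\AVar]$ both satisfy $\used(c, \sigma_p \oplus \xi_n \oplus \sigma_r, \xi_n)$ and $\used(c, \sigma_p \oplus \xi_n \oplus \sigma_r', \xi_n)$, the executions $\db{c}(\sigma_p \oplus \xi_n \oplus \sigma_r)$ and $\db{c}(\sigma_p \oplus \xi_n \oplus \sigma_r')$ agree on all the variables that the three functions read off the final state: namely $\like$, $\pr_\mu$ and $\val_\mu$ for $\mu \in \dom(\xi_n)$, and every $x \in \PVar$. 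The key observation is that $\used$ pins down $\sigma_r$ on exactly the ``relevant'' auxiliary variables and leaves it free only on auxiliary variables that the command does not meaningfully depend on for these outputs. First I would make this precise: the definition of $\used$ forces $\sigma(\like) = 1$ for both inputs, and forces $\dom(\xi_n)$ to be precisely the set of names sampled once during execution; for such a name $\mu$, the sample command overwrites $\pr_\mu$, $\val_\mu$, $\sampled_\mu$ outright (their final values are functions of $d$, $\mu$, and the $\Name$/$\PVar$ parts of the state only, not of their initial auxiliary values), so the initial $\sigma_r$-value on these is irrelevant. For a name $\mu \notin \dom(\xi_n)$, $\pr_\mu$ is never touched, but it is also not read by $\getprsub$ or $\getvalsub$ since those only range over $\dom(\xi_n)$.

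The cleanest way to carry this out is by a structural induction on $c$, proving a slightly stronger statement: for all $\sigma_p, \xi_n$ and all $\sigma_r, \sigma_r'$ with $\used(c, \sigma_p \oplus \xi_n \oplus \sigma_r, \xi_n)$ and $\used(c, \sigma_p \oplus \xi_n \oplus \sigma_r', \xi_n)$, the two output states agree on $\PVar \uplus \{\like\} \uplus \{\pr_\mu, \val_\mu \mid \mu \in \dom(\xi_n)\}$, and moreover they agree on $\sampled_\mu$ for all $\mu \in \Name$. I would then walk through the cases: $\cskip$ and $x:=e$ are immediate since they do not touch auxiliary variables except possibly $x \in \PVar$ whose new value depends only on the $\PVar\uplus\Name$ part. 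For $c;c'$, the subtlety is that $\dom(\xi_n)$ must be split into the names sampled by $c$ and those sampled by $c'$; here I would use that $\used$ records the sampled set via the $\sampled$-difference, so the intermediate state's sampled set is determined, $\like$ is still effectively controlled (it is multiplied, and the relevant density factors are determined by the $\Name$/$\PVar$ part which agrees by the inductive hypothesis for $c$), and then apply the inductive hypothesis to $c'$ on the appropriate sub-substate. The $\cif$ case reduces to the branches after noting the guard's truth value depends only on $\PVar\uplus\Name$; the $\cwhile$ case uses the fixpoint characterisation and a straightforward induction on the number of unrollings, again leaning on the fact that no auxiliary variable outside the controlled set affects control flow or the tracked outputs. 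The $\csample$ and $\cobs$ cases are where one checks directly that $\pr_\mu$, $\val_\mu$, $\like$, $\sampled_\mu$ get values independent of their initial auxiliary values.

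The main obstacle I expect is the sequential-composition case, and specifically bookkeeping the decomposition $\dom(\xi_n) = \dom(\xi_n^{(1)}) \uplus \dom(\xi_n^{(2)})$ induced by which names are sampled in $c$ versus $c'$, together with the fact that $\used$ for the intermediate state requires $\like$ to be reset to $1$ whereas in a genuine run of $c;c'$ the value of $\like$ after $c$ is whatever $c$ computed, not $1$. This mismatch means I cannot literally apply the inductive hypothesis for $c'$ to the intermediate state; instead I would factor the likelihood multiplicatively, observe that $\db{c'}$ multiplies the incoming $\like$ by a factor that depends only on the non-$\AVar$ part of the intermediate state, and conclude by combining: the $c$-part of the product is handled by the IH for $c$ (the final $\PVar\uplus\Name$-relevant data and $\like$-factor from $c$ agree across $\sigma_r, \sigma_r'$), and the $c'$-part by the IH for $c'$ applied to a suitably reset intermediate state. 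Making this factorisation rigorous—essentially a small ``$\like$ is a free multiplicative accumulator'' lemma—is the one place that needs care; everything else is routine case analysis once the stronger induction hypothesis is chosen correctly.
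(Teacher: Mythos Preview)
Your overall strategy is the paper's strategy: the paper also proves this by a structural induction on $c$, but it factors that induction out as a general lemma (Lemma~\ref{lem:used-properties}) and then derives Lemma~\ref{lem:subfns-well-defined} from it in a couple of lines. You have correctly located the one genuinely delicate spot (sequential composition and the $\like$ reset), and the multiplicative ``$\like$ is a free accumulator'' lemma you propose is exactly the paper's Lemma~\ref{lem:like-mult}. However, your induction hypothesis as stated will not go through, for two related reasons.

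First, the clause ``they agree on $\sampled_\mu$ for all $\mu$'' is simply false: the two states $\sigma_r,\sigma_r'$ may assign different initial values to $\sampled_\mu$, and the sample command only \emph{increments} $\sampled_\mu$, so the final values differ whenever the initial ones do (already in the base case $c \equiv x:=\csample(\ldots)$). What you need to track is equality of the \emph{increments} $\db{c}\sigma(\sampled_\mu)-\sigma(\sampled_\mu)$, not of the final values; this is what the paper's Lemma~\ref{lem:used-properties}(\ref{lem:used-properties-4}) does.

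Second, and more structurally, your symmetric hypothesis ``$\used$ holds for both $\sigma_r$ and $\sigma_r'$ with the \emph{same} $\xi_n$'' creates a circularity in the $c;c'$ case. To invoke the IH on $c$ you must first fix a single $\xi_n^{(1)}$ and know that $\used(c,\sigma,\xi_n^{(1)})$ and $\used(c,\sigma',\xi_n^{(1)})$ \emph{both} hold; but each run determines its own $\xi_n^{(1)}$ via the $\sampled$-increments, and showing these coincide is precisely what you are trying to extract from the IH. The paper breaks this cycle by making the induction hypothesis \emph{asymmetric}: it introduces the predicate $\usedm$ (which drops the $\like=1$ clause), assumes $\usedm(c,\sigma_0,\xi_n)$ for \emph{one} state only, and proves that for \emph{any} $\sigma_1$ agreeing with $\sigma_0$ on $\PVar \cup \dom(\xi_n)$ the outputs agree and the $\sampled$-increments coincide. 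With that formulation the split $\xi_n = \xi_n' \oplus \xi_n''$ is determined by the single distinguished run, and the other run is dragged along. Your proof can be repaired by adopting this asymmetric IH (and, incidentally, by noting that $\sigma_r$ lives in $\State[\Var\setminus(\PVar\cup\dom(\xi_n))]$, not $\State[\AVar]$, so the two states may also differ on names outside $\dom(\xi_n)$).
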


We now state two main lemmas for \cref{thm:unbiased-val}.
The first lemma describes how $\getpr$ and $\getval$ are connected with $\getprsub$ and $\getvalsub$. 
The second lemma says that a particular integral involving $\getprsub$, $\getvalsub$, and $\getpvarsub$
is the same for $c$ and $\ctr{c}{\pi}$ if a reparameterisation plan $\pi$ is valid.
\begin{lemma}
  \label{lem:integral-sampled-terms-only}
  Let $c$ be a command, and  $f_i : \R \to \R$ for $i\in\{1,2,3\}$ be measurable functions such that
  $f_1(r) \geq 0$ for all $r \in \R$. 
  Define $f_* : \State[\Name] \to \State[\AVar]$ by
  \begin{align*}
    f_*(\sigma_n)(a) &\defeq
    \begin{cases}
      1 & \text{if $a \equiv \like$}
      \\
      f_1(\sigma_n(\mu)) & \text{if $a \equiv \pr_\mu$ for $\mu \in \Name$}
      \\
      f_2(\sigma_n(\mu)) & \text{if $a \equiv \val_\mu$ for $\mu \in \Name$}
      \\
      f_3(\sigma_n(\mu)) & \text{if $a \equiv \sampled_\mu$ for $\mu \in \Name$}.
    \end{cases}
  \end{align*}
  Then, for all $\sigma_p \in \State[\PVar]$ and all measurable $h : \State[\Name] \to \R$,
  \begin{align*}
    &
    \int d\sigma_n \Big(
    \getpr(c)(\sigma_p, \sigma_n, f_*(\sigma_n))
    \cdot h\Big(
    \getval(c)(\sigma_p, \sigma_n, f_*(\sigma_n))
    \Big)\Big)
    \\
    &=
    \int d\xi_n \Big( \getprsub(c)(\sigma_p, \xi_n) \cdot
    g\Big(
    \getvalsub(c)(\sigma_p, \xi_n)
    \Big) \Big)
  \end{align*}
  where {the integral on the LHS is defined if and only if the one on the RHS is defined}, 
  and the function $g : \Statesub[\Name] \to \R$ is defined by 
  \begin{align}
    \label{eq:lem:integral-sampled-terms-only-g}
    g(\xi''_n ) 
    & =
    \int d\xi'_n \Big( \ind{\dom(\xi''_n) \uplus \dom(\xi'_n) = \Name}
    \cdot \Big(\prod_{\mu \in \dom(\xi'_n)} f_1(\xi'_n(\mu))\Big)
    \cdot h\Big(\xi''_n \oplus \lambda \mu \in \dom(\xi'_n).\, f_2(\xi'_n(\mu)) \Big)\Big). 
  \end{align}
  
\end{lemma}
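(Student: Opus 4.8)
\textbf{Proof plan for \cref{lem:integral-sampled-terms-only}.}
The plan is to reduce the integral on the left-hand side, which ranges over the full space $\State[\Name] = \R^{|\Name|}$, to an integral where we separate the coordinates that are actually sampled during execution from those that are not. The key observation is that the definitions of $\getpr(c)$ and $\getval(c)$ evaluate $\db{c}$ at a state whose $\AVar$ part is $f_*(\sigma_n)$; the density factor $\getpr(c)(\sigma_p,\sigma_n,f_*(\sigma_n))$ multiplies together $\like$ (which $f_*$ initialises to $1$) and $\pr_\mu$ over all $\mu \in \Name$, but for a $\mu$ that is \emph{not} sampled by $c$ the value $\db{c}(\ldots)(\pr_\mu)$ is just the passed-in initial value $f_1(\sigma_n(\mu))$ by the semantics of commands. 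Similarly $\db{c}(\ldots)(\val_\mu) = f_2(\sigma_n(\mu))$ for unsampled $\mu$. So I would first argue that the execution only reads the coordinates in some sampled set $\dom(\xi_n) \subseteq \Name$, and the output on the $\pr$, $\val$ coordinates factors cleanly into a ``sampled part'' (governed by $\getprsub(c)$, $\getvalsub(c)$) and an ``unsampled part'' (the explicit products of $f_1$ and the $f_2$-images).

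Second, I would apply the Fubini/Tonelli theorem to split the Lebesgue integral over $\State[\Name]$ as an iterated integral: fix a partition $\Name = \dom(\xi''_n) \uplus \dom(\xi'_n)$, integrate first over the unsampled coordinates $\xi'_n$ and then over the sampled coordinates $\xi''_n$ (equivalently, sum over all partitions using the indicator $\ind{\dom(\xi''_n) \uplus \dom(\xi'_n) = \Name}$, which is exactly what appears in the definition of $g$ in \cref{eq:lem:integral-sampled-terms-only-g}). The non-negativity of $f_1$ is what makes this splitting legitimate even before integrability is known: it lets us invoke Tonelli to see that the LHS is defined (i.e. the positive and negative parts of the integrand have at most one infinite integral) precisely when the RHS is. On the subspace where $\xi''_n$ are the sampled coordinates, $\used(c, \sigma_p \oplus \xi''_n \oplus \sigma_r, \xi''_n)$ holds for a suitable $\sigma_r$, so $\getpr(c)(\sigma_p, \sigma_n, f_*(\sigma_n))$ equals $\getprsub(c)(\sigma_p, \xi''_n)$ times the product $\prod_{\mu \in \dom(\xi'_n)} f_1(\xi'_n(\mu))$ over unsampled coordinates, and the value output restricted to sampled coordinates equals $\getvalsub(c)(\sigma_p, \xi''_n)$ while on unsampled coordinates it is $\lambda\mu.\,f_2(\xi'_n(\mu))$. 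Substituting these identities and collecting the $\xi'_n$-integral into $g$ gives the claimed equality.

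I expect the main obstacle to be the careful bookkeeping of \emph{which} set of names is sampled, since this set depends on the execution and hence on $\sigma_p$ and (in principle) on the sampled values themselves. Here the hypothesis that $c$ arises in a context where $\used$ is well-behaved — together with \cref{lem:subfns-well-defined}, which guarantees $\getprsub$, $\getvalsub$, $\getpvarsub$ do not depend on the choice of the ``residual'' state $\sigma_r$ — is what pins things down: for a fixed $\sigma_p$ and fixed values on the sampled coordinates, the sampled set $\dom(\xi_n)$ is determined, so the partition in the Fubini step is measurable and the substitution is unambiguous. A secondary subtlety is the definedness clause: I would handle it by noting that $\getprsub \ge 0$ and $\prod f_1 \ge 0$, so by Tonelli the iterated integral of the non-negative part always makes sense, and the composite integral on either side fails to be defined exactly when $h$ contributes a genuinely non-integrable oscillation; matching this up on the two sides is routine once the factorisation is in place. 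Everything else — measurability of $g$, the interchange of product and integral over the finite set $\Name$ — is standard.
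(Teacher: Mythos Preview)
Your proposal is correct and follows essentially the same approach as the paper's proof: decompose the integral over $\State[\Name]$ by inserting a sum over all candidate sampled sets $K \subseteq \Name$ (the paper makes this explicit via the identity $\sum_{K} \ind{\used(c,\ldots,\sigma_n|_K)} = 1$ whenever $\noerr$ holds), apply Fubini to split the sampled coordinates $\xi_n$ from the unsampled ones $\xi'_n$, and then use the factorisation of $\getpr$ and $\getval$ into a $\getprsub/\getvalsub$ part and an explicit $f_1,f_2$ part on the unsampled coordinates. The only refinement to note is that the paper invokes \cref{lem:used-properties} (for the fact that unsampled auxiliary variables retain their initial values) and \cref{lem:used-sigma-indep} (for the fact that the sampled set is determined by $\sigma_p$ and $\xi_n$ alone) rather than \cref{lem:subfns-well-defined}, which is a downstream consequence of these.
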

\begin{lemma}
  \label{lem:integral-same-under-reparam}
  Let $c$ be a command and $g : \State[\PVar] \times \Statesub[\Name] \to \R$ be a measurable function.
  Then, for all $\sigma_p \in \State[\PVar]$,
  \begin{align*}
    &
    \int d\xi_n \Big( \getprsub(c)(\sigma_p, \xi_n) \cdot
    g\Big(\getpvarsub(c)(\sigma_p, \xi_n), \getvalsub(c)(\sigma_p, \xi_n) \Big)\Big)
    \\
    &=
    \int d\xi_n \Big( \getprsub(\ctr{c}{\pi})(\sigma_p, \xi_n) \cdot
    g\Big(\getpvarsub(\ctr{c}{\pi})(\sigma_p, \xi_n), \getvalsub(\ctr{c}{\pi})(\sigma_p, \xi_n) \Big)\Big)
  \end{align*}
  {where the integral on the LHS is defined if and only if the one on the RHS is defined.}
\end{lemma}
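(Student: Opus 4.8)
The plan is to prove the identity by structural induction on the command $c$, exploiting that the transformation $\ctr{\,\cdot\,}{\pi}$ is defined compositionally and that $\ctr{c}{\pi}=c$ whenever $c$ contains no sample commands. The base cases $c \equiv \cskip$, $c \equiv (x:=e)$, and $c \equiv \cobs(d,r)$ are then immediate, since $\ctr{c}{\pi}=c$ and the two integrands coincide pointwise. The only nontrivial base case is $c \equiv (x:=\csample(n,d,\lambda y.e))$: if $\pi(n,d,\lambda y.e)$ is undefined then again $\ctr{c}{\pi}=c$, so assume $\pi(n,d,\lambda y.e)=(d',\lambda y'.e')$.

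For this sample case I would first unfold the definitions of $\getprsub$, $\getvalsub$, and $\getpvarsub$ using the density semantics of $\csample$, observing that the integrand on the left is supported on the substates $\xi_n$ whose domain is exactly the singleton $\{\mu\}$ consisting of the name the command samples; on that set the integral collapses to a one-dimensional integral over the sampled value $r=\xi_n(\mu)$, with integrand $\db{d}\sigma(r)\cdot h(\db{e[r/y]}\sigma)$, where $\sigma$ is the relevant state and $h:\R\to\R$ is the measurable function built from $g$ by placing $\db{e[r/y]}\sigma$ both in the program variable $x$ and in the name $\mu$. The transformed command gives the same expression with $(d,\lambda y.e)$ replaced by $(d',\lambda y'.e')$. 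Equality of the two integrals is then exactly the content of the validity condition \cref{eqn:valid-plan}: that equation says that the push-forward of $\db{d}\sigma(r)\,dr$ along $r\mapsto\db{e[r/y]}\sigma$ equals the push-forward of $\db{d'}\sigma(r)\,dr$ along $r\mapsto\db{e'[r/y']}\sigma$, and the standard fact that equal push-forward measures integrate every measurable function identically (proved first for indicators, extended to simple functions, then to the general case) finishes the case; nonnegativity of the densities yields the ``defined iff defined'' clause. One minor point to dispatch is that when $n\equiv\cname(\alpha,e_0)$ with $e_0$ not a real constant, the name $\mu$ is determined by reading the state, so one either argues separately for each value of $\db{n}\sigma$ or notes that the name expression only reads variables whose values are held fixed throughout the integral.

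For the inductive step, the core is a purely structural decomposition of $\getprsub(c)$, $\getvalsub(c)$, and $\getpvarsub(c)$ in terms of the corresponding functions for the immediate subcommands of $c$: for $c \equiv (c_1;c_2)$ the set of sampled names splits as the disjoint union of those sampled by $c_1$ and by $c_2$, the density factors as $\getprsub(c_1)(\sigma_p,\xi_n^{(1)})\cdot\getprsub(c_2)(\getpvarsub(c_1)(\sigma_p,\xi_n^{(1)}),\xi_n^{(2)})$ over that split, and the value and program states combine accordingly; for $c \equiv \cif\,b\,\{c_1\}\,\celse\,\{c_2\}$ one selects the branch picked out by $\db{b}\sigma$; and for $c \equiv \cwhile\,b\,\{c_1\}$ one uses the least-fixed-point characterisation of $\db{c}$ as the supremum of its finite unrollings, which are assembled from sequencing and $\cif$. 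Since none of these decompositions touches a sample command, each is preserved verbatim by $\ctr{\,\cdot\,}{\pi}$, using $\ctr{c_1;c_2}{\pi}=\ctr{c_1}{\pi};\ctr{c_2}{\pi}$ and the analogous clauses for $\cif$ and $\cwhile$. Plugging the decomposition into the integral turns the sequencing case into an iterated (Fubini-style) integral, to whose inner and outer parts the induction hypothesis applies with appropriately modified integrands; the $\cif$ case is handled branchwise; and the $\cwhile$ case follows by passing the supremum through the integral by monotone convergence (legitimate because $\getprsub$ is nonnegative) and invoking the result for the finite unrollings.

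The step I expect to be the main obstacle is making the sequencing decomposition and the subsequent Fubini rearrangement rigorous. Because branches and loop conditions can depend on sampled values, the partition of the sampled-name domain into ``sampled by $c_1$'' and ``sampled by $c_2$'' is genuinely state-dependent, so the integral over $\Statesub[\Name]=\bigcup_{K\subseteq\Name}\State[K]$ must be reorganised as a sum over pairs of disjoint name sets carrying indicator functions that record which subcommand samples which names, and one has to check that measurability and the ``defined iff defined'' property survive this manipulation. The bookkeeping imposed by the no-double-sampling condition---already needed for $\getprsub$, $\getvalsub$, and $\getpvarsub$ to be well-defined via \cref{lem:subfns-well-defined}---and the fixed-point argument for loops supply the remaining technical overhead, but both are routine once the sequencing case is secured.
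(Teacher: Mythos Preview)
Your proposal is correct and follows essentially the same route as the paper's proof: structural induction on $c$, with the sample case reduced to the validity of $\pi$ (extended from indicators to arbitrary measurable integrands), the sequence case handled by the Fubini-style decomposition you describe (which the paper isolates as a standalone lemma, \cref{lem:integral-seq-decompose}), and the while case by Scott-style induction on pairs of state transformers together with monotone convergence. Two small points the paper makes explicit that you gloss over: the $\cif$ and $\cwhile$ cases work cleanly because $\fv(b)\subseteq\PVar$, so the branch is determined by $\sigma_p$ alone and does not vary over the integration domain; and for monotone convergence in the while case one first reduces to nonnegative $g$ (splitting into positive and negative parts), since nonnegativity of $\getprsub$ alone does not make the integrand monotone in the iterates.
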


We now prove \cref{thm:unbiased-val} using these two lemmas.
\begin{proof}[Proof of \cref{thm:unbiased-val}]
  Let $\pi$ be a valid reparameterisation plan, $c$ be a command,
  $\sigma_\theta \in \State[\theta]$, and $h : \State[\Name] \to \R$ be a measurable function.
  Suppose that the integral on the LHS of \cref{thm:unbiased-val} is defined.
  Recall that for a given $\sigma_n \in \State[\Name]$, the definitions of $p$ and $v$ (in \cref{sec:setup,sec:prog-trans})
  use the initial state $\sigma \defeq \sigma_\theta \oplus \sigma_n \oplus \sigma_0 \in \State$,
  where $\sigma_0 \in \State[(\PVar \setminus \theta) \cup \AVar]$ depends on $\sigma_n$ and has the following definition:
  \begin{align*}
    \sigma_0(v) & \defeq
    \begin{cases}
      0 & \text{if $v \in \PVar \setminus \theta$}
      \\
      1 & \text{if $v \equiv \like$}
      \\
      \cN(\sigma_n(\mu); 0, 1) & \text{if $v \equiv \pr_\mu$ for $\mu \in \Name$}
      \\
      \sigma_n(\mu) &  \text{if $v \equiv \val_\mu$ for $\mu \in \Name$}
      \\
      0 & \text{if $v \equiv \sampled_\mu$ for $\mu \in \Name$}.
    \end{cases}
  \end{align*}
  The initial state $\sigma$ can be re-expressed as
  \begin{align}
    \label{eq:thm:unbiased-val-sigma}
    \sigma = \sigma_p \oplus \sigma_n \oplus f_*(\sigma_n)
  \end{align}
  using the following $\sigma_p \in \State[\PVar]$ and $f_* : \State[\Name] \to \State[\AVar]$:
  \begin{align*}
    \sigma_p(x) & \defeq
    \begin{cases}
      \sigma_\theta(x) & \text{if $x \in \theta$}
      \\
      0 & \text{if $x \in \PVar \setminus \theta$},
    \end{cases}
    &
    f_*(\sigma_n)(a) &\defeq 
    \begin{cases}
      1 & \text{if $a \equiv \like$}
      \\
      f_1(\sigma_n(\mu)) & \text{if $a \equiv \pr_\mu$ for $\mu \in \Name$}
      \\
      f_2(\sigma_n(\mu)) & \text{if $a \equiv \val_\mu$ for $\mu \in \Name$}
      \\
      f_3(\sigma_n(\mu)) & \text{if $a \equiv \sampled_\mu$ for $\mu \in \Name$},
    \end{cases}
  \end{align*}
  where $f_1(r) \defeq \cN(r; 0,1)$, $f_2(r) \defeq r$, and $f_3(r) \defeq 0$.
  Using this, we get the desired equation:
  \begin{align*}
    &\int d\sigma_n \Big( \pfun{c,\sigma_\theta}{}(\sigma_n) \cdot h\Big(\vfun{c, \sigma_\theta}(\sigma_n)\Big)\Big)
    \\
    &= \int d\sigma_n \Big( \getpr(c)(\sigma_p, \sigma_n, f_*(\sigma_n)) \cdot h\Big(\getval(c)(\sigma_p, \sigma_n, f_*(\sigma_n))\Big) \Big)
    \\
    &=
    \int d\xi_n \Big( \getprsub(c)(\sigma_p, \xi_n) \cdot
    g\Big( \getvalsub(c)(\sigma_p, \xi_n) \Big) \Big)
    \\
    &=
    \int d\xi_n \Big( \getprsub(\ctr{c}{\pi})(\sigma_p, \xi_n) \cdot
    g\Big( \getvalsub(\ctr{c}{\pi})(\sigma_p, \xi_n) \Big) \Big)
    \\
    &= \int d\sigma_n \Big( \getpr(\ctr{c}{\pi})(\sigma_p, \sigma_n, f_*(\sigma_n))
    \cdot h\Big(\getval(\ctr{c}{\pi})(\sigma_p, \sigma_n, f_*(\sigma_n))\Big) \Big)
    \\
    &= \int d\sigma_n \Big(  \pfun{\ctr{c}{\pi},\sigma_\theta}{}(\sigma_n) \cdot h\Big(\vfun{\ctr{c}{\pi}, \sigma_\theta}(\sigma_n)\Big) \Big)
  \end{align*}
  where  $g : \Statesub[\Name] \to \R$ is defined as \cref{eq:lem:integral-sampled-terms-only-g}.
  The first equality holds by \cref{eq:thm:unbiased-val-sigma} and the definition of $p_{c,\sigma_\theta}$, $v_{c,\sigma_\theta}$, $\getpr$, and $\getval$.
  The second equality holds by \cref{lem:integral-sampled-terms-only} (applied to $c$).
  The third equality follows from \cref{lem:integral-same-under-reparam}.
  The fourth equality holds by \cref{lem:integral-sampled-terms-only} (applied to $\ctr{c}{\pi}$).
  The fifth equality holds by \cref{eq:thm:unbiased-val-sigma} and the definitions of $p_{\ctr{c}{\pi},\sigma_\theta}$, $v_{\ctr{c}{\pi},\sigma_\theta}$, $\getpr$, and $\getval$.  
  {Note that the same equational reasoning with the reverse direction can be used to prove the claimed equation of the theorem when the integral on the RHS of the equation is defined.}
\end{proof}

\subsection{Proofs of \cref{lem:subfns-well-defined,lem:integral-sampled-terms-only}}
\label{sec:proof:lemmas:unbiased-val1}

\begin{proof}[Proof of \cref{lem:subfns-well-defined}]
  Let $c$ be a command, $\sigma_p \in \State[\PVar]$, and $\xi_n \in \Statesub[\Name]$.
  Consider $\sigma_r \in \State[\Var \setminus (\dom(\sigma_p) \cup \dom(\xi_n))]$
  such that $\used(c, \sigma_p \oplus \xi_n \oplus \sigma_r, \xi_n)$.
  We want to show that $\getprsub(c)(\sigma_p, \xi_n)$,  $\getvalsub(c)(\sigma_p, \xi_n)$, and  $\getpvarsub(c)(\sigma_p, \xi_n)$
  do not depend on the choice of $\sigma_r$.
  To do so, consider $\sigma_r' \in \State[\Var \setminus (\dom(\sigma_p) \cup \dom(\xi_n))]$
  such that $\used(c, \sigma_p \oplus \xi_n \oplus \sigma_r', \xi_n)$.
  Let $\sigma \defeq \sigma_p \oplus \xi_n \oplus \sigma_r$ and $\sigma' \defeq \sigma_p \oplus \xi_n \oplus \sigma_r'$.
  Then, it suffices to show that
  \begin{equation}
    \label{eq:lem:subfns-well-defined-goal}
    \begin{split}
      \textstyle
      \db{c}\sigma(\like)
      \cdot \prod_{\mu \in \dom(\xi_n)} \db{c}\sigma(\pr_\mu)
      &= \textstyle
      \db{c}\sigma'(\like)
      \cdot \prod_{\mu \in \dom(\xi_n)} \db{c}\sigma'(\pr_\mu),
      \\
      \db{c}\sigma(\val_\mu)
      &=\db{c}\sigma'(\val_\mu) \quad \text{for all $\mu \in \dom(\xi_n)$},
      \\
     \db{c}\sigma(x)
      &=\db{c}\sigma'(x) \quad \text{for all $x \in \PVar$}.
    \end{split}
  \end{equation}
  Since $\used(c, \sigma, \xi_n)$ and $\used(c, \sigma', \xi_n)$, we have $\sigma(\like) = 1 = \sigma'(\like)$
  and so $\sigma|_V = \sigma'|_V$ for $V = \PVar \cup \dom(\xi_n) \cup \{\like\}$.
  Using this and  $\used(c, \sigma, \xi_n)$,
  we can apply \cref{lem:used-properties}-\cref{lem:used-properties-3} and -\cref{lem:used-properties-3-like}
  to get $\db{c}\sigma(v) = \db{c}\sigma'(v)$ for all $v \in \PVar \cup \{\like\} \cup \{\pr_\mu, \val_\mu \mid \mu \in \dom(\xi_n)\}$.
  Hence, we obtain the desired equations in \cref{eq:lem:subfns-well-defined-goal}.
\end{proof}

\begin{proof}[Proof of \cref{lem:integral-sampled-terms-only}]
  Let $c$ be a command, $h : \State[\Name] \to \R$ be a measurable function,
  $f_* : \State[\Name] \to \State[\AVar]$ be the function defined in the statement of this lemma,
  and $\sigma_p \in \State[\PVar]$. 

  We first prove that the following equations hold for any measurable $h' : \State[\Name] \to \R$:
  \begin{align*}
    &  \int d\sigma_n \Big(
    \ind{\noerr(c, \sigma_p \oplus \sigma_n \oplus f_*(\sigma_n))} \cdot h'(\sigma_n)
    \Big)
    \\
    &= \int d\sigma_n \Big(
    \ind{\noerr(c, \sigma_p \oplus \sigma_n \oplus f_*(\sigma_n))} \cdot h'(\sigma_n)
    \cdot \sum_{K \subseteq \Name} \ind{\used(c, \sigma_p \oplus \sigma_n \oplus f_*(\sigma_n), \sigma_n|_K)}
    \Big) 
    \\
    &= \sum_{K \subseteq \Name} \int d\sigma_n \Big( 
    \ind{\used(c, \sigma_p \oplus \sigma_n \oplus f_*(\sigma_n), \sigma_n|_K)}
    \cdot \ind{\noerr(c, \sigma_p \oplus \sigma_n \oplus f_*(\sigma_n))} \cdot h'(\sigma_n)
    \Big)
    \\
    &= \sum_{K \subseteq \Name} \int_{[K \to \R]} d\xi_n \int_{[\Name \setminus K \to \R]} d\xi_n' \Big(
    \ind{\used(c, \sigma_p \oplus (\xi_n \oplus \xi_n') \oplus f_*(\xi_n \oplus \xi_n'), \xi_n)}
    \\
    & \qquad\quad
    \cdot \ind{\noerr(c, \sigma_p \oplus (\xi_n \oplus \xi_n') \oplus f_*(\xi_n \oplus \xi_n'))} \cdot h'(\xi_n \oplus \xi_n')
    \Big)
    \\
    &= \sum_{K \subseteq \Name} \int_{[K \to \R]} d\xi_n  \Big( \sum_{L \subseteq \Name} \int_{[L \to \R]} d\xi_n' \Big(
    \ind{\dom(\xi_n) \uplus \dom(\xi_n') = \Name}
    \cdot \ind{\used(c, \sigma_p \oplus (\xi_n \oplus \xi_n') \oplus f_*(\xi_n \oplus \xi_n'), \xi_n)}
    \\
    & \qquad\quad
    \cdot \ind{\noerr(c, \sigma_p \oplus (\xi_n \oplus \xi_n') \oplus f_*(\xi_n \oplus \xi_n'))} \cdot h'(\xi_n \oplus \xi_n')
    \Big) \Big)
    \\
    &= \int d\xi_n \int d\xi_n' \Big(
    \ind{\dom(\xi_n) \uplus \dom(\xi_n') = \Name}
    \cdot \ind{\used(c, \sigma_p \oplus (\xi_n \oplus \xi_n') \oplus f_*(\xi_n \oplus \xi_n'), \xi_n)}
    \\
    & \qquad\quad
    \cdot \ind{\noerr(c, \sigma_p \oplus (\xi_n \oplus \xi_n') \oplus f_*(\xi_n \oplus \xi_n'))} \cdot h'(\xi_n \oplus \xi_n')
    \Big).
  \end{align*}
  {
  All of these equations mean that one side of the equation is defined if and only if the other side is defined, and when both sides are defined, they are the same.}
  The first equality holds because $\noerr(c,\sigma_p \oplus \sigma_n \oplus f_*(\sigma_n))$
  implies that there exists a unique $K \subseteq \Name$ with $\used(c,\sigma_p \oplus \sigma_n \oplus f_*(\sigma_n), \sigma_n|_K)$;
  here we use $f_*(\sigma_n)(\like) = 1$.
  The second equality holds since $\Name$ is finite.
  The third equality holds because $\State[\Name]$ is isomorphic to $[K \to \R] \times [\Name \setminus K \to \R]$.
  The fourth equality holds 
  since $\xi_n' \in [L \to \R]$ with $L \neq \Name \setminus K$ implies $\ind{\dom(\xi_n) \uplus \dom(\xi_n') = \Name} = 0$.
  The fifth equality holds by the definition of $\Statesub[\Name]$ and its underlying measure.
  
  Using this result, we obtain the desired equation: 
  \begin{align*}
    & \int d\sigma_n \Big(
    \getpr(c)(\sigma_p, \sigma_n, f_*(\sigma_n))
    \cdot h\Big( \getval(c)(\sigma_p, \sigma_n, f_*(\sigma_n)) \Big)
    \Big)
    \\
    &= \int d\sigma_n \Big(
    \ind{\noerr(c, \sigma_p \oplus \sigma_n \oplus f_*(\sigma_n))}
    \cdot \getpr(c)(\sigma_p, \sigma_n, f_*(\sigma_n))
    \cdot h\Big( \getval(c)(\sigma_p, \sigma_n, f_*(\sigma_n)) \Big)
    \Big)
    \\
    &= \int d\xi_n \int d\xi_n' \Big(
    \ind{\dom(\xi_n) \uplus \dom(\xi_n') = \Name}
    \cdot \ind{\used(c, \sigma_p \oplus (\xi_n \oplus \xi_n') \oplus f_*(\xi_n \oplus \xi_n'), \xi_n)}
    \\
    & \qquad\quad
    {} \cdot \ind{\noerr(c, \sigma_p \oplus (\xi_n \oplus \xi_n') \oplus f_*(\xi_n \oplus \xi_n'))}
    \\
    & \qquad\quad
    {} \cdot \getpr(c)(\sigma_p, \xi_n \oplus \xi_n', f_*(\xi_n \oplus \xi_n'))
    \cdot h\Big( \getval(c)(\sigma_p, \xi_n \oplus \xi_n', f_*(\xi_n \oplus \xi_n')) \Big)
    \Big)
    \\
    &= \int d\xi_n \int d\xi_n' \Big(
    \ind{\dom(\xi_n) \uplus \dom(\xi_n') = \Name}
    \cdot \ind{\used(c, \sigma_p \oplus (\xi_n \oplus \xi_n') \oplus f_*(\xi_n \oplus \xi_n'), \xi_n)}
    \\
    & \qquad\quad
    \cdot \getpr(c)(\sigma_p, \xi_n \oplus \xi_n', f_*(\xi_n \oplus \xi_n'))
    \cdot h\Big( \getval(c)(\sigma_p, \xi_n \oplus \xi_n', f_*(\xi_n \oplus \xi_n')) \Big)
    \Big)
    \\
    &= \int d\xi_n \int d\xi_n' \Big(
    \ind{\dom(\xi_n) \uplus \dom(\xi_n') = \Name}
    \cdot \ind{\used(c, \sigma_p \oplus (\xi_n \oplus \xi_n') \oplus f_*(\xi_n \oplus \xi_n'), \xi_n)}
    \\
    & \qquad\quad
    \cdot \getprsub(c)(\sigma_p, \xi_n) \cdot \Big( \prod_{\mu \in \dom(\xi'_n)} f_1(\xi'_n(\mu)) \Big)
    \cdot h\Big( \getvalsub(c)(\sigma_p, \xi_n) \oplus \Big(\lambda \mu \in \dom(\xi'_n).\, f_2(\xi'_n(\mu))\Big) \Big)
    \Big)
    \\
    &= \int d\xi_n \int d\xi_n' \Big(
    \ind{\dom(\xi_n) \uplus \dom(\xi_n') = \Name}
    \\
    & \qquad\quad
    \cdot \getprsub(c)(\sigma_p, \xi_n) \cdot \Big( \prod_{\mu \in \dom(\xi'_n)} f_1(\xi'_n(\mu)) \Big)
    \cdot h\Big( \getvalsub(c)(\sigma_p, \xi_n) \oplus \Big(\lambda \mu \in \dom(\xi'_n).\, f_2(\xi'_n(\mu))\Big) \Big)
    \Big)
    \\
    &= \int d\xi_n \Big(  \getprsub(c)(\sigma_p, \xi_n)
    \cdot \int d\xi_n' \Big(
    \ind{\dom(\getvalsub(c)(\sigma_p, \xi_n)) \uplus \dom(\xi_n') = \Name}
    \cdot \Big( \prod_{\mu \in \dom(\xi'_n)} f_1(\xi'_n(\mu)) \Big)
    \\
    &\phantom{= \int d\xi_n \Big(  \getprsub(c)(\sigma_p, \xi_n) \cdot \int d\xi_n' \Big(}
    \cdot h\Big( \getvalsub(c)(\sigma_p, \xi_n) \oplus \Big(\lambda \mu \in \dom(\xi'_n).\, f_2(\xi'_n(\mu))\Big) \Big)
    \Big) \Big)
    \\
    &= \int d\xi_n \Big(
    \getprsub(c)(\sigma_p, \xi_n) \cdot g\Big( \getvalsub(c)(\sigma_p, \xi_n) \Big)
    \Big)
  \end{align*}
  where $g : \Statesub[\Name] \to \R$ is defined as in the statement of this lemma,
  {and each equation again means that one side of it is defined if and only if the other side is defined, and when both sides are defined, they are the same.}
  The first and third equalities hold because
  $\getpr(c)(\sigma_p, \sigma_n, f_*(\sigma_n)) \neq 0$ implies
  $\ind{\noerr(c, \sigma_p \oplus \sigma_n \oplus f_*(\sigma_n))} = 1$.
  The second equality uses the equation that we have shown in the previous paragraph.
  The fourth equality holds because of the following reason:
  if 
  \[
  \ind{\dom(\xi_n) \uplus \dom(\xi_n') = \Name}
  \cdot \ind{\used(c, \sigma_p \oplus \xi_n \oplus \sigma_r, \xi_n)}=1
    \quad \text{for}\  \sigma_r \defeq \xi_n' \oplus f_*(\xi_n \oplus \xi_n'),
  \]
  then
  \begin{align*}
    \getpr(c)(\sigma_p, \xi_n \oplus \xi_n', f_*(\xi_n \oplus \xi_n'))
    &= \textstyle
    \db{c}(\sigma_p \oplus \xi_n \oplus \sigma_r)(\like)
    \cdot \prod_{\mu \in \Name} \db{c}(\sigma_p \oplus \xi_n \oplus \sigma_r)(\pr_\mu)
    \\
    &= \textstyle
    \db{c}(\sigma_p \oplus \xi_n \oplus \sigma_r)(\like)
    \cdot \prod_{\mu \in \dom(\xi_n)} \db{c}(\sigma_p \oplus \xi_n \oplus \sigma_r)(\pr_\mu)
    \\
    &\qquad \textstyle
    \cdot  \prod_{\mu \in \Name \setminus \dom(\xi_n)} (\sigma_p \oplus \xi_n \oplus \sigma_r)(\pr_\mu)
    \\
    &=  \textstyle
    \getprsub(c)(\sigma_p, \xi_n)
    \cdot \prod_{\mu \in \Name \setminus \dom(\xi_n)} (\sigma_p \oplus \xi_n \oplus \sigma_r)(\pr_\mu)
    \\
    &= \textstyle
    \getprsub(c)(\sigma_p, \xi_n)
    \cdot \prod_{\mu \in \dom(\xi'_n)} f_1(\xi'_n(\mu))
    \end{align*}
    and
    \begin{align*} 
    \getval(c)(\sigma_p, \xi_n \oplus \xi_n', f_*(\xi_n \oplus \xi_n'))
    &= \lambda \mu \in \Name.\, \db{c}(\sigma_p \oplus \xi_n \oplus \sigma_r)(\val_\mu)
    \\
    &= \big(\lambda \mu \in \dom(\xi_n).\, \db{c}(\sigma_p \oplus \xi_n \oplus \sigma_r)(\val_\mu)\big)
    \\
    &\quad \textstyle
    \oplus \big( \lambda \mu \in \Name \setminus \dom(\xi_n).\, (\sigma_p \oplus \xi_n \oplus \sigma_r)(\val_\mu) \big)
    \\
    &= \getvalsub(c)(\sigma_p, \xi_n) 
    \oplus \big( \lambda \mu \in \Name \setminus \dom(\xi_n').\, (\sigma_p \oplus \xi_n \oplus \sigma_r)(\val_\mu) \big)
    \\
    &= \getvalsub(c)(\sigma_p, \xi_n) \oplus \big( \lambda \mu \in \dom(\xi'_n).\, f_2(\xi'_n(\mu)) \big).
  \end{align*}
  These equalities for $\getpr(c)$ and $\getval(c)$ themselves hold for the below reasons:
  \begin{itemize}
  \item The first equalities hold by
    $\used(c, \sigma_p \oplus \xi_n \oplus \sigma_r, \xi_n)$ and the definitions of $\getpr(c)$ and $\getval(c)$.
  \item The second equalities hold by \cref{lem:used-properties}-\cref{lem:used-properties-2},
    which is applicable since $\used(c, \sigma_p \oplus \xi_n \oplus \sigma_r, \xi_n)$.
  \item The third equalities hold by $\used(c, \sigma_p \oplus \xi_n \oplus \sigma_r, \xi_n)$ and the definitions of $\getprsub(c)$ and $\getvalsub(c)$.
  \item The fourth equalities hold by $\dom(\xi_n) \uplus \dom(\xi_n') = \Name$ and the definition of $f_*$.
  \end{itemize}
  Returning back to the main equations, we point out that the fifth equality comes from the next fact:
  \begin{align*}
    \Big(\ind{\dom(\xi_n) \uplus \dom(\xi_n') = \Name} \cdot \getprsub(c)(\sigma_p, \xi_n)\Big) \neq 0 
    \implies \ind{\used(c, \sigma_p \oplus (\xi_n \oplus \xi_n') \oplus f_*(\xi_n \oplus \xi_n'), \xi_n)} = 1.
  \end{align*}
  The justification for this implication is given below:
  \begin{itemize}
  \item If the premise holds, then there exists $\sigma_r$ such that $\used(c, \sigma_p \oplus \xi_n \oplus \sigma_r, \xi_n)$.
    Since $\sigma_r(\like)=1=f_*(-)(\like)$,
    $\sigma_p \oplus \xi_n \oplus \sigma_r$ and $\sigma_p \oplus (\xi_n \oplus \xi_n') \oplus f_*(\xi_n \oplus \xi_n')$
    coincide on $\PVar \cup \dom(\xi_n) \cup \{\like\}$.
    Thus, \cref{lem:used-sigma-indep} gives the conclusion.
  \end{itemize}
  Again back to the main equations, we note that
  the sixth equality holds since 
  \[
  \dom(\xi_n) = \dom(\getvalsub(c)(\sigma_p,\xi_n)),
  \]
  and the seventh equality follows from the definition of~$g$.
  This completes the proof.
\end{proof}

\begin{lemma}
\label{lem:cnt-increase-and-rv-nochange}
For all commands $c$ and states $\sigma \in \State$ such that $\db{c}\sigma \in \State$, we have 
\[
\db{c}\sigma(\sampled_\mu) \geq \sigma(\sampled_\mu)
\quad\text{and}\quad
\db{c}\sigma(\mu) = \sigma(\mu) 
\]
for all $\mu \in \Name$.
\end{lemma}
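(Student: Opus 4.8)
The plan is to prove the statement by structural induction on the command $c$, reading off the atomic cases directly from the density semantics and handling the loop case by a fixed-point (Scott) induction. Throughout, the induction hypothesis for a subcommand $c'$ is exactly the claim: whenever $\db{c'}\sigma' \in \State$, one has $\db{c'}\sigma'(\sampled_\mu) \geq \sigma'(\sampled_\mu)$ and $\db{c'}\sigma'(\mu) = \sigma'(\mu)$ for all $\mu \in \Name$.

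First I would dispatch the atomic and compositional cases. For $\cskip$ the claim is trivial since $\db{\cskip}\sigma = \sigma$. For $x := e$ the output is $\sigma[x \mapsto \db{e}\sigma]$ and, because $x \in \PVar$ is disjoint from $\Name$ and from every $\sampled_\mu \in \AVar$, no value $\sigma(\mu)$ or $\sigma(\sampled_\mu)$ is changed. For $\cobs(d,r)$ only $\like$ is updated, so again nothing in $\Name$ nor any $\sampled_\mu$ moves. For $x := \csample(n,d,\lambda y.e')$ with $\mu_0 := \db{n}\sigma$, the output state differs from $\sigma$ only at $x$, $\val_{\mu_0}$, $\pr_{\mu_0}$ (all outside $\Name$) and at $\sampled_{\mu_0}$, whose value becomes $\sigma(\sampled_{\mu_0})+1$; hence $\db{c}\sigma(\sampled_\mu) \geq \sigma(\sampled_\mu)$ for every $\mu$ (with equality unless $\mu = \mu_0$) and $\db{c}\sigma(\mu) = \sigma(\mu)$ for every $\mu \in \Name$. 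For $c;c'$, note that $\db{c;c'}\sigma = \smash{\db{c'}^\dagger}(\db{c}\sigma) \in \State$ forces $\db{c}\sigma \in \State$ (otherwise $\smash{\db{c'}^\dagger}(\bot) = \bot$); applying the hypothesis to $c$ at $\sigma$ and to $c'$ at $\db{c}\sigma$ and chaining the two inequalities (resp. equalities) gives the claim. For $\cif\, b\, \{c\}\, \celse\, \{c'\}$, the output is either $\db{c}\sigma$ or $\db{c'}\sigma$, so the hypothesis for the taken branch applies verbatim.

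The only case needing real work is $\cwhile\, b\, \{c\}$, whose denotation is $(\fix F)(\sigma)$ with $F(f)(\sigma) = \text{if } \db{b}\sigma = \strue \text{ then } \smash{f^\dagger}(\db{c}\sigma) \text{ else } \sigma$. Here I would take $\cT$ to be the set of $f \in [\State \to \State_\bot]$ such that, for all $\sigma$, $f(\sigma) \in \State$ implies $f(\sigma)(\sampled_\mu) \geq \sigma(\sampled_\mu)$ and $f(\sigma)(\mu) = \sigma(\mu)$ for all $\mu \in \Name$, and prove $\fix F \in \cT$ by Scott induction. The bottom function $\lambda \sigma.\bot$ is in $\cT$ vacuously. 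Admissibility of $\cT$ holds because the order on $\State_\bot$ is flat: for an increasing chain $(f_n)_n$ in $\cT$ with supremum $f_\infty$, if $f_\infty(\sigma) \in \State$ then $f_n(\sigma) = f_\infty(\sigma)$ for some $n$, so the property for $f_n$ transfers. Preservation by $F$ reuses the body's hypothesis just as in the sequencing case: if $F(f)(\sigma) = \smash{f^\dagger}(\db{c}\sigma) \in \State$ then $\db{c}\sigma \in \State$, so the hypothesis for $c$ yields $\db{c}\sigma(\sampled_\mu) \geq \sigma(\sampled_\mu)$ and $\db{c}\sigma(\mu) = \sigma(\mu)$, and then $f \in \cT$ yields $f(\db{c}\sigma)(\sampled_\mu) \geq \db{c}\sigma(\sampled_\mu)$ and $f(\db{c}\sigma)(\mu) = \db{c}\sigma(\mu)$; chaining closes the step. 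I expect this loop case — specifically pinning down the admissible invariant $\cT$ — to be the only mildly delicate point; every other case is a mechanical read-off of the clauses defining $\db{c}$.
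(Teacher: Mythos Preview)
Your proposal is correct and follows essentially the same route as the paper: structural induction on $c$, with the atomic and composite cases read off directly from the semantics, and the $\cwhile$ case handled by Scott induction using precisely the invariant set $\cT$ you define. The paper's argument is identical in structure and in the choice of admissible predicate, so there is nothing to add.
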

\begin{proof}
We prove the lemma by induction on the structure of $c$. Let $\sigma \in \State$ such that $\db{c}\sigma \in \State$.

\paragraph{\bf Cases $c \equiv \cskip$, or $c \equiv (x:=e)$, or $c \equiv \cobs(d,r)$} In these cases, $\db{c}\sigma(\sampled_\mu) = \sigma(\sampled_\mu)$ and $\db{c}\sigma(\mu) = \mu$ for all $\mu$. The claim of the lemma, thus, follows.

\paragraph{\bf Case $c \equiv (x:=\csample(n,d,\lambda y.e'))$} Let $\mu \defeq \db{n}\sigma$, $p \defeq \db{d}\sigma$, and $r \defeq \db{e'[\mu/y]}\sigma$. Then,
\[
\db{c}\sigma = \sigma[x \mapsto r, \val_\mu \mapsto r, \pr_\mu \mapsto p(r), \sampled_\mu \mapsto \sigma(\sampled_\mu) + 1].
\]
Thus, the claim of the lemma follows.

\paragraph{\bf Case $c \equiv (c';c'')$} Pick $\mu \in \Name$. Then,
\[
\db{c';c''}\sigma(\mu) = \db{c''}(\db{c'}\sigma)(\mu) = \db{c'}\sigma(\mu) = \sigma(\mu).
\]
Here the second and third equalities use induction hypothesis on $c'$ and $c''$, respectively. Also, 
\begin{align*}
\db{c';c''}\sigma(\sampled_\mu) - \sigma(\sampled_\mu)
& = 
\Big(\db{c''}(\db{c'}\sigma)(\sampled_\mu) - \db{c'}\sigma(\sampled_\mu)\Big)
+
\Big(\db{c'}\sigma(\sampled_\mu) - \sigma(\sampled_\mu)\Big)
\\
& \geq 0.
\end{align*}
The inequality here uses induction hypothesis on $c'$ and $c''$. 

\paragraph{\bf Case $c \equiv (\cif\ b\ \{c'\}\ \celse\ \{c''\}$)} Assume that $\db{b}\sigma = \strue$. We will prove the claims of the lemma under this assumption. The other case of $\db{b} = \sfalse$ can be proved similarly. Pick $\mu \in \Name$. Then, by induction hypothesis on $c'$,
\[
\db{c}\sigma(\mu) = \db{c'}\sigma(\mu) = \sigma(\mu)
\quad
\text{and}
\quad
\db{c}\sigma(\sampled_\mu) = \db{'}\sigma(\sampled_\mu) \geq \sigma(\sampled_\mu).
\]

\paragraph{\bf Case $c \equiv (\cwhile\ b\ \{c'\})$} Let $\cT$ be the following subset of $[\State \to \State_\bot]$:
\[
f \in \cT \iff \forall \sigma \in \State.\,\Big(f(\sigma) \neq \bot \implies \forall \mu \in \Name.\, f(\sigma)(\mu) = \sigma(\mu) \land f(\sigma)(\sampled_\mu) \geq \sigma(\sampled_\mu)\Big).
\]
Let $F$ be the operator on $[\State \to \State_\bot]$ whose least fixed point becomes the semantics of the loop $c$. The desired conclusion follows if we show that $\cT$ contains $\lambda \sigma.\bot$ and is closed under taking the limit of a chain in $\cT$, and $F$ preserves $\cT$. The least element $\lambda \sigma.\bot$ belongs to $\cT$ since there are no states $\sigma$ with $(\lambda \sigma.\bot)(\sigma) \neq \bot$. Consider an increasing sequence $f_0,f_1,\ldots$ in $\cT$, and let $f_\infty \defeq \bigsqcup_{n \in \N} f_n$. Pick $\sigma$ such that $f_\infty(\sigma) \neq \bot$. Then, $f_\infty(\sigma) = f_m(\sigma)$ for some $m \in \N$. Since $f_m \in \cT$, we have
\[
f_m(\sigma)(\mu) = \sigma(\mu)\quad\text{and}\quad f_m(\sigma)(\sampled_\mu) \geq \sigma(\sampled_\mu)
\]
for all $\mu \in \Name$. Since $f_m(\sigma) = f_\infty(\sigma)$, we also have, for every $\mu \in \Name$, $f_\infty(\sigma)(\mu) = \sigma(\mu)$ and $f_\infty(\sigma)(\sampled_\mu) \geq \sigma(\sampled_\mu)$, as desired. It remains to show that $F(f) \in \cT$ for all $f \in \cT$. Pick $f \in \cT$ and $\sigma \in \State$ such that
$F(f)(\sigma) \in \State$. If $\db{b}\sigma = \sfalse$, we have $F(f)(\sigma) = \sigma$, and the claims of the lemma follow. Otherwise, $F(f)(\sigma) = f(\db{c'}\sigma)$. Pick $\mu \in \Name$. Then, by induction hypothesis on $c'$ and the membership $f \in \cT$, 
\[
F(f)(\sigma)(\mu) = f(\db{c'}\sigma)(\mu) = \db{c'}\sigma(\mu) = \sigma(\mu),
\]
and
\[
F(f)(\sigma)(\sampled_\mu) = f(\db{c'}\sigma)(\sampled_\mu) \geq \db{c'}\sigma(\sampled_\mu) \geq \sigma(\sampled_\mu).
\]
We have just shown that $F(f) \in \cT$, as desired.
\end{proof}

\begin{definition}
  Define $\usedm$ as the predicate $\used$ but without the condition that $\like$ should be $1$. That is, for
  all commands $c$, states $\sigma \in \State$, and $\xi_n \in \Statesub[\Name]$,
  \[
  \usedm(c, \sigma, \xi_n)
  \iff
  \begin{aligned}[t]
    \db{c}\sigma \in \State 
    & {} \land \big(\db{c}\sigma(\sampled_\mu) - \sigma(\sampled_\mu) \leq 1\ \text{for all}\ \mu \in \Name\big)
    \\
    & {} \land \xi_n = \sigma|_{\dom(\xi_n)} 
    \\
    & {} \land \dom(\xi_n) = \{ \mu \in \Name \mid  \db{c}\sigma(\sampled_\mu) - \sigma(\sampled_\mu) = 1\}.
  \end{aligned}
  \]
\end{definition}

\begin{lemma}
  \label{lem:used-properties}
  Let $c$ be a command, $\sigma_0, \sigma_1 \in \State$, and $\xi_n \in \Statesub[\Name]$.
  Suppose that $\usedm(c, \sigma_0, \xi_n)$ 
  and $\sigma_1|_V = \sigma_0|_V$ for $V \defeq \PVar \cup \dom(\xi_n)$.
  Then, the following properties hold:
  \begin{enumerate}[label=(\arabic*),ref=(\arabic*)] 
  \item \label{lem:used-properties-1}
    $\db{c}\sigma_1 \in \State$.
  \item \label{lem:used-properties-2}
    $\db{c}\sigma_1(a) = \sigma_1(a)$ for all $a \in\{ \pr_\mu, \val_\mu, \sampled_\mu \mid \mu \in \Name \setminus \dom(\xi_n) \}$.
  \item \label{lem:used-properties-3}
    $\db{c}\sigma_1(v) = \db{c}\sigma_0(v)$ for all $v \in \PVar \cup \{ \pr_\mu, \val_\mu \mid \mu \in \dom(\xi_n) \}$.
  \item \label{lem:used-properties-3-like}
    $\db{c}\sigma_1(\like) = \db{c}\sigma_0(\like)$, if $\sigma_0(\like) = \sigma_1(\like)$.
  \item \label{lem:used-properties-4}
    $\db{c}\sigma_1(a)-\sigma_1(a) = \db{c}\sigma_0(a)-\sigma_0(a)$ for all $a \in \{ \sampled_\mu \mid \mu \in \Name \}$.
  \end{enumerate}
\end{lemma}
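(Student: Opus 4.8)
\textbf{Proof plan for \cref{lem:used-properties}.}
The plan is to prove all five claims simultaneously by structural induction on the command $c$, using a shared invariant: from $\usedm(c,\sigma_0,\xi_n)$ and agreement of $\sigma_0,\sigma_1$ on $V = \PVar \cup \dom(\xi_n)$, we want to conclude that $c$ behaves on $\sigma_1$ ``just like'' on $\sigma_0$ on all the relevant variables, and trivially (as $\cskip$) on the untouched names in $\Name \setminus \dom(\xi_n)$. The base cases are the atomic commands. For $\cskip$ and $\cobs(d,r)$, nothing in $\AVar$ indexed by names changes and $\dom(\xi_n) = \emptyset$, so claims \cref{lem:used-properties-1,lem:used-properties-2,lem:used-properties-4} are immediate, and \cref{lem:used-properties-3,lem:used-properties-3-like} hold because $\cobs$ only reads $d$'s parameters (which are built from program variables, hence in $V$) and updates $\like$ multiplicatively by a factor determined by those; note the evaluation of expressions never errors, so $\db{c}\sigma_1 \in \State$. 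For $x := e$, again $\dom(\xi_n) = \emptyset$ and the only changed variable is $x \in \PVar$, whose new value $\db{e}\sigma$ depends only on variables in $\PVar \subseteq V$ (expressions in user programs read program variables; more generally the free-variable set is finite and the proof uses that $\db{e}$ is determined by the values on $\fv(e)$), so all claims follow. The key base case is $x := \csample(n,d,\lambda y.e')$: here $\usedm$ forces $\dom(\xi_n) = \{\mu\}$ where $\mu = \db{n}\sigma_0$; since $n = \cname(\alpha,e)$ with $e$ an expression over $\PVar$, and $\sigma_1$ agrees with $\sigma_0$ on $\PVar$, we get $\db{n}\sigma_1 = \mu$ as well. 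Then $\sigma_1(\mu) = \sigma_0(\mu)$ because $\mu \in \dom(\xi_n) \subseteq V$, so the sampled value $r = \db{e'[\mu/y]}\sigma$ and the density $\db{d}\sigma(\sigma(\mu))$ agree for $\sigma_0$ and $\sigma_1$ (the arguments $e_1,e_2$ of $\cnor$ are over $\PVar$), giving \cref{lem:used-properties-3}; the variables $\pr_{\mu'},\val_{\mu'},\sampled_{\mu'}$ for $\mu' \ne \mu$ are untouched, giving \cref{lem:used-properties-2}; and $\sampled_\mu$ is incremented by exactly $1$ in both, giving \cref{lem:used-properties-4}; \cref{lem:used-properties-3-like} is vacuous here since $\like$ is unchanged.

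For the inductive step, the interesting case is sequential composition $c \equiv (c';c'')$. The plan is: apply the induction hypothesis to $c'$ first, which requires identifying the substate $\xi_n'$ of names sampled by $c'$ alone. Because $\usedm(c';c'',\sigma_0,\xi_n)$ and \cref{lem:cnt-increase-and-rv-nochange} tells us $\sampled$-counters only increase and names don't change under either $c'$ or $c''$, the set $\dom(\xi_n)$ splits as a disjoint union $\dom(\xi_n') \uplus \dom(\xi_n'')$ where $\xi_n' = \sigma_0|_{\dom(\xi_n')}$ is exactly what $c'$ samples from $\sigma_0$ and $\xi_n''$ is what $c''$ samples from $\db{c'}\sigma_0$; one checks $\usedm(c',\sigma_0,\xi_n')$. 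Applying the IH to $c'$ with $\sigma_0,\sigma_1$ (which agree on $\PVar \cup \dom(\xi_n') \subseteq \PVar \cup \dom(\xi_n)$) gives $\db{c'}\sigma_1 \in \State$ and agreement of $\db{c'}\sigma_1$ and $\db{c'}\sigma_0$ on $\PVar \cup \{\pr_\mu,\val_\mu \mid \mu \in \dom(\xi_n')\}$, plus control of $\like$ and the untouched $\sampled$-counters. Then I need $\db{c'}\sigma_1$ and $\db{c'}\sigma_0$ to agree on $\PVar \cup \dom(\xi_n'')$ — agreement on $\PVar$ is from the IH, and agreement on $\dom(\xi_n'')$ holds because those names are in $\Name \setminus \dom(\xi_n')$, hence by \cref{lem:used-properties-2} (applied in the $c'$ step) both $\db{c'}\sigma_0$ and $\db{c'}\sigma_1$ leave those name-variables equal to their inputs $\sigma_0,\sigma_1$, which agree on $\dom(\xi_n'') \subseteq \dom(\xi_n) \subseteq V$. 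So I can apply the IH to $c''$ with input states $\db{c'}\sigma_0, \db{c'}\sigma_1$ and substate $\xi_n''$; composing the two conclusions and re-partitioning $\Name$ yields all five claims for $c';c''$. For $\like$ in \cref{lem:used-properties-3-like}: if $\sigma_0(\like) = \sigma_1(\like)$, the $c'$-step preserves $\db{c'}\sigma_0(\like) = \db{c'}\sigma_1(\like)$, which feeds the $c''$-step.

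For the conditional $c \equiv (\cif\ b\ \{c'\}\ \celse\ \{c''\})$: $\db{b}$ depends only on $\fv(b) \subseteq \PVar$ (for user programs; in general on the finite free-variable set, contained in $V$), so $\db{b}\sigma_0 = \db{b}\sigma_1$, and the same branch executes; apply the IH to that branch directly. For the loop $c \equiv (\cwhile\ b\ \{c'\})$: since $c$ is assumed inside $\usedm$ to terminate on $\sigma_0$, its semantics on $\sigma_0$ unfolds to a finite sequence of iterations of $c'$; I would argue by induction on the number of iterations, using the conditional and sequence cases above (or, equivalently, a Scott-induction argument showing the property ``behaves identically on $V$-agreeing inputs'' is admissible and preserved by the loop functional $F$ — since the relevant chain is the Kleene chain and the property is a conjunction of conditions of the shape ``$\db{c'}\sigma \ne \bot \implies \ldots$'', which is closed under directed suprema). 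The guard evaluation each round agrees on $\sigma_0,\sigma_1$ restricted appropriately, and the per-iteration substate bookkeeping mirrors the sequence case.

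\textbf{Expected main obstacle.} The bookkeeping of how $\xi_n$ decomposes across sequential composition is the crux: I must show the ``names sampled by $c'$'' and ``names sampled by $c''$'' partition $\dom(\xi_n)$ cleanly and that $\xi_n' = \sigma_0|_{\dom(\xi_n')}$ really captures the intermediate substate, so that the IH for $c''$ applies with the right states $\db{c'}\sigma_0$ and $\db{c'}\sigma_1$ actually agreeing on $\PVar \cup \dom(\xi_n'')$. This relies essentially on \cref{lem:cnt-increase-and-rv-nochange} (monotone $\sampled$-counters, invariant names) and on \cref{lem:used-properties-2} being available in the first sub-step to transfer agreement on the not-yet-sampled names through $c'$. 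Threading the $\like$ condition (present only in \cref{lem:used-properties-3-like}) correctly through the sequence — keeping it as a hypothesis rather than a conclusion until the very end — is the other place to be careful. The loop case is conceptually routine given the sequence/conditional cases but needs the admissibility remark to be spelled out, which is a minor but necessary technicality.
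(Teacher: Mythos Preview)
Your proposal is correct and follows essentially the same approach as the paper: structural induction on $c$, with the sequence case handled by splitting $\dom(\xi_n)$ into the names sampled by $c'$ and those sampled by $c''$, applying the IH twice, and the loop case via Scott induction on the loop functional. One minor slip: when arguing that $\db{c'}\sigma_0$ and $\db{c'}\sigma_1$ agree on $\dom(\xi_n'')$, the relevant fact is \cref{lem:cnt-increase-and-rv-nochange} (the name variables $\mu \in \Name$ are never written by any command), not property~\cref{lem:used-properties-2}, which concerns the auxiliary variables $\pr_\mu,\val_\mu,\sampled_\mu$ rather than $\mu$ itself.
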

\begin{proof}
  For $\sigma_0',\sigma_1' \in \State$ and $\xi'_n \in \Statesub[\Name]$, write 
  \[
  \sigma_0' \sim_{\xi'_n} \sigma_1'
  \] 
  to mean that $\sigma_0'|_V = \sigma_1'|_V$ for $V \defeq \PVar \cup \dom(\xi'_n)$. Note that using this notation, we can write the conditions
  of the lemma as follows:
  \[
  \usedm(c,\sigma_0,\xi_n) \land \sigma_0 \sim_{\xi_n} \sigma_1.
  \]
  We will prove, by induction on the structure of $c$, that these conditions imply the five properties claimed by the lemma. Our proof will sometimes use a simple observation that the five properties claimed by the lemma and the relationship $\sigma_0 \sim_{\xi_n} \sigma_1$ imply $\usedm(c,\sigma_1,\xi_n)$. One consequence of the observation is that if our lemma holds, its five properties also hold with $\sigma_0$ and $\sigma_1$ swapped. We will often use this consequence.
  
  \paragraph{\bf Case $c \equiv \cskip$} In this case, $\db{c}\sigma_0 = \sigma_0$ and $\db{c}\sigma_1 = \sigma_1$. From these equalities, the claimed properties \cref{lem:used-properties-1,lem:used-properties-2,lem:used-properties-3-like,lem:used-properties-4} follow. For the remaining property \cref{lem:used-properties-3}, we note that $\dom(\xi_n) = \emptyset$ and the property, thus, follows from $\sigma_0 \sim_{\xi_n} \sigma_1$.
  
  \paragraph{\bf Case $c \equiv (x:=e)$} In this case, $\db{c}\sigma_0 = \sigma_0[x\mapsto \db{e}\sigma_0]$ and $\db{c}\sigma_1 = \sigma_1[x\mapsto \db{e}\sigma_1]$. The results are not $\bot$, and they are identical to the pre-states $\sigma_0$ and $\sigma_1$ as far as auxiliary variables in $\AVar$ are concerned. Also, expressions in commands do not depend on variables other than program variables, so that $\sigma_0 \sim_{\xi_n} \sigma_1$ gives $\db{e}\sigma_0 = \db{e}\sigma_1$ and $\db{c}\sigma_0(x) = \db{c}\sigma_1(x)$ for all $x \in \PVar$. From all of these observations, the claimed properties \cref{lem:used-properties-1,lem:used-properties-2,lem:used-properties-3,lem:used-properties-3-like,lem:used-properties-4} follow.
  
  \paragraph{\bf Case $c \equiv (x := \csample(n,d,\lambda y.e'))$} Since $\sigma_0(x) = \sigma_1(x)$ for all $x \in \PVar$, we have
  $\db{n}\sigma_0 = \db{n}\sigma_1$ and $\db{d}\sigma_0 = \db{d}\sigma_1$. Let $\mu \defeq \db{n}\sigma_0$, $p \defeq \db{d}\sigma_0$, and $r \defeq \db{e'[\mu/y]}\sigma_0$.
  By the semantics of the sample commands, we have
  \[
  \db{c}\sigma_0 = \sigma_0[x \mapsto r,\val_\mu \mapsto r,\pr_\mu \mapsto p(\sigma_0(\mu)), \sampled_\mu \mapsto \sigma_0(\sampled_\mu) + 1]. 
  \]
  Since $\usedm(c,\sigma_0,\xi_n)$ holds, we have $\xi_n = \sigma_0|_{\{\mu\}}$, which in turn implies $\sigma_0(\mu) = \sigma_1(\mu)$ because $\sigma_0 \sim_{\xi_n} \sigma_1$. Thus, $\db{e'[\mu/y]}\sigma_1 = \db{e'[\mu/y]}\sigma_0 = r$, and
  \begin{align*}
  \db{c}\sigma_1 
  & = \sigma_1[x \mapsto \db{e'[\mu/y]}\sigma_1,\val_\mu \mapsto \db{e'[\mu/y]}\sigma_1,\pr_\mu \mapsto p(\sigma_1(\mu)), \sampled_\mu \mapsto \sigma_1(\sampled_\mu) + 1]
  \\
  & = \sigma_1[x \mapsto r, \val_\mu \mapsto r, \pr_\mu \mapsto p(\sigma_0(\mu)), \sampled_\mu \mapsto \sigma_1(\sampled_\mu) + 1].
  \end{align*}
  The RHS of the last equality implies that the five properties claimed by the lemma hold.
  
  \paragraph{\bf Case $c \equiv \cobs(d,r)$} We have $\db{d}\sigma_0 = \db{d}\sigma_1$ since $\sigma_0(x) = \sigma_1(x)$ for all $x \in \PVar$. Let $p \defeq \db{d}\sigma_0$. Then,
  \[
  \db{c}\sigma_0 = \sigma_0[\like \mapsto \sigma_0(\like) \cdot p(r)]
  \quad\text{and}\quad
  \db{c}\sigma_1 = \sigma_1[\like \mapsto \sigma_1(\like) \cdot p(r)].
  \]
  Also, $\dom(\xi_n) = \emptyset$ since $\usedm(c,\sigma_0,\xi_n)$ holds.
  From what we have proved and also the agreement of $\sigma_0$ and $\sigma_1$ on program variables, the five properties claimed by the lemma follow.
  
  \paragraph{\bf Case $c \equiv (c';c'')$} Since $\db{c}\sigma_0 = \db{c''}^\dagger (\db{c'}\sigma_0) \in \State$, we have $\db{c'}\sigma_0 \in \State$. Let
  \begin{align*}
  \sigma'_0 & \defeq \db{c'}\sigma_0,
  \\
  N_0 & \defeq \{\mu \in \Name \mid \db{c''}\sigma'_0(\sampled_\mu) - \sigma_0(\sampled_\mu) = 1\},
  \\
  N'_0 & \defeq \{\mu \in \Name \mid \sigma'_0(\sampled_\mu) - \sigma_0(\sampled_\mu) = 1\}. 
  \end{align*}
  Then, $N_0 = \dom(\xi_n)$ because $\usedm(c';c'',\sigma_0,\xi_n)$ holds.  
  We will prove the following facts:
  \begin{enumerate}
  \item $N'_0 \subseteq N_0$.
  \item Let $\xi'_n \defeq \xi_n|_{N'_0}$, and $\xi''_n \defeq \xi_n|_{(N_0 \setminus N'_0)}$. Then, 
  $\usedm(c',\sigma_0,\xi'_n)$ and $\usedm(c'',\sigma'_0,\xi''_n)$ hold.
  \item $\db{c'}\sigma_1 \in \State$.
  \item Let $\sigma'_1 \defeq \db{c'}\sigma_1$. Then, $\sigma'_0 \sim_{\xi''_n} \sigma'_1$.
  \end{enumerate}
  These four facts imply the five properties claimed by the lemma. Here is the reason. Note that $\sigma_0 \sim_{\xi'_n} \sigma_1$ since $\dom(\xi'_n) = N'_0 \subseteq N_0 = \dom(\xi_n)$. This relationship between $\sigma_0$ and $\sigma_1$ and the second fact let us use induction hypothesis on $(c',\xi'_n,\sigma_0,\sigma_1)$. Also, the second and fourth facts allow us to use induction hypothesis on $(c'',\xi''_n,\sigma_0',\sigma_1')$. 
  We can derive the five properties from what we get from these two applications of induction hypothesis:
  \begin{enumerate}
  \item By induction hypothesis on $(c'',\xi''_n,\sigma_0',\sigma_1')$, we have $\db{c';c''}\sigma_1 = \db{c''}\sigma'_1 \in \State$.
  \item For all $a \in \{ \pr_\mu, \val_\mu, \sampled_\mu \mid \mu \in \Name \setminus \dom(\xi_n)\}$, 
  \[
  \db{c';c''}\sigma_1(a)
  =
  \db{c''}\sigma'_1(a)
  =
  \sigma'_1(a)
  = 
  \db{c'}\sigma_1(a)
  =
  \sigma_1(a).
  \]
  The second equality comes from induction hypothesis on $(c'',\xi''_n,\sigma_0',\sigma_1')$ and $\dom(\xi''_n) \subseteq \dom(\xi_n)$,
  and the fourth equality from induction hypothesis on $(c',\xi'_n,\sigma_0,\sigma_1)$ and $\dom(\xi'_n) \subseteq \dom(\xi_n)$.
  \item For all $v \in \PVar \cup \{\pr_\mu,\val_\mu \mid \mu \in \dom(\xi''_n)\}$, by induction hypothesis on $(c'',\xi''_n,\sigma_0',\sigma_1')$,
  \[
  \db{c';c''}\sigma_1(v)
  =
  \db{c''}\sigma'_1(v)
  =
  \db{c''}\sigma'_0(v)
  =
  \db{c';c''}\sigma_0(v).
  \]
  Also, for all $a \in \{\pr_\mu,\val_\mu \mid \mu \in \dom(\xi'_n)\}$, we have $a \in \{\pr_\mu,\val_\mu \mid \mu \in \Name \setminus \dom(\xi''_n)\}$,
  and we can calculate:
  \begin{align*}
  \db{c';c''}\sigma_1(a)
  & {} =
  \db{c''}\sigma'_1(a)
   =
  \sigma'_1(a)
  \\
  & {}
  =
  \db{c'}\sigma_1(a)
  =
  \db{c'}\sigma_0(a)
  \\
  & {}
  =
  \sigma'_0(a)
  =
  \db{c''}\sigma'_0(a)
  =
  \db{c';c''}\sigma_0(a).
  \end{align*}
  The second equality uses induction hypothesis on $(c'',\xi''_n,\sigma_0',\sigma_1')$, and the fourth comes from the induction hypothesis on $(c',\xi'_n,\sigma_0,\sigma_1)$. The sixth equality follows from induction hypothesis applied to $(c'',\xi''_n,\sigma'_0,\sigma'_1)$ and again to the same tuple but with $\sigma'_0$ and $\sigma'_1$ swapped.  
  \item If $\sigma_0(\like) = \sigma_1(\like)$, by induction hypothesis on $(c',\xi'_n,\sigma_0,\sigma_1)$, 
  \[
  \sigma'_0(\like) = \db{c'}\sigma_0(\like) = \db{c'}\sigma_1(\like) = \sigma'_1(\like),
  \]
  which in turn implies, by induction hypothesis on $(c'',\xi''_n,\sigma'_0,\sigma'_1)$, 
  \[
  \db{c';c''}\sigma_0(\like) = \db{c''}\sigma'_0(\like) = \db{c''}\sigma'_1(\like) = \db{c';c''}\sigma_1(\like).
  \]
  \item For all $a \in \{\sampled_\mu \mid \mu \in \Name\}$,
  \begin{align*}
  \db{c';c''}\sigma_1(a) - \sigma_1(a) & 
  =
  \db{c';c''}\sigma_1(a) - \db{c'}\sigma_1(a) + \db{c'}\sigma_1(a) - \sigma_1(a)
  \\
  & 
  =
  \db{c''}\sigma'_1(a) - \sigma'_1(a) + \db{c'}\sigma_1(a) - \sigma_1(a)
  \\
  & 
  =
  \db{c''}\sigma'_0(a) - \sigma'_0(a) + \db{c'}\sigma_0(a) - \sigma_0(a)
  \\
  &
  = 
  \db{c';c''}\sigma_0(a) - \sigma_0(a). 
  \end{align*}
  The only non-trivial inequality is the third one, and it follows from induction hypothesis on $(c',\xi'_n,\sigma_0,\sigma_1)$ and $(c'',\xi''_n,\sigma_0',\sigma_1')$.
  \end{enumerate}
  
  We prove the four facts as follows: 
  \begin{enumerate}
  \item Let $\mu \in N_0'$. Since $\usedm(c';c'',\sigma_0,\xi_n)$, we have 
  \[
  \db{c''}\sigma'_0(\mu) - \sigma_0(\mu) = \db{c';c''}\sigma_0 - \sigma_0(\mu) \leq 1.
  \]
  Also, by \cref{lem:cnt-increase-and-rv-nochange} and the definition of $N_0'$,
  \[
  \db{c''}\sigma'_0(\mu) - \sigma_0(\mu) \geq \sigma'_0(\mu) - \sigma_0(\mu) = 1.
  \] 
  Thus, $\db{c''}\sigma'_0(\mu) - \sigma_0(\mu) = 1$, which implies that $\mu \in N_0$, as desired.
  
  \item We should show that $\usedm(c',\sigma_0,\xi_n')$ and $\usedm(c'',\sigma_0',\xi_n'')$ hold. The conjuncts in the definition of $\usedm(c',\sigma_0,\xi_n')$ except the
  second follow immediately from $\usedm(c';c'',\sigma_0,\xi_n)$ and the definition of $\xi_n$. For the remaining second conjunct, we use
  \cref{lem:cnt-increase-and-rv-nochange} and $\usedm(c';c'',\sigma_0,\xi_n)$, and prove the conjunct as shown below: for all $\mu \in \Name$,
  \[
  \db{c'}\sigma_0(\sampled_\mu) - \sigma_0(\sampled_\mu) \leq \db{c''}(\db{c'}\sigma_0)(\sampled_\mu) - \sigma_0(\sampled_\mu) \leq 1.
  \]
  For $\usedm(c'',\sigma_0',\xi_n'')$, we first note that the first and third conjuncts in its definition are direct consequences of $\usedm(c';c'',\sigma_0,\xi_n)$ and the definition of $\xi''_n$. We prove the second conjunct in the definition as follows: for all $\mu \in \Name$,
  \begin{align*}
  \db{c''}\sigma_0'(\sampled_\mu) - \sigma_0'(\sampled_\mu) 
  & {} =
  \db{c';c''}\sigma_0(\sampled_\mu) - \db{c'}\sampled_0(\sampled_\mu)
  \\
  & {} \leq
  \db{c';c''}\sigma_0(\sampled_\mu) - \sampled_0(\sampled_\mu) 
  \\
  & {}
  \leq 1.
  \end{align*}
  The first inequality uses \cref{lem:cnt-increase-and-rv-nochange}, and the second comes from $\usedm(c';c'',\sigma_0,\xi_n)$. It remains to show the fourth conjunct
  in the definition of $\usedm(c'',\sigma_0',\xi_n'')$, which we do below: for all $\mu \in \Name$,
  \begin{align*}
  & {} \db{c''}\sigma'_0(\sampled_\mu) - \sigma'_0(\sampled_\mu) = 1
  \\
  & \qquad {} \iff
    \db{c''}\sigma'_0(\sampled_\mu) - \sigma'_0(\sampled_\mu) = 1
    \land
    \sigma'_0(\sampled_\mu) - \sigma_0(\sampled_\mu) = 0
  \\
  & \qquad {} \iff 
  \mu \in N_0 \land \mu \not\in N'_0
  \\
  & \qquad {} \iff
  \mu \in \dom(\xi''_n).
  \end{align*}
  The first equivalence comes from \cref{lem:cnt-increase-and-rv-nochange} and $\db{c''}\sigma_0'(\sampled_\mu) - \sigma_0(\sampled_\mu) \leq 1$, which holds because of $\usedm(c';c'', \sigma_0, \xi_n)$. The second equivalence follows from the definitions of $N_0$ and $N'_0$.
  \item Since $\sigma_0 \sim_{\xi_n} \sigma_1$ implies $\sigma_0 \sim_{\xi'_n} \sigma_1$ and we have $\usedm(c',\sigma_0,\xi'_n)$, we can apply induction hypothesis to $(c',\xi'_n,\sigma_0,\sigma_1)$, and get $\db{c'}\sigma_1 \in \State$.
  \item We continue our reasoning in the previous item, and derive from induction hypothesis on $(c',\xi'_n,\sigma_0,\sigma_1)$ the fact that for all $x \in \PVar$,
  \[
  \sigma'_0(x) = \db{c'}\sigma_0(x) = \db{c'}\sigma_1(x) = \sigma'_1(x).
  \]
  Also, for all $\mu \in \dom(\xi_n'')$, 
  \[
  \sigma'_0(\mu) = \sigma_0(\mu) = \sigma_1(\mu) = \sigma_1'(\mu),
  \]
  where the first and third equalities come from \cref{lem:cnt-increase-and-rv-nochange}, and the second equality follows from the assumption that $\sigma_0 \sim_{\xi_n} \sigma_1$.
  \end{enumerate}
  
  \paragraph{\bf Case $c \equiv (\cif\ b\ \{c'\}\ \celse\ c'')$} Assume that $\db{b}\sigma_0 = \strue$. Then, $\db{c}\sigma_0 = \db{c'}\sigma_0$. We prove the five properties claimed by the lemma under this assumption. The proof for the other possibility, namely, $\db{b}\sigma_0 = \sfalse$ is similar. Since $\sigma_0 \sim_{\xi_n} \sigma_1$, the states $\sigma_0$ and $\sigma_1$ coincide for the values of program variables. Thus, $\db{b}\sigma_1 = \strue$, and $\db{c}\sigma_1 = \db{c'}\sigma_1$. Since $\db{c}\sigma_0 = \db{c'}\sigma_0$ as well, it suffices to show the five properties claimed by the lemma for $(c',\sigma_0,\sigma_1,\xi_n)$. This sufficient condition follows from induction hypothesis on $(c',\sigma_0,\sigma_1,\xi_n)$, since $\usedm(c,\sigma_0,\xi_n)$ and $\db{b}\sigma_0 = \strue$ imply $\usedm(c',\sigma_0,\xi_n)$.
  
  \paragraph{\bf Case $c \equiv (\cwhile\ b\ \{ c' \})$} Consider the version of $\usedm$ where the first parameter can be a state transformer $f : \State \to \State_\bot$, instead of a command. Similarly, consider the version of the five properties claimed by the lemma where we use a state transformer $f : \State \to \State_\bot$, again instead of a command. We denote both versions by $\usedm(f,\sigma''_0,\xi''_n)$ and $\varphi(f,\xi''_n,\sigma''_0,\sigma''_1)$. Let $\cT$ be the subset of $\State \to \State_\bot$ defined by
  \[
  f \in \cT \iff \forall \sigma''_0,\sigma''_1 \in \State.\,\forall \xi''_n \in \Statesub.\, \Big(\Big(\usedm(f,\sigma''_0,\xi''_n) \land \sigma''_0 \sim_{\xi''_n} \sigma''_1\Big) \implies \varphi(f,\xi''_n,\sigma''_0,\sigma''_1)\Big),
  \]
  and $F : [\State \to \State_\bot] \to [\State \to \State_\bot]$ be the following operator used in the semantics of the loop $\db{c}$:
  \[
  F(f)(\sigma) \defeq \text{if}\ (\db{b}\sigma = \strue)\ \text{then}\ f^\dagger(\db{c'}\sigma)\ \text{else}\ \sigma.
  \]
  We will show that $\cT$ contains $\lambda \sigma.\,\bot$ and is closed under taking the least upper bound of an increasing chain in $[\State \to \State_\bot]$,
  and the operator $F$ preserves $\cT$. These three conditions imply that $\db{c}$ is in $\cT$, which in turn gives the five properties claimed by the lemma.
  
 The first condition holds simply because $\usedm((\lambda \sigma.\bot),\sigma''_0,\xi''_n)$ is false for all $\sigma''_0$ and $\xi''_n$. To prove the closure under the least upper bound of a chain, consider an increasing sequence $f_0,f_1,\ldots$ in $\cT$. Let $f_\infty \defeq \bigsqcup_{n \in \N} f_n$. Consider $\sigma''_0,\sigma''_1 \in \State$ and $\xi''_n \in \Statesub$ such that $\sigma_0'' \sim_{\xi''_n} \sigma''_1$ and $\usedm(f_\infty, \sigma''_0,\xi''_n)$. We should show that $\varphi(f_\infty,\xi''_n,\sigma''_0,\sigma''_1)$ holds. By the definition of $f_\infty$, there exists $m \in \N$ such that $f_\infty(\sigma_0) = f_m(\sigma_0)$. Then, the assumption $\usedm(f_\infty,\sigma''_0,\xi''_n)$ implies $\usedm(f_m,\sigma''_0,\xi''_n)$. This in turn gives $\varphi(f_m,\xi''_n,\sigma''_0,\sigma''_1)$ because $f_m \in \cT$. By what we have proved and the definition of $f_\infty$, we have 
  \[
  f_m(\sigma''_0) = f_\infty(\sigma''_0) \in \State
  \quad\text{and}\quad
  f_m(\sigma''_1) = f_\infty(\sigma''_1) \in \State.
  \]
Thus, $\varphi(f_m,\xi''_n,\sigma''_0,\sigma''_1)$ entails $\varphi(f_\infty,\xi_n'',\sigma''_0,\sigma''_1)$, as desired. It remains to show that $F(f) \in \cT$ for all $f \in \cT$. 
Pick $f \in \cT$. We first replay our proof for the sequential-composition case after viewing $f^\dagger \circ \db{c'}$ as the sequential composition of $c'$ and $f$. This replay, then, gives the membership $f^\dagger \circ \db{c'} \in \cT$. Next, we replay our proof for the if case on $F(f)$ after viewing $f^\dagger \circ \db{c'}$ as the true branch and $\lambda \sigma.\,\sigma = \db{\cskip}$ as the false branch. This replay implies the required $F(f) \in \cT$.
\end{proof}

\begin{lemma}
  \label{lem:used-sigma-indep}
  Let $c$ be a command, $\sigma, \sigma' \in \State$, and $\xi_n \in \Statesub[\Name]$.
  \begin{itemize}
  \item If $\sigma|_V = \sigma'|_V$ for $V \defeq \PVar \cup \dom(\xi_n) \cup \{\like\}$,
    then $\used(c, \sigma, \xi_n)$ implies $\used(c, \sigma', \xi_n)$.
  \item  If $\sigma|_U = \sigma'|_U$ for $U \defeq \PVar \cup \dom(\xi_n)$,
    then $\usedm(c, \sigma, \xi_n)$ implies $\usedm(c, \sigma', \xi_n)$.
  \end{itemize}
\end{lemma}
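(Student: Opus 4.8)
The plan is to obtain both claims as direct corollaries of \cref{lem:used-properties}, which already contains the nontrivial inductive argument over commands; the only new work is bookkeeping about which conjuncts in the definitions of $\used$ and $\usedm$ survive a perturbation of the initial state outside $\PVar \cup \dom(\xi_n)$.

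First I would prove the second bullet. Assume $\sigma|_U = \sigma'|_U$ for $U \defeq \PVar \cup \dom(\xi_n)$ and $\usedm(c,\sigma,\xi_n)$. I would apply \cref{lem:used-properties} with $\sigma_0 \defeq \sigma$ and $\sigma_1 \defeq \sigma'$; its hypothesis is precisely $\usedm(c,\sigma,\xi_n)$ together with the agreement of $\sigma'$ and $\sigma$ on $\PVar \cup \dom(\xi_n)$. Property~\cref{lem:used-properties-1} then gives $\db{c}\sigma' \in \State$, and property~\cref{lem:used-properties-4} gives $\db{c}\sigma'(\sampled_\mu) - \sigma'(\sampled_\mu) = \db{c}\sigma(\sampled_\mu) - \sigma(\sampled_\mu)$ for every $\mu \in \Name$. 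From this last identity, the double-sampling bound $\db{c}\sigma'(\sampled_\mu)-\sigma'(\sampled_\mu) \le 1$ and the characterisation $\dom(\xi_n) = \{\mu \mid \db{c}\sigma'(\sampled_\mu)-\sigma'(\sampled_\mu)=1\}$ transfer verbatim from $\sigma$ using $\usedm(c,\sigma,\xi_n)$; and $\xi_n = \sigma'|_{\dom(\xi_n)}$ holds because $\dom(\xi_n) \subseteq U$, so $\sigma'|_{\dom(\xi_n)} = \sigma|_{\dom(\xi_n)} = \xi_n$. Hence all four conjuncts defining $\usedm(c,\sigma',\xi_n)$ are established.

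For the first bullet, recall that $\used(c,\sigma,\xi_n)$ means $\usedm(c,\sigma,\xi_n) \land \sigma(\like)=1$, and that $\PVar \cup \dom(\xi_n) \subseteq V \defeq \PVar \cup \dom(\xi_n) \cup \{\like\}$. Thus the hypothesis $\sigma|_V = \sigma'|_V$ entails $\sigma|_U = \sigma'|_U$, and the second bullet yields $\usedm(c,\sigma',\xi_n)$; moreover $\sigma'(\like) = \sigma(\like) = 1$ since $\like \in V$. Together these give $\used(c,\sigma',\xi_n)$.

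The proof has essentially no obstacle: it is a mechanical consequence of \cref{lem:used-properties}. The only points requiring a little care are matching the two variable-set hypotheses ($V$ versus $U$), observing that $\usedm$ is insensitive to the value of $\like$ so that the $\like$-conjunct splits off cleanly in the first bullet, and verifying each conjunct of the definitions of $\usedm$ and $\used$ individually.
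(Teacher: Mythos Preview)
Your proposal is correct and takes essentially the same approach as the paper: both invoke \cref{lem:used-properties} (specifically parts~\cref{lem:used-properties-1} and~\cref{lem:used-properties-4}) to transfer the termination and $\sampled_\mu$-difference conjuncts, and read off the remaining conjuncts directly from $\sigma|_V = \sigma'|_V$ (resp.\ $\sigma|_U = \sigma'|_U$). The only cosmetic difference is ordering: you prove the $\usedm$ bullet first and then derive the $\used$ bullet from it via $\used = \usedm \land (\sigma(\like)=1)$, whereas the paper proves the $\used$ bullet explicitly and then remarks that the $\usedm$ case is the same minus the $\like$ conjunct.
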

\begin{proof}
  Assume the settings in the statement of this lemma.
  For the first claim, assume $\used(c,\sigma,\xi_n)$.
  Then, by the definition of $\used$ and $\noerr$,
  \begin{align*}
    &\db{c}\sigma \in \State
    \land \big( \forall \mu \in \Name.\, \db{c}\sigma(\sampled_\mu) - \sigma(\sampled_\mu) \leq 1 \big)
    \land \big( \sigma(\like)=1 \big)
    \\
    &{}\land \big( \xi_n = \sigma|_{\dom(\xi_n)} \big)
    \land \big( \dom(\xi_n) = \{ \mu \in \Name \mid  \db{c}\sigma(\sampled_\mu) - \sigma(\sampled_\mu) = 1\} \big).
  \end{align*}
  From this and \cref{lem:used-properties}
  (which is applicable since $\used(c,\sigma,\xi_n)$ and $\sigma|_V = \sigma'|_V$),
  we obtain
  \begin{align*}
    &\db{c}\sigma' \in \State
    \land \big( \forall \mu \in \Name.\, \db{c}\sigma'(\sampled_\mu) - \sigma'(\sampled_\mu) \leq 1 \big)
    \land \big( \sigma'(\like)=1 \big)
    \\
    &{}\land \big( \xi_n = \sigma'|_{\dom(\xi_n)} \big)
    \land \big( \dom(\xi_n) = \{ \mu \in \Name \mid  \db{c}\sigma'(\sampled_\mu) - \sigma'(\sampled_\mu) = 1\} \big).
  \end{align*}
  Note that we have the first clause by \cref{lem:used-properties}-\cref{lem:used-properties-1},
  the second and fifth clauses by \cref{lem:used-properties}--\cref{lem:used-properties-4},
  and the third and fourth clauses by $\sigma|_V = \sigma'|_V$.
  Hence, $\used(c,\sigma',\xi_n)$ holds.
  The proof of the second claim is exactly the same except that
  we apply \cref{lem:used-properties} to $\sigma|_U = \sigma'|_U$
  to prove only the four clauses of $\used$ that exclude $\sigma'(\like)=1$.
\end{proof}

\subsection{Proof of \cref{lem:integral-same-under-reparam}}
\label{sec:proof:lemmas:unbiased-val2}

\begin{proof}[Proof of \cref{lem:integral-same-under-reparam}]
  We prove this lemma by induction on the structure of $c$.
  Let $g : \State[\PVar] \times \Statesub[\Name] \to \R$ be a measurable function and $\sigma_p \in \State[\PVar]$.
  In this proof, each equation involving integrals means (otherwise noted) that
  one side of the equation is defined if and only if the other side is defined,
  and when both sides are defined, they are the same.

  \paragraph{\bf Case $c \equiv \cskip$, $c \equiv (x:=e)$, or $c \equiv \cobs(d,r)$}
  In this case, $\ctr{c}{\pi} \equiv c$ so the desired equation holds.

  \paragraph{\bf Case $c \equiv (x := \csample(n,d,\lambda y.e))$}
  If $(n,d,\lambda y.e) \notin \dom(\pi)$, then $\ctr{c}{\pi} \equiv c$ and thus the desired equation holds.
  So assume that $\pi(n,d,\lambda y.e) = (d',\lambda y'.e')$ for some $d'$ and $\lambda y'.e'$.
  Then, $\ctr{c}{\pi} \equiv (x := \csample(n,d',\lambda y'.e'))$.

  First, by the validity of $\pi$,
  for all states $\sigma \in \State$ and measurable subsets $A \subseteq \R$,
  \begin{align*}
    \int \ind{\db{e[r/y]}\sigma \in A} \cdot \db{d}\sigma(r) \,dr
    = 
    \int \ind{\db{e'[r/y']}\sigma \in A} \cdot \db{d'}\sigma(r) \,dr,
  \end{align*}
  where both sides are always defined.
  Using this and the monotone convergence theorem, we can show that for all measurable $f : \R \to \R$,
  \begin{align}
    \label{eq:lem:integral-same-under-reparam-sam}
    \int f(\db{e[r/y]}\sigma) \cdot \db{d}\sigma(r) \,dr
    = 
    \int f(\db{e'[r/y']}\sigma) \cdot \db{d'}\sigma(r) \,dr.
  \end{align}
  
  Next, choose any $\sigma_{r_0} \in \State[\Var \setminus \PVar]$.
  Since $\fv(n) \subseteq \PVar$, there exists $\mu \in \Name$ such that
  \begin{align*}
    \db{n}(\sigma_p \oplus \sigma_r) &= \mu \quad\text{for all $\sigma_r \in \State[\Var \setminus \PVar]$}.
  \end{align*}
  Using this and $\fv(e),\fv(d) \subseteq \PVar$, we obtain the following: for any $\sigma_r \in \State[\Var \setminus \PVar]$,
  \begin{align*}
    \db{c}(\sigma_p \oplus \sigma_r)(\like)
    &= 1,
    \\
    \db{c}(\sigma_p \oplus \sigma_r)(\pr_\mu)
    &= \db{d}(\sigma_p \oplus \sigma_r)(\sigma_r(\mu))
    = \db{d}(\sigma_p \oplus \sigma_{r_0})(\sigma_r(\mu)),
    \\
    \db{c}(\sigma_p \oplus \sigma_r)(\val_\mu)
    &= \db{e[\sigma_r(\mu)/y]}(\sigma_p \oplus \sigma_r)
    =  \db{e[\sigma_r(\mu)/y]}(\sigma_p \oplus \sigma_{r_0}),
    \\
    \db{c}(\sigma_p \oplus \sigma_r)(\sampled_\mu)
    &= 1,
    \qquad \db{c}(\sigma_p \oplus \sigma_r)(\sampled_{\mu'})
    = 0 \quad\text{for $\mu' \not\equiv \mu$},
    \\
    \db{c}(\sigma_p \oplus \sigma_r)|_\PVar
    &= \sigma_p[ x \mapsto \db{e[\sigma_r(\mu)/y]}(\sigma_p \oplus \sigma_{r_0}) ].
  \end{align*}
  This implies that for any $\xi_n \in \Statesub[\Name]$, if $\getprsub(c)(\sigma_p, \xi_n) \neq 0$,
  then 
  \begin{align*}
    \dom(\xi_n) & = \{\mu\},
    \\
    \getprsub(c)(\sigma_p, \xi_n)
    &= 1 \cdot \db{d}(\sigma_p \oplus \sigma_{r_0})(\xi_n(\mu)),
    \\
    \getpvarsub(c)(\sigma_p, \xi_n)
    &= \sigma_p[x \mapsto \db{e[\xi_n(\mu)/y]}(\sigma_p \oplus \sigma_{r_0})], 
    \\
    \getvalsub(c)(\sigma_p, \xi_n)
    &= [\mu \mapsto \db{e[\xi_n(\mu)/y]}(\sigma_p \oplus \sigma_{r_0})].
  \end{align*}
  Note that the same equations hold for $\ctr{c}{\pi}$, except that we replace $d$, $e$, and $y$ in the RHS of the above equations by $d'$, $e'$, and $y'$.
  Using these, we obtain:
  \begin{align*}
    &
    \int d\xi_n \Big( \getprsub(c)(\sigma_p, \xi_n) \cdot
    g\Big(\getpvarsub(c)(\sigma_p, \xi_n), \getvalsub(c)(\sigma_p, \xi_n) \Big)\Big)
    \\
    &= \int_{[\{\mu\} \to \R]} d\xi_n \Big( \db{d}(\sigma_p \oplus \sigma_{r_0})(\xi_n(\mu)) \cdot
    \widehat{g}\Big( \db{e[\xi_n(\mu)/y]}(\sigma_p \oplus \sigma_{r_0}) \Big)\Big)
    \\
    &= \int_\R dr \Big( \db{d}(\sigma_p \oplus \sigma_{r_0})(r) \cdot
    \widehat{g}\Big( \db{e[r/y]}(\sigma_p \oplus \sigma_{r_0}) \Big)\Big)
    \\
    &= \int_\R dr \Big( \db{d'}(\sigma_p \oplus \sigma_{r_0})(r) \cdot
    \widehat{g}\Big( \db{e'[r/y']}(\sigma_p \oplus \sigma_{r_0}) \Big)\Big)
    \\
    &= \int_{[\{\mu\} \to \R]} d\xi_n \Big( \db{d'}(\sigma_p \oplus \sigma_{r_0})(\xi_n(\mu)) \cdot
    \widehat{g}\Big( \db{e'[\xi_n(\mu)/y']}(\sigma_p \oplus \sigma_{r_0}) \Big)\Big)
    \\
    &=
    \int d\xi_n \Big( \getprsub(\ctr{c}{\pi})(\sigma_p, \xi_n) \cdot
    g\Big(\getpvarsub(\ctr{c}{\pi})(\sigma_p, \xi_n), \getvalsub(\ctr{c}{\pi})(\sigma_p, \xi_n) \Big)\Big)
  \end{align*}
  where $\widehat{g} : \R \to \R$ is defined as
  $\widehat{g}(r) = g(\sigma_p[x \mapsto r], [\mu \mapsto r])$.
  Here the first and fifth equalities use the equations proven above,
  the second and fourth equalities use that $[\{\mu\} \to \R]$ is isomorphic to $\R$,
  and the third equality uses \cref{eq:lem:integral-same-under-reparam-sam}.
  This proves the desired equation.
  
  \paragraph{\bf Case $c \equiv (\cif\ b\ \{c'\}\ \celse\ c'')$}
  In this case, since $\fv(b) \subseteq \PVar$ and $\ctr{c}{\pi} \equiv \cif\ b\ \{\ctr{c'}{\pi}\}\ \celse\ \ctr{c''}{\pi}$,
  we have only two subcases:
  \begin{itemize}
  \item For all $\sigma_r \in \State[\Var \setminus \PVar]$,
    $\db{c}(\sigma_p \oplus \sigma_r) = \db{c'}(\sigma_p \oplus \sigma_r)$ and
    $\db{\ctr{c}{\pi}}(\sigma_p \oplus \sigma_r) = \db{\ctr{c'}{\pi}}(\sigma_p \oplus \sigma_r)$.
  \item For all $\sigma_r \in \State[\Var \setminus \PVar]$,    
    $\db{c}(\sigma_p \oplus \sigma_r) = \db{c''}(\sigma_p \oplus \sigma_r)$ and
    $\db{\ctr{c}{\pi}}(\sigma_p \oplus \sigma_r) = \db{\ctr{c''}{\pi}}(\sigma_p \oplus \sigma_r)$.
  \end{itemize}
  If the first subcase holds, we have
  \begin{align*}
    &
    \int d\xi_n \Big( \getprsub(c)(\sigma_p, \xi_n) \cdot
    g\Big(\getpvarsub(c)(\sigma_p, \xi_n), \getvalsub(c)(\sigma_p, \xi_n) \Big)\Big)
    \\
    &=
    \int d\xi_n \Big( \getprsub(c')(\sigma_p, \xi_n) \cdot
    g\Big(\getpvarsub(c')(\sigma_p, \xi_n), \getvalsub(c')(\sigma_p, \xi_n) \Big)\Big)
    \\
    &=
    \int d\xi_n \Big( \getprsub(\ctr{c'}{\pi})(\sigma_p, \xi_n) \cdot
    g\Big(\getpvarsub(\ctr{c'}{\pi})(\sigma_p, \xi_n), \getvalsub(\ctr{c'}{\pi})(\sigma_p, \xi_n) \Big)\Big)
    \\
    &=
    \int d\xi_n \Big( \getprsub(\ctr{c}{\pi})(\sigma_p, \xi_n) \cdot
    g\Big(\getpvarsub(\ctr{c}{\pi})(\sigma_p, \xi_n), \getvalsub(\ctr{c}{\pi})(\sigma_p, \xi_n) \Big)\Big)
  \end{align*}
  where the second equality is by IH on $c'$.
  If the second subcase holds, we obtain a similar equation by IH on $c''$.
  Hence, the desired equation holds in all subcases.

  \paragraph{\bf Case $c \equiv (c';c'')$}
  In this case, we obtain the following equation:
  \begin{align*}
    &
    \int d\xi_n \Big( \getprsub(c';c'')(\sigma_p, \xi_n) \cdot
    g\Big(\getpvarsub(c';c'')(\sigma_p, \xi_n), \getvalsub(c';c'')(\sigma_p, \xi_n) \Big)\Big)
    \\
    &=
    \int d\xi'_n \Big(
    \getprsub(c')(\sigma_p, \xi'_n) \cdot
    \int d\xi''_n \Big(
    \getprsub(c'')\big(\getpvarsub(c')(\sigma_p,\xi'_n), \xi''_n\big)
    \cdot \ind{\dom(\xi'_n) \cap \dom(\xi''_n) = \emptyset}
    \\
    & \qquad
    {} \cdot g\Big(
    \getpvarsub(c'')\big(\getpvarsub(c')(\sigma_p,\xi'_n),\xi''_n\big),
    \getvalsub(c')(\sigma_p,\xi'_n) \oplus
    \getvalsub(c'')\big(\getpvarsub(c')(\sigma_p,\xi'_n),\xi''_n\big)
    \Big)  \Big)\Big)
    \\
    &=
    \int d\xi'_n \Big(
    \getprsub(c')(\sigma_p, \xi'_n) \cdot
    g'\Big(
    \getpvarsub(c')(\sigma_p, \xi'_n),
    \getvalsub(c')(\sigma_p, \xi'_n) 
    \Big) \Big)
    \\
    & \qquad\text{where }
    g'(\widehat{\sigma_p'}, \widehat{\xi_n'})
    \defeq
    \int d\xi''_n \Big(
    \getprsub(c'')(\widehat{\sigma_p'}, \xi''_n)
    \cdot \ind{\dom(\widehat{\xi_n'}) \cap \dom(\xi''_n) = \emptyset}
    \\
    & \phantom{\qquad\text{where }
    g'(\widehat{\sigma_p'}, \widehat{\xi_n'})
    \defeq
    \int d\xi''_n \Big(}
    {} \cdot g\Big(
    \getpvarsub(c'')(\widehat{\sigma_p'},\xi''_n),
    \widehat{\xi_n'} \oplus \getvalsub(c'')(\widehat{\sigma_p'},\xi''_n)
    \Big)  \Big)
    \\
    &=
    \int d\xi'_n \Big(
    \getprsub(\ctr{c'}{\pi})(\sigma_p, \xi'_n) \cdot
    g'\Big(
    \getpvarsub(\ctr{c'}{\pi})(\sigma_p, \xi'_n),
    \getvalsub(\ctr{c'}{\pi})(\sigma_p, \xi'_n) 
    \Big) \Big) \quad\cdots\quad(*)
  \end{align*}
  where the first equality is from \cref{lem:integral-seq-decompose},
  the second equality uses $\dom(\xi_n') = \dom(\getvalsub(c')(\sigma_p,\xi_n'))$,
  and the third equality is by IH on $c'$.
  We now analyse $\smash{ g'(\widehat{\sigma_p'}, \widehat{\xi_n'}) }$ as follows: 
  \begin{align*}
    & g'(\widehat{\sigma_p'}, \widehat{\xi_n'})
    \\
    &=
    \int d\xi''_n \Big(
    \getprsub(c'')(\widehat{\sigma_p'}, \xi''_n)
    \cdot \ind{\dom(\widehat{\xi_n'}) \cap \dom(\xi''_n) = \emptyset}
    \cdot g\Big(
    \getpvarsub(c'')(\widehat{\sigma_p'},\xi''_n),
    \widehat{\xi_n'} \oplus \getvalsub(c'')(\widehat{\sigma_p'},\xi''_n)
    \Big)  \Big)
    \\
    &=
    \int d\xi''_n \Big(
    \getprsub(c'')(\widehat{\sigma_p'}, \xi''_n)
    \cdot g''\Big(
    \getpvarsub(c'')(\widehat{\sigma_p'}, \xi''_n),
    \getvalsub(c'')(\widehat{\sigma_p'}, \xi''_n)
    \Big) \Big)
    \\
    &\qquad \text{where }
    g''(\widehat{\sigma_p''}, \widehat{\xi_n''})
    \defeq
    \ind{\dom(\widehat{\xi_n'}) \cap \dom(\widehat{\xi''_n}) = \emptyset}
    \cdot g\Big(
    \widehat{\sigma_p''},
    \widehat{\xi_n'} \oplus \widehat{\xi''_n}
    \Big)
    \\
    &=
    \int d\xi''_n \Big(
    \getprsub(\ctr{c''}{\pi})(\widehat{\sigma_p'}, \xi''_n)
    \cdot g''\Big(
    \getpvarsub(\ctr{c''}{\pi})(\widehat{\sigma_p'}, \xi''_n),
    \getvalsub(\ctr{c''}{\pi})(\widehat{\sigma_p'}, \xi''_n)
    \Big) \Big)
    \\
    &=
    \int d\xi''_n \Big(
    \getprsub(\ctr{c''}{\pi})(\widehat{\sigma_p'}, \xi''_n)
    \cdot \ind{\dom(\widehat{\xi_n'}) \cap \dom(\xi''_n) = \emptyset}
    \cdot g\Big(
    \getpvarsub(\ctr{c''}{\pi})(\widehat{\sigma_p'},\xi''_n),
    \widehat{\xi_n'} \oplus \getvalsub(\ctr{c''}{\pi})(\widehat{\sigma_p'},\xi''_n)
    \Big)  \Big)
  \end{align*}
  where the second and fourth equalities use $\dom(\xi_n'') = \dom(\getvalsub(c'')(\widehat{\sigma_p'},\xi_n''))$,
  and the third equality is by IH on $c''$.
  Using this, we obtain the following equation for the main quantity $(*)$:
  \begin{align*}
    (*)
    &=
    \int d\xi'_n \Big(
    \getprsub(\ctr{c'}{\pi})(\sigma_p, \xi'_n) \cdot
    \int d\xi''_n \Big(
    \getprsub(\ctr{c''}{\pi})\big(\getpvarsub(\ctr{c'}{\pi})(\sigma_p,\xi'_n), \xi''_n\big)
    \cdot \ind{\dom(\xi'_n) \cap \dom(\xi''_n) = \emptyset}
    \\
    & \qquad
    {} \cdot g\Big(
    \getpvarsub(\ctr{c''}{\pi})\big(\getpvarsub(\ctr{c'}{\pi})(\sigma_p,\xi'_n),\xi''_n\big),
    \\
    & \phantom{
      \qquad {} \cdot g\Big(
    }
    \getvalsub(\ctr{c'}{\pi})(\sigma_p,\xi'_n) \oplus \getvalsub(\ctr{c''}{\pi})\big(\getpvarsub(\ctr{c'}{\pi})(\sigma_p,\xi'_n),\xi''_n\big)
    \Big)  \Big)\Big)
    \\
    &=
    \int d\xi_n \Big( \getprsub(\ctr{c'}{\pi};\ctr{c''}{\pi})(\sigma_p, \xi_n) \cdot
    g\Big(\getpvarsub(\ctr{c'}{\pi};\ctr{c''}{\pi})(\sigma_p, \xi_n), \getvalsub(\ctr{c'}{\pi};\ctr{c''}{\pi})(\sigma_p, \xi_n) \Big)\Big)
  \end{align*}
  where the first equality uses $\dom(\xi_n') = \dom(\getvalsub(c')(\sigma_p,\xi_n'))$,
  and the second equality is by \cref{lem:integral-seq-decompose}, as we did above.
  By $\smash{ \ctr{c';c''}{\pi} \equiv \ctr{c'}{\pi}; \ctr{c''}{\pi} }$, we get the desired equation.
  
  \paragraph{\bf Case $c \equiv (\cwhile\ b\ \{ c' \})$}
  In this case, $\ctr{c}{\pi} \equiv (\cwhile\ b\ \{ \ctr{c'}{\pi} \})$.
  Without loss of generality, assume that $g$ is a nonnegative function;
  we can prove the general case of $g$ directly from the nonnegative case of $g$,
  by considering the nonnegative part and the negative part of $g$ separately.

  Consider the version of $\getprsub(-)$, $\getpvarsub(-)$, and $\getvalsub(-)$,
  where the parameter can be a state transformer $f : \State \to \State_\bot$, instead of a command.
  We denote the versions by $\getprsub(f)$, $\getpvarsub(f)$, and $\getvalsub(f)$.
  Define $\cT \subseteq [\State \to \State_\bot]^2$ and $T: [\State \to \State_\bot]^2 \to [\State \to \State_\bot]^2$ by
  \begin{align*}
    (f, \overline{f}) \in \cT
    & \iff \int d\xi_n\, G_{g',\sigma_p'}(f)(\xi_n) = \int d\xi_n\, G_{g',\sigma_p'}(\overline{f})(\xi_n)
    \\ &\qquad\qquad
    \text{for all measurable $g' : \State[\PVar] \times \Statesub[\Name] \to \R_{\geq 0}$ and $\sigma_p' \in \State[\PVar]$},
    \\
    T(f, \overline{f}) &\defeq (F(f), \overline{F}(\overline{f})),
  \end{align*}
  where $G_{g',\sigma_p'}(f) \in \Statesub[\Name] \to \R_{\geq 0}$
  and $F, \overline{F} : [\State \to \State_\bot] \to [\State \to \State_\bot]$ are defined by
  \begin{align*}
    G_{g',\sigma_p'}(f)(\xi_n)
    & \defeq \getprsub(f)(\sigma_p', \xi_n) \cdot
    g'\Big(\getpvarsub(f)(\sigma_p', \xi_n), \getvalsub(f)(\sigma_p', \xi_n) \Big),
    \\
    F(f)(\sigma) &\defeq \smash{ \text{if}\ (\db{b}\sigma = \strue)\ \text{then}\ (f{}^\dagger \circ \db{c'})(\sigma)\ \text{else}\ \sigma, }
    \\
    \overline{F}(\overline{f})(\sigma) & \defeq \smash{ \text{if}\ (\db{b}\sigma = \strue)\ \text{then}\
    (\overline{f}{}^\dagger \circ \db{\ctr{c'}{\pi}})(\sigma)\ \text{else}\ \sigma. }
  \end{align*}
  Note that $F$ and $\overline{F}$ are the operators used in the semantics of the loops $\db{c}$ and $\smash{ \db{\ctr{c}{\pi}} }$, respectively.
  We will show that $\cT$ contains $(\lambda \sigma.\,\bot, \lambda \sigma.\, \bot)$,
  the operator $T$ preserves $\cT$,
  and $\cT$ is closed under taking the least upper bound of an increasing chain in $[\State \to \State_\bot]^2$,
  where the order on $[\State \to \State_\bot]^2$ is defined as:
  ${ (f_0, \overline{f_0}) \sqsubseteq (f_1, \overline{f_1}) } \iff f_0 \sqsubseteq f_1 \land { \overline{f_0} \sqsubseteq \overline{f_1}. }$
  These three conditions imply $(\db{c}, \db{\ctr{c}{\pi}}) \in \cT$, which in turn proves the desired equation:
  \begin{align*}
    &
    \int d\xi_n \Big( \getprsub(c)(\sigma_p, \xi_n) \cdot
    g\Big(\getpvarsub(c)(\sigma_p, \xi_n), \getvalsub(c)(\sigma_p, \xi_n) \Big)\Big)
    \\
    &= \int d\xi_n\, G_{{g},{\sigma_p}}(\db{c})(\xi_n)
    \\
    &= \int d\xi_n\, G_{{g},{\sigma_p}}(\db{\ctr{c}{\pi}})(\xi_n)
    \\
    &= \int d\xi_n \Big( \getprsub(\ctr{c}{\pi})(\sigma_p, \xi_n) \cdot
    g\Big(\getpvarsub(\ctr{c}{\pi})(\sigma_p, \xi_n), \getvalsub(\ctr{c}{\pi})(\sigma_p, \xi_n) \Big)\Big),
  \end{align*}
  where the second equality follows from $(\db{c}, \db{\ctr{c}{\pi}}) \in \cT$.
  
  The first condition holds simply because $G_{g', \sigma_p'}(\lambda \sigma.\, \bot)(\xi_n) = 0$ for all $g'$, $\sigma_p'$, and $\xi_n$.
  To show the second condition, pick $\smash{ (f, \overline{f}) \in \cT }$.
  Our goal is to show $\smash{ T(f, \overline{f}) \in \cT }$.
  We first replay our proof for the sequential-composition case on $(f{}^\dagger \circ \db{c'}, { \overline{f}{}^\dagger \circ \db{\ctr{c'}{\pi}} })$,
  after viewing $f{}^\dagger \circ \db{c'}$ and ${ \overline{f}{}^\dagger \circ \db{\ctr{c'}{\pi}} }$
  as the sequential composition of $c'$ and $f$, and of $\ctr{c'}{\pi}$ and ${ \overline{f} }$, respectively.
  This replay, then, gives the membership ${ (f{}^\dagger \circ \db{c'}, \overline{f}{}^\dagger \circ \db{\ctr{c'}{\pi}}) \in \cT }$.
  Next, we replay our proof for the if case on $\smash{ (F(f), \overline{F}(\overline{f})) }$,
  after viewing $f{}^\dagger \circ \db{c'}$ and ${ \overline{f}{}^\dagger \circ \db{\ctr{c'}{\pi}} }$ as the true branches,
  and $\lambda \sigma.\,\sigma = \db{\cskip}$ as the false branch.
  This replay implies the required $\smash{ T(f, \overline{f}) = (F(f), \overline{F}(\overline{f})) \in \cT }$.

  To show the third condition, consider an increasing sequence $\smash{ \{(f_k, \overline{f_k})\}_{k \in \N} }$ in $\cT$.
  Let $f_\infty \defeq \bigsqcup_{k \in \N} f_k$ and $\smash{ \overline{f}_\infty \defeq \bigsqcup_{k \in \N} \overline{f_k} }$.
  Consider a measurable $g' : \State[\PVar] \times \Statesub[\Name] \to \R_{\geq 0}$ and $\sigma_p' \in \State[\PVar]$.
  We should show that $\int d\xi_n\, G_{g',\sigma_p'}(f_\infty)(\xi_n) = { \int d\xi_n\, G_{g',\sigma_p'}(\overline{f_\infty})(\xi_n) }$.
  Since $\{f_k\}_{k \in \N}$ is  increasing, for any $\sigma \in \State$,
  $f_k(\sigma) \in \State$ implies that $f_{k'}(\sigma) = f_k(\sigma) \in \State$ for all $k' \geq k$.
  Hence, $\{ G_{g', \sigma_p'}(f_k) \}_{k \in \N}$ is a pointwise increasing sequence: for all $\xi_n \in \Statesub[\Name]$,
  \begin{align*}
    0 \leq G_{g', \sigma_p'}(f_k)(\xi_n) \leq G_{g', \sigma_p'}(f_{k+1})(\xi_n) \quad\text{for all $k \in \N$}.
  \end{align*}
  Also, by the definition of $f_\infty$,
  for any $\sigma \in \State$, there exists $K \in \N$ such that $f_\infty(\sigma) = f_K(\sigma)$;
  thus, $G_{g', \sigma_p'}(f_\infty)$ is the pointwise limit of $\{ G_{g', \sigma_p'}(f_k) \}_{k \in \N}$:
  for all $\xi_n \in \Statesub[\Name]$,
  \begin{align*}
    G_{g', \sigma_p'}(f_\infty)(\xi_n) = \lim_{k \to \infty} G_{g', \sigma_p'}(f_k)(\xi_n).
  \end{align*}
  Note that the corresponding results hold for  $\smash{ \overline{f_\infty} }$ and $\smash{ \overline{f_k} }$.
  Using these results, we finally obtain the following as desired:
  \begin{align*}
    \int d\xi_n\, G_{g', \sigma_p'}(f_\infty)(\xi_n)
    &= \lim_{k \to \infty} \int d\xi_n\, G_{g', \sigma_p'}(f_k)(\xi_n)
    \\
    &= \lim_{k \to \infty} \int d\xi_n\, G_{g', \sigma_p'}(\overline{f_k})(\xi_n)
    = \int d\xi_n\, G_{g', \sigma_p'}(\overline{f_\infty})(\xi_n).
  \end{align*}
  The first and third equalities follow from the monotone convergence theorem, applied to the above results.
  The second equality holds since $\smash{ (f_k, \overline{f_k}) \in \cT }$.
  This completes the proof of the while case.
\end{proof}

\begin{lemma}
  \label{lem:like-mult}
  Let $c$ be a command, $\sigma_0, \sigma_1 \in \State$, and $r_0 \in \R$.
  Suppose that $\sigma_1(\like) = \sigma_0(\like)\cdot r_0$ and
  $\sigma_1|_V = \sigma_0|_V$ for $V \defeq \Var \setminus \{\like\}$.
  If $\db{c}\sigma_0 \in \State$, then
  \begin{align*}
    &\db{c}\sigma_1 \in \State,
    &
    &\db{c}\sigma_1(\like) = \db{c}\sigma_0(\like) \cdot r_0,
    &
    &(\db{c}\sigma_1)|_V = (\db{c}\sigma_0)|_V.
  \end{align*}
\end{lemma}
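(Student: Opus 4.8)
The plan is to prove this lemma by structural induction on the command $c$, exactly paralleling the inductive structure used for \cref{lem:used-properties} and the other density-semantics lemmas in this section. The statement says that multiplying the initial value of $\like$ by a constant $r_0$ (and changing nothing else) causes the final value of $\like$ to be multiplied by $r_0$ as well, while leaving the rest of the final state untouched and preserving termination. This is intuitively obvious because $\like$ is a ``write-only accumulator'': the semantics only ever reads $\like$ in order to multiply it by a freshly computed density, and it never branches on $\like$ or feeds it into any expression. The induction formalises precisely this intuition.

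First I would handle the atomic cases. For $\cskip$ and $x := e$, the command does not touch $\like$ at all, so $\db{c}\sigma_1 = \sigma_1[\cdots]$ and $\db{c}\sigma_0 = \sigma_0[\cdots]$ with identical updates (note expressions depend only on program variables, and $\sigma_0,\sigma_1$ agree there); the three conclusions are then immediate. For $x := \csample(n,d,\lambda y.e')$, again $\like$ is not read or written, and since $\sigma_0,\sigma_1$ agree on everything but $\like$ we get $\db{n}\sigma_0 = \db{n}\sigma_1$, $\db{d}\sigma_0 = \db{d}\sigma_1$, and the same sampled value; hence the final states agree off $\like$ and the $\like$-values retain the factor $r_0$. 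For $\cobs(d,r)$ we have $\db{c}\sigma_i = \sigma_i[\like \mapsto \sigma_i(\like)\cdot \db{d}\sigma_i(r)]$, and since $\db{d}\sigma_0 = \db{d}\sigma_1$ (as $d$ reads only program variables) we get $\db{c}\sigma_1(\like) = \sigma_1(\like)\cdot \db{d}\sigma_1(r) = \sigma_0(\like)\cdot r_0 \cdot \db{d}\sigma_0(r) = \db{c}\sigma_0(\like)\cdot r_0$, and agreement off $\like$ is clear. Termination is trivially preserved in all atomic cases.

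Next I would do the compound cases. For $c';c''$, from $\db{c';c''}\sigma_0 \in \State$ we get $\db{c'}\sigma_0 \in \State$; apply the IH to $c'$ with the same $r_0$ to conclude $\db{c'}\sigma_1 \in \State$, $\db{c'}\sigma_1(\like) = \db{c'}\sigma_0(\like)\cdot r_0$, and $(\db{c'}\sigma_1)|_V = (\db{c'}\sigma_0)|_V$; then apply the IH to $c''$ with the intermediate states $\db{c'}\sigma_0$ and $\db{c'}\sigma_1$, which satisfy exactly the hypotheses of the lemma (agree off $\like$, with $\like$-ratio $r_0$). Composing gives all three conclusions for $c';c''$. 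For the if-command, since $b$ reads only program variables, $\db{b}\sigma_0 = \db{b}\sigma_1$, so the same branch executes in both runs and the IH on that branch finishes the case. For the while-loop $\cwhile\ b\ \{c'\}$, I would use the standard fixed-point technique already employed throughout this appendix: define $\cT \subseteq [\State\to\State_\bot]$ to be the set of state transformers $f$ such that for all $\sigma_0,\sigma_1,r_0$ satisfying the hypotheses, if $f(\sigma_0)\in\State$ then $f(\sigma_1)\in\State$, $f(\sigma_1)(\like) = f(\sigma_0)(\like)\cdot r_0$, and $f(\sigma_1)|_V = f(\sigma_0)|_V$; then show $\lambda\sigma.\bot \in \cT$, that $\cT$ is closed under least upper bounds of increasing chains (if $f_\infty(\sigma_0) = f_m(\sigma_0)$ for some $m$, then $f_m \in \cT$ transfers the property, and $f_\infty(\sigma_1) = f_m(\sigma_1)$ since the chain is increasing), and that the loop operator $F$ preserves $\cT$ (here one replays the sequential-composition and if-case arguments, viewing $f^\dagger\circ\db{c'}$ as a sequence and $F(f)$ as an if with $\db{\cskip}$ on the false branch, exactly as in the proof of \cref{lem:used-properties}). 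This yields $\db{\cwhile\ b\ \{c'\}} \in \cT$, which is the claim.

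The main obstacle, if any, is purely bookkeeping rather than conceptual: one must be careful in the sequential-composition and loop cases that the hypotheses of the lemma are re-established verbatim for the intermediate/recursive states (in particular that the agreement region is exactly $\Var\setminus\{\like\}$ and not something larger or smaller), and in the loop case that the Scott-style argument is set up with the property quantified over all $(\sigma_0,\sigma_1,r_0)$ so that it is genuinely $F$-invariant. There is no issue with admissibility subtleties here because the property is a simple implication that is automatically closed under directed unions, unlike the smoothness predicates discussed in \cref{remark:admissibility}.
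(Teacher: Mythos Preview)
Your proposal is correct and matches the paper's proof essentially step for step: structural induction on $c$, with the atomic cases handled by direct inspection of the semantics (noting that expressions in commands do not read $\like$), the sequential and conditional cases by chaining the induction hypothesis, and the loop case by the standard Scott-induction argument on a predicate $\cT$ closed under $\bot$, directed suprema, and the loop functional $F$. There is nothing to add.
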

\begin{proof}
Let $V \defeq \Var \setminus \{\like\}$. Pick an arbitrary command $c$. We prove the lemma by induction on the structure of $c$. 
Let $\sigma_0,\sigma_1 \in \State$ and $r_0 \in \R$ such that $\db{c}\sigma_0 \in \State$, $\sigma_1(\like) = \sigma_0(\like) \cdot r_0$, and $\sigma_1|_V = \sigma_0|_V$. We should show that $\db{c}\sigma_1 \in \State$, $\db{c}\sigma_1(\like) = \db{c}\sigma_0(\like) \cdot r_0$, and $\db{c}\sigma_1|_V = \db{c}\sigma_0|_V$. 

\paragraph{\bf Case $c \equiv \cskip$} In this case, what we need to prove is identical to the assumption on $(\sigma_0,\sigma_1,r_0)$. 

\paragraph{\bf Case $c \equiv (x:=e)$} By the semantics of the assignments, we have $\db{c}\sigma_1 \in \State$, and 
\[
\db{c}\sigma_1(\like) = \sigma_1(\like) = \sigma_0(\like) \cdot r_0 =  \db{c}\sigma_0(\like) \cdot r_0.
\]
The last requirement also holds since $\db{e}\sigma_0 = \db{e}\sigma_1$ and $\sigma_0|_V = \sigma_1|V$.

\paragraph{\bf Case $c \equiv (x:=\csample(n,d,\lambda y.e')$} By the semantics of the sample commands, we have $\db{c}\sigma_1 \in \State$. Also, the assignments do not change the value of $\like$, so that $\db{c}\sigma_1(\like) = \sigma_1(\like) = \sigma_0(\like) \cdot r_0 = \db{c}\sigma_0(\like) \cdot r_0$. It remains to show that $\db{c}\sigma_0|_V = \db{c}\sigma_1|_V$. Since $\sigma_0|_V = \sigma_1|_V$, we have $\db{n}\sigma_0 = \db{n}\sigma_1$ and $\db{d}\sigma_0 = \db{d}\sigma_1$. Let $\mu \defeq \db{n}\sigma_0$. Then, by the same reason, $\db{e'[\mu/y]}\sigma_0 = \db{e'[\mu/y]}\sigma_1$. Let $f \defeq \db{d}\sigma_0$ and $r \defeq \db{e'[\mu/y]}\sigma_0$. 
We prove the required equality as follows:
\begin{align*}
\db{c}\sigma_0|_V 
& = \sigma_0[x \mapsto r, \val_\mu \mapsto r, \pr_\mu \mapsto f(\sigma_0(\mu)), \sampled_\mu \mapsto \sigma_0(\sampled_\mu) + 1]|_V
\\
& = \sigma_1[x \mapsto r, \val_\mu \mapsto r, \pr_\mu \mapsto f(\sigma_0(\mu)), \sampled_\mu \mapsto \sigma_0(\sampled_\mu) + 1]|_V
\\
& = \sigma_1[x \mapsto r, \val_\mu \mapsto r, \pr_\mu \mapsto f(\sigma_1(\mu)), \sampled_\mu \mapsto \sigma_1(\sampled_\mu) + 1]|_V
\\
& = \db{c}\sigma_1|_V.
\end{align*}

\paragraph{\bf Case $c \equiv (\cobs(d,r))$} By the semantics of the observe commands, we have $\db{c}\sigma_1 \in \State$. Also, the observe commands do not change any variable except $\like$. So, $\db{c}\sigma_0|_V = \sigma_0|V = \sigma_1|_V = \db{c}\sigma_1|_V$. The remaining requirement for $\like$ can be proved as follows:
\[
\db{c}\sigma_1(\like) 
= 
\sigma_1(\like) \cdot \db{d}\sigma_1(r) 
=
\sigma_0(\like) \cdot r_0 \cdot \db{d}\sigma_1(r)
=
\sigma_0(\like) \cdot r_0 \cdot \db{d}\sigma_0(r)
=
\db{c}\sigma_0(\like) \cdot r_0.
\]

\paragraph{\bf Case $c \equiv (c';c'')$} We have $\db{c'}\sigma_0 \in \State$ and $\db{c''}(\db{c'}\sigma_0) \in \State$. We apply induction hypothesis first to $(c',\sigma_0,\sigma_1,r_0)$, and then to $(c'',\db{c'}\sigma_0,\db{c'}\sigma_1,r_0)$. Then, we get the requirements of the lemma.

\paragraph{\bf Case $c \equiv (\cif\ b\ \{c'\}\ \celse\ \{c''\})$} We deal with the case that $\db{b}\sigma_0 = \strue$. The other case of $\db{b}\sigma_0 = \sfalse$ can be proved similarly. Since $\db{b}\sigma_0 = \strue$, we have $\db{c'}\sigma_0 = \db{c}\sigma_0 \in \State$. Thus, we can apply induction hypothesis to $c'$.  If we do so, we get 
\[
\db{c'}\sigma_1 \in \State,\quad
\db{c'}\sigma_1(\like) = \db{c'}\sigma_0(\like) \cdot r_0,\quad
\text{and}\quad
\db{c'}\sigma_1|_V = \db{c'}\sigma_0|V.
\]
This gives the desired conclusion because $\db{b}\sigma_1 = \db{b}\sigma_0 = \strue$ and so $\db{c}\sigma_1 = \db{c'}\sigma_1$, and $\db{c}\sigma_0 = \db{c'}\sigma_0$.

\paragraph{\bf Case $c \equiv (\cwhile\ b\ \{c'\})$} Let $F$ be the operator on $[\State \to \State_\bot]$ such that $\db{c}$ is the least fixed point of $F$. Define a subset $\cT$ of $[\State \to \State_\bot]$ as follows: a function $f \in [\State \to \State_\bot]$ is in $\cT$ if and only if for all $\sigma'_0,\sigma'_1 \in \State$ such that 
$\sigma'_1|_V = \sigma'_0|_V$ and $\sigma'_1(\like) = \sigma'_0(\like) \cdot r_0$, we have
\begin{align*}
& f(\sigma'_0) \neq \bot \implies \Big(f(\sigma'_1) \neq \bot \land  f(\sigma'_1)|_V = f(\sigma'_0)|_V \land f(\sigma'_1)(\like) = f(\sigma'_0)(\like) \cdot r_0\Big).
\end{align*}
The set $\cT$ contains the least function $\lambda \sigma.\bot$, and is closed under the least upper bound of any chain in $[\State \to \State_\bot]$. It is also closed under $F$. This $F$-closure follows essentially from our arguments for sequential composition, if command, and skip, and induction hypothesis on $c'$. What we have shown for $\cT$ implies that $\cT$ contains the least fixed point of $F$, which gives the desired property for $c$.
\end{proof}

\begin{lemma}
  \label{lem:integral-seq-decompose}
  Let $c', c''$ be commands and $g : \State[\PVar] \times \Statesub[\Name] \to \R$ be a measurable function.
  Then, for any $\sigma_p \in \State[\PVar]$,
  \begin{align*}
    &
    \int d\xi_n \Big(
    \getprsub(c';c'')(\sigma_p, \xi_n) \cdot g\Big(\getpvarsub(c';c'')(\sigma_p,\xi_n), \getvalsub(c';c'')(\sigma_p,\xi_n)\Big)
    \Big)
    \\
    &=
    \int d\xi'_n \Big(
    \getprsub(c')(\sigma_p, \xi'_n) \cdot
    \int d\xi''_n \Big(
    \getprsub(c'')(\getpvarsub(c')(\sigma_p,\xi'_n), \xi''_n)
    \cdot \ind{\dom(\xi'_n) \cap \dom(\xi''_n) = \emptyset}
    \\
    & \qquad
    {} \cdot g\Big(
    \getpvarsub(c'')(\getpvarsub(c')(\sigma_p,\xi'_n),\xi''_n),
    \getvalsub(c')(\sigma_p,\xi'_n) \oplus \getvalsub(c'')(\getpvarsub(c')(\sigma_p,\xi'_n),\xi''_n)
    \Big)  \Big)\Big).
  \end{align*}
\end{lemma}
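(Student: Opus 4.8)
\textbf{Proof proposal for Lemma~\ref{lem:integral-seq-decompose}.}

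The plan is to prove this by a direct computation that unfolds the definitions of $\getprsub$, $\getpvarsub$, and $\getvalsub$ for a sequential composition $c';c''$ and traces the flow of a state through the two commands. The key conceptual point is that when we run $c';c''$ from an initial state whose $\Name$ part restricted to the sampled names equals some $\xi_n$, the command $c'$ samples a subset of those names (corresponding to some substate $\xi'_n$) and $c''$ samples the complementary subset (some substate $\xi''_n$ with $\dom(\xi'_n) \uplus \dom(\xi''_n) = \dom(\xi_n)$); moreover the intermediate program-variable state fed into $c''$ is exactly $\getpvarsub(c')(\sigma_p, \xi'_n)$, and the prior/value contributions factor as a product/disjoint union over the two halves. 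This is precisely the ``splitting'' phenomenon that was already exploited in the proof of Lemma~\ref{lem:integral-sampled-terms-only} (the decomposition $\sigma_n = \xi_n \oplus \xi'_n$ there), so I would reuse the same bookkeeping machinery.

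First I would set up notation: fix $\sigma_p \in \State[\PVar]$ and a measurable $g$, and pick an arbitrary ``filler'' state $\sigma_r$ so that $\getprsub(c';c'')(\sigma_p,\xi_n)$ etc.\ are computed from $\db{c';c''}(\sigma_p \oplus \xi_n \oplus \sigma_r) = \db{c''}^\dagger(\db{c'}(\sigma_p \oplus \xi_n \oplus \sigma_r))$, using that these functions are well-defined independently of $\sigma_r$ by Lemma~\ref{lem:subfns-well-defined}. Then I would argue, using Lemma~\ref{lem:used-properties} and its auxiliary predicate $\used$, that $\used(c';c'', \sigma, \xi_n)$ holds iff there is a (unique) decomposition $\dom(\xi_n) = \dom(\xi'_n) \uplus \dom(\xi''_n)$ with $\xi'_n = \xi_n|_{\dom(\xi'_n)}$, $\xi''_n = \xi_n|_{\dom(\xi''_n)}$, $\used(c', \sigma, \xi'_n)$, and $\used(c'', \db{c'}\sigma, \xi''_n)$ — this is exactly the content of facts (1)–(4) in the sequential-composition case of the proof of Lemma~\ref{lem:used-properties}, specialized by also tracking that $\like$ is carried through multiplicatively (Lemma~\ref{lem:like-mult}). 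Under this decomposition I would check the three factorizations: (a) $\getprsub(c';c'')(\sigma_p,\xi_n) = \getprsub(c')(\sigma_p,\xi'_n)\cdot \getprsub(c'')(\getpvarsub(c')(\sigma_p,\xi'_n),\xi''_n)$, which uses that $\like$ starts at $1$ and the product over $\dom(\xi_n)$ of $\pr_\mu$-values splits over the two halves since $c''$ does not touch $\pr_\mu$ for $\mu$ sampled by $c'$ (Lemma~\ref{lem:used-properties}\cref{lem:used-properties-2}); (b) $\getpvarsub(c';c'')(\sigma_p,\xi_n) = \getpvarsub(c'')(\getpvarsub(c')(\sigma_p,\xi'_n),\xi''_n)$, by Lemma~\ref{lem:used-properties}\cref{lem:used-properties-3}; and (c) $\getvalsub(c';c'')(\sigma_p,\xi_n) = \getvalsub(c')(\sigma_p,\xi'_n)\oplus \getvalsub(c'')(\getpvarsub(c')(\sigma_p,\xi'_n),\xi''_n)$, again by Lemma~\ref{lem:used-properties}\cref{lem:used-properties-2}\&\cref{lem:used-properties-3} together with the fact that the $\val_\mu$-values for $\mu$ sampled by $c'$ are not overwritten by $c''$.

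Having these pointwise identities, I would convert the single integral over $\xi_n \in \Statesub[\Name]$ into an iterated integral over the pair $(\xi'_n,\xi''_n)$ with the disjointness indicator $\ind{\dom(\xi'_n)\cap\dom(\xi''_n)=\emptyset}$ inserted, exactly mirroring the manipulation $\int d\sigma_n = \sum_{K}\int_{[K\to\R]}d\xi_n \int_{[\Name\setminus K\to\R]}d\xi'_n$ used in the proof of Lemma~\ref{lem:integral-sampled-terms-only}; here one uses that $\Statesub[\Name] = \bigcup_{K\subseteq\Name}\State[K]$ with its underlying measure, so a measurable function on pairs with the disjointness constraint integrates the same way whether one sums over partitions $\dom(\xi_n)=\dom(\xi'_n)\uplus\dom(\xi''_n)$ of each fixed $\dom(\xi_n)$ or first fixes $\xi'_n$ freely and then lets $\xi''_n$ range with $\dom(\xi''_n)\cap\dom(\xi'_n)=\emptyset$. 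Substituting the three factorizations into the integrand and pushing the $\xi''_n$-integral inside (legitimate since $\getprsub(c')(\sigma_p,\xi'_n)$ does not depend on $\xi''_n$) yields exactly the right-hand side, where throughout each displayed integral equation is understood in the usual sense that one side is defined iff the other is and they then agree. The main obstacle, I expect, is the careful justification that the intermediate program-variable state and the set of names sampled by $c'$ are genuinely determined by $\xi'_n$ alone — i.e.\ that none of $\getpvarsub(c')$, the splitting $\dom(\xi_n)=\dom(\xi'_n)\uplus\dom(\xi''_n)$, or the feed-forward into $c''$ depends on the unobserved filler $\sigma_r$ or on $\xi''_n$; this is where the $\used$/$\usedm$ invariants of Lemma~\ref{lem:used-properties} and the $\like$-multiplicativity of Lemma~\ref{lem:like-mult} do all the real work, and assembling them into a clean equivalence ``$\used(c';c'',\sigma,\xi_n) \iff \exists!\,(\xi'_n,\xi''_n)\ldots$'' is the delicate step. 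The rest is measure-theoretic bookkeeping of the type already carried out in Section~\ref{sec:proof:unbiased-val}.
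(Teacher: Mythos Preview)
Your proposal is correct and follows essentially the same route as the paper's proof: a single-to-double integral conversion via the $\used$ predicate and the partition $\dom(\xi_n)=\dom(\xi'_n)\uplus\dom(\xi''_n)$, the three factorisations of $\getprsub$, $\getpvarsub$, $\getvalsub$ across the sequence using Lemmas~\ref{lem:used-properties} and~\ref{lem:like-mult}, and a final reassembly that needs both directions of the equivalence to strip the indicator terms. The one technical point to watch when writing it out is that for the intermediate state $\db{c'}\sigma$ one only gets $\usedm(c'',\db{c'}\sigma,\xi''_n)$ rather than $\used$ (since $\like$ need not be $1$ there), which the paper handles by explicitly resetting $\like\mapsto 1$ via Lemma~\ref{lem:like-mult}; your mention of ``$\like$ carried through multiplicatively'' shows you are aware of this, but your stated equivalence should use $\usedm$ for the $c''$-clause.
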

\begin{proof}
  Let $c', c''$ be commands, $g : \State[\PVar] \times \Statesub[\Name] \to \R$ a measurable function, and $\sigma_p \in \State[\PVar]$.
  In this proof, each equation involving integrals means (otherwise noted) that
  one side of the equation is defined if and only if the other side is defined,
  and when both sides are defined, they are the same.

  First, to convert a single-integral on $\xi_n$ to a double-integral on $\xi_n'$ and $\xi_n''$ as in the desired equation, we show the following claim:
  for any measurable $f : \Statesub[\Name] \to \State$ and $h : \Statesub[\Name] \to \R$
  such that $f(\xi_n)|_{\dom(\xi_n)} = \xi_n$ for all $\xi_n \in \Statesub[\Name]$, we have
  \begin{align*}
    & \int d\xi_n \Big(
    \ind{\used(c';c'', f(\xi_n), \xi_n)} \cdot h(\xi_n)
    \Big)
    \\
    &= \sum_{K \subseteq \Name} \int_{[K \to \R]} d\xi_n \Big(
    \ind{\used(c';c'', f(\xi_n), \xi_n)} \cdot h(\xi_n)
    \Big)
    \\
    &= \sum_{K \subseteq \Name} \int_{[K \to \R]} d\xi_n \Big(
    \ind{\used(c';c'', f(\xi_n), \xi_n)} \cdot h(\xi_n)
    \cdot \sum_{L \subseteq K} \ind{\used(c', f(\xi_n), \xi_n|_L)}
    \Big)
    \\
    &= \sum_{K \subseteq \Name} \sum_{L \subseteq K} \int_{[K \to \R]} d\xi_n \Big(
    \ind{\used(c', f(\xi_n), \xi_n|_L)}
    \cdot \ind{\used(c';c'', f(\xi_n), \xi_n)} \cdot h(\xi_n)
    \Big)
    \\
    &= \sum_{K \subseteq \Name} \sum_{L \subseteq K} \int_{[L \to \R]} \hspace{-1em} d\xi_n' \int_{[K \setminus L \to \R]} \hspace{-1.4em} d\xi_n'' \Big(
    \ind{\used(c', f(\xi_n' \oplus \xi_n''), \xi_n')}
    \cdot \ind{\used(c';c'', f(\xi_n' \oplus \xi_n''), \xi_n' \oplus \xi_n'')} \cdot h(\xi_n' \oplus \xi_n'')
    \Big)
    \\
    &= \sum_{L' \subseteq \Name} \sum_{\substack{M' \subseteq \Name \\ L'\cap M' = \emptyset}}
    \int_{[L' \to \R]} \hspace{-1em} d\xi_n' \int_{[M' \to \R]} \hspace{-1.4em} d\xi_n'' \Big(
    \ind{\used(c', f(\xi_n' \oplus \xi_n''), \xi_n')}
    \cdot \ind{\used(c';c'', f(\xi_n' \oplus \xi_n''), \xi_n' \oplus \xi_n'')} \cdot h(\xi_n' \oplus \xi_n'')
    \Big)
    \\
    &= \sum_{L' \subseteq \Name} \int_{[L' \to \R]} d\xi_n' \Big( \sum_{M' \subseteq \Name} \int_{[M' \to \R]} d\xi_n'' \Big(
    \ind{\dom(\xi_n') \cap \dom(\xi_n'') = \emptyset}
    \cdot \ind{\used(c', f(\xi_n' \oplus \xi_n''), \xi_n')}
    \\
    &\phantom{= \sum_{L' \subseteq \Name} \int_{[L' \to \R]} d\xi_n' \Big( \sum_{\substack{M' \subseteq \Name}} \int_{[M' \to \R]} d\xi_n'' \Big(}
    \cdot \ind{\used(c';c'', f(\xi_n' \oplus \xi_n''), \xi_n' \oplus \xi_n'')} \cdot h(\xi_n' \oplus \xi_n'')
    \Big) \Big)
    \\
    &= \int d\xi_n' \int d\xi_n'' \Big(
    \ind{\dom(\xi_n') \cap \dom(\xi_n'') = \emptyset}
    \cdot \ind{\used(c', f(\xi_n' \oplus \xi_n''), \xi_n')}
    \cdot \ind{\used(c';c'', f(\xi_n' \oplus \xi_n''), \xi_n' \oplus \xi_n'')} \cdot h(\xi_n' \oplus \xi_n'')
    \Big).
  \end{align*}
  The first equality uses the definition of $\Statesub[\Name]$ and its measure.
  The second equality uses that $\used(c';c'', \sigma, \xi_n)$ implies
  a unique existence of $L \subseteq \dom(\xi_n)$ such that $\used(c', \sigma, \xi_n|L)$;
  we already showed the existence of such $L$ in the proof of \cref{lem:used-properties} (for the sequential composition case),
  and the uniqueness follows from the definition of $\used$.
  The third equality uses that $K$ is finite,
  and the fourth equality uses that $[K \to \R]$ is isomorphic to $[L \to \R] \times [K \setminus L \to \R]$ for any $L \subseteq K$.
  The fifth equality holds uses that $\{(L, K \setminus L) \mid K \subseteq \Name, L \subseteq K\}
  = \{(L', M') \mid L', M' \subseteq \Name, L' \cap M' = \emptyset\}$.
  The sixth equality uses that $\Name$ is finite, $\dom(\xi_n')=L'$, and $\dom(\xi_n'')=M'$.
  The seventh equality uses the definition of $\Statesub[\Name]$ and its measure.

  Second, to decompose $\getprsub(c';c'')$, $\getpvarsub(c';c'')$, and $\getvalsub(c';c'')$ as in the desired equation,
  we show the following claim. Suppose that $\sigma \in \State$ and $\xi_n', \xi_n'' \in \Statesub[\Name]$
  with $\dom(\xi_n') \cap \dom(\xi_n'') = \emptyset$ satisfy
  $\used(c';c'', \sigma, \xi_n' \oplus \xi_n'')$ and $\used(c', \sigma, \xi_n')$.
  Then, we first get
  \begin{align*}
    \getprsub(c';c'')(\sigma|_\PVar, \xi_n' \oplus \xi_n'')
    &= \textstyle \db{c'; c''} \sigma(\like) \cdot \prod_{\mu \in \dom(\xi_n' \oplus \xi_n'')} \db{c'; c''} \sigma(\pr_\mu)
    \\
    &= \textstyle \db{c''}(\db{c'}\sigma)(\like)
    \\
    & \textstyle \quad
    \cdot \prod_{\mu \in \dom(\xi_n')}  \db{c''}(\db{c'}\sigma)(\pr_\mu)
    \cdot \prod_{\mu \in \dom(\xi_n'')} \db{c''}(\db{c'}\sigma)(\pr_\mu)
    \\
    &= \textstyle \db{c'}\sigma(\like) \cdot \db{c''}\big((\db{c'}\sigma)[\like \mapsto 1]\big)(\like)
    \\
    & \textstyle \quad 
    \cdot \prod_{\mu \in \dom(\xi_n')}  \db{c'}\sigma(\pr_\mu)
    \cdot \prod_{\mu \in \dom(\xi_n'')} \db{c''}\big((\db{c'}\sigma)[\like \mapsto 1]\big)(\pr_\mu)
    \\ 
    &= \textstyle \getprsub(c')(\sigma|_\PVar, \xi_n') \cdot \getprsub(c'')\big((\db{c'}\sigma)[\like \mapsto 1]|_\PVar, \xi_n''\big)
    \\
    &= \textstyle \getprsub(c')(\sigma|_\PVar, \xi_n') \cdot \getprsub(c'')\big(\getpvarsub(c')(\sigma|_\PVar, \xi_n'), \xi_n''\big).
  \end{align*}
  Here is the proof of each equality.
  \begin{itemize}
  \item
    The first equality uses $\used(c';c'', \sigma, \xi_n' \oplus \xi_n'')$.
  \item
    The second equality uses $\noerr(c';c'', \sigma)$, which comes from $\used(c';c'', \sigma, -)$.
  \item
    The third equality comes from \cref{lem:like-mult},
    \cref{lem:used-properties}-\cref{lem:used-properties-2},
    and \cref{lem:used-properties}-\cref{lem:used-properties-3}.
    The two applications of \cref{lem:used-properties} are valid
    since $\usedm(c'', \db{c'}\sigma, \xi_n'')$ and  $\dom(\xi_n') \cap \dom(\xi_n'') = \emptyset$,
    where the first predicate follows from $\used(c';c'', \sigma, \xi_n' \oplus \xi_n'')$ and $\used(c', \sigma, \xi_n')$
    by the claim in the proof of \cref{lem:used-properties} (for the sequential composition case).
  \item
    The fourth equality uses that $\used(c', \sigma, \xi_n')$ and $\used(c'', (\db{c'}\sigma)[\like \mapsto 1], \xi_n'')$,
    where the second predicate follows from $\usedm(c'', \db{c'}\sigma, \xi_n'')$ and \cref{lem:used-sigma-indep}.
  \item
    The fifth equality uses
    $(\db{c'}\sigma)[\like \mapsto 1]|_\PVar = (\db{c'}\sigma)|_\PVar = \getpvarsub(c')(\sigma|_\PVar, \xi_n')$,
    where the second part of the equation comes from $\used(c', \sigma, \xi_n')$.
  \end{itemize}
  By the same argument so far (except that $\pr_\mu$ and $\times$ are replaced by $\val_\mu$ and $\oplus$),
  we next get
  \begin{align*}
    \getvalsub(c';c'')(\sigma|_\PVar, \xi_n' \oplus \xi_n'')
    &= \getvalsub(c')(\sigma|_\PVar, \xi_n') \oplus \getvalsub(c'')\big(\getpvarsub(c')(\sigma|_\PVar, \xi_n'), \xi_n''\big).
  \end{align*}
  By a similar argument, we lastly get
  \begin{align*}
    \getpvarsub(c';c'')(\sigma|_\PVar, \xi_n' \oplus \xi_n'')
    &=\db{c';c''}\sigma|_\PVar
    \\
    &=\db{c''}(\db{c'}\sigma)|_\PVar
    \\
    &=\db{c''}\big((\db{c'}\sigma)[\like \mapsto 1]\big)|_\PVar
    \\
    &= \getpvarsub(c'')\big((\db{c'}\sigma)[\like \mapsto 1]|_\PVar, \xi_n''\big)
    \\
    &= \getpvarsub(c'')\big(\getpvarsub(c')(\sigma|_\PVar, \xi_n'), \xi_n''\big).
  \end{align*}
  Here is the proof of each equality.
  \begin{itemize}
  \item
    The first equality uses $\used(c';c'', \sigma, \xi_n' \oplus \xi_n'')$.
  \item
    The second equality uses $\noerr(c';c'', \sigma)$ (shown above).
  \item
    The third equality uses $\usedm(c'', \db{c'}\sigma, \xi_n'')$ (shown above) and \cref{lem:used-properties}-\cref{lem:used-properties-3}.
  \item
    The fourth equality uses $\used(c'', (\db{c'}\sigma)[\like \mapsto 1], \xi_n'')$ (shown above).
  \item
    The fifth equality uses $(\db{c'}\sigma)[\like \mapsto 1]|_\PVar = \getpvarsub(c')(\sigma|_\PVar, \xi_n')$ (shown above).
  \end{itemize}

  Third, to remove some indicator terms that will appear in our derivation, we show the next claim:
  for any $\sigma \in \State$ and $\xi_n', \xi_n'' \in \Statesub[\Name]$ with $\dom(\xi_n') \cap \dom(\xi_n'') = \emptyset$,
  $\usedm(c', \sigma, \xi_n')$ and $\usedm(c'', \db{c'}\sigma, \xi_n'')$ imply $\usedm(c';c'', \sigma, \xi_n' \oplus \xi_n'')$.
  Assume the premise. Then, we have
  \begin{align*}
    \db{c'}\sigma \in \State 
    & {} \land \xi_n' = \sigma|_{\dom(\xi_n')}
    \\
    & {} \land \big(\forall \mu \in \Name.\, \db{c'}\sigma(\sampled_\mu) - \sigma(\sampled_\mu) \leq 1\big)
    \\
    & {} \land \dom(\xi_n') = \{ \mu \in \Name \mid  \db{c'}\sigma'(\sampled_\mu) - \sigma(\sampled_\mu) = 1\},
    \\
    \db{c''}(\db{c'}\sigma) \in \State 
    & {} \land \xi_n'' = (\db{c'}\sigma)|_{\dom(\xi_n'')}
    \\
    & {} \land \big(\forall \mu \in \Name.\, \db{c''}(\db{c'}\sigma)(\sampled_\mu) - \db{c'}\sigma(\sampled_\mu) \leq 1 \big)
    \\
    & {} \land \dom(\xi_n'') = \{ \mu \in \Name \mid  \db{c''}(\db{c'}\sigma)(\sampled_\mu) - \db{c'}\sigma(\sampled_\mu) = 1\}.
  \end{align*}
  We should show
  \begin{align*}
    \db{c';c''}\sigma \in \State 
    & {} \land \xi_n' \oplus \xi_n'' = \sigma|_{\dom(\xi_n' \oplus \xi_n'')}
    \\
    & {} \land \big(\forall \mu \in \Name.\, \db{c';c''}\sigma(\sampled_\mu) - \sigma(\sampled_\mu) \leq 1\big)
    \\
    & {} \land \dom(\xi_n' \oplus \xi_n'') = \{ \mu \in \Name \mid  \db{c';c''}\sigma'(\sampled_\mu) - \sigma(\sampled_\mu) = 1\}.
  \end{align*}
  We obtain the four clauses as follows.
  The first clause follows from $\db{c''}(\db{c'}\sigma) \in \State$.
  The second clause comes from $\xi_n' = \sigma|_{\dom(\xi_n')}$ and $\xi_n'' = (\db{c'}\sigma)|_{\dom(\xi_n'')} = \sigma|_{\dom(\xi_n'')}$,
  where the last equality comes from \cref{lem:cnt-increase-and-rv-nochange}.
  The third and fourth clauses hold by the following:
  \begin{align*}
    &\db{c';c''}\sigma(\sampled_\mu) - \sigma(\sampled_\mu)
    \\
    &=
    \db{c''}(\db{c'}\sigma)(\sampled_\mu)  - \sigma(\sampled_\mu)
    \\
    &=
    \begin{cases}
      \db{c'}\sigma(\sampled_\mu) - \sigma(\sampled_\mu)
      = 1
      & \text{if $\mu \in \dom(\xi_n')$}
      \\
      \db{c''}(\db{c'}\sigma)(\sampled_\mu) - \db{c'}\sigma(\sampled_\mu)
      = 1
      & \text{if $\mu \in \dom(\xi_n'')$}
      \\
      \sigma(\sampled_\mu) - \sigma(\sampled_\mu)
      = 0
      & \text{if $\mu \in \Name \setminus (\dom(\xi_n') \cup \dom(\xi_n''))$}.
    \end{cases}
  \end{align*}
  \begin{itemize}
  \item 
    The first case uses \cref{lem:used-properties}-\cref{lem:used-properties-2}
    (applied to $\usedm(c'', \db{c'}\sigma, \xi_n'')$ and $\dom(\xi_n') \cap \dom(\xi_n'') = \emptyset$)
    and the fourth clause of $\usedm(c', \sigma, \xi_n')$.
  \item 
    The second case uses \cref{lem:used-properties}-\cref{lem:used-properties-2}
    (applied to $\usedm(c', \sigma, \xi_n')$ and $\dom(\xi_n') \cap \dom(\xi_n'') = \emptyset$)
    and the fourth clause of $\usedm(c'', \db{c'}\sigma, \xi_n'')$.
  \item 
    The third case uses \cref{lem:used-properties}-\cref{lem:used-properties-2}
    (applied to $\usedm(c'', \db{c'}\sigma, \xi_n'')$ and $\usedm(c', \sigma, \xi_n')$).
  \end{itemize}
    
  Finally, we put the three results together. Define $f : \Statesub[\Name] \to \State$ as
  $f(\xi_n) \defeq \sigma_p \oplus \xi_n \oplus (\lambda v \in \Var \setminus (\PVar \cup \dom(\xi_n)).\, 1)$.
  Then, we obtain the desired equation as follows:
  \begin{align*}
    &
    \int d\xi_n \Big(
    \getprsub(c';c'')(\sigma_p, \xi_n) \cdot g\Big(\getpvarsub(c';c'')(\sigma_p,\xi_n), \getvalsub(c';c'')(\sigma_p,\xi_n)\Big)
    \Big)
    \\
    &=
    \int d\xi_n \Big(
    \ind{\used(c';c'', f(\xi_n), \xi_n)} \cdot
    \getprsub(c';c'')(\sigma_p, \xi_n) \cdot g\Big(\getpvarsub(c';c'')(\sigma_p,\xi_n), \getvalsub(c';c'')(\sigma_p,\xi_n)\Big)
    \Big)
    \\
    &=
    \int d\xi_n' \int d\xi_n'' \Big(
    \ind{\dom(\xi_n') \cap \dom(\xi_n'') = \emptyset}
    \cdot \ind{\used(c', f(\xi_n' \oplus \xi_n''), \xi_n')} \cdot \ind{\used(c';c'', f(\xi_n' \oplus \xi_n''), \xi_n' \oplus \xi_n'')} 
    \\
    &\qquad\quad
    \cdot \getprsub(c';c'')(\sigma_p, \xi_n' \oplus \xi_n'') \cdot
    g\Big(\getpvarsub(c';c'')(\sigma_p,\xi_n' \oplus \xi_n''), \getvalsub(c';c'')(\sigma_p,\xi_n' \oplus \xi_n'')\Big)
    \Big)
    \\
    &=
    \int d\xi_n' \int d\xi_n'' \Big(
    \ind{\dom(\xi_n') \cap \dom(\xi_n'') = \emptyset}
    \cdot \ind{\used(c', f(\xi_n' \oplus \xi_n''), \xi_n')} \cdot \ind{\used(c';c'', f(\xi_n' \oplus \xi_n''), \xi_n' \oplus \xi_n'')} 
    \\
    &\qquad\quad
    \cdot \getprsub(c')(\sigma_p, \xi_n') \cdot \getprsub(c'')\big(\getpvarsub(c')(\sigma_p, \xi_n'), \xi_n''\big)
    \\
    &\qquad\quad
    \cdot
    g\Big(\getpvarsub(c'')\big(\getpvarsub(c')(\sigma_p, \xi_n'), \xi_n''\big),
    \getvalsub(c')(\sigma_p, \xi_n') \oplus \getvalsub(c'')\big(\getpvarsub(c')(\sigma_p, \xi_n'), \xi_n''\big)
    \Big)
    \Big)
    \\
    &=
    \int d\xi_n' \int d\xi_n'' \Big(
    \ind{\dom(\xi_n') \cap \dom(\xi_n'') = \emptyset}
    \cdot \getprsub(c')(\sigma_p, \xi_n') \cdot \getprsub(c'')\big(\getpvarsub(c')(\sigma_p, \xi_n'), \xi_n''\big)
    \\
    &\qquad\quad
    \cdot
    g\Big(\getpvarsub(c'')\big(\getpvarsub(c')(\sigma_p, \xi_n'), \xi_n''\big),
    \getvalsub(c')(\sigma_p, \xi_n') \oplus \getvalsub(c'')\big(\getpvarsub(c')(\sigma_p, \xi_n'), \xi_n''\big)
    \Big)
    \Big).
  \end{align*}
  The first equality uses that $\getprsub(c';c'')(\sigma_p, \xi_n) \neq 0$ implies $\ind{\used(c';c'', f(\xi_n), \xi_n)} =1$:
  \begin{itemize}
  \item Since $\getprsub(c';c'')${}$(\sigma_p, \xi_n) \neq 0$, there is $\sigma_r \in \State[\Var \setminus (\PVar \cup \dom(\xi_n))]$
    such that $\used(c';c'', \sigma_p \oplus \xi_n \oplus \sigma_r, \xi_n)$.
    Note $(\sigma_p \oplus \xi_n \oplus \sigma_r)|_{V} = f(\xi_n)|_{V}$ for $V = \PVar \cup \dom(\xi_n) \cup \{\like\}$.
    From these and \cref{lem:used-sigma-indep}, we get $\used(c';c'', f(\xi_n), \xi_n)$.
  \end{itemize}
  The second and third equalities use the first and second results we proved above, respectively.
  The fourth equality uses the next claim:
  $\getprsub(c')(\sigma_p, \xi_n') \cdot \getprsub(c'')\big(\getpvarsub(c')(\sigma_p, \xi_n'), \xi_n''\big) \neq 0$
  and $\dom(\xi_n') \cap \dom(\xi_n'') = \emptyset$
  imply $\ind{\used(c', f(\xi_n' \oplus \xi_n''), \xi_n')} \cdot \ind{\used(c';c'', f(\xi_n' \oplus \xi_n''), \xi_n' \oplus \xi_n'')} = 1$.
  We prove the claim using the third result we proved above:
  \begin{itemize}
  \item Assume the premise.
    From $\getprsub(c')(\sigma_p, \xi_n') \neq 0$,
    there is $\sigma_r' \in \State[\Var \setminus (\PVar \cup \dom(\xi_n'))]$
    such that $\used(c', \sigma_p \oplus \xi_n' \oplus \sigma_r', \xi_n')$.
    Note $(\sigma_p \oplus \xi_n' \oplus \sigma_r')|_{V'} = f(\xi_n' \oplus \xi_n'')|_{V'}$ for $V' = \PVar \cup \dom(\xi_n') \cup \{\like\}$.
    From these and \cref{lem:used-sigma-indep},
    we get $\used(c', f(\xi_n' \oplus \xi_n''), \xi_n')$ as desired.
  \item
    From $\getprsub(c'')\big(\getpvarsub(c')(\sigma_p, \xi_n'), \xi_n''\big) \neq 0$,
    there is $\sigma_r'' \in \State[\Var \setminus (\PVar \cup \dom(\xi_n''))]$
    such that \[\used(c'', \getpvarsub(c')(\sigma_p, \xi_n') \oplus \xi_n'' \oplus \sigma_r'', \xi_n'').\]
    Since $\used(c', f(\xi_n' \oplus \xi_n''), \xi_n')$, we have $\db{c'}(f(\xi_n' \oplus \xi_n'')) \in \State$ and
    $
      \getpvarsub(c')(\sigma_p, \xi_n')
      = \getpvarsub(c')\allowbreak(f(\xi_n' \oplus \xi_n'')|_\PVar, \xi_n') = \db{c'}(f(\xi_n' \oplus \xi_n''))|_\PVar;
      $
    also, by \cref{lem:cnt-increase-and-rv-nochange}, $\xi_n''= \db{c'}(f(\xi_n' \oplus \xi_n''))|_{\dom(\xi_n'')}$.
    Thus, \[(\getpvarsub(c')(\sigma_p, \xi_n') \oplus \xi_n'' \oplus \sigma_r'')|_{V''} = \db{c}(f(\xi_n' \oplus \xi_n''))|_{V''}\]
    for $V'' = \PVar \cup \dom(\xi_n'')$.
    From these and \cref{lem:used-sigma-indep},
    we get $\usedm(c'', \db{c'}(f(\xi_n' \oplus \xi_n'')), \xi_n'')$.
  \item
    From $\dom(\xi_n') \cap \dom(\xi_n'') = \emptyset$,
    $\usedm(c', f(\xi_n' \oplus \xi_n''), \xi_n')$, and $\usedm(c'', \db{c'}(f(\xi_n' \oplus \xi_n'')), \xi_n'')$,
    we can apply the third result proved above, and get $\usedm(c';c'', f(\xi_n' \oplus \xi_n''), \xi_n' \oplus \xi_n'')$.
    Since $f(\xi_n' \oplus \xi_n'')(\like) = 1$, we get $\used(c';c'', f(\xi_n' \oplus \xi_n''), \xi_n' \oplus \xi_n'')$ as desired.
  \end{itemize}
  This completes the proof.
\end{proof}

} 
\AtEndDocument{

\section{Deferred Results in \S\ref{sec:grad-estm-prog-trans}}

\subsection{Deferred Statements and Their Proofs}

\begin{lemma}
  \label{lem:suff-cond-grad-dens-zero}
  Let $c$ be a command and $\pi$ be a simple reparameterisation plan. 
  Suppose that for all $n \in \NameEx$, $d,d' \in \DistEx$, and $(\lambda y.e) \in \LamEx$ such that
  $\pi(n,d,\lambda y.e) = (d',\_)$, we have
  \begin{align}
    \label{eq:pi-use-const-dist-only}
    d' = \cnor(r_1', r_2') \quad\text{for some $r_1', r_2' \in \R$.}
  \end{align}
  Further, assume that for all $\sigma_n \in \State[\Name]$,
  the function
  \begin{align}
    \label{eq:p-repar-rv}
    \sigma_\theta \in \State[\theta] \longmapsto \pfun{\ctr{c}{\pi}, \sigma_\theta}{\repname(\pi)}(\sigma_n)
  \end{align}
  is continuous.
  Then, for all $\sigma_\theta \in \State[\theta]$ and $\sigma_n \in \State[\Name]$, 
  \begin{align}
    \label{eq:p-repar-rv-grad-zero}
    \nabla_\theta \pfun{\ctr{c}{\pi}, \sigma_\theta}{\repname(\pi)}(\sigma_n) = 0.
  \end{align}
\end{lemma}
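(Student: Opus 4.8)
The plan is to show that the partial density $\pfun{\ctr{c}{\pi},\sigma_\theta}{\repname(\pi)}(\sigma_n)$ does not actually depend on $\sigma_\theta$ at all, which immediately yields $\nabla_\theta$ of it being $0$. The key observation is that $\repname(\pi)$ collects exactly the random variables whose sample commands get $\pi$-transformed, and by hypothesis \eqref{eq:pi-use-const-dist-only} every such transformed sample command uses a \emph{constant} normal distribution $\cnor(r_1',r_2')$ with $r_1',r_2' \in \R$. Recall from \cref{sec:grad-estm-prog-trans} that $\pfun{\ctr{c}{\pi},\sigma_\theta}{\repname(\pi)}(\sigma_n) = \prod_{\mu \in \repname(\pi)} \db{\ctr{c}{\pi}}(\sigma_0 \oplus \sigma_\theta \oplus \sigma_n)(\pr_\mu)$, so each factor is the value of the density (i.e.\ $\pr_\mu$) recorded by a $\pi$-transformed sample command evaluated at $\sigma_n(\mu)$. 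Since that density is $\cN(\,\cdot\,; r_1', r_2')$ with fixed real parameters, the factor $\db{\ctr{c}{\pi}}(\sigma_0 \oplus \sigma_\theta \oplus \sigma_n)(\pr_\mu) = \cN(\sigma_n(\mu); r_1', r_2')$ contains no occurrence of any variable in $\theta$, hence is independent of $\sigma_\theta$.

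First, I would make precise the claim ``every $\pr_\mu$ for $\mu \in \repname(\pi)$ is, after transformation, computed from a constant distribution''. By definition $\repname(\pi) = \{(\alpha,i) \in \Name \mid (\cname(\alpha,\_),\_,\_) \in \dom(\pi)\}$, and by the simplicity of $\pi$, membership of $(\cname(\alpha,\_),\_,\_)$ in $\dom(\pi)$ depends only on the string part $\alpha$. So for $\mu = (\alpha,i) \in \repname(\pi)$, every sample command in $c$ whose name expression has string part $\alpha$ is in the domain of $\pi$ and therefore gets replaced by $\ctr{\,\cdot\,}{\pi}$ with second argument $\cnor(r_1',r_2')$ for some reals (by \eqref{eq:pi-use-const-dist-only}). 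Conversely, any sample command in $\ctr{c}{\pi}$ that could possibly set $\pr_\mu$ for such a $\mu$ must have come from such a transformed command (a sample command that was \emph{not} transformed keeps a name expression with a string part $\alpha'$, and if $\alpha' = \alpha$ then by simplicity it \emph{would} have been in $\dom(\pi)$, contradiction). Hence in the density semantics of $\ctr{c}{\pi}$, whenever $\pr_\mu$ is updated for $\mu \in \repname(\pi)$, the new value is $\db{\cnor(r_1',r_2')}\sigma(\sigma(\mu)) = \cN(\sigma(\mu); r_1', r_2')$, which is a fixed measurable function of $\sigma(\mu)$ only.

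Second, I would formalize ``the value of $\pr_\mu$ in the final state does not depend on the $\theta$-coordinates of the initial state''. This is cleanest via the dependency machinery: but actually the simplest route is to observe that each $\pr_\mu$-factor is either its initialised value $\cN(\sigma_n(\mu);0,1)$ (also $\theta$-independent, coming from $\sigma_0$) or, if overwritten, equals $\cN(\sigma_n(\mu); r_1', r_2')$ by the previous paragraph --- using the fact (essentially \cref{lem:cnt-increase-and-rv-nochange}) that $\sigma(\mu) = \sigma_n(\mu)$ is never changed during execution, together with the given assumption that $c$, hence $\ctr{c}{\pi}$, terminates without double-sampling errors so that each $\pr_\mu$ is set at most once. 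In either case the value is a function of $\sigma_n(\mu)$ and fixed constants, with no dependence on $\sigma_\theta$. Therefore the function in \eqref{eq:p-repar-rv}, $\sigma_\theta \mapsto \pfun{\ctr{c}{\pi},\sigma_\theta}{\repname(\pi)}(\sigma_n) = \prod_{\mu \in \repname(\pi)} (\text{$\theta$-independent quantity})$, is a constant function of $\sigma_\theta$; being constant (and, as assumed, continuous, though we only need constancy) its gradient vanishes, giving \eqref{eq:p-repar-rv-grad-zero}.

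\textbf{Main obstacle.} The delicate point is the careful bookkeeping in the second paragraph: rigorously establishing that the \emph{only} sample commands in $\ctr{c}{\pi}$ that can write to $\pr_\mu$ for $\mu \in \repname(\pi)$ are the $\pi$-transformed ones. This requires combining (i) the structural definition of $\ctr{\,\cdot\,}{\pi}$, (ii) the simplicity of $\pi$ (so that the decision to transform depends only on the string part of the name, matching how $\repname(\pi)$ is defined), and (iii) the semantics of $\csample$ which updates $\pr_{\db{n}\sigma}$ where $n$ is the (possibly dynamically evaluated) name expression --- one must handle the case where the name expression is not a literal and the name $\mu$ is only determined at runtime. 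The hypotheses already guarantee the transformed distributions are constant, so once this "coverage" argument is in place the rest is immediate; I would likely isolate the coverage argument as a small sublemma about how $\ctr{\,\cdot\,}{\pi}$ and $\repname(\pi)$ interact, proved by induction on the structure of $c$.
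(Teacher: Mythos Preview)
Your argument correctly shows that for fixed $\sigma_n$ and each $\mu \in \repname(\pi)$, the factor $\db{\ctr{c}{\pi}}(\sigma_0 \oplus \sigma_\theta \oplus \sigma_n)(\pr_\mu)$ can take only \emph{finitely} many values: the initial $\cN(\sigma_n(\mu);0,1)$, or $\cN(\sigma_n(\mu);r_1',r_2')$ for one of the finitely many constant pairs $(r_1',r_2')$ arising from the $\pi$-transformed sample commands in $c$. However, the subsequent claim that each factor ``is a function of $\sigma_n(\mu)$ and fixed constants, with no dependence on $\sigma_\theta$'' does not follow: \emph{which} of those finitely many values is selected can depend on $\sigma_\theta$ through control flow. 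For instance, if $\ctr{c}{\pi}$ contains a subcommand of the form $\cif\,(\theta_1>0)\,\{x:=\csample(n,\cnor(3,4),l)\}\,\celse\,\{x:=\csample(n,\cnor(5,6),l')\}$ with $n$ evaluating to some $\mu \in \repname(\pi)$, then $\pr_\mu$ equals $\cN(\sigma_n(\mu);3,4)$ or $\cN(\sigma_n(\mu);5,6)$ according to the sign of $\theta_1$. Your parenthetical that you ``only need constancy'' is therefore where the argument breaks: the continuity hypothesis is exactly what rules such $\theta$-dependent branching out.

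The paper's proof uses your finite-image observation but then invokes continuity to finish: since $\State[\theta]$ is connected and $f$ is continuous, the image $f(\State[\theta])$ is a connected subset of $\R$; a finite connected subset of $\R$ is a singleton, so $f$ is constant and its gradient vanishes. (The paper phrases this as a contradiction---non-constant would force an uncountable image---but it amounts to the same thing.) If you retain the continuity hypothesis and append this connectedness step after your finite-range analysis, your proof becomes correct and essentially coincides with the paper's.
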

\begin{proof}
  Consider $c$ and $\pi$ that satisfies the given conditions.
  Fix $\sigma_n \in \State[\Name]$.
  Let $f : \State[\theta] \to \R$ be the function in \cref{eq:p-repar-rv}.
  Suppose that \cref{eq:p-repar-rv-grad-zero} does not hold.
  Then, $f$ is not a constant function.
  
  On the one hand, since $f$ is continuous (by assumption) and not constant,
  the image of $f$ over its domain (i.e., $f(\State[\theta]) \subseteq \R$) is an uncountable set.
  This can be shown as follows:
  since the image of a connected set over a continuous function is connected,
  $f(\State[\theta])$ is a connected set in $\R$;
  since $f$ is not constant, $f(\State[\theta])$ contains at least two points;
  since $f(\State[\theta])$ is connected, it should contain a non-empty interval,
  so it should be an uncountable set.

  On the other hand, since $\pi$ is simple and satisfies \cref{eq:pi-use-const-dist-only},
  and since $c$ has only finitely many sample commands, $f(\State[\theta])$ is a finite set. 
  So this contradicts to that $f(\State[\theta])$ is an uncountable set.
  Hence, $f$ should satisfy \cref{eq:p-repar-rv-grad-zero}.
\end{proof}

\begin{theorem}
  \label{lem:suff-cond-int-diff}
  Let $f : \R \times \R^n \to \R$ be a measurable function that satisfies the next three conditions:
  \begin{itemize}
  \item For all $x \in \R^n$, $f(-, x) : \R \to \R$ is differentiable.
  \item For all $\theta \in \R$, $\smash{ \int_{\R^n} } f(\theta, x) \,dx$ is finite.
  \item For all $\theta \in \R$, there is an open $U \subseteq \R$ such that $\theta \in U$ and $\int_{\R^n} \lip\big(f(-,x)|_U\big) \,dx$ is finite.
  \end{itemize}
  Here $\lip(g)$ for a function $g : V \to \R$ with $V \subseteq \R$ denotes the smallest Lipschitz constant of $g$:
  \begin{align*}
    \lip(g) \defeq \sup_{r, r' \in V, \;r \neq r'} \frac{|g(r') - g(r)|}{|r'-r|} \in \R \cup \{\infty\}.
  \end{align*}
  Then, for all $\theta \in \R$, both sides of the following are well-defined and equal:
  \begin{align*}
    \nabla_\theta \int_{\R^n} f(\theta, x) \,dx
    &= \int_{\R^n} \nabla_\theta f(\theta, x) \,dx
  \end{align*}
  where $\nabla_\theta$ denotes the partial differentiation operator with respect to $\theta$.
\end{theorem}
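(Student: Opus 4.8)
The plan is to prove the differentiation-under-the-integral identity by the standard dominated-convergence route, with the Lipschitz hypothesis supplying the needed domination. Fix $\theta_0 \in \R$ and choose, by the third hypothesis, an open interval $U \ni \theta_0$ with $L(x) \defeq \lip\big(f(-,x)|_U\big)$ satisfying $\int_{\R^n} L(x)\,dx < \infty$. Pick any sequence $h_k \to 0$ with $h_k \neq 0$ and $\theta_0 + h_k \in U$ for all $k$, and define the difference quotients $g_k(x) \defeq \big(f(\theta_0 + h_k, x) - f(\theta_0, x)\big)/h_k$. First I would observe that, by the first hypothesis, $g_k(x) \to \nabla_\theta f(\theta_0, x)$ pointwise in $x$ as $k \to \infty$, and moreover that $|g_k(x)| \le L(x)$ for every $k$ and every $x$, directly from the definition of $L(x)$ as the smallest Lipschitz constant of $f(-,x)$ restricted to $U$ (here I use that both $\theta_0$ and $\theta_0 + h_k$ lie in $U$). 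Since $L$ is integrable over $\R^n$, the dominated convergence theorem applies and yields
\begin{align*}
  \lim_{k \to \infty} \int_{\R^n} g_k(x)\,dx
  &= \int_{\R^n} \nabla_\theta f(\theta_0, x)\,dx.
\end{align*}
In particular the right-hand side is a well-defined (finite) integral, since $|\nabla_\theta f(\theta_0,x)| \le L(x)$ as a pointwise limit of functions bounded by $L(x)$.

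Next I would rewrite the left-hand side using the second hypothesis, which guarantees that $I(\theta) \defeq \int_{\R^n} f(\theta, x)\,dx$ is finite for every $\theta$, so that
\begin{align*}
  \int_{\R^n} g_k(x)\,dx = \frac{1}{h_k}\int_{\R^n}\big(f(\theta_0+h_k,x) - f(\theta_0,x)\big)\,dx = \frac{I(\theta_0 + h_k) - I(\theta_0)}{h_k},
\end{align*}
where splitting the integral of the difference into a difference of integrals is legitimate because each piece is finite. Combining the two displays, $\big(I(\theta_0+h_k) - I(\theta_0)\big)/h_k$ converges to $\int_{\R^n}\nabla_\theta f(\theta_0,x)\,dx$ for every such sequence $h_k \to 0$. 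Since the limit value does not depend on the chosen sequence, this is exactly the statement that $I$ is differentiable at $\theta_0$ with $I'(\theta_0) = \int_{\R^n}\nabla_\theta f(\theta_0,x)\,dx$, which is the claimed identity. As $\theta_0$ was arbitrary, we are done.

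The main obstacle I anticipate is purely a matter of bookkeeping rather than mathematical depth: one must be careful that every integral invoked is genuinely finite before manipulating it (splitting, passing limits inside, etc.), and that the restriction to a neighbourhood $U$ is handled correctly — the domination $|g_k(x)| \le L(x)$ only holds once $k$ is large enough that $\theta_0 + h_k \in U$, so the sequence $h_k$ must be taken with that constraint from the outset, and the final ``sequential limit independent of sequence implies derivative'' step should be spelled out (e.g. via the subsequence characterization of limits). A secondary point worth a sentence is measurability of $g_k$ and of $x \mapsto \nabla_\theta f(\theta_0,x)$: the former follows from measurability of $f$, and the latter from being a pointwise limit of measurable functions. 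None of these require new ideas beyond what is already implicit in the hypotheses, so I expect the proof to be short.
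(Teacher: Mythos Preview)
Your proposal is correct and follows the same dominated-convergence route as the paper. The only difference is organizational: the paper derives this theorem as an immediate corollary of a slightly more general version (with differentiability of $f(-,x)$ weakened to continuity plus almost-everywhere differentiability in $\theta$), and the proof of that general version is exactly the DCT argument you outline, with the Lipschitz bound $L(x)$ serving as the dominating function.
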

\rev{\begin{proof}
    This theorem follows from \Cref{lem:suff-cond-int-diff-lip} due to the following:
    the first condition of this theorem implies the first and second conditions of \Cref{lem:suff-cond-int-diff-lip}
    (as differentiability implies continuity);
    the second and third conditions of this theorem
    are identical to the third and fourth conditions of \Cref{lem:suff-cond-int-diff-lip};
    and the conclusion of this theorem is identical to that of \Cref{lem:suff-cond-int-diff-lip}.
\end{proof}}

\subsection{Proof of \cref{thm:unbiased-grad}}
\label{sec:proof:lemmas:unbiased-grad}

The proof of \cref{thm:unbiased-grad} relies on the following two lemmas, which are proven in \cref{sec:proof:lemmas:unbiased-grad1}.
The first lemma states that if a command contains no observe commands,
then its (full) density function can be decomposed into its partial density functions over $S$ and $\Name \setminus S$ for any $S \subseteq \Name$.
The second lemma states that
if $\pi$ is simple and $c$ uses only $\lambda y.y$ as the third argument of its sample commands,
then the partial density function of $\ctr{c}{\pi}$ over non-transformed random variables (i.e., variables in $\Name \setminus \repname(\pi)$)
is connected to that of $c$ via the value function of~$\ctr{c}{\pi}$.
\begin{lemma}
  \label{lem:dens-decomp}
  Let $c$ be a command. 
  If $c$ does not contain observe commands, then, for all $S \subseteq \Name$, $\sigma_\theta \in \State[\theta]$, and $\sigma_n \in \State[\Name]$,
  \begin{align*}
    \pfun{c, \sigma_\theta}{}(\sigma_n)
    &= \pfun{c, \sigma_\theta}{S}(\sigma_n) \cdot \pfun{c, \sigma_\theta}{\Name \setminus S}(\sigma_n).
  \end{align*}
\end{lemma}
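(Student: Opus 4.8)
\textbf{Proof proposal for \cref{lem:dens-decomp}.}

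The plan is to prove this by a straightforward structural induction on the command $c$, carrying the statement for all choices of $S \subseteq \Name$, $\sigma_\theta$, and $\sigma_n$ simultaneously. Recall that $\pfun{c,\sigma_\theta}{S}(\sigma_n) = \prod_{\mu \in S} \db{c}(\sigma_0 \oplus \sigma_\theta \oplus \sigma_n)(\pr_\mu)$ and similarly for the full density $\pfun{c,\sigma_\theta}{}(\sigma_n)$, except that the full density additionally carries the $\like$ factor and is guarded by the termination-and-no-double-sampling condition, returning $0$ otherwise. The key observation is that because $c$ has no observe commands, the $\like$ variable is never modified, so $\db{c}\sigma(\like) = \sigma(\like) = 1$ whenever $\db{c}\sigma \in \State$; hence on the ``good'' set where $\db{c}\sigma \in \State$ and $\db{c}\sigma(\sampled_\mu) \leq 1$ for all $\mu$, the full density reduces to $\prod_{\mu \in \Name} \db{c}\sigma(\pr_\mu)$, which factors as $\big(\prod_{\mu \in S} \db{c}\sigma(\pr_\mu)\big)\cdot\big(\prod_{\mu \in \Name\setminus S} \db{c}\sigma(\pr_\mu)\big) = \pfun{c,\sigma_\theta}{S}(\sigma_n)\cdot\pfun{c,\sigma_\theta}{\Name\setminus S}(\sigma_n)$. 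On the ``bad'' set, the left side is $0$ by definition, and I would need to check that the right side is $0$ too.

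Thus the real content is the bad-set case: I must show that whenever $\db{c}\sigma \notin \State$ or some $\db{c}\sigma(\sampled_\mu) \geq 2$, at least one of the two partial-density factors vanishes. For this I would use the fact that partial densities are defined as products of $\db{c}\sigma(\pr_\mu)$ values read off the final state — but note that $\db{c}$ has type $\State \to \State_\bot$, so when $\db{c}\sigma = \bot$ this formula is not literally meaningful; I expect the intended reading (consistent with the definition of $\pfun{c,\sigma_\theta}{S}$ in the paper, which writes $\db{c}(\sigma_0\oplus\sigma_\theta\oplus\sigma_n)(\pr_\mu)$ without a $\bot$-guard) is that one uses a convention that $\bot(\pr_\mu)$ is some fixed value, or — more likely — that the lemma's two sides are both implicitly $0$ in the divergent case and the statement is really about the terminating case. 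I would resolve this by first invoking the standing hypothesis that in the context where this lemma is applied (namely for $c_m$ and $c_g$ in the SPGE, which always terminate and have no double-sampling error), and more carefully: I would argue directly from the definitions, splitting $\State[\Name]$ according to whether $\db{c}\sigma \in \State$ with all counters $\leq 1$. The induction itself is routine: for sequencing $c;c'$ the counters and $\pr_\mu$ values compose cleanly (using \cref{lem:cnt-increase-and-rv-nochange} to control how $\sampled_\mu$ evolves), for if-then-else one picks the taken branch, and for while one uses the fixed-point characterisation together with the fact that the factorisation is preserved along the Kleene chain.

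The main obstacle I anticipate is precisely the handling of the $\bot$/non-termination case and making the product formulas for $\pfun{c,\sigma_\theta}{S}$ well-typed and matching the full-density definition there; this is bookkeeping rather than a deep difficulty, and I expect it to be dispatched by observing that the full density $\pfun{c,\sigma_\theta}{}$ is defined to be $0$ outside the good set, and by adopting (or extracting from the paper's conventions) the consistent reading that $\pfun{c,\sigma_\theta}{S}(\sigma_n)$ is likewise understood via the same good-set guard — or, alternatively, by noting that the lemma will only ever be instantiated with always-terminating, double-sampling-error-free commands, so $\sigma_n \mapsto \db{c}\sigma$ lands in $\State$ with counters $\leq 1$ for (almost) all $\sigma_n$ and the good-set case suffices. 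Once that is settled, the factorisation $\prod_{\mu\in\Name} = \prod_{\mu\in S}\cdot\prod_{\mu\in\Name\setminus S}$ together with $\db{c}\sigma(\like)=1$ gives the result immediately.
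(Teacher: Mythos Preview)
Your key observations are exactly right: the only two ingredients are (i) $\db{c}\sigma(\like)=1$ whenever $c$ has no observe commands (since $\like$ is never touched and is initialised to $1$), and (ii) the trivial factorisation $\prod_{\mu\in\Name}=\prod_{\mu\in S}\cdot\prod_{\mu\in\Name\setminus S}$. But you are working much harder than necessary by proposing a structural induction on $c$ for the main statement. The paper does \emph{not} do this: it proves the lemma by a direct two-line calculation from the definitions, invoking a separate auxiliary lemma (\cref{lem:obs-free-like}, which \emph{is} proved by structural induction) for the single fact $\db{c}\sigma(\like)=\sigma(\like)$. Once that fact is isolated, there is nothing left to induct on---the densities $\pfun{c,\sigma_\theta}{}$, $\pfun{c,\sigma_\theta}{S}$, $\pfun{c,\sigma_\theta}{\Name\setminus S}$ are all read off the \emph{same} final state $\db{c}\sigma$, so no compositional reasoning about sequencing, branches, or loops is needed at this level.

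Regarding the bad-set case you flag: your worry is legitimate but is dissolved by the standing hypothesis under which $\pfun{c,\sigma_\theta}{S}$ is \emph{defined}. The paper introduces the partial density only for commands that always terminate and have no double-sampling error, so the guarded ``otherwise $0$'' branch of $\pfun{c,\sigma_\theta}{}$ is never taken for the commands to which this lemma applies, and the good-set calculation suffices. Your speculation that one might need to show ``at least one partial-density factor vanishes'' in the bad case is misdirected: each factor is a product of strictly positive density values, so neither can vanish---the resolution is simply that the bad case is vacuous.
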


\begin{lemma}
  \label{lem:dens-misc}
  Let $c$ be a command 
  and $\pi$ be a reparameterisation plan.
  Suppose that every sample command in $c$ has $\lambda y.y$ as its third argument.
  Then, for all $\sigma_\theta \in \State[\theta]$ and $\sigma_n \in \State[\Name]$,
  if $\pfun{\ctr{c}{\pi}, \sigma_\theta}{}(\sigma_n) > 0$, then
  \begin{align*}
    \pfun{\ctr{c}{\pi}, \sigma_\theta}{\Name \setminus \repname(\pi)}(\sigma_n)
    &= \pfun{c, \sigma_\theta}{\Name \setminus \repname(\pi)}(\vfun{\ctr{c}{\pi}, \sigma_\theta}(\sigma_n)).
  \end{align*}
\end{lemma}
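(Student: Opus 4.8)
The plan is to prove \cref{lem:dens-misc} by a structural induction on $c$ that compares two executions: execution $A$, which runs the reparameterised program $\ctr{c}{\pi}$ on the raw seeds $\sigma_n$ from $\sigma^A \defeq \sigma_0 \oplus \sigma_\theta \oplus \sigma_n$, and execution $B$, which runs the original $c$ on the value-translated seeds $\sigma_n' \defeq \vfun{\ctr{c}{\pi},\sigma_\theta}(\sigma_n)$. The governing intuition is that, since every sample command of $c$ carries the identity lambda $\lambda y.y$, execution $B$ stores each fed seed $\sigma_n'(\mu)$ verbatim into the assigned program variable and into $\val_\mu$; but $\sigma_n'(\mu)$ is exactly the \emph{actual} value that execution $A$ computes for $\mu$ (the reparameterised seed pushed through the transformed lambda). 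Consequently the two runs follow the same trajectory on program variables, so every distribution expression $\cnor(e_1,e_2)$ and every boolean guard evaluates identically. For a non-reparameterised $\mu \in \Name \setminus \repname(\pi)$ the sample command is left untouched by $\ctr{\cdot}{\pi}$ and its raw seed already equals its actual value ($\sigma_n'(\mu)=\sigma_n(\mu)$), so $\pr_\mu$ is computed from the same $d$ at the same point in both runs and therefore agrees; taking the product over $\Name\setminus\repname(\pi)$ gives the claim.

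First I would reduce to the non-error regime. Since every density in $\D$ is strictly positive and $\like$ is a product of such densities started at $1$, the hypothesis $\pfun{\ctr{c}{\pi},\sigma_\theta}{}(\sigma_n)>0$ is equivalent to $\db{\ctr{c}{\pi}}\sigma^A \in \State$ together with the absence of a double-sampling error. This places us in the non-default branch of $\vfun{\ctr{c}{\pi},\sigma_\theta}$, so $\sigma_n'(\mu)=\db{\ctr{c}{\pi}}\sigma^A(\val_\mu)$; and by \cref{lem:cnt-increase-and-rv-nochange} (seeds are never modified and counters only increase) each sampled $\mu$ is written once, whence this final $\val_\mu$ coincides with the value assigned at $\mu$'s unique sample command.

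The technical core is a strengthened inductive statement. Write $R \defeq \repname(\pi)$, and call a pair of states $\sigma^A,\sigma^B$ \emph{compatible} if they agree on $\PVar$, on $\like$, and on every $\sampled_\mu$, and, for $\mu\notin R$, on $\mu$, $\val_\mu$ and $\pr_\mu$, while for each $\mu\in R$ sampled during the $\ctr{c}{\pi}$-run from $\sigma^A$ the seed $\sigma^B(\mu)$ equals the actual value $\db{\ctr{c}{\pi}}\sigma^A(\val_\mu)$. I would prove that, whenever $\db{\ctr{c}{\pi}}\sigma^A\in\State$ without double sampling, the outputs $\db{c}\sigma^B$ and $\db{\ctr{c}{\pi}}\sigma^A$ are again compatible; applying this at the top level and multiplying the $\pr_\mu$ for $\mu\notin R$ yields the lemma. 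The base cases $\cskip$, $x:=e$, and $\cobs(d,r)$ are immediate because $\ctr{\cdot}{\pi}$ fixes them and their effect depends only on the (already-equal) $\PVar$-part and $\like$. For a sample command I split on whether its name string is reparameterised: if $\mu\notin R$ the command is untouched and both runs write $\pr_\mu=\db{d}(\cdot)(\sigma_n(\mu))$ at equal program states; if $\mu\in R$ the transformed command in $A$ writes $\val_\mu=\db{e'[\mu/y]}\sigma^A$, which by compatibility equals the seed $\sigma^B(\mu)$ that the identity command in $B$ writes into both $\val_\mu$ and the target variable, so program variables stay synchronised while $\pr_\mu$ (excluded from the product) is free to differ. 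Sequencing threads compatibility through the intermediate states, again using \cref{lem:cnt-increase-and-rv-nochange} to identify the value function of the composite with those of the subcommands; the conditional reduces to the taken branch since the guard evaluates equally. The careful matching of the ``used'' and ``sampled'' portions of the state here mirrors the bookkeeping of \cref{lem:used-properties,lem:used-sigma-indep}.

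I expect the loop case and the precise phrasing of the compatibility relation to be the main obstacles. The ``look-ahead'' clause — that for $\mu\in R$ the seed supplied to execution $B$ is a value execution $A$ computes only later — must be stated so that it specialises correctly to subcommands (e.g.\ for $c_1;c_2$ the actual value of a $\mu$ sampled in $c_1$ is unchanged by $c_2$, so it is legitimate to feed it to the $c_1$-run of $B$). For $\cwhile\,b\,\{c'\}$ I would follow the fixpoint methodology of the appendix: express both loops as least fixed points of the functionals $F$ and $\overline F$, and show that the set of compatible transformer-pairs contains the bottom pair, is closed under least upper bounds of increasing chains, and is preserved by $(f,\overline f)\mapsto(F(f),\overline F(\overline f))$, replaying the sequencing and conditional arguments on one unrolling. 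The termination hypothesis $\db{\ctr{c}{\pi}}\sigma^A\in\State$ guarantees the relevant chain stabilises, so no admissibility beyond closure under these definable chains is required.
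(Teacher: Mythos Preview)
Your plan is essentially correct and shares the paper's core idea: a structural induction comparing the run of $\ctr{c}{\pi}$ on raw seeds with the run of $c$ on the value-translated seeds, maintaining that program variables (and hence all guards and distribution parameters) stay synchronised. The paper, however, packages this differently. Rather than carrying your compatibility relation on full states, it first isolates the ``relevant'' substate via the predicate $\used$ and the functions $\getprsubi{S}$, $\getvalsub$, $\getpvarsub$, and proves a separate inductive lemma (\cref{lem:id-lambda-props}) stating that $\getpvarsub$ and $\getprsubi{\Name\setminus\repname(\pi)}$ agree on $(\ctr{c}{\pi},\overline{\xi_n})$ and $(c,\xi_n)$ once $\xi_n \defeq \getvalsub(\ctr{c}{\pi})(\sigma_p,\overline{\xi_n})$. \Cref{lem:dens-misc} is then derived non-inductively by lifting this substate result to full densities, using \cref{lem:used-properties} to account for the $\pr_\mu$ of unsampled names. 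The advantage of the paper's route is that the ``look-ahead'' you identify as the main obstacle is absorbed once, outside the induction: $\xi_n$ is computed from the whole $\ctr{c}{\pi}$-run before the inductive argument begins, so the invariant in \cref{lem:id-lambda-props} is a plain equality of substate functions rather than a relation parameterised by which names will be sampled downstream. Your direct formulation works but, as you note, forces the compatibility relation to depend on $c$ and $\sigma^A$, and you must argue explicitly at each sequencing step that the look-ahead for $c_1$ is consistent with the look-ahead for $c_1;c_2$ (using that $c_2$ does not overwrite $\val_\mu$ for $\mu$ sampled in $c_1$). Both routes need the same ingredients; the paper's substate layer just localises the bookkeeping.
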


We now prove \cref{thm:unbiased-grad} using the two lemmas.
\begin{proof}[Proof of \cref{thm:unbiased-grad}]
  Let $S = \repname(\pi)$.
  Before starting the main derivation of the selective gradient estimator,
  we show the differentiability of several functions which are to be used in the derivation.
  From \cref{cond:dens-diff} and \cref{cond:val-diff},
  the next functions over $\State[\theta]$ are differentiable for all $\sigma_n$
  by the preservation of differentiability under function composition:
  \begin{align*}
    \sigma_\theta &\longmapsto \pfun{c_m, \sigma_\theta}{}(\vfun{\ctr{c_g}{\pi}, \sigma_\theta}(\sigma_n)),
    &
    \sigma_\theta &\longmapsto \pfun{c_g, \sigma_\theta}{S}(\vfun{\ctr{c_g}{\pi}, \sigma_\theta}(\sigma_n)),
    &
    \sigma_\theta &\longmapsto \pfun{c_g, \sigma_\theta}{\Name \setminus S}(\vfun{\ctr{c_g}{\pi}, \sigma_\theta}(\sigma_n)),
    \\
    &&
    \sigma_\theta & \longmapsto \pfun{\ctr{c_g}{\pi},\sigma_\theta}{S}(\sigma_n),
    &
    \sigma_\theta & \longmapsto \pfun{\ctr{c_g}{\pi},\sigma_\theta}{\Name \setminus S}(\sigma_n).
  \end{align*}
  From this, the next functions over $\State[\theta]$ are also differentiable for all $\sigma_n$
  by \cref{lem:dens-decomp} with $c_g$ and $\ctr{c_g}{\pi}$ 
  and by the fact that the multiplication of differentiable functions is differentiable:
  \begin{align*}
    \sigma_\theta &\longmapsto \pfun{c_g, \sigma_\theta}{}(\vfun{\ctr{c_g}{\pi}, \sigma_\theta}(\sigma_n)),
    &
    \sigma_\theta &\longmapsto \pfun{\ctr{c_g}{\pi},\sigma_\theta}{}(\sigma_n).
  \end{align*}
  These differentiability results are required in the below proof
  to apply several gradients rules (e.g., $\nabla_\theta (f(\theta) + g(\theta)) = \nabla_\theta f(\theta) + \nabla_\theta g(\theta)$)
  which may fail for non-differentiable functions.

  Fix $\sigma_\theta \in \State[\theta]$. Using the above differentiability results, we derive the selective gradient estimator as follows,
  where we write $\sigma'_n$ for $\vfun{\ctr{c_g}{\pi}, \sigma_\theta}(\sigma_n)$:
  \begin{align*}
    & \nabla_\theta \cL_\theta
    \\
    &= \nabla_\theta \int d \sigma_n \left(
    \pfun{c_g,\sigma_\theta}{}(\sigma_n)
    \cdot \log\frac{\pfun{c_m, \sigma_\theta}{}(\sigma_n)}{\pfun{c_g, \sigma_\theta}{}(\sigma_n)} \right)
    \\
    &= \nabla_\theta \int d \sigma_n \left(
    \pfun{\hln{\ctr{c_g}{\pi}},\sigma_\theta}{}(\sigma_n)
    \cdot \log\frac{\pfun{c_m, \sigma_\theta}{}(\hln{\sigma_n'})}{\pfun{c_g, \sigma_\theta}{}(\hln{\sigma_n'})} \right)
    \\
    &= \int d \sigma_n \, \hln{\nabla_\theta} \left(
    \pfun{\ctr{c_g}{\pi},\sigma_\theta}{}(\sigma_n)
    \cdot \log\frac{\pfun{c_m, \sigma_\theta}{}(\sigma_n')}{\pfun{c_g, \sigma_\theta}{}(\sigma_n')} \right)
    \\
    &= \int d \sigma_n \left(
    \hln{\nabla_\theta} \pfun{\ctr{c_g}{\pi},\sigma_\theta}{}(\sigma_n)
    \cdot \log\frac{\pfun{c_m, \sigma_\theta}{}(\sigma_n')}{\pfun{c_g, \sigma_\theta}{}(\sigma_n')}
     +
    \pfun{\ctr{c_g}{\pi},\sigma_\theta}{}(\sigma_n)
    \cdot \hln{\nabla_\theta} \log\frac{\pfun{c_m, \sigma_\theta}{}(\sigma_n')}{\pfun{c_g, \sigma_\theta}{}(\sigma_n')} \right)
    \\
    &= \int d \sigma_n \, \pfun{\ctr{c_g}{\pi},\sigma_\theta}{}(\sigma_n)
    \left(
    \hln{\nabla_\theta \log \pfun{\ctr{c_g}{\pi},\sigma_\theta}{}(\sigma_n)}
    \cdot \log\frac{\pfun{c_m, \sigma_\theta}{}(\sigma_n')}{\pfun{c_g, \sigma_\theta}{}(\sigma_n')}
    - \nabla_\theta \log \pfun{c_g, \sigma_\theta}{}(\sigma_n')
    + \nabla_\theta \log \pfun{c_m, \sigma_\theta}{}(\sigma_n') \right)
    \\
    &= \int d \sigma_n \, \pfun{\ctr{c_g}{\pi},\sigma_\theta}{}(\sigma_n)
    \bigg[ \left(\hln{\nabla_\theta \log \pfun{\ctr{c_g}{\pi},\sigma_\theta}{S}(\sigma_n)
      +\nabla_\theta \log \pfun{\ctr{c_g}{\pi},\sigma_\theta}{\Name \setminus S}(\sigma_n)}\right)
    \cdot \log\frac{\pfun{c_m, \sigma_\theta}{}(\sigma_n')}{\pfun{c_g, \sigma_\theta}{}(\sigma_n')}
    \\ & \qquad\qquad\qquad\qquad\quad
    - \left(\hln{\nabla_\theta \log \pfun{c_g, \sigma_\theta}{S}(\sigma_n')
      + \nabla_\theta \log \pfun{c_g, \sigma_\theta}{\Name \setminus S}(\sigma_n')}\right)
      + \nabla_\theta \log \pfun{c_m, \sigma_\theta}{}(\sigma_n') \bigg]
    \\
    &= \int d \sigma_n \, \pfun{\ctr{c_g}{\pi},\sigma_\theta}{}(\sigma_n)
    \bigg[ \left(\hln{0}+\nabla_\theta \log \pfun{\ctr{c_g}{\pi},\sigma_\theta}{\Name \setminus S}(\sigma_n)\right)
    \cdot \log\frac{\pfun{c_m, \sigma_\theta}{}(\sigma_n')}{\pfun{c_g, \sigma_\theta}{}(\sigma_n')}
    \\ &  \qquad\qquad\qquad\qquad\quad
    - \left(\nabla_\theta \log \pfun{c_g, \sigma_\theta}{S}(\sigma_n') + \hln{0}\right)
    + \nabla_\theta \log \pfun{c_m, \sigma_\theta}{}(\sigma_n') \bigg]
    \\
    &= \int d \sigma_n \, \pfun{\ctr{c_g}{\pi},\sigma_\theta}{}(\sigma_n)
    \bigg( \nabla_\theta \log \pfun{\hln{c_g},\sigma_\theta}{\Name \setminus S}(\hln{\sigma_n'})
    \cdot \log\frac{\pfun{c_m, \sigma_\theta}{}(\sigma_n')}{\pfun{c_g, \sigma_\theta}{}(\sigma_n')}
    \\ &  \qquad\qquad\qquad\qquad\quad
    - \nabla_\theta \log \pfun{c_g, \sigma_\theta}{S}(\sigma_n')
    + \nabla_\theta \log \pfun{c_m, \sigma_\theta}{}(\sigma_n') \bigg).
  \end{align*}
  We justify key steps of the above derivation below. 
  \begin{itemize}
  \item The second equality comes from \cref{thm:unbiased-val}
    and the fact that $\vfun{c_g,\sigma_\theta}$ is the identity function (since the third argument of every 
    sample command in $c_g$ is the identity function $\lambda y.y$).
  \item The third equality holds because differentiation there commutes with integration by \cref{cond:int-diff}.
  \item The fourth comes from the product rule for differentiation: $\nabla_\theta (f(\theta) \cdot g(\theta))
    = \nabla_\theta f(\theta) \cdot g(\theta) + f(\theta) \cdot \nabla_\theta g(\theta)$
    for all differentiable $f$ and $g$.
    Here $f$ and $g$ in the original equation are differentiable
    because differentiability is preserved under division and $\log$ for positive-valued functions.
  \item The fifth equality holds because $\nabla_\theta f(\theta) = f(\theta) \cdot  \nabla_\theta \log f(\theta)$
    for all differentiable and positive-valued $f$.
  \item The sixth equality follows from \cref{lem:dens-decomp} applied to $c_g$ and $\ctr{c_g}{\pi}$ (both of which do not contain observe commands),
    and from the linearity of differentiation: $\nabla_\theta (f(\theta) + g(\theta)) = \nabla_\theta f(\theta) + \nabla_\theta g(\theta)$
    for all differentiable $f$ and $g$.
    Here $f$ and $g$ in the original equation are differentiable
    because differentiability is preserved under $\log$ for positive-valued functions.
  \item The seventh equality follows from \cref{cond:grad-dens-zero} and
    \begin{align}
      \label{eq:zero-int:thm:unbiased-grad}
      \E_{\pfun{\ctr{c_g}{\pi}, \sigma_\theta}{}(\sigma_n)}
      \left[ \nabla_\theta \log \pfun{c_g, \sigma_\theta}{\Name \setminus S}(\sigma_n') \right]
      &= 0.
    \end{align}
    The proof of \cref{eq:zero-int:thm:unbiased-grad} will be given after we complete this justification of the derivation.
  \item The last equality comes from \cref{lem:dens-misc} applied to $c_g$.
  \end{itemize}
  
  The only remaining part is to prove \cref{eq:zero-int:thm:unbiased-grad}.
  We derive the equation as follows:
  \begin{align*}
    & \int d \sigma_n  \left( \pfun{\ctr{c_g}{\pi}, \sigma_\theta}{}(\sigma_n) \cdot
    \nabla_\theta \log \pfun{c_g, \sigma_\theta}{\Name \setminus S}(\sigma_n') \right)
    \\
    & {} = \int d \sigma_n \left( \pfun{\ctr{c_g}{\pi}, \sigma_\theta}{}(\sigma_n) \cdot
    \nabla_\theta \log \pfun{\hln{\ctr{c_g}{\pi}}, \sigma_\theta}{\Name \setminus S}(\hln{\sigma_n}) \right)
    \\
    & {} = \int d \sigma_n \left(\pfun{\ctr{c_g}{\pi}, \sigma_\theta}{}(\sigma_n)
    \cdot \left( \nabla_\theta \log \pfun{\ctr{c_g}{\pi}, \sigma_\theta}{\Name \setminus S}(\sigma_n)
    + \hln{\nabla_\theta \log \pfun{\ctr{c_g}{\pi}, \sigma_\theta}{S}(\sigma_n)} \right)\right)
    \\
    & {} = \int d \sigma_n \left( \pfun{\ctr{c_g}{\pi}, \sigma_\theta}{}(\sigma_n) \cdot
    \hln{\nabla_\theta \log \pfun{\ctr{c_g}{\pi}, \sigma_\theta}{}(\sigma_n)} \right)
    \\
    & {} = \int d\sigma_n \, \hln{\nabla_\theta \pfun{\ctr{c_g}{\pi}, \sigma_\theta}{}(\sigma_n)}
    \\
    & {} = \hln{\nabla_\theta} \int d\sigma_n \, \pfun{\ctr{c_g}{\pi}, \sigma_\theta}{}(\sigma_n)
    \\
    & {} = \nabla_\theta \hln{1}
    = 0.
  \end{align*}
  Here is the justification of the above derivation:
  \begin{itemize}
  \item The first equality comes from \cref{lem:dens-misc} applied to $c_g$.
  \item The second equality follows from \cref{cond:grad-dens-zero}.
  \item The third equality holds because of \cref{lem:dens-decomp} applied to $\ctr{c_g}{\pi}$ (which does not contain observe commands),
    and the linearity of differentiation: $\nabla_\theta (f(\theta) + g(\theta)) = \nabla_\theta f(\theta) + \nabla_\theta g(\theta)$
    for all differentiable $f$ and $g$.
    Here $f$ and $g$ in the original equation are differentiable
    because differentiability is preserved under $\log$ for positive-valued functions.
  \item The fourth equality holds because $\nabla_\theta f(\theta) = f(\theta) \cdot \nabla_\theta \log f(\theta)$
    for all differentiable and positive-valued $f$.
  \item The fifth equality uses  \cref{cond:int-diff}, which states the commutativity between differentiation and integration in the equality.
  \item The six equality comes from that $\pfun{\ctr{c_g}{\pi},{\sigma_\theta}}{}$ is a probability density by \cref{remark:p-probability}.
  \end{itemize}
  This completes the proof.
\end{proof}

\subsection{Proofs of \cref{lem:dens-decomp,lem:dens-misc}}
\label{sec:proof:lemmas:unbiased-grad1}

We define the partial density version of $\getprsubi{S}(c)$ for $S \subseteq \Name$:
\begin{align*}
  \getprsubi{S}(c) &: \State[\PVar] \times \Statesub[\Name] \to [0,\infty),
  \\
  \getprsubi{S}(c)(\sigma_p, \xi_n)
  &\defeq \begin{cases}
    \begin{array}{@{}l@{}}
      \prod_{\mu \in \dom(\xi_n) \cap S} \db{c}(\sigma_p \oplus \xi_n \oplus \sigma_r)(\pr_\mu)
    \end{array}
    & \text{if $\exists \sigma_r.\, \used(c, \sigma_p \oplus \xi_n \oplus \sigma_r, \xi_n)$}
    \\
    0 & \text{otherwise}.
  \end{cases}
\end{align*}
$\getprsubi{S}(c)$ enjoys many of the properties that $\getprsub(c)$ has.
For instance, $\getprsubi{S}(c)$ is a well-defined function (i.e., its value does not depend on the choice of $\sigma_r$), 
as $\getprsub(c)$ does.
Since the proof of those properties of $\getprsubi{S}(c)$ would be almost identical to that of $\getprsub(c)$,
we will use them in the following proofs without explicitly (re)proving them.

\begin{proof}[Proof of \cref{lem:dens-decomp}]
  Let $c$ be a command that has no observe commands.
  Let $S \subseteq \Name$, $\sigma_\theta \in \State[\theta]$, and $\sigma_n \in \State[\Name]$.
  We set $\sigma_0$ as in the definition of $\pfun{c,\sigma_\theta}{}$ in \cref{eq:density-of-c} (as a function of $\sigma_n$).
  Let $\sigma = \sigma_\theta \oplus \sigma_n \oplus \sigma_0$.
  If $\noerr(c, \sigma)$ does not hold, then the LHS and RHS of the desired equation become zero, so the equation holds.
  If $\noerr(c, \sigma)$ holds, we get the desired equation as follows:
  \begin{align*}
    \pfun{c, \sigma_\theta}{S}(\sigma_n) \cdot \pfun{c, \sigma_\theta}{\Name \setminus S}(\sigma_n)
    &= \textstyle
    \big( \db{c}\sigma(\like)
    \cdot \prod_{\mu \in S} \db{c}\sigma(\pr_\mu) \big)
    \cdot \big( \db{c}\sigma(\like)
    \cdot \prod_{\mu \in \Name \setminus S} \db{c}\sigma(\pr_\mu) \big)
    \\
    &= \textstyle
    \big( \db{c}\sigma(\like) \big)^2
    \cdot \prod_{\mu \in \Name} \db{c}\sigma(\pr_\mu)
    \\
    &= \textstyle
    \db{c}\sigma(\like)
    \cdot \pfun{c, \sigma_\theta}{}(\sigma_n)
    \\
    &= \sigma(\like) \cdot \pfun{c, \sigma_\theta}{}(\sigma_n)
    \\
    &= \pfun{c, \sigma_\theta}{}(\sigma_n).
  \end{align*}
  The second last equality uses \cref{lem:obs-free-like},
  and the last equality uses $\sigma(\like)=1$ (which holds by the definition of $\sigma_0$).
  This completes the proof.
\end{proof}

\begin{proof}[Proof of \cref{lem:dens-misc}]
  Consider a command $c$, a reparameterisation plan $\pi$, $\sigma_\theta \in \State[\theta]$, and $\sigma_n \in \State[\Name]$.
  Assume that all the sample commands of $c$ have $\lambda y.y$ as their third arguments,
  and $p_{\ctr{c}{\pi}, \sigma_\theta}(\sigma_n) > 0$.

  We first define several objects and make observations on them.
  Let $S \defeq \Name \setminus \repname(\pi)$.
  Define $f_* : \State[\Name] \to \State[\AVar]$ to be the function for constructing an initial state:
  \begin{align*}
    f_*(\sigma_n)(v)
    &\defeq
    \begin{cases}
      f_\pr(\sigma_n(\mu)) & \text{if $v \equiv \pr_\mu$ for some $\mu$}
      \\
      f_\val(\sigma_n(\mu)) & \text{if $v \equiv \val_\mu$ for some $\mu$}
      \\
      f_\sampled(\sigma_n(\mu)) & \text{if $v \equiv \sampled_\mu$ for some $\mu$}
      \\
      1 & \text{if $v \equiv \like$},
    \end{cases}
  \end{align*}
  where $f_\val(r) \defeq r$, $f_\pr(r) \defeq \cN(r; 0, 1)$, and $f_\sampled(r) \defeq 0$.
  Define initial states $\overline{\sigma}, \sigma \in \State$ for
  $\pfun{\ctr{c}{\pi}, \sigma_\theta}{}(\sigma_n)$ and
  $\pfun{{c}, \sigma_\theta}{}(\vfun{\ctr{c}{\pi}, \sigma_\theta}{}(\sigma_n))$, respectively, as
  \begin{align*}
    \overline{\sigma}
    &\defeq \sigma_p \oplus \sigma_n \oplus f_*(\sigma_n),
    &
    {\sigma}
    &\defeq \sigma_p \oplus \vfun{\ctr{c}{\pi}, \sigma_\theta}(\sigma_n) \oplus f_*(\vfun{\ctr{c}{\pi}, \sigma_\theta}(\sigma_n)),
  \end{align*}
  where $\sigma_p \defeq \sigma_\theta \oplus (\lambda v \in \PVar \setminus \theta.\, 0) \in \State[\PVar]$.
  Then, the assumption $\pfun{\ctr{c}{\pi}, \sigma_\theta}{}(\sigma_n) > 0$ implies $\noerr(\ctr{c}{\pi}, \overline{\sigma})$ by the definition of $p$.
  From this, $\overline{\sigma}(\like)=1$, and the definition of $\used$, there exists $\overline{\xi_n} \in \Statesub[\Name]$
  such that $\used(\ctr{c}{\pi}, \overline{\sigma}, \overline{\xi_n})$.
  From this, we have
  \begin{align*}
    \overline{\sigma} &= \sigma_p \oplus \overline{\xi_n} \oplus \overline{\sigma_r},
    &
    \used(\ctr{c}{\pi}, \sigma_p \oplus \overline{\xi_n} \oplus \overline{\sigma_r}, \overline{\xi_n}),
  \end{align*}
  for some $\overline{\sigma_r}$.
  Next, let \[\xi_n \defeq \getvalsub(\ctr{c}{\pi})(\sigma_p, \overline{\xi_n}).\]
  We can apply \cref{lem:id-lambda-props} to $\used(\ctr{c}{\pi}, \overline{\sigma}, \overline{\xi_n})$,
  since all the sample commands of $c$ have $\lambda y.y$ in their third arguments (by assumption).
  The application of the lemma gives:
  \begin{gather*}
    \forall \sigma_r' \in \State[\Var \setminus (\PVar \cup \dom(\xi_n))].\;\;
    \sigma_r'(\like)=1 \implies \used(c, \sigma_p \oplus \xi_n \oplus \sigma_r', \xi_n),
    \\
    \getprsubi{S}(\ctr{c}{\pi})(\sigma_p, \overline{\xi_n})
    = \getprsubi{S}(c)(\sigma_p, \xi_n),
  \end{gather*}
  where the for-all part comes from \cref{lem:used-sigma-indep}.
  
  We now show two claims.
  The first claims is: there exists $\sigma_r$ such that
  \begin{align*}
    \sigma &= \sigma_p \oplus \xi_n \oplus \sigma_r,
    &
    \used(c, \sigma_p \oplus \xi_n \oplus \sigma_r, \xi_n).    
  \end{align*}
  By the definition of $\sigma$, it suffices to show that $\xi_n = \big(\vfun{\ctr{c}{\pi}, \sigma_\theta}(\sigma_n)\big) |_{\dom(\xi_n)}$.
  This indeed holds as follows: for any $\mu \in \dom(\xi_n)$,
  $
    \xi_n(\mu)
    = \getvalsub(\ctr{c}{\pi})(\sigma_p, \overline{\xi_n})(\mu)
    = \db{\ctr{c}{\pi}} \overline{\sigma}(\val_\mu) = \vfun{\ctr{c}{\pi}, \sigma_\theta}(\sigma_n)(\mu),
  $
  where the second equality uses $\used(\ctr{c}{\pi}, \sigma_p \oplus \overline{\xi_n} \oplus \overline{\sigma_r}, \overline{\xi_n})$.
  The second claim is:
  for all $\mu \in S \setminus \dom({\xi_n})$,
  \[ \sigma(\pr_\mu) = \overline{\sigma}(\pr_\mu). \]
  Here is the proof of the claim:
  $
  \sigma(\pr_\mu)
  = f_\pr(\sigma(\mu))
  = f_\pr(\vfun{\ctr{c}{\pi}, \sigma_\theta}(\sigma_n)(\mu))
  = f_\pr(\db{\ctr{c}{\pi}} \overline{\sigma}(\val_\mu))
  = f_\pr(\overline{\sigma}(\val_\mu))
  = f_\pr(\overline{\sigma}(\mu));
  $
  and 
  $
  \overline{\sigma}(\pr_\mu)
  = f_\pr(\overline{\sigma}(\mu));
  $
  here the second last equality in the first equation uses
  \cref{lem:used-properties}-\cref{lem:used-properties-2}
  with $\mu \notin \dom(\overline{\xi_n})$
  and $\used(\ctr{c}{\pi}, \sigma_p \oplus \overline{\xi_n} \oplus \overline{\sigma_r}, \overline{\xi_n})$,
  and the last equality in the first equation uses $f_\val(r) = r$.

  Based on the observations made so far, we show the desired equation as follows:
  \begin{align*}
    \pfun{\ctr{c}{\pi}, \sigma_\theta}{S}(\sigma_n)
    &=
    \prod_{\mu \in S \cap \dom(\overline{\xi_n})} \db{\ctr{c}{\pi}} \overline{\sigma}(\pr_\mu)
    \cdot \prod_{\mu \in S \setminus \dom(\overline{\xi_n})} \db{\ctr{c}{\pi}} \overline{\sigma}(\pr_\mu)
    \\
    &=
    \getprsubi{S}(\ctr{c}{\pi})(\sigma_p, \overline{\xi_n})
    \cdot \prod_{\mu \in S \setminus \dom(\overline{\xi_n})} \overline{\sigma}(\pr_\mu)
    \\
    &=
    \getprsubi{S}(c)(\sigma_p, {\xi_n})
    \cdot \prod_{\mu \in S \setminus \dom({\xi_n})} {\sigma}(\pr_\mu)
    \\
    &=
    \prod_{\mu \in S \cap \dom({\xi_n})} \db{{c}} {\sigma}(\pr_\mu) 
    \cdot \prod_{\mu \in S \setminus \dom({\xi_n})} \db{{c}} {\sigma}(\pr_\mu) 
    \\
    &= \pfun{{c}, \sigma_\theta}{S}(\vfun{\ctr{c}{\pi}, \sigma_\theta}(\sigma_n)).
  \end{align*}
  The first and last equalities are by the definition of $p$.
  The second equality uses $\used(\ctr{c}{\pi}, \sigma_p \oplus \overline{\xi_n} \oplus \overline{\sigma_r}, \overline{\xi_n})$
  and \cref{lem:used-properties}-\cref{lem:used-properties-2} with $\mu \notin \dom(\overline{\xi_n})$.
  The third equality uses $\dom(\overline{\xi_n}) = \dom(\xi_n)$,
  the observation made in the first paragraph, and the second claim in the above.
  The fourth equality uses the first claim in the above,
  and \cref{lem:used-properties}-\cref{lem:used-properties-2} with $\mu \notin \dom({\xi_n})$.
\end{proof}

\begin{lemma}
  \label{lem:obs-free-like}
  Let $c$ be a command and $\sigma \in \State$.
  If $c$ has no observe commands and $\db{c}\sigma \in \State$, then
  \[ \db{c}\sigma(\like) = \sigma(\like). \]
\end{lemma}
\begin{proof}
  Let $c$ be a command that does not contain an observe command. We show the claim of the lemma by induction on the structure of $c$. 
  Pick $\sigma \in \State$ such that $\db{c}\sigma \in \State$. We will show that $\db{c}\sigma(\like) = \sigma(\like)$.

\paragraph{\bf Case $c \equiv \cskip$} In this case, $\db{c}\sigma(\like) = \sigma(\like)$ by the definition of the semantics.

\paragraph{\bf Case $c \equiv (x:=e)$} Again, $\db{c}\sigma(\like) = \sigma(\like)$ by the definition of the semantics. 

\paragraph{\bf Case $c \equiv (x:=\csample(n,d,\lambda y.e')$} Once more, $\db{c}\sigma(\like) = \sigma(\like)$ by the definition of the semantics. 

\paragraph{\bf Case $c \equiv (c';c'')$} We have $\db{c'}\sigma \in \State$ and $\db{c''}(\db{c'}\sigma) \in \State$. We apply induction hypothesis first to $(c',\sigma)$, and again to $(c',\db{c'}\sigma)$. The first application gives $\db{c'}\sigma(\like) = \sigma(\like)$, and the second $\db{c';c''}\sigma(\like) = \db{c'}\sigma(\like)$. The desired conclusion follows from these two equalities.

\paragraph{\bf Case $c \equiv (\cif\ b\ \{c'\}\ \celse\ \{c''\})$} We deal with the case that $\db{b}\sigma = \strue$. The other case of $\db{b}\sigma = \sfalse$ can be proved similarly. Since $\db{b}\sigma = \strue$, we have $\db{c'}\sigma = \db{c}\sigma \in \State$. Thus, we can apply induction hypothesis to $c'$.  If we do so, we get 
$\db{c'}\sigma(\like) = \sigma(\like)$. This gives the desired conclusion because $\db{c}\sigma = \db{c'}\sigma$.

\paragraph{\bf Case $c \equiv (\cwhile\ b\ \{c'\})$} Let $F$ be the operator on $[\State \to \State_\bot]$ such that $\db{c}$ is the least fixed point of $F$. Define a subset $\cT$ of $[\State \to \State_\bot]$ as follows: 
\[
f \in \cT \iff
\Big(
\forall \sigma' \in \State.\, f(\sigma') \in \State \implies f(\sigma')(\like) = \sigma'(\like).
\Big)
\]
The set $\cT$ contains the least function $\lambda \sigma.\bot$, and is closed under the least upper bound of any chain in $[\State \to \State_\bot]$. It is also closed under $F$. This $F$-closure follows essentially from our arguments for sequential composition, if command, and skip, and induction hypothesis on $c'$. What we have shown for $\cT$ implies that $\cT$ contains the least fixed point of $F$, which gives the desired property for $c$.
\end{proof}

\begin{lemma}
  \label{lem:id-lambda-props}
  Let $c$ be a command and $\pi$ be a reparameterisation plan.
  Suppose that every sample command in $c$ has $\lambda y.y$ as its third argument.
  Then, for all $\sigma_p \in \State[\PVar]$ and $\overline{\xi_n} \in \Statesub[\Name]$,
  if $\used(\ctr{c}{\pi}, \sigma_p \oplus \overline{\xi_n} \oplus \overline{\sigma_r}, \overline{\xi_n})$ for some $\overline{\sigma_r}$,
  then
  \begin{align*}
    \exists {\sigma_r}.\,\used(c, {} & {} \sigma_p \oplus \xi_n \oplus {\sigma_r}, \xi_n),
    \\
    \getpvarsub(\ctr{c}{\pi})(\sigma_p, \overline{\xi_n})
    &= \getpvarsub(c)(\sigma_p, \xi_n),
    \\
    \getprsubi{\Name \setminus \repname(\pi)}(\ctr{c}{\pi})(\sigma_p, \overline{\xi_n})
    &= \getprsubi{\Name \setminus \repname(\pi)}(c)(\sigma_p, \xi_n),
  \end{align*}
  where
  \begin{align*}
    \xi_n \defeq \getvalsub(\ctr{c}{\pi})(\sigma_p, \overline{\xi_n}).
  \end{align*}
\end{lemma}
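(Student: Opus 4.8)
\textbf{Proof strategy for \cref{lem:id-lambda-props}.}
The plan is to proceed by induction on the structure of the command $c$, proving all three conclusions simultaneously (since the sequential-composition case will need all of them as induction hypotheses). Throughout, I will exploit the fact that $\ctr{\,\cdot\,}{\pi}$ only changes the distribution and lambda arguments of \emph{sample} commands, and in particular that the control-flow skeleton of $\ctr{c}{\pi}$ is identical to that of $c$. The key structural fact to establish early is that for a sample command $x := \csample(n, d, \lambda y.y)$ whose name evaluates to $\mu$, the transformed command is $x := \csample(n, d', \lambda y'.e')$ where, by the validity of the plan implicitly used downstream, the composite ``sample-then-transform'' distribution agrees; but here we do not even need validity — we only need that $\getvalsub(\ctr{c}{\pi})$ recovers the right $\Name$-values so that evaluating $c$ on those values reproduces the same program-variable state and the same \emph{non-reparameterised} priors.

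The base cases are routine. For $\cskip$, $x:=e$, and $\cobs(d,r)$ we have $\ctr{c}{\pi} \equiv c$, so the three conclusions are immediate with $\xi_n = \overline{\xi_n}$ (note $\dom(\overline{\xi_n}) = \emptyset$ in these cases). The interesting atomic case is the sample command $c \equiv (x := \csample(n, d, \lambda y.y))$. Here $\dom(\overline{\xi_n}) = \{\mu\}$ where $\mu = \db{n}\sigma$ (which is $\PVar$-determined since $\fv(n) \subseteq \PVar$). If $(n,d,\lambda y.y) \notin \dom(\pi)$ then $\ctr{c}{\pi} \equiv c$ and we are done as in the base cases. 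Otherwise $\ctr{c}{\pi} \equiv (x := \csample(n, d', \lambda y'.e'))$; running this on the initial state assigns $x$ and $\val_\mu$ the value $r' = \db{e'[\overline{\xi_n}(\mu)/y']}(\sigma_p \oplus \cdots)$, so $\xi_n = [\mu \mapsto r']$. Running the \emph{original} $c$ on $\sigma_p \oplus \xi_n \oplus \sigma_r$ then assigns $x$ and $\val_\mu$ the value $\xi_n(\mu) = r'$ (because the third argument of $c$ is the identity), so $\getpvarsub(c)(\sigma_p,\xi_n)$ agrees with $\getpvarsub(\ctr{c}{\pi})(\sigma_p,\overline{\xi_n})$ on $\PVar$; the $\used$ predicate holds for $c$ on this state by direct inspection; and since $\mu \in \repname(\pi)$, the product defining $\getprsubi{\Name \setminus \repname(\pi)}$ is empty on both sides, hence trivially equal.

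The main obstacle is the sequential-composition case $c \equiv (c';c'')$, which I would handle by mirroring the decomposition argument already used in the proof of \cref{lem:used-properties} (sequential case) and in \cref{lem:integral-seq-decompose}. Given $\used(\ctr{c'}{\pi};\ctr{c''}{\pi}, \sigma_p \oplus \overline{\xi_n} \oplus \overline{\sigma_r}, \overline{\xi_n})$, I would first split $\overline{\xi_n}$ into $\overline{\xi_n'} \uplus \overline{\xi_n''}$ corresponding to the random variables sampled in $\ctr{c'}{\pi}$ and in $\ctr{c''}{\pi}$ respectively, obtaining $\used(\ctr{c'}{\pi}, -, \overline{\xi_n'})$ and $\used(\ctr{c''}{\pi}, \db{\ctr{c'}{\pi}}(-), \overline{\xi_n''})$. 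Applying the induction hypothesis to $c'$ yields $\used(c', -, \xi_n')$ with $\xi_n' = \getvalsub(\ctr{c'}{\pi})(\sigma_p, \overline{\xi_n'})$, equality of the post-$\PVar$-states, and equality of the non-reparameterised priors over $\dom(\xi_n')$. The subtlety is that to invoke the induction hypothesis on $c''$ I need its initial state to be $\getpvarsub(c')(\sigma_p, \xi_n') \oplus \xi_n'' \oplus (\text{something})$, and I must check this matches $\getpvarsub(\ctr{c'}{\pi})(\sigma_p,\overline{\xi_n'})$ — which it does, by the $\PVar$-equality from the $c'$ induction hypothesis together with \cref{lem:cnt-increase-and-rv-nochange} (the $\Name$-part is unchanged by $c'$, so $\xi_n'' = \db{\ctr{c'}{\pi}}(-)|_{\dom(\xi_n'')}$ pulls back correctly). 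Then $\xi_n = \xi_n' \uplus \xi_n''$, and combining the two induction hypotheses (using the fact that $\getprsubi{S}$ over a disjoint union factors as a product, and that $\db{c''}$ does not touch the $\pr_\mu$ for $\mu \notin \dom(\xi_n'')$ by \cref{lem:used-properties}-\cref{lem:used-properties-2}) gives all three conclusions for $c$. For the branching case $c \equiv (\cif\, b\, \{c'\}\, \celse\, \{c''\})$, since $\fv(b) \subseteq \PVar$ the guard evaluates identically under $c$ and $\ctr{c}{\pi}$ on any $\sigma_p$-compatible state, so exactly one branch is taken on both sides and the claim reduces to the induction hypothesis for that branch. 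Finally, the loop case $c \equiv (\cwhile\, b\, \{c'\})$ is handled by the standard fixed-point/Scott-induction pattern used elsewhere in the appendix: one defines a relational predicate $\cT$ on pairs of state transformers $(f, \overline{f})$ asserting the three conclusions (suitably generalised to state transformers), shows $(\lambda\sigma.\bot, \lambda\sigma.\bot) \in \cT$, shows $\cT$ closed under least upper bounds of increasing chains, and shows the loop-body operator preserves $\cT$ by replaying the sequential-composition and if-then-else arguments on $f^\dagger \circ \db{c'}$ versus $\overline{f}^\dagger \circ \db{\ctr{c'}{\pi}}$; this yields $(\db{c}, \db{\ctr{c}{\pi}}) \in \cT$, which is the loop case. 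The real work, and the place where a careless argument could go wrong, is the bookkeeping in the sequential case ensuring that the pulled-back substates $\xi_n'$ and $\xi_n''$ are disjoint and that the ``$\getpvarsub$ of $c'$'' state feeds correctly into the $c''$ hypothesis — but this is precisely the kind of reasoning already carried out in \cref{lem:integral-seq-decompose} and \cref{lem:used-properties}, so I would cite those patterns rather than re-derive from scratch.
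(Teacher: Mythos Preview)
Your proposal is correct and follows essentially the same approach as the paper's proof: structural induction on $c$, with the sample case split on $\dom(\pi)$-membership (using $\mu \in \repname(\pi)$ to make the non-reparameterised prior product vacuous), the sequential case handled by splitting $\overline{\xi_n} = \overline{\xi_n'} \oplus \overline{\xi_n''}$ via the decomposition from \cref{lem:used-properties} and chaining the two induction hypotheses through $\getpvarsub$, and the while case handled by a relational predicate on pairs of state transformers closed under the base, the loop functional, and chains. The only minor slip is that after running $\ctr{c'}{\pi}$ you get $\usedm$ (not $\used$) for $\ctr{c''}{\pi}$ since $\like$ need not be $1$; the paper patches this by resetting $\like \mapsto 1$ and invoking \cref{lem:used-sigma-indep}, which fits cleanly into the bookkeeping you already plan to cite.
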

\begin{proof}
  Fix a reparameterisation plan $\pi$.
  The proof proceeds by induction on the structure of $c$.
  Let $\sigma_p \in \State[\PVar]$, and $\overline{\xi_n} \in \Statesub[\Name]$.
  Assume that $c$ uses only $\lambda y.y$ in the third argument of its sample commands,
  and $\used(\ctr{c}{\pi}, \sigma_p \oplus \overline{\xi_n} \oplus \overline{\sigma_r}, \overline{\xi_n})$ for some $\overline{\sigma_r}$.
  Let $\overline{\sigma} \defeq \sigma_p \oplus \overline{\xi_n} \oplus \overline{\sigma_r}$
  and $S \defeq \Name \setminus \repname(\pi)$.
  Then, we simply have ${ \used(\ctr{c}{\pi}, \overline{\sigma}, \overline{\xi_n}) }$.
    
  \paragraph{\bf Cases $c \equiv \cskip$, $c \equiv (x:=e)$, or $c \equiv \cobs(d,r)$}
  In this case, $\db{c}\sigma(\sampled_\mu) = \sigma(\sampled_\mu)$ for all $\sigma \in \State$ and $\mu \in \Name$.
  So $\dom(\overline{\xi_n}) = \dom(\xi_n) = \emptyset$ and thus $\overline{\xi_n} = \xi_n$.
  We also know $\ctr{c}{\pi} \equiv c$.
  From these, all of the three conclusions follow immediately.

  \paragraph{\bf Case $c \equiv (x:=\csample(n,d,\lambda y.e))$}
  Since $\fv(n) \subseteq \PVar$, there exists $\mu \in \Name$ such that
  $\db{n}(\sigma_p \oplus \sigma_r) = \mu$ for all $\sigma_r \in \State[\Var \setminus \PVar]$.
  So, for all $\sigma_r \in \State[\Var \setminus \PVar]$ and $\mu' \in \Name \setminus \{\mu\}$,
  \begin{align}
    \label{eq:lem:id-lambda-props-sam-cnt}
    \db{c}(\sigma_p \oplus \sigma_r)(\sampled_{\mu'}) =
    \begin{cases}
      (\sigma_p \oplus \sigma_r)(\sampled_{\mu'}) + 1
      & \text{if $\mu' = \mu$}
      \\
      (\sigma_p \oplus \sigma_r)(\sampled_{\mu'})
      & \text{otherwise}.
    \end{cases}
  \end{align}
  From this, we get $\dom(\overline{\xi_n}) = \dom(\xi_n) = \{\mu\}$.
  Further, by assumption, we get $e \equiv y$.
  We now prove the three conclusions based on these observations and case analysis on $(n, d, \lambda y.e)$.

  First, assume $(n, d, \lambda y.e) \notin \dom(\pi)$.
  Then, $\ctr{c}{\pi} \equiv c$ and
  \begin{align*}
    \xi_n &= \getvalsub(\ctr{c}{\pi})(\sigma_p, \overline{\xi_n})
    = [\mu \mapsto \db{\ctr{c}{\pi}}\overline{\sigma}(\val_\mu)]
    = [\mu \mapsto \db{e[\overline{\sigma}(\mu)/y]}\overline{\sigma}]
    = [\mu \mapsto \overline{\xi_n}(\mu)] = \overline{\xi_n},
  \end{align*}
  where the second last equality uses $e \equiv y$.
  Hence, the three conclusions clearly hold.

  Next, assume $(n, d, \lambda y.e) \in \dom(\pi)$.
  Suppose that $\pi(n, d, \lambda y.e) = (\overline{d}, \lambda \overline{y}.\overline{e})$.
  Then, $\ctr{c}{\pi} \equiv (x:=\csample(n,\overline{d},\lambda \overline{y}.\overline{e}))$ and
  \begin{align*}
    \xi_n &= \getvalsub(\ctr{c}{\pi})(\sigma_p, \overline{\xi_n})
    = [\mu \mapsto \db{\ctr{c}{\pi}}\overline{\sigma}(\val_\mu)]
    = [\mu \mapsto \db{\overline{e}[\overline{\sigma}(\mu)/\overline{y}]}\overline{\sigma}]
    = [\mu \mapsto \db{\overline{e}[\overline{\xi_n}(\mu)/\overline{y}]}\overline{\sigma}].
  \end{align*}
  Since \cref{eq:lem:id-lambda-props-sam-cnt} holds also for $\db{\ctr{c}{\pi}}$,
  and since $\dom(\xi_n) = \{\mu\}$, we get the first conclusion: \[\used(c, \sigma_p \oplus \xi_n \oplus \overline{\sigma_r}, \xi_n).\]
  To prove the second conclusion, let $\sigma = \sigma_p \oplus \xi_n \oplus \overline{\sigma_r}$.
  Then, for all $v \in \PVar$,
  \begin{align*}
    \getpvarsub(\ctr{c}{\pi})(\sigma_p, \overline{\xi_n})(v)
    &= \db{\ctr{c}{\pi}}\overline{\sigma}(v)
    =
    \begin{cases}
      \db{\overline{e}[\overline{\sigma}(\mu)/\overline{y}]}\overline{\sigma}
      = \db{\overline{e}[\overline{\xi_n}(\mu)/\overline{y}]}\overline{\sigma}
      & \text{if $v \equiv x$}
      \\
      \overline{\sigma}(v)
      = \sigma_p(v)
      & \text{otherwise},
    \end{cases}
    \\
    \getpvarsub(c)(\sigma_p, \xi_n)(v)
    &= \db{c}{\sigma}(v)
    =
    \begin{cases}
      \db{{e}[{\sigma}(\mu)/{y}]}\sigma
      = \xi_n(\mu)
      = \db{\overline{e}[\overline{\xi_n}(\mu)/\overline{y}]}\overline{\sigma}
      & \text{if $v \equiv x$}
      \\
      {\sigma}(v)
      = \sigma_p(v),
      & \text{otherwise},
    \end{cases}
  \end{align*}
  where the second equation uses $e \equiv y$.
  Hence, the second conclusion holds.
  For the third conclusion, let $n = \cname(\alpha, \_)$.
  Then, $\mu = (\alpha, \_) \in \{(\alpha, i) \in \Name \mid i \in \N\} \subseteq \repname(\pi)$.
  Thus, $\dom(\overline{\xi_n}) \cap S = \dom(\xi_n) \cap S = \{\mu\} \cap S = \{\mu\} \cap (\Name \setminus \repname(\pi)) = \emptyset$.
  From this, we get the third conclusion:
  \begin{align*}
    \getprsubi{S}(\ctr{c}{\pi})(\sigma_p, \overline{\xi_n})
    = 1 = \getprsubi{S}(c)(\sigma_p, \xi_n).
  \end{align*}

  \paragraph{\bf Case $c \equiv (c';c'')$}
  First, we make several observations necessary to prove the conclusion.
  By $\used(\ctr{c}{\pi}, \overline{\sigma}, \overline{\xi_n})$,
  we have $\db{\ctr{c'}{\pi}}\overline{\sigma} \in \State$ and $\db{\ctr{c''}{\pi}}(\db{\ctr{c'}{\pi}}\overline{\sigma}) \in \State$.
  Let
  \begin{align*}
    \overline{\sigma'} &\defeq \db{\ctr{c'}{\pi}}\overline{\sigma},
    &
    \overline{\sigma''} &\defeq \db{\ctr{c''}{\pi}}\overline{\sigma'},
    &
    {\sigma_p'} &\defeq \overline{\sigma'}|_\PVar,
    &
    {\sigma_p''} &\defeq \overline{\sigma''}|_\PVar.
  \end{align*}
  Then, by $\used(\ctr{c}{\pi}, \overline{\sigma}, \overline{\xi_n})$
  and the claim in the proof of \cref{lem:used-properties} (for the sequential composition case),
  there exist $\overline{\xi_n'}$ and $\overline{\xi_n''}$ such that
  \begin{align*}
    &\overline{\xi_n} = \overline{\xi_n'} \oplus \overline{\xi_n''},
    \qquad
    \used(\ctr{c'}{\pi}, \overline{\sigma}, \overline{\xi_n'}),
    \qquad
    \usedm(\ctr{c''}{\pi}, \overline{\sigma'}, \overline{\xi_n''}).
  \end{align*}
  By the latter two, we can apply induction to $(c', \sigma_p, \overline{\xi_n'})$ and $(c'', \sigma_p', \overline{\xi_n''})$,
  and IH gives the following:
  \begin{align*}
    \getpvarsub(c')(\sigma_p, {\xi_n'})
    &=\getpvarsub(\ctr{c'}{\pi})(\sigma_p, \overline{\xi_n'})
    &\text{[By IH on $c'$]}
    \\
    &= \big( \db{\ctr{c'}{\pi}} (\overline{\sigma} [\like \mapsto 1]) \big)|_\PVar
    &\text{[By $\used(\ctr{c'}{\pi}, \overline{\sigma}, \overline{\xi_n'})$]}
    \\
    &= ( \db{\ctr{c'}{\pi}} \overline{\sigma} )|_\PVar
    = \overline{\sigma'}|_\PVar = \sigma_p',
    &\text{[By \cref{lem:used-properties}-\cref{lem:used-properties-3}]}
    \\
    \getpvarsub(c'')(\sigma_p', {\xi_n''})
    &= \getpvarsub(\ctr{c''}{\pi})(\sigma_p', \overline{\xi_n''})
    &\text{[By IH on $c''$]}
    \\
    &= \big( \db{\ctr{c''}{\pi}} (\overline{\sigma'} [\like \mapsto 1]) \big)|_\PVar
    &\text{[By $\usedm(\ctr{c''}{\pi}, \overline{\sigma'}, \overline{\xi_n''})$]}
    \\
    &= (\db{\ctr{c''}{\pi}} \overline{\sigma'} )|_\PVar
    = \overline{\sigma''}|_\PVar = \sigma_p'',
    &\text{[By \cref{lem:used-properties}-\cref{lem:used-properties-3}]}
  \end{align*}
  where
  \begin{align*}
    {\xi_n'} &\defeq \getvalsub(\ctr{c'}{\pi})(\sigma_p, \overline{\xi_n'}),
    &
    {\xi_n''} &\defeq \getvalsub(\ctr{c''}{\pi})(\sigma_p', \overline{\xi_n''}).
  \end{align*}
  By the former equation, we get
  \begin{align*}
    \xi_n
    &= \getvalsub(\ctr{c}{\pi})(\sigma_p, \overline{\xi_n})
    \\
    &= \getvalsub(\ctr{c'}{\pi}; \ctr{c''}{\pi})(\sigma_p, \overline{\xi_n'} \oplus \overline{\xi_n''})
    &\text{[By $\overline{\xi_n}  = \overline{\xi_n'} \oplus \overline{\xi_n''}$]}
    \\
    &=
    \getvalsub(\ctr{c'}{\pi})(\sigma_p, \overline{\xi_n'}) \oplus
    \getvalsub(\ctr{c''}{\pi})(\getpvarsub(\ctr{c'}{\pi})(\sigma_p, \overline{\xi_n'}), \overline{\xi_n''})
    \\
    &=
    \getvalsub(\ctr{c'}{\pi})(\sigma_p, \overline{\xi_n'}) \oplus
    \getvalsub(\ctr{c''}{\pi})(\sigma_p', \overline{\xi_n''})
    &\text{[By the former equation]}
    \\
    &=
    \xi_n' \oplus \xi_n''
  \end{align*}
  where the third equality uses
  $\used(\ctr{c'}{\pi}, \overline{\sigma}, \overline{\xi_n'})$,
  $\used(\ctr{c'}{\pi}; \ctr{c''}{\pi}, \overline{\sigma}, \overline{\xi_n'} \oplus \overline{\xi_n''})$,
  and the second claim in the proof of \cref{lem:integral-seq-decompose}.
  
  We now show the first conclusion.
  By IH on $(c', \sigma_p, \overline{\xi_n'})$ and $(c'', \sigma_p', \overline{\xi_n''})$, we get
  \begin{align*}
    &\exists {\sigma_r'}.\,\used(c', {}  {} \sigma_p \oplus {\xi_n'} \oplus {\sigma_r'}, {\xi_n'}),
    &
    &\exists {\sigma_r''}.\,\used(c'', {}  {} \sigma_p' \oplus {\xi_n''} \oplus {\sigma_r''}, {\xi_n''}).
  \end{align*}
  Let \[\sigma \defeq \sigma_p \oplus (\xi_n' \oplus \xi_n'') \oplus
  \sigma_r'|_{\dom(\sigma_r') \setminus \dom(\xi_n'')}.\]
  Then, $\db{c'}\sigma \in \State$ by $\used(c', {}  {} \sigma_p \oplus {\xi_n'} \oplus {\sigma_r'}, {\xi_n'})$
  and \cref{lem:used-properties}-\cref{lem:used-properties-1},
  and we have
  \begin{align*}
    \sigma|_\PVar &= \sigma_p,
    &
    (\db{c'}\sigma)|_\PVar
    &= (\db{c'}(\sigma_p \oplus {\xi_n'} \oplus {\sigma_r'}))|_\PVar
    &\text{[By \cref{lem:used-properties}-\cref{lem:used-properties-3}]}
    \\
    &&
    &= \getpvar(c')(\sigma_p, \xi_n') = \sigma_p',
    &\text{[By the above]}
    \\
    \sigma|_{\dom(\xi_n')} &= \xi_n',
    &
    (\db{c'}\sigma)|_{\dom(\xi_n'')}
    &= \sigma|_{\dom(\xi_n'')} = \xi_n''.
    &\text{[By \cref{lem:cnt-increase-and-rv-nochange}]}
  \end{align*}
  By these, $\used(c', {}  {} \sigma_p \oplus {\xi_n'} \oplus {\sigma_r'}, {\xi_n'})$,
  $\used(c'', {}  {} \sigma_p' \oplus {\xi_n''} \oplus {\sigma_r''}, {\xi_n''})$, and \cref{lem:used-sigma-indep}, we get
  \begin{align*}
    &\used(c', \sigma, \xi_n'),
    &
    &\usedm(c'', \db{c'}\sigma, \xi_n'').
  \end{align*}
  By these and the third claim in the proof of \cref{lem:integral-seq-decompose}, we get
  \begin{align*}
    \used(c';c'', \sigma, \xi_n' \oplus \xi_n'').
  \end{align*}
  By this and \cref{lem:used-sigma-indep}, we get the following as desired,
  since $\xi_n = \xi_n' \oplus \xi_n''$ (shown in the above)
  and $\sigma = \sigma_p \oplus (\xi_n' \oplus \xi_n'') \oplus \sigma_r$ for some $\sigma_r$:
  \begin{align*}
    \used(c, \sigma_p \oplus \xi_n \oplus \sigma_r, \xi_n).
  \end{align*}

  Next, we show the second conclusion as follows:
  \begin{align*}
    \getpvarsub(\ctr{c}{\pi})(\sigma_p, {\xi_n})
    &= ( \db{\ctr{c}{\pi}} \overline{\sigma} )|_\PVar
    &\text{[By $\used(\ctr{c}{\pi}, \overline{\sigma}, \overline{\xi_n})$]}
    \\
    &= \smash{ \big(\db{\ctr{c''}{\pi}} (\db{\ctr{c'}{\pi}} \overline{\sigma} )\big)|_\PVar }
    \\
    &= \smash{ \overline{\sigma''}|_\PVar = \sigma_p'', }
    \\
    \getpvarsub(c)(\sigma_p, \xi_n)
    &= \getpvarsub(c'; c'')(\sigma_p, \xi_n' \oplus \xi_n'')
    &\text{[By $\xi_n = \xi_n' \oplus \xi_n''$]}
    \\
    &= \getpvarsub(c'')(\getpvarsub(c')(\sigma_p, \xi_n'), \xi_n'')
    \\
    &= \getpvarsub(c'')(\sigma_p', \xi_n'')
    = \sigma_p'',
    &\text{[By the above]}
  \end{align*}
  where the second last equality uses
  $\used(c', \sigma, {\xi_n'})$,
  $\used(c';c'', \sigma, \xi_n' \oplus \xi_n'')$,
  and the second claim in the proof of \cref{lem:integral-seq-decompose}.

  Lastly, we show the third conclusion as follows:
  \begin{align*}
    \smash{ \getprsubi{S}(\ctr{c}{\pi})(\sigma_p, \overline{\xi_n}) }
    &= \smash{ \getprsubi{S}(\ctr{c';c''}{\pi})(\sigma_p, \overline{\xi_n'} \oplus \overline{\xi_n''}) }
    &\text{[By $\overline{\xi_n} = \overline{\xi_n'} \oplus \overline{\xi_n''}$]}
    \\
    &= \smash{ \getprsubi{S}(\ctr{c'}{\pi})(\sigma_p, \overline{\xi_n'}) 
    \cdot \getprsubi{S}(\ctr{c''}{\pi})(\getpvar(\ctr{c'}{\pi})(\sigma_p, \overline{\xi_n'}), \overline{\xi_n''}) }
    \\
    &= \smash{ \getprsubi{S}(\ctr{c'}{\pi})(\sigma_p, \overline{\xi_n'})
    \cdot \getprsubi{S}(\ctr{c''}{\pi})(\getpvar({c'}{})(\sigma_p, {\xi_n'}), \overline{\xi_n''}) }
    &\text{[By IH on $c'$]}
    \\
    &= \getprsubi{S}({c'}{})(\sigma_p, {\xi_n'})
    \cdot \getprsubi{S}({c''}{})(\getpvar({c'}{})(\sigma_p, {\xi_n'}), {\xi_n''})
    &\text{[By IH on $c'$ and $c''$]}
    \\
    &= \getprsubi{S}(c)(\sigma_p, \xi_n' \oplus \xi_n'')
    \\
    &= \getprsubi{S}(c)(\sigma_p, \xi_n).
    &\text{[By ${\xi_n} = {\xi_n'} \oplus {\xi_n''}$]}
  \end{align*}
  Here the second and fifth equalities use the second claim in the proof of \cref{lem:integral-seq-decompose} with the following:
  $\used(\ctr{c'}{\pi}, \overline{\sigma}, \overline{\xi_n'})$,
  $\used(\ctr{c'}{\pi}; \ctr{c''}{\pi}, \overline{\sigma}, \overline{\xi_n'} \oplus \overline{\xi_n''})$,
  $\used(c', \sigma, {\xi_n'})$, and
  $\used(c';c'', \sigma, \xi_n' \oplus \xi_n'')$.
  
  \paragraph{\bf Case $c \equiv (\cif\ b\ \{c'\}\ \celse\ \{c''\})$} 
  In this case, $\ctr{c}{\pi} \equiv (\cif\ b\ \{\ctr{c'}{\pi}\}\ \celse\ \{\ctr{c''}{\pi}\})$.
  Since $\fv(b) \subseteq \PVar$, $\db{b}(\sigma_p \oplus \sigma_r)$ is constant for all $\sigma_r \in \State[\Var \setminus \PVar]$.
  Without loss of generality, assume $\db{b}(\sigma_p \oplus \sigma_r) = \strue$.
  Then, $\db{c}(\sigma_p \oplus \sigma_r) = \db{c'}(\sigma_p \oplus \sigma_r)$
  and $\db{\ctr{c}{\pi}}(\sigma_p \oplus \sigma_r) = \db{\ctr{c'}{\pi}}(\sigma_p \oplus \sigma_r)$
  for all $\sigma_r \in \State[\Var \setminus \PVar]$.
  Hence, by IH on $(c', \sigma_p, \overline{\xi_n})$, we get the three conclusions directly.

  \paragraph{\bf Case $c \equiv (\cwhile\ b\ \{c'\})$}
  In this case, $\ctr{c}{\pi} \equiv (\cwhile\ b\ \{ \ctr{c'}{\pi} \})$.
  Consider the version of $\getprsub(-)$ where the parameter can be a state transformer $\overline{f} : \State \to \State_\bot$, instead of a command.
  Similarly, consider the version of the three conclusions where we use two state transformers $\overline{f}, f : \State \to \State_\bot$,
  again instead of a command.
  We denote the versions by $\getprsub(\overline{f})$ and $\varphi(\overline{f}, f, \sigma_p, \overline{\xi_n})$.
  We write $\overline{f} \sim f$
  if $\getprsub(\overline{f})(\sigma_p, \overline{\xi_n}) > 0$ implies $\varphi(\overline{f}, f, \sigma_p, \overline{\xi_n})$
  for all $\sigma_p \in \State[\PVar]$ and $\overline{\xi_n} \in \Statesub[\Name]$.
  Further, we define $F^{\pi'} : [\State \to \State_\bot] \to [\State \to \State_\bot]$ as
  \begin{align*}
    F^{\pi'}(f)(\sigma)
    &\defeq { \text{if}\ (\db{b}\sigma = \strue)\ \text{then}\ (f{}^\dagger \circ \db{\ctr{c'}{\pi'}})(\sigma)\ \text{else}\ \sigma. }
  \end{align*}
  Note that $F^\pi$ and $F^{\pi_0}$ are the operators used in the semantics of the loops $\smash{ \db{\ctr{c}{\pi}} }$ and $\db{c}$, respectively,
  where $\pi_0$ denotes the empty reparameterisation plan.
  We will show three claims: $\lambda \sigma. \bot \sim \lambda \sigma. \bot$;
  if $\overline{f} \sim f$, then $F^\pi(\overline{f}) \sim F^{\pi_0}(f)$;
  and if increasing sequences $\{\overline{f_k}\}_{k \in \N}$ and $\{f_k\}_{k \in \N}$ satisfy
  $\overline{f_k} \sim f_k$ for all $k \in \N$, then $\overline{f_\infty} \sim f_\infty$ holds
  for $\overline{f_\infty} = \bigsqcup_{k \in \N} \overline{f_k}$ and $f_\infty = \bigsqcup_{k \in \N} f_k$.
  These three claims imply $\db{\ctr{c}{\pi}} \sim \db{c}$, which in turn proves the desired three conclusions.
  
  The first claim holds simply because $\getprsub(\lambda \sigma.\bot)(-,-)$ is always 0.
  To show the second claim, consider $\overline{f}, f : \State \to \State_\bot$ such that $\overline{f} \sim f$.
  We first replay our proof for the sequential-composition case on $(\overline{f}, f)$
  after viewing $\overline{f}{}^\dagger \circ \db{\ctr{c'}{\pi}}$ and ${{f}{}^\dagger \circ \db{c'}}$
  as the sequential composition of $\ctr{c'}{\pi}$ and ${ \overline{f} }$, and of $c'$ and $f$, respectively.
  This replay, then, gives the relationship ${ \overline{f}{}^\dagger \circ \db{\ctr{c'}{\pi}} \sim f{}^\dagger \circ \db{c'} }$.
  Next, we replay our proof for the if case on $\smash{ (F^\pi(\overline{f}), F^{\pi_0}({f})) }$,
  after viewing ${ \overline{f}{}^\dagger \circ \db{\ctr{c'}{\pi}} }$ and $f{}^\dagger \circ \db{c'}$ as the true branches,
  and $\lambda \sigma.\,\sigma = \db{\cskip}$ as the false branch.
  This replay implies the required relationship $F^{\pi}(\overline{f}) \sim {F}^{\pi_0}({f})$.
  
  To show the third condition, consider increasing sequences $\{\overline{f_k}\}_{k \in \N}$ and $\{f_k\}_{k \in \N}$
  such that $\overline{f_k} \sim f_k$ for all $k \in \N$.
  Let $\overline{f_\infty} = \bigsqcup_{k \in \N} \overline{f_k}$ and $f_\infty = \bigsqcup_{k \in \N} f_k$.
  Consider any $\sigma_p$ and $\overline{\xi_n}$ such that $\getprsub(\overline{f_\infty})(\sigma_p, \overline{\xi_n}) > 0$.
  We should show $\varphi(\overline{f_\infty}, f_\infty, \sigma_p, \overline{\xi_n})$.
  Pick any $\sigma_r \in \State[\Var \setminus (\PVar \cup \dom(\overline{\xi_n}))]$ with $\sigma_r(\like)=1$.
  Let $\overline{\sigma} = \sigma_p \oplus \overline{\xi_n} \oplus \sigma_r$
  and $\sigma = \sigma_p \oplus \xi_n \oplus \sigma_r$.
  Note that the value of each term in $\varphi(\cdots)$ (i.e., $\used(\cdots)$, $\getpvarsub(\cdots)$, and $\getprsubi{S}(\cdots)$)
  is independent of the choice of $\sigma_r$ by \cref{lem:used-sigma-indep} and the well-definedness of $\getpvarsub$ and $\getprsubi{S}$.
  Since the two given sequences are increasing, there exists $K \in \N$ such that
  $\overline{f_\infty}(\overline{\sigma}) = \overline{f_K}(\overline{\sigma})$ and
  ${f_\infty}(\sigma) = {f_K}(\sigma)$.
  From this and $\getprsub(\overline{f_\infty})(\sigma_p, \overline{\xi_n}) > 0$,
  we have $\getprsub(\overline{f_K})(\sigma_p, \overline{\xi_n}) > 0$.
  This in turn gives  $\varphi(\overline{f_K}, f_K, \sigma_p, \overline{\xi_n})$ since $\overline{f_K} \sim f_K$.
  Lastly, again by  $\overline{f_\infty}(\overline{\sigma}) = \overline{f_K}(\overline{\sigma})$ and
  ${f_\infty}(\sigma) = {f_K}(\sigma)$, we obtain $\varphi(\overline{f_\infty}, f_\infty, \sigma_p, \overline{\xi_n})$ as desired.
\end{proof}
} 
\AtEndDocument{

\section{Deferred Results in \S\ref{sec:local-lips-req}}

\subsection{Deferred Statements and Their Proofs}

\begin{lemma}
  \label{lem:diff-rule-loclip}
  Let $f, g: \R^n \to \R$ be locally Lipschitz functions.
  Then, the following differentiation rules hold for almost every $x \in \R^n$:
  \begin{align*}
    \nabla (f+g)(x) &= \nabla f(x) + \nabla g(x),
    \\
    \nabla (f \cdot g)(x) &= \nabla f(x) \cdot g(x) + f(x) \cdot \nabla g(x),
    \\
    \nabla \log f(x) &= 1/f(x) \cdot \nabla f(x),
  \end{align*}
  where for the third rule we assume $f(y)>0$ for all $y \in \R^n$.
\end{lemma}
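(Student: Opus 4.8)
The statement is a classical consequence of Rademacher's theorem together with the fact that local Lipschitzness is preserved under the elementary algebraic operations involved (sum, product, and composition with $\log$ on a positive function). The plan is to first reduce everything to the existence of gradients a.e.\ and then verify the three identities pointwise at the common points of differentiability using the ordinary one-variable calculus rules.

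First I would record the standing facts. By Rademacher's theorem, a locally Lipschitz function $h:\R^n\to\R$ is differentiable at almost every $x\in\R^n$; let $D_h\subseteq\R^n$ denote its (full-measure) set of differentiability points. Next, the three composite functions $f+g$, $f\cdot g$, and $\log f$ are themselves locally Lipschitz: sums of locally Lipschitz functions are locally Lipschitz; products are locally Lipschitz because on any bounded neighbourhood both $f$ and $g$ are bounded and Lipschitz, and $(f_1g_1 - f_2g_2) = f_1(g_1-g_2) + (f_1-f_2)g_2$; and $\log\circ f$ is locally Lipschitz because $\log$ is locally Lipschitz on $(0,\infty)$ and $f$ takes values in a compact subset of $(0,\infty)$ over any bounded neighbourhood (using $f>0$ everywhere and continuity). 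Hence each of $f+g$, $f\cdot g$, $\log f$ is also differentiable almost everywhere.

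Then I would intersect the relevant full-measure sets. Put $E \defeq D_f \cap D_g \cap D_{f+g} \cap D_{fg} \cap D_{\log f}$, which has full measure since it is a finite intersection of full-measure sets. For any $x\in E$, all five functions are (totally) differentiable at $x$, so the standard calculus identities apply: the gradient is linear, so $\nabla(f+g)(x)=\nabla f(x)+\nabla g(x)$; the Leibniz rule for differentiable functions gives $\nabla(f\cdot g)(x) = \nabla f(x)\cdot g(x) + f(x)\cdot\nabla g(x)$; and the chain rule applied to $\log\circ f$, legitimate because $\log$ is differentiable at $f(x)>0$ and $f$ is differentiable at $x$, gives $\nabla(\log f)(x) = (1/f(x))\,\nabla f(x)$. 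Since $E$ has full measure, all three identities hold for almost every $x\in\R^n$, as claimed.

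The only point requiring genuine care — and the one I would spell out rather than wave through — is the preservation of local Lipschitzness under the product and under post-composition with $\log$, since differentiability a.e.\ for the composites is derived from their local Lipschitzness via Rademacher, not assumed. There is no deep obstacle here: it is purely the routine neighbourhood-by-neighbourhood bookkeeping sketched above, exploiting that local Lipschitzness only needs to be checked on small bounded neighbourhoods where the functions are bounded. I do not foresee any other difficulty; once the five functions are known to be differentiable a.e., the identities are immediate pointwise facts about differentials.
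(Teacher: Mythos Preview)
Your proof is correct. The paper takes a slightly different, more modular route: it observes that $+:\R^2\to\R$, $\cdot:\R^2\to\R$, and $\log:\R_{>0}\to\R$ are everywhere differentiable, and then invokes a separately stated chain-rule lemma (\cref{lem:chain-rule-diff-loclip}) asserting that if $h$ is locally Lipschitz and $k$ is differentiable, then $k\circ h$ is differentiable a.e.\ and $D(k\circ h)(x)=Dk(h(x))\cdot Dh(x)$ a.e. Instantiating with $h=(f,g)$ and $k\in\{+,\cdot\}$, or with $h=f$ and $k=\log$, yields all three identities at once. Your direct Rademacher-then-intersect argument is equally valid and arguably more self-contained; the paper's packaging just isolates the common pattern (differentiable outer function, locally Lipschitz inner function) into a reusable lemma. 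One minor redundancy worth noting in your version: once $f$ and $g$ are both (Fr\'echet) differentiable at $x$, the sum, product, and $\log f$ (for $f(x)>0$) are automatically differentiable at $x$ with the claimed formulas, so the sets $D_{f+g}$, $D_{fg}$, $D_{\log f}$ already contain $D_f\cap D_g$ and need not be intersected separately --- taking $E=D_f\cap D_g$ suffices, and the paragraph establishing local Lipschitzness of the composites, while correct, is not actually needed.
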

\begin{proof}
  Note that the three functions $+ : \R^2 \to \R$, $\cdot : \R^2 \to \R$, and $\log : \R_{>0} \to \R$ are all differentiable.
  Hence, applying \cref{lem:chain-rule-diff-loclip} produces the claim.
\end{proof}

\begin{theorem}
  \label{lem:suff-cond-int-diff-lip}
  Let $f : \R \times \R^n \to \R$ be a measurable function that satisfies the next four conditions:
  \begin{itemize}
  \item For all $x \in \R^n$, $f(-, x) : \R \to \R$ is continuous.
  \item For all $\theta \in \R$, $\nabla_\theta f(\theta, x)$ is well-defined for almost all $x \in \R^n$.
  \item For all $\theta \in \R$, $\smash{\int_{\R^n}} f(\theta, x) \,dx$ is finite.
  \item For all $\theta \in \R$, there is an open $U \subseteq \R$ such that $\theta \in U$ and $\int_{\R^n} \lip\big(f(-,x)|_U\big) \,dx$ is finite.
  \end{itemize}
  Here $\nabla_\theta(-)$ and $\lip(-)$ are defined as in \Cref{lem:suff-cond-int-diff},
  and ``almost all'' is with respect to the Lebesgue measure.
  Then, for all $\theta \in \R$, both sides of the following are well-defined and equal:
  \begin{align*}
    \nabla_\theta \int_{\R^n} f(\theta, x) \,dx
    &= \int_{\R^n} \nabla_\theta f(\theta, x) \,dx.
  \end{align*}
\end{theorem}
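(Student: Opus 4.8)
The plan is to establish the identity pointwise in $\theta$ by the classical Leibniz argument: rewrite the derivative of $\theta \mapsto \int f(\theta,x)\,dx$ as a limit of difference quotients, move the limit inside the integral via the dominated convergence theorem, and use the Lipschitz bound of the fourth hypothesis as the integrable dominating function.

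First I would fix an arbitrary $\theta_0 \in \R$ and, invoking the fourth hypothesis, choose an open $U \subseteq \R$ with $\theta_0 \in U$ and $\int_{\R^n} \lip\!\big(f(-,x)|_U\big)\,dx < \infty$. Write $L(x) \defeq \lip\!\big(f(-,x)|_U\big) \in [0,\infty]$. Here the first hypothesis (continuity of $f(-,x)$) is used exactly once: since $f(-,x)$ is continuous on $U$, the supremum defining $L(x)$ over all pairs of points of $U$ agrees with the supremum over a fixed countable dense subset of $U$, so $L$ is a measurable function of $x$; and since $\int_{\R^n} L < \infty$, we have $L(x) < \infty$ for a.e.\ $x$. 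Next choose $\delta > 0$ with $(\theta_0-\delta,\theta_0+\delta) \subseteq U$, and for $0 < |h| < \delta$ set $g_h(x) \defeq \big(f(\theta_0+h,x)-f(\theta_0,x)\big)/h$. Each $g_h$ is measurable in $x$ (by joint measurability of $f$), and by the definition of $L$ we have $|g_h(x)| \le L(x)$ for every $x$ with $L(x) < \infty$, hence for a.e.\ $x$. By the second hypothesis, for a.e.\ $x$ the limit $\lim_{h\to 0} g_h(x)$ exists and equals $\nabla_\theta f(\theta_0,x)$; consequently $\nabla_\theta f(\theta_0,\cdot)$ is measurable (an a.e.\ pointwise limit of measurable functions) and satisfies $|\nabla_\theta f(\theta_0,x)| \le L(x)$ a.e., so it is integrable and the right-hand side $\int_{\R^n} \nabla_\theta f(\theta_0,x)\,dx$ is well-defined and finite.

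Then, using the third hypothesis (so that $\int_{\R^n} f(\theta,x)\,dx$ is finite for each $\theta$, in particular for $\theta = \theta_0$ and $\theta = \theta_0 + h$), linearity of the integral gives
\[
\frac{\int_{\R^n} f(\theta_0+h,x)\,dx - \int_{\R^n} f(\theta_0,x)\,dx}{h} \;=\; \int_{\R^n} g_h(x)\,dx
\qquad\text{for } 0 < |h| < \delta.
\]
For any sequence $h_k \to 0$ with $0 < |h_k| < \delta$ the functions $g_{h_k}$ converge to $\nabla_\theta f(\theta_0,\cdot)$ a.e.\ and are dominated by $L \in L^1(\R^n)$, so the dominated convergence theorem yields $\int_{\R^n} g_{h_k}(x)\,dx \to \int_{\R^n} \nabla_\theta f(\theta_0,x)\,dx$. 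As this holds for every such sequence, $\lim_{h\to 0}\int_{\R^n} g_h(x)\,dx = \int_{\R^n} \nabla_\theta f(\theta_0,x)\,dx$, i.e.\ $\theta \mapsto \int_{\R^n} f(\theta,x)\,dx$ is differentiable at $\theta_0$ with $\nabla_\theta \int_{\R^n} f(\theta,x)\,dx = \int_{\R^n} \nabla_\theta f(\theta_0,x)\,dx$. Since $\theta_0$ was arbitrary, this proves the claim.

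There is no real obstacle here: the argument is the standard differentiation-under-the-integral-sign proof. The only points requiring a little care are bookkeeping ones — checking that $L$ and $\nabla_\theta f(\theta_0,\cdot)$ are measurable and integrable (which is where the continuity hypothesis and the a.e.\ existence hypothesis are actually consumed), and passing from the sequential conclusion of the dominated convergence theorem to the genuine limit as $h\to 0$. I expect verifying these measurability/integrability details cleanly to be the most finicky part, but it is routine.
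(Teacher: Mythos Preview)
Your proposal is correct and follows essentially the same route as the paper's proof: fix $\theta_0$, take the open $U$ and the integrable dominating function $L(x)=\lip(f(-,x)|_U)$ from the fourth hypothesis (with measurability of $L$ obtained from continuity via a countable dense subset), bound the difference quotients by $L$, and apply dominated convergence along sequences before passing to the genuine limit. The only cosmetic difference is that you first shrink to a $\delta$-ball inside $U$ so domination holds for all small $h$, whereas the paper allows arbitrary sequences $\theta_i\to\theta_0$ and notes domination holds once $\theta_i\in U$; these are equivalent.
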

\rev{\begin{proof}
    Before starting the main proof, we show that for any open $U' \subseteq \R$,
    the following function $L : \R^n \to \R \cup \{\infty\}$ is measurable:
    \[ L(x) \defeq \lip\big( f(-,x)|_{U'} \big). \]
    Define $V \subseteq \R^2$ and $\ell : V \times \R^n \to \R$ as 
    \[ V \defeq \{v \in U' \times U' \mid v_1 \neq v_2\} \subseteq \R^2,
    \qquad \ell(v,x) \defeq \frac{|f(v_1,x) - f(v_2,x)|}{|v_1-v_2|}.\]
    Let $V'$ be a countable, dense subset of $V$.
    Then, for all $x \in \R^n$,
    \begin{align*}
      L(x) = \sup_{v \in V} \ell(v,x) = \sup_{v' \in V'} \ell(v',x),
    \end{align*}
    where the first equality is by the definition of $\lip(-)$,
    and the second equality holds since $\ell(-,x) : V \to \R$ is continuous for all $x \in \R^n$ by the first condition of this theorem.
    Since $L$ is the supremum of countably many measurable functions $\{ \ell(v',-) : \R^n \to \R \mid v' \in V'\}$,
    $L$ itself is a measurable function as desired.
    Note that the measurability of $L$ ensures that the integral $\int_{\R^n} L(x)\,dx$ in the fourth condition of this theorem
    is well-defined (as a value in $\R \cup \{\infty\}$).

    We now start the main proof.
    Pick any $\theta' \in \R$.
    Define $g : \R \setminus \{\theta'\} \to \R$ as
    \[ g(\theta) \defeq \int_{\R^n} \frac{f(\theta, x) - f(\theta', x)}{\theta-\theta'} \,dx, \]
    where $g(\theta)$ is finite by the third condition of this theorem.
    Then,
    \begin{align*}
      \Big( \nabla_\theta \int_{\R^n} f(\theta, x) \,dx \Big) \Big|_{\theta = \theta'}
      &=
      \lim_{\theta \to \theta'} \frac{1}{\theta-\theta'} \Big( \int_{\R^n} f(\theta, x) \,dx  - \int_{\R^n} f(\theta', x) \,dx \Big )
      =
      \lim_{\theta \to \theta'} g(\theta),
    \end{align*}
    where each equality denotes that LHS is well-defined if and only if RHS is well-defined, and if so, LHS and RHS are equal.
    Here the first equality is from the definition of $\nabla_\theta$,
    and the second equality from the third condition of this theorem.
    Note that the following are equivalent for any $r \in \R$:
    \begin{itemize}
    \item[(i)] $\lim_{\theta \to \theta'} g(\theta) = r$.
    \item[(ii)] For any $\{\theta_i\}_{i\in \N} \subseteq \R \setminus \{\theta'\}$,
      $\lim_{i \to \infty} \theta_i = \theta'$ implies $\lim_{i \to \infty} g(\theta_i) = r$.
    \end{itemize}
    So it suffices to show that (ii) holds with an appropriate choice of $r$.

    To do so, consider $\{\theta_i\}_{i\in \N} \subseteq \R \setminus \{\theta'\}$ such that $\lim_{i \to \infty} \theta_i = \theta'$.
    Define $h_i : \R^n \to \R$ as
    \[  h_i(x) \defeq \frac{f(\theta_i,x) - f(\theta',x)}{\theta_i - \theta'}. \]
    Then, $g(\theta_i) = \int_{\R^n} h_i(x) \,dx$ by the definition of $h_i$,
    and $\lim_{i \to \infty} h_i(x) = (\nabla_\theta f(\theta, x))|_{\theta=\theta'}$
    for almost all $x \in \R^n$ by the second condition of this theorem.
    So, if the dominated convergence theorem is applicable to $\lim_{i \to \infty} \int_{\R^n} h_i(x) \,dx$,
    we would have
    \begin{align*}
      \lim_{i \to \infty} g(\theta_i)
      = \lim_{i \to \infty}\int_{\R^n} h_i(x) \,dx
      &= \int_{\R^n} \Big(\nabla_\theta f(\theta, x)\Big)\Big|_{\theta=\theta'} \,dx,
    \end{align*}
    where each equality denotes that both LHS and RHS are well-defined and are equal.
    Therefore, it suffices to show that the preconditions of the dominated convergence theorem
    for $\lim_{i \to \infty} \int_{\R^n} h_i(x) \,dx$ are satisfied.
    
    We show that the preconditions indeed hold, which concludes the proof:
    \begin{itemize}
    \item ``$h_i$ is measurable for all $i \in \N$'': This holds by the measurability of $f$.
    \item ``$\lim_{i \to \infty} h_i(x) = (\nabla_\theta f(\theta, x))|_{\theta=\theta'}$ for almost all $x \in \R^n$'':
      This was already shown above.
    \item ``There exist $H : \R^n \to \R \cup \{\infty\}$ and $I \in \N$ such that
      (a) $\int_{\R^n} H(x) \,dx$ is finite, and (b) for all $i \geq I$, $|h_i(x)| \leq H(x)$ for almost all $x \in \R^n$'':
      By the fourth condition of this theorem, there is an open $U \subseteq \R$ such that $\theta' \in U$ and
      $\int_{\R^n} \lip( f(-,x)|_U ) \,dx$ is finite.
      Let $H : \R^n \to \R \cup \{\infty\}$ be \[ H(x) \defeq \lip\big( f(-,x)|_U \big). \]
      Then, (a) clearly holds since $\smash{\int_{\R^n}} \lip( f(-,x)|_U ) \,dx$ is finite.
      Further, (b) holds as follows:
      by $\lim_{i \to \infty} \theta_i = \theta'$ and $U$ being an open neighborhood of $\theta'$,
      there is $I \in \N$ such that $\theta_i \in U$ for all $i \geq I$;
      therefore, for all $i \geq I$,
      \begin{align*}
        |h_i(x)| &=
        \frac{|f(\theta_i,x) - f(\theta',x)|}{|\theta_i - \theta'|}
        \leq
        \lip\big( f(-,x)|_U \big) = H(x)
        \quad\text{for all $x \in \R^n$},
      \end{align*}
      where the inequality holds by the definition of $\lip(-)$ with $\theta_i, \theta' \in U$ and $\theta_i \neq \theta'$.
      \qedhere
    \end{itemize}
\end{proof}}

\rev{\begin{remark}
    The second condition $(*)$ of \Cref{lem:suff-cond-int-diff-lip} is weaker than
    the following, corresponding condition $(*')$ of a standard theorem
    for interchanging differentiation and integral (e.g., \cite[Corollary 2.8.7]{Bogachev07}):
    ``for almost all $x \in \R^n$, $\nabla_\theta f(\theta,x)$ is well-defined for all $\theta \in \R$.''
    The difference arises from whether a proof uses the mean value theorem or not:
    a proof of the standard theorem finds the function $H$ for the dominated convergence theorem,
    by applying the mean value theorem which requires the stronger condition $(*')$;
    the proof of \Cref{lem:suff-cond-int-diff-lip} finds $H$ not by applying the mean value theorem
    (but by using the fourth condition of the theorem),
    so the weaker condition $(*)$ is sufficient for the proof.
    In this sense, \Cref{lem:suff-cond-int-diff-lip} is close to \cite[Exercise 2.12.68]{Bogachev07}.
    \qed
\end{remark}}

\begin{lemma}
  \label{lem:chain-rule-diff-loclip}
  Let $f : X_1 \to X_2$ and $g : X_2 \to X_3$ for some open sets $X_i \subseteq \R^{n_i}$.
  Suppose that $f$ is locally Lipschitz and $g$ is differentiable.
  Then, $g \circ f : X_1 \to X_3$ is differentiable almost everywhere
  and the chain rule for $g \circ f$ holds almost everywhere, i.e.,
  \begin{align*}
    D(g \circ f)(x) = D(g)(f(x)) \cdot D(f)(x)
  \end{align*}
  for almost every $x \in X_1$.
  Here we use the Lebesgue measure as an underlying measure.
\end{lemma}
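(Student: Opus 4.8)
The statement is a chain-rule-almost-everywhere result where the outer map $g$ is differentiable (everywhere) and the inner map $f$ is only locally Lipschitz. The plan is to combine Rademacher's theorem with the fact that locally Lipschitz maps preserve null sets, and then verify the chain rule pointwise wherever $f$ is differentiable. First I would invoke Rademacher's theorem: since $f$ is locally Lipschitz on the open set $X_1 \subseteq \R^{n_1}$, it is differentiable at every point of $X_1$ outside a set $N \subseteq X_1$ of Lebesgue measure zero. I would then argue that $g \circ f$ is differentiable at every point of $X_1 \setminus N$: this is just the ordinary chain rule in the form ``if $f$ is (totally/Fréchet) differentiable at $x$ and $g$ is differentiable at $f(x)$, then $g \circ f$ is differentiable at $x$ with $D(g\circ f)(x) = D(g)(f(x)) \cdot D(f)(x)$,'' which is a standard fact of multivariable calculus and requires no extra hypotheses once both one-sided differentiabilities are in hand. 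Since $g$ is differentiable everywhere on $X_2$, in particular it is differentiable at $f(x)$ for every $x$, so the classical chain rule applies at each $x \in X_1 \setminus N$.

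The only subtlety is that $X_1 \setminus N$ has full measure, i.e. $N$ is Lebesgue-null in $\R^{n_1}$; but this is exactly the content of Rademacher's theorem, so nothing further is needed on that front. (One should be slightly careful to state Rademacher's theorem in its local form, covering $X_1$ by countably many balls on which $f$ is Lipschitz, and noting that a countable union of null sets is null; but this is routine.) The conclusion ``$g \circ f$ is differentiable almost everywhere and the chain rule holds almost everywhere'' then follows by taking the exceptional set to be $N$.

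The main obstacle — to the extent there is one — is purely bookkeeping: making sure the ``classical chain rule'' is invoked in the correct form. A potential trap is to try to use a one-dimensional-style chain rule or to worry about $f$ not being differentiable in some weaker (e.g. directional) sense off $N$; this is avoided by using Rademacher to get full Fréchet differentiability of $f$ on $X_1\setminus N$, after which the composition with the (Fréchet-)differentiable $g$ is entirely standard. No appeal to Clarke subdifferentials or to a.e. chain rules for Lipschitz-composed-with-Lipschitz is needed here, precisely because $g$ is differentiable everywhere. Concretely, the write-up would be: (1) cover $X_1$ by countably many open balls $B_k$ with $f|_{B_k}$ Lipschitz; (2) apply Rademacher on each $B_k$ to get a null set $N_k$ off which $f$ is differentiable, and set $N := \bigcup_k N_k$, which is null; (3) fix $x \in X_1 \setminus N$, note $f$ is differentiable at $x$ and $g$ is differentiable at $f(x)$, and apply the classical chain rule to conclude $D(g\circ f)(x) = D(g)(f(x)) \cdot D(f)(x)$; (4) since $N$ is null, this establishes both claimed assertions.
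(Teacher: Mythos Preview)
Your proposal is correct and follows essentially the same approach as the paper: apply Rademacher's theorem to $f$, use that $g$ is differentiable everywhere, and invoke the classical chain rule at every point where $f$ is differentiable. If anything, your write-up is slightly more streamlined than the paper's, which first separately argues that $g \circ f$ is locally Lipschitz (hence differentiable a.e.) before intersecting the three ``good'' sets; your version obtains differentiability of $g \circ f$ directly from the classical chain rule at each $x \in X_1 \setminus N$, making that extra step unnecessary.
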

\begin{proof}
  Since local Lipschitzness is preserved under a function composition,
  $g \circ f$ is locally Lipschitz and thus differentiable almost everywhere.
  Since $f$ is also differentiable almost everywhere and $g$ is differentiable everywhere,
  the set
  \begin{align*}
    U = X_1 \setminus \{x \in X_1 \mid {}
    & (\text{$g \circ f$ is differentiable at $x$})
    \\
    & \land (\text{$g$ is differentiable at $f(x)$})
    \\
    & \land (\text{$f$ is differentiable at $x$}) \}
  \end{align*}
  has Lebesgue measure zero.
  Note that the differentiability of $g$ is importantly used here;
  if $g$ is non-differentiable even at a point, $U$ can have positive measure.
  The chain rule for $g \circ f$ holds for each $x \in U$ and this concludes the proof.
\end{proof}

\subsection{Proof of \cref{thm:unbiased-grad-lip}}

\begin{proof}[Proof of \cref{thm:unbiased-grad-lip}]
  The proof is essentially the same as the proof of \cref{thm:unbiased-grad},
  except that we invoke the following properties of local Lipschitzness (instead of differentiability):
  the composition of locally Lipschitz functions is again locally Lipschitz,
  and the differentiation rules for $+$, $\times$, and $log$ hold almost everywhere for locally Lipschitz functions (\cref{lem:diff-rule-loclip}).
\end{proof}
} 
\AtEndDocument{

\section{Deferred Results in \S\ref{:2:sa}}

\subsection{Proof of \cref{thm:abstract-semantics-well-formedness}}

\begin{proof}[Proof of \cref{thm:abstract-semantics-well-formedness}]
    We prove the theorem by induction on the structure of $c$.  Let $(p,d,V) \defeq \cdb{c}$, and pick $v \in \Var$. We have to show that $p(v) \supseteq d(v)^c$ and $d(v) \supseteq V$. We call these two requirements as conditions (i) and (ii).
    
    \paragraph{\bf Case $c \equiv \cskip$.} In this case, $p(v) = \Var$ and $V = \emptyset$, from which the conditions (i) and (ii) follow.
    
    \paragraph{\bf Case $c \equiv x:=e$.} In this case, $V = \emptyset$. So, the condition (ii) holds. For the proof of the condition (i), we do case analysis on $v$. If $v$ is the updated variable $x$, we have $p(v) = \edb{e}$ and $d(v) = \fv(e)$. Since $\edb{e} \supseteq \fv(e)^c$, the condition holds. If $v$ is different from $x$, then $p(v)$ is $\Var$, and so it includes $d(v)^c$.  
    
    \paragraph{\bf Case $c \equiv \cobs(\cnor(e_1,e_2),r)$.} The proof of this case is similar to the one for the assignments. Since $V = \emptyset$, the condition (ii) holds. If $v$ is the variable $\like$, then
    \begin{align*}
    p(v) = \edb{\like \times \cpdfnor(r;e_1,e_2)}
    \supseteq \fv(\like \times \cpdfnor(r;e_1,e_2))^c = \Big(\{\like\} \cup \fv(e_1) \cup \fv(e_2)\Big)^c = d(v)^c. 
    \end{align*}
    So, the condition (i) holds in this case. If $v$ is not the variable $\like$, then $p(v) = \Var$, from which the condition (i) follows.
      
    \paragraph{\bf Case $c \equiv x:=\csample(\cname(\alpha,e),\cnor(e_1,e_2),\lambda y.e')$.} In this case, $V = \emptyset$, from which the condition (ii) follows. We do case analysis on whether $e$ is a real constant $r$ or not. During the case analysis, we use the assumption that $\fv(e)^c \subseteq \edb{e}$ for all $e$, without mentioning it explicitly. 
    
    First, we deal with the case that $e \equiv r$. Let $\mu \defeq \mathit{create\_name}(\alpha,r)$. If $v$ is none of $x$, $\val_\mu$, $\pr_\mu$, and  $\sampled_\mu$, we have $p(v) = \Var$, which gives the condition (i). If $v \in \{x,\val_\mu\}$, we prove the condition (i) as follows:
    \begin{align*}
    d(v)^c = \fv(e'[\mu/y])^c \subseteq \edb{e'[\mu/y]} = p(v).
    \end{align*}
    If $v \equiv \pr_\mu$, we calculate the condition (i) as follows:
    \begin{align*}
    d(\pr_\mu)^c = (\{\mu\} \cup \fv(e_1) \cup \fv(e_2))^c = \fv(\cpdfnor(\mu;e_1,e_2))^c \subseteq \edb{\cpdfnor(\mu;e_1,e_2)} = p(\pr_\mu).
    \end{align*}
    If $v \equiv \sampled_\mu$, we derive the condition (i) as follows:
    \begin{align*}
    d(\sampled_\mu)^c = \{\sampled_\mu\}^c = \fv(\sampled_\mu + 1)^c \subseteq \edb{\sampled_\mu + 1} = p(\sampled_\mu).
    \end{align*}
    
    Next, we handle the case that $e$ is not a real constant. If $v$ is none of $x$, $\val_\mu$, $\pr_\mu$, and $\sampled_\mu$ for some $\mu = (\alpha,\_)$, we have $p(v) = \Var$, which implies the condition (i). If $v \equiv x$, we show the condition (i) as follows:
    \begin{align*}
    p(v) = \Big(\fv(e)^c \cap \bigcap_{\mu = (\alpha,\_) \in \Name} \edb{e'[\mu/y]}\Big) 
    \supseteq \Big(\fv(e) \cup \bigcup_{\mu = (\alpha,\_) \in \Name} \fv(e'[\mu/y])\Big)^c
    = d(v)^c.
    \end{align*}
    If $v \equiv \val_\mu$ for some $\mu = (\alpha,\_)$, we calculate the condition (i) as follows:
    \begin{align*}
    p(v) = \Big(\fv(e)^c \cap \edb{e'[\mu/y]}\Big) \supseteq \Big(\fv(e) \cup \{\val_\mu\} \cup \fv(e'[\mu/y])\Big)^c = d(v)^c.
    \end{align*}
    If $v \equiv \pr_\mu$ for some $\mu = (\alpha,\_)$, we prove the condition (i) as follows:
    \begin{align*}
    p(v) = \fv(e)^c \cap \edb{\cpdfnor(\mu;e_1,e_2)} 
    & {} \supseteq 
    \fv(e)^c \cap \fv(\cpdfnor(\mu;e_1,e_2))^c 
    \\
    & {}
    \supseteq
    \Big(\fv(e) \cup \{\mu,\pr_\mu\} \cup \fv(e_1) \cup \fv(e_2)\Big)^c = p(\pr_\mu).
    \end{align*}
    Finally, if $v \equiv \sampled_\mu$ for some $\mu = (\alpha,\_)$, we show the condition (i) as follows:
    \begin{align*}
    p(v) = \fv(e)^c \cap \edb{\sampled_\mu + 1} \supseteq \fv(e)^c \cap \fv(\sampled_\mu +1)^c = (\fv(e) \cup \{\sampled_\mu\})^c = d(v)^c.
    \end{align*}
        
    \paragraph{\bf Case $c \equiv c';c''$.} Let $(p',d',V') \defeq \cdb{c'}$ and $(p'',d'',V'') \defeq \cdb{c''}$. The condition (ii) holds since 
    \begin{align*}
    d(v) = V' \cup d'_\cup(d''(v)) \supseteq V' \cup d'_\cup(V'') = V.
    \end{align*} 
    For the condition (ii), we prove the required subset relationship as follows: 
    \begin{align*}
    d(v)^c 
    = (V' \cup d'_\cup(d''(v)))^c 
    & {} = (V')^c \cap \bigcap_{w \in d''(v)}d'(w)^c
    \\
    & {} \subseteq (V')^c \cap \bigcap_{w \in d''(v)}p'(w) \cap \bigcap_{w \in p''(v)^c} d'(w)^c 
    \\
    & {} = \Big(V' \cup p'_\cap(d''(v))^c \cup d'_\cup(p''(v)^c)\Big)^c = p(v). 
    \end{align*}
    The subset relationship in the above derivation holds because $d'(w)^c \subseteq p'(w)$ and $d''(v) \supseteq p''(v)^c$ by induction hypothesis.
    
    \paragraph{\bf Case $c \equiv \cif\ b\ \{c'\}\ \celse\ \{c''\}$.} Let $(p',d',V') \defeq \cdb{c'}$ and $(p'',d'',V'') \defeq \cdb{c''}$. Then, by 
    induction hypothesis,
    \begin{align*}
    V = \Big(\fv(b) \cup V' \cup V''\Big) \subseteq \Big(\fv(b) \cup d'(v) \cup d''(v)\Big) = d(v),
    \end{align*}
    which implies the condition (ii). Also, by induction hypothesis again,
    \begin{align*}
    d(v)^c = \Big(\fv(b)^c \cap d'(v)^c \cap d''(v)^c\Big) \subseteq \Big(\fv(b)^c \cap p'(v) \cap p''(v)\Big) = p(v),
    \end{align*}
    which shows the condition (ii).
    
    \paragraph{\bf Case $c \equiv \cwhile\ b\ \{c'\}$.} Let  $(p',d',V') \defeq \cdb{c'}$, and $F^\sharp$ be the operator in the abstract semantics of $c$. Note that the abstract domain $\aD$ contains $(p_\bot,d_\bot,V_\bot) = ((\lambda v.\Var),\,(\lambda v.\emptyset),\,\emptyset)$. Thus, it is sufficient to show that $F^\sharp$ is a well-defined monotone function on $\aD$, because then the least fixed point of $F^\sharp$ is also in $\aD$ and satisfies the conditions (i) and (ii). The monotonicity of $F^\sharp$ holds because when $(p_1,d_1,V_1) \defeq F^\sharp(p_0,d_0,V_0)$, the inputs $p_0$, $d_0$, and $V_0$ are used in the right polarity in the definitions of $p_1$, $d_1$, and $V_1$; for instance, $p_0$ is used only in the positive position (with respect to the subset order) when it is used to define $p_1$. To prove well-definedness of $F^\sharp$, assume that $p_0(v_0) \supseteq d_0(v_0)^c$ and $d_0(v_0) \supseteq V_0$ for all $v_0 \in \Var$, and pick a variable $v_1 \in \Var$. Then, since $V_0 \subseteq d_0(v_1)$, 
    \begin{align*}
    V_1 = \Big(\fv(b) \cup d'_\cup(V_0) \cup V'\Big)
    \subseteq \Big(\fv(b) \cup d'_\cup(d_0(v_1)) \cup V' \cup \{v_1\}\Big) = d_1(v_1).
    \end{align*}
    Also, by the induction hypothesis on the loop body $c'$ and the relationship $d_0(v_1) \supseteq p_0(v_1)^c$,
    \begin{align*}
    d_1(v_1)^c & = \fv(b)^c \cap (V')^c \cap \bigcap_{w \in d_0(v_1)} d'(w)^c  \cap\{v_1\}^c
    \\
    & {} \subseteq \fv(b)^c \cap (V')^c \cap \bigcap_{w \in d_0(v_1)} p'(w) \cap \bigcap_{w \in p_0(v_1)^c} d'(w)^c
    \\
    & {} =  \fv(b)^c \cap \Big(V' \cup p'_\cap(d_0(v_1))^c \cup d'_\cup(p_0(v_1)^c)\Big)^c = p_1(v_1).
    \end{align*}
    Thus, $(p_1,d_1,V_1)$ is also in $\aD$.
\end{proof}
}  
\AtEndDocument{


\subsection{Proof of \cref{thm:soundness-analysis}}

\label{:3:proof}

Our program analysis consists of two parts, one for tracking the dependency information and the other for tracking the smoothness information. The first part does not depend on the second, although it is used crucially by the second part. We exploit this one-way relationship between the two parts of our analysis, and prove the soundness of the dependency-tracking part first and then that of the other smoothness-tracking part. Consider a command $c$, and let $(p,d,V) \defeq \cdb{c}$. Then, we have:

\begin{theorem}
  \label{thm:soundness-delta}
  For all  $v \in \Var$, we have ${} \models \Delta(\db{c},d(v),\{v\})$. Also, ${} \models \Delta(\db{c},V,\emptyset)$.
\end{theorem}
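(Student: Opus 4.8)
\textbf{Proof plan for \cref{thm:soundness-delta}.}
The plan is to prove both claims simultaneously by structural induction on the command $c$. The statement I will actually carry in the induction is: for $(p,d,V) \defeq \cdb{c}$, we have ${}\models\Delta(\db c,d(v),\{v\})$ for all $v\in\Var$ and ${}\models\Delta(\db c,V,\emptyset)$. Before starting the induction, I would record two elementary monotonicity facts about $\Delta$ that the proof uses repeatedly: (a) if ${}\models\Delta(f,K,L)$ and $K\subseteq K'$, $L'\subseteq L$, then ${}\models\Delta(f,K',L')$ (weakening the input set, shrinking the output set); and (b) ${}\models\Delta(f,K,\{v\})$ implies ${}\models\Delta(f,K,\emptyset)$, which is exactly the relationship mentioned after the definition of $\aD$. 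Fact (b) together with the induction hypothesis on $d$ will immediately give the $V$-claim in the atomic cases where $V$ is built from the relevant $d(v)$'s. I would also note the trivial fact that for any expression $e$, two states agreeing on $\fv(e)$ assign the same value to $\db e$, and likewise for boolean, name, and distribution expressions; this is the semantic content underlying every atomic clause of \cref{f:asem}.

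Then I would go through the cases of \cref{f:asem} one by one. For $\cskip$ the claim is immediate since $\db{\cskip}=\mathrm{id}$ and $V=\emptyset$. For $x:=e$: if $v\equiv x$, two states agreeing on $\fv(e)=d(x)$ produce states agreeing on $x$; if $v\not\equiv x$ the variable $v$ is untouched and $d(v)=\{v\}$ works; and $V=\emptyset$. The observe and sample cases are handled the same way — reading off from the definition that $d(v)$ collects precisely the free variables of the subexpressions that feed into the new value of $v$ (plus $\{v\}$ itself when $v$ is not modified or is an accumulator like $\pr_\mu$, $\sampled_\mu$, $\like$), and checking that agreement on $d(v)$ forces agreement on the output value of $v$; for the weak-update sample case ($n=\cname(\alpha,e)$ with $e\notin\R$) one additionally observes that since the exact name is unknown, $d$ over-approximates by including $\fv(e)$ and unioning over all $\mu=(\alpha,\_)$, which is sound because the actual $\mu$ is one of them. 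In every atomic case $V=\emptyset$ and the command is total on $\State$, so ${}\models\Delta(\db c,\emptyset,\emptyset)$ holds vacuously.

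For the composite cases I would argue as follows. Sequencing $c;c'$: given $\sigma\sim_{d''(v)}\sigma'$ with $d''(v)=V\cup d_\cup(d'(v))$, I want $\db{c;c'}\sigma\in\State \iff \db{c;c'}\sigma'\in\State$ and, when defined, agreement on $v$. First, $\sigma\sim_V\sigma'$, so by the IH on $c$ ($V$-claim) $\db c\sigma\in\State\iff\db c\sigma'\in\State$; assume both are defined. For each $w\in d'(v)$ we have $\sigma\sim_{d(w)}\sigma'$, so the IH on $c$ gives $\db c\sigma \sim_{\{w\}}\db c\sigma'$; ranging over $w\in d'(v)$ yields $\db c\sigma\sim_{d'(v)}\db c\sigma'$. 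Now apply the IH on $c'$ to the pair $(\db c\sigma,\db c\sigma')$: we get $\db{c'}(\db c\sigma)\in\State\iff\db{c'}(\db c\sigma')\in\State$ and agreement on $v$ when defined, which is the claim; the $V''=V\cup d_\cup(V')$ claim follows the same pattern using the $V$-claims of $c$ and $c'$. The if-command is similar but simpler: agreement on $\fv(b)\subseteq d''(v)$ forces the same branch to be taken in both runs, after which the IH on the taken branch finishes it; the extra $\fv(b)$ in $d''$ and $V''$ is exactly what accounts for control-flow dependency. The while loop is the one case needing care: $\db{\cwhile\,b\,\{c\}}$ is $\fix\,F$ and $\cdb{\cwhile\,b\,\{c\}}=\fix\,F^\sharp$, so I would set up a Scott-induction / fixpoint argument — define the predicate $\cT_{(p_0,d_0,V_0)}=\{g\in[\State\to\State_\bot]: \forall v.\,{}\models\Delta(g,d_0(v),\{v\}) \text{ and } {}\models\Delta(g,V_0,\emptyset)\}$, show $\cT$ is closed under the least element $\lambda\sigma.\bot$ (vacuous) and under suprema of chains (a pointwise argument: if $g_\infty(\sigma)\in\State$ then it equals $g_m(\sigma)$ for some $m$, reduce to $g_m$), and show that $F$ maps $\cT_{(p_0,d_0,V_0)}$ into $\cT_{F^\sharp(p_0,d_0,V_0)}$ by replaying the sequencing-and-if argument on one unrolling $\text{if }b\text{ then }g^\dagger\circ\db c\text{ else }\mathrm{id}$; then a standard fixpoint-transfer lemma (monotonicity of $F^\sharp$, already established in \cref{thm:abstract-semantics-well-formedness}) concludes $\fix F\in\cT_{\fix F^\sharp}$. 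The main obstacle is precisely getting this loop transfer right — in particular verifying that the $\{v\}$ added into $d'(v)$ in $F^\sharp$ and the conservative inclusion of $\fv(b)$ everywhere make the inductive invariant genuinely preserved by one iteration, and that the admissibility of the dependency predicate $\Delta$ (unlike the smoothness predicate $\Phi$ discussed in \cref{remark:admissibility}) holds so that the chain-closure step goes through without extra hypotheses.
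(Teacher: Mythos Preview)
Your proposal is correct and essentially matches the paper's proof: structural induction on $c$, with the same case analysis and the same use of admissibility of $\Delta$ for the loop. The only cosmetic difference is in the while case, where you set up a parametrised family $\cT_{(p_0,d_0,V_0)}$ and a fixpoint-transfer step showing $F$ carries $\cT_a$ into $\cT_{F^\sharp(a)}$, whereas the paper works directly with the single invariant $T=\cT_{\fix F^\sharp}$ and shows $F(T)\subseteq T$ using the fixed-point equation $F^\sharp(\fix F^\sharp)=\fix F^\sharp$; the two packagings are equivalent via the weakening fact~(a) you already recorded, and the paper's version simply avoids the transfer lemma.
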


\begin{theorem}
  \label{thm:soundness-phi}
  For all  $v \in \Var$, we have ${} \models \Phi(\db{c},p(v),\{v\})$.
\end{theorem}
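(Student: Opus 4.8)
\textbf{Proof plan for \cref{thm:soundness-phi}.} The plan is to prove the statement by structural induction on the command $c$, in parallel with the already-established soundness of the dependency component (\cref{thm:soundness-delta}), since the smoothness rules in \cref{f:asem} invoke the $d$-component of the sub-analyses. For each syntactic form of $c$ I would fix a variable $v \in \Var$ and a state $\tau \in \State[\Var \setminus p(v)]$, form the partial function $g : \State[p(v)] \rightharpoonup \State[\{v\}]$ obtained by fixing the input outside $p(v)$ to $\tau$ and projecting the output onto $v$, and show $g \in \phi_{p(v),\{v\}}$. The atomic cases ($\cskip$, $x := e$, $\cobs$, and the two $\csample$ clauses) are discharged directly: on a variable $v$ that the command leaves untouched, $g$ is essentially a projection, so \cref{assm:projection} applies; on an updated variable $v$, $g$ is the expression-evaluation function restricted to $\edb{e}$ (or $\edb{\cpdfnor(\dots)}$, etc.), which lies in $\phi$ by \cref{assm:expression-analysis-soundness}, and then \cref{assm:restriction} lets us shrink its input variable set to the actual $p(v)$ computed by the analysis (recall $p(v) \subseteq \edb{\cdot}$ by construction). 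For the weak-update clause of $\csample$ (when the name is not a constant) I additionally intersect over all possible names $\mu = (\alpha,\_)$ and use \cref{assm:pairing}/\cref{assm:restriction} together with \cref{assm:projection} to handle the join of the before- and after-update information.

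The sequencing case $c \equiv c';c''$ is the technical heart of the argument, and this is where I expect the main obstacle. Let $(p,d,V) \defeq \cdb{c'}$ and $(p',d',V') \defeq \cdb{c''}$, and recall $p''(v) = (V \cup p_\cap(d'(v))^c \cup d_\cup(p'(v)^c))^c$. The idea is to factor $\db{c';c''}$ (restricted by $\tau$ and projected onto $v$) as a composition $g'' \circ \hat g$ where $\hat g$ computes, from the inputs in $p''(v)$, the values of all variables in $d'(v)$ after $c'$, and $g''$ computes the value of $v$ after $c''$ from those. I would first use \cref{thm:soundness-delta} applied to $c''$ to argue that the output of $c''$ on $v$ genuinely depends only on the variables in $d'(v)$, so that the intermediate interface can be taken to be exactly $\State[d'(v)]$; then \cref{assm:restriction} gives that $g''$ (with the intermediate variables outside $d'(v)$ fixed, which is legitimate precisely because $p'(v) \supseteq p''(v)$-reachable values are unconstrained on those, by the domain condition $p'(v) \supseteq d'(v)^c$ of $\aD$) lies in $\phi_{d'(v), \{v\}}$. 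For $\hat g$, I would show by \cref{assm:pairing} that packaging the per-variable smoothness facts $\Phi(\db{c'}, p(w), \{w\})$ for $w \in d'(v)$ into a single function on a common input set is sound — the common input set being $p''(v)$, obtained from each $p(w)$ by \cref{assm:restriction} after checking $p''(v) \subseteq p(w)$ for every $w \in d'(v)$ (this inclusion is exactly what the $p_\cap(d'(v))^c$ term of $p''(v)$ encodes, modulo the dependency subtlety captured by the $d_\cup(p'(v)^c)$ and $V$ terms). Finally \cref{assm:composition} yields $g = g'' \circ \hat g \in \phi_{p''(v),\{v\}}$. The delicate bookkeeping is precisely the interplay between which variables must be smooth, which may be discarded because they are not read, and which affect termination — this is the place where a naive rule would be unsound, as emphasised in the paper.

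For the conditional $c \equiv \cif\,b\,\{c'\}\,\celse\,\{c'\}$, the key point is that $p''(v) = \fv(b)^c \cap p(v) \cap p'(v)$ excludes every free variable of $b$, so that on the inputs in $p''(v)$ the branch taken is fixed by $\tau$; the function $g$ then coincides with either the $c'$-branch function or the $c''$-branch function (restricted via \cref{assm:restriction} from $p(v)$ or $p'(v)$ down to $p''(v)$), and the induction hypothesis plus \cref{assm:restriction} finish the case. The loop $c \equiv \cwhile\,b\,\{c'\}$ is handled as the least fixed point of $F^\sharp$: I would exhibit an invariant subset of $[\State \to \State_\bot]$ — roughly, the set of state transformers $f$ such that, for the abstract state $(p_0,d_0,V_0)$ reached at the current iterate, $f \in \gamma(p_0,d_0,V_0)$ in the $\Phi$-component — containing the bottom transformer $\lambda\sigma.\text{undefined}$ by \cref{assm:strictness}, and preserved by one unfolding of the loop via the sequencing and conditional cases already proved. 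As flagged in \cref{remark:admissibility}, closure under limits of the relevant chains is \emph{not} assumed; the argument instead leans on the fact that every $p'(v)$ produced by $F^\sharp$ has $\fv(b)$ removed, so the only chains that arise as iterates of an actual loop are ones whose limit's smoothness is already witnessed at a finite stage (the details of this ``sufficiently admissible'' reasoning I would defer to the long version, \cref{sec:pf:thm:soundness-phi}). I would then conclude that $\fix\,F^\sharp$ satisfies the $\Phi$-requirement, completing the induction.
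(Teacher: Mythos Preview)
Your overall induction and the atomic, conditional, and loop sketches match the paper's approach. The sequential composition case, however, has a real gap.

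You take the intermediate interface to be $\State[d'(v)]$ and claim that $g'' \in \phi_{d'(v),\{v\}}$ follows from the inductive hypothesis ${}\models\Phi(\db{c''}, p'(v), \{v\})$ via \cref{assm:restriction}. But Restriction only lets you \emph{shrink} the input set: from smoothness in $p'(v)$ you get smoothness in any $K \subseteq p'(v)$, whereas $d'(v)$ is in general \emph{not} a subset of $p'(v)$. Indeed the domain condition you quote, $p'(v) \supseteq d'(v)^c$, is equivalent to $p'(v)^c \subseteq d'(v)$: the variables in which $c''$ is \emph{not} smooth lie inside $d'(v)$. So $g''$ as you define it need not be in $\phi_{d'(v),\{v\}}$, and the composition $g'' \circ \hat g$ does not go through.

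The paper's argument (packaged as \cref{lem:sound-seq-phi}) instead takes the intermediate interface to be $L \defeq d'(v) \cap p'(v)$. Restriction then legitimately yields $g'' \in \phi_{L,\{v\}}$. The leftover variables $d'(v) \setminus L = p'(v)^c$ --- those that $c''$ reads non-smoothly --- are handled not by $\Phi$ but by $\Delta$: the term $d_\cup(p'(v)^c)$ in $p''(v)^c$ ensures, via \cref{thm:soundness-delta} on $c'$, that ${}\models\Delta(\db{c'}, p''(v)^c, p'(v)^c)$. In words, after running $c'$ the values of every variable in $p'(v)^c$ are determined by the fixed part $\tau$ alone, so they can be absorbed into the $\tau_2$ used when instantiating $\Phi$ for $c''$. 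This is the precondition (c) in the paper's application of \cref{lem:sound-seq-phi}, and it is precisely what your parenthetical ``dependency subtlety captured by the $d_\cup(p'(v)^c)$ term'' needs to become; as written, your factorisation never actually invokes it. (The $V$ term plays the analogous role when $p'(v)^c = \emptyset$, ensuring the termination-dependency premise still holds.)

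For the loop, your plan is headed in the right direction, but note that ``$\fv(b)$ is removed from $p$'' is not by itself enough: the paper needs the stronger fact $p'_\infty(v)^c \supseteq V'_\infty$ (claim (ii) in \cref{sec:pf:thm:soundness-phi}), so that termination of every finite unrolling $t'_n$ is determined entirely by the fixed part $\tau$. This is what forces the domain of each restricted $g_n$ to be all-or-nothing (\cref{lem:sound-loop-limit-phi}), which in turn makes the limit coincide with some finite iterate and sidesteps admissibility.
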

\noindent
We prove the two soundness results in \cref{sec:pf:thm:soundness-delta} and \cref{sec:pf:thm:soundness-phi}.
From these, we immediately obtain the main soundness theorem:
\begin{proof}[Proof of \cref{thm:soundness-analysis}]
  Let $c$ be a command and $(p,d,V) = \cdb{c}$.
  Then, by \cref{thm:soundness-delta,thm:soundness-phi}, 
  we have ${} \models \Delta(\db{c},d(v),\{v\})$, ${} \models \Delta(\db{c},V,\emptyset)$, and ${} \models \Phi(\db{c},p(v),\{v\})$ for all $v \in \Var$.
  Hence, by the definition of $\gamma$ (i.e., \cref{eqn:definition-gamma}), we have $\db{c} \in \gamma(\cdb{c})$ as desired.
\end{proof}

\subsection{Proof of \cref{thm:soundness-delta}}
\label{sec:pf:thm:soundness-delta}

\begin{proof}[Proof of \cref{thm:soundness-delta}]
We prove the theorem by induction on the structure of $c$. Let $(p,d,V) \defeq \cdb{c}$. Pick a variable $v \in \Var$ and states $\sigma,\sigma',\sigma_0,\sigma_0' \in \State$ such that 
\[
\sigma \sim_{d(v)} \sigma'
\ \text{and}\ 
\sigma_0 \sim_V \sigma'_0.
\]
We will show that (i) if $\db{c}\sigma_0 \in \State$, then $\db{c}\sigma'_0 \in \State$, and (ii) if $\db{c}\sigma \in \State$ and $\db{c}\sigma' \in \State$, then $\db{c}\sigma \sim_{\{v\}} \db{c}\sigma'$, i.e., $\db{c}\sigma(v) = \db{c}\sigma'(v)$. Since $V \subseteq d(v)$, these two imply the claim of the theorem. We refer to these two properties as conditions (i) and (ii) in the rest of the proof.

\paragraph{\bf Case $c \equiv \cskip$.}
In this case, $d(v)=\{v\}$ and $V = \emptyset$. The condition (i) holds since $\cskip$ always terminates. The condition (ii) also holds because $\db{c}\sigma'' = \sigma''$ for all $\sigma''$, and the relation $\sim_{d(v)}$ coincides with $\sim_{\{v\}}$.
 
\paragraph{\bf Case $c \equiv (x:=e)$.} In this case,
$V = \emptyset$, and the condition (i) holds since the assignments always terminate. For the condition (ii), we do case analysis on the variable $v$.
  \begin{itemize}
  \item {Case $v \equiv x$.} In this case, $d(v)=\fv(e)$. This implies
    $\db{e}\sigma=\db{e}\sigma'$. Thus,
    $\db{c}\sigma(x) = \db{e}\sigma = \db{e}\sigma' = \db{c}\sigma'(x)$. This implies
    the desired $\db{c}\sigma \sim_{\{x\}} \db{c}\sigma'$.
  \item {Case $v \not\equiv x$.} In this case, $d(v)=\{v\}$, and so $\sigma(v)=\sigma'(v)$.
    This implies that $\db{c}\sigma(v) = \sigma(v) = \sigma'(v) = \db{c}\sigma'(v)$, 
    which gives the desired relationship.
  \end{itemize}

\paragraph{\bf Case $c \equiv \cobs(\cnor(e_1,e_2),r)$.}
The observe commands always terminate. Thus, the condition (i) holds. We prove the condition (ii) by  case  analysis on the variable $v$. If $v$ is not $\like$, then $d(v) = \{v\}$, $\db{c}\sigma(v) = \sigma(v)$, and $\db{c}\sigma'(v) = \sigma'(v)$. Thus, in this case, the assumption $\sigma \sim_{d(v)} \sigma'$ implies $\db{c}\sigma(v) = \db{c}\sigma'(v)$,  as desired. If $v$ is $\like$, then $d(v) = \fv(e_1) \cup \fv(e_2) \cup \{\like\}$, and for some function $g : \R^4 \to \R$,
\[
\db{c}\sigma(v) = g(\sigma(\like),r,\db{e_1}\sigma,\db{e_2}\sigma)
\ \text{and}\ 
\db{c}\sigma'(v) = g(\sigma'(\like),r,\db{e_1}\sigma',\db{e_2}\sigma').
\]
Therefore, from the assumption $\sigma \sim_{d(v)} \sigma'$, it follows that $\db{c}\sigma(v) = \db{c}\sigma'(v)$, as desired.

\paragraph{\bf Case $c \equiv (x := \csample(n,\cnor(e_1,e_2),\lambda y.e'))$.} The sample 
  commands always terminate. So, the condition (i)  holds. We prove the condition (ii)
  by case analysis on $n$. 
  
  The first case is that $n$ is a constant expression, i.e., it is an expression
  of the form $\cname(\alpha,r)$ for some $\alpha \in \Str$ and real number $r$.
  Let $\mu \defeq \mathit{create\_name}(\alpha,r)$. If $v$ is not one of $x$, $\val_\mu$, and $\pr_\mu$, then $d(v) = \{v\}$, and $\db{c}\sigma(v) = g(\sigma(v))$ and $\db{c}\sigma'(v) = g(\sigma'(v))$ for some function $g : \R \to \R$, so that the assumption $\sigma \sim_{d(v)} \sigma'$ implies that $\db{c}\sigma(v) = \db{c}\sigma'(v)$, as desired. If $v$ is $x$ or $\val_\mu$, then $d(v) = \fv(e'[\mu/y])$, $\db{c}\sigma(v) = \db{e'[\mu/y]}\sigma$, and $\db{c}\sigma'(v) = \db{e'[\mu/y]}\sigma'$, so that the required $\db{c}\sigma(v) = \db{c}\sigma'(v)$ holds. Finally,  if $v = \pr_\mu$, then $d(v) = \{\mu\} \cup \fv(e_1) \cup \fv(e_2)$, and so, the assumption $\sigma \sim_{d(v)} \sigma'$ implies that
  \[
  \db{c}\sigma(\pr_\mu) =
  \db{\cpdfnor(\mu; e_1, e_2)}\sigma = 
  \db{\cpdfnor(\mu; e_1,e_2)}\sigma' =
  \db{c}\sigma'(\pr_\mu),
  \]
  which is precisely the equality that we want.
  
  The next case is that $n$ is not a constant expression. Let $\cname(\alpha,e)$ be the form of $n$. If $v$ is not one of $x$, $\val_\mu$, $\pr_\mu$, and $\sampled_\mu$ for some $\mu$ of the form $(\alpha,\_)$, then $d(v) = \{v\}$, $\db{c}\sigma(v) = \sigma(v)$, and $\db{c}\sigma'(v) = \sigma'(v)$, so that the assumption $\sigma \sim_{d(v)} \sigma'$ implies the desired $\db{c}\sigma(v) = \db{c}\sigma'(v)$. Assume that $v$ is one of $x$, $\val_\mu$, $\pr_\mu$, and $\sampled_\mu$ for some $\mu$ with $\mu = (\alpha,\_)$. Let $\mu_0  \defeq \db{n}\sigma$ and $\mu'_0  \defeq \db{n}\sigma'$. Since $d(v) \supseteq \fv(n)$ in this case, the assumption $\sigma \sim_{d(v)} \sigma'$ ensures that $\mu_0 =  \mu'_0$. If $v$ is $x$, then $\fv(e'[\mu_0/y]) \subseteq d(v)$, so that the assumption $\sigma \sim_{d(v)} \sigma'$ gives  the desired
  \[
  \db{c}\sigma(v) = \db{e'[\mu_0/y]}\sigma = \db{e'[\mu'_0/y]}\sigma' = \db{c}\sigma'(v).
  \]
  If $v$ is $\sampled_\mu$ for some $\mu$ of the form $(\alpha,\_)$, then $\sampled_\mu \in  d(v)$, and $\db{c}\sigma(\sampled_\mu) = g(\sigma(\sampled_\mu))$ and $\db{c}\sigma'(\sampled_\mu) = g(\sigma'(\sampled_\mu))$ for some function $g : \R \to \R$, so that $\db{c}\sigma(v) = \db{c}\sigma'(v)$, as desired. If $v$ is $\val_\mu$ for $\mu = (\alpha,\_)$, then $d(v) \supseteq \{\val_\mu\} \cup \fv(e'[\mu/y])$, and $\db{c}\sigma(\val_\mu) = h(\db{e'[\mu/y]}\sigma,\sigma(\val_\mu))$ and $\db{c}\sigma'(\val_\mu) = h(\db{e'[\mu/y]}\sigma',\sigma'(\val_\mu))$ for some $h : \R \times \R \to \R$, so that $\db{c}\sigma(\val_\mu)  =  \db{c}\sigma'(\val_\mu)$ as desired. Finally,  if $v$ is $\pr_\mu$ for some  $\mu$ of the form $(\alpha,\_)$,  then $d(v) \supseteq \{\pr_\mu,\mu\} \cup \fv(e_1) \cup \fv(e_2)$, and for some $k : \R^4 \to \R$,
  \[
  \db{c}\sigma(v) = k(\sigma(\pr_\mu),\sigma(\mu),\db{e_1}\sigma,\db{e_2}\sigma)
  \ \text{and}\ 
  \db{c}\sigma'(v) = k(\sigma'(\pr_\mu),\sigma'(\mu),\db{e_1}\sigma',\db{e_2}\sigma'),
  \]
  so that the assumption $\sigma \sim_{d(v)} \sigma'$ guarantees that $\db{c}\sigma(v) = \db{c}\sigma'(v)$, as desired.

\paragraph{\bf Case $c \equiv (c';c'')$.}
  Let $(p',d',V') \defeq \cdb{c'}$ and $(p'',d'',V'') \defeq \cdb{c''}$. Recall that
  \[
  d(v)=V' \cup (d')_\cup(d''(v)) = V' \cup \bigcup \{ d'(v'') \mid v'' \in d''(v) \}
  \ \text{and}\ 
  V = V' \cup (d')_\cup(V'').
  \]

  Let us handle the condition (i) first. Since $\db{c';c''}\sigma_0 \in \State$, we have
  $\db{c'}\sigma_0 \in \State$. But $\sigma_0 \sim_{V'} \sigma'_0$, because $\sigma_0$ and $\sigma'_0$ are $\sim_V$-related and $V$ includes $V'$. Thus, $\db{c'}\sigma_0' \in \State$ as well by induction hypothesis, and it is sufficient to show $\db{c'}\sigma_0 \sim_{V''} \db{c'}\sigma_0'$. Note that for every $v'' \in V''$, by the definition of $V$, we have $V \supseteq d'(v'')$, and so $\sigma_0 \sim_{d'(v'')} \sigma'_0$, which implies, by induction hypothesis, that $\db{c'}\sigma_0 \sim_{\{v''\}} \db{c'}\sigma_0'$. As a result, we have the desired $\db{c'}\sigma_0 \sim_{V''} \db{c'}\sigma_0'$.
  
  Next, we deal with the condition (ii). Since $\db{c';c''}\sigma$ and $\db{c';c''}\sigma'$ are both in $\State$, there exist states $\sigma_1,\sigma_1'$ such that
  $\db{c'}\sigma = \sigma_1$ and $\db{c'}\sigma' = \sigma'_1$. We apply the induction hypothesis to $c'$ and get $\sigma_1 \sim_{d''(v)} \sigma_1'$. Since $\db{c''}\sigma_1$ and $\db{c''}\sigma'_1$ are in $\State$, we apply the induction hypothesis again but this time to $c''$, $\sigma_1$, and $\sigma_1'$, and obtain $\db{c''}\sigma_1 \sim_{\{v\}} \db{c''}\sigma'_1$, which implies the desired
  \[
  \db{c}\sigma(v) = \db{c''}\sigma_1(v) = \db{c''}\sigma'_1(v) = \db{c}\sigma'(v).
  \]
 
\paragraph{\bf Case $c \equiv (\cif\,b\,\{c'\}\,\celse\,\{c''\})$.}
Let $(p',d',V') \defeq \cdb{c'}$ and $(p'',d'',V'') \defeq \cdb{c''}$. Then,
$d(v) =  \fv(b) \cup d'(v) \cup d''(v)$ and
$V = \fv(b) \cup V' \cup V''$.

We prove the condition $(i)$ under the assumption that $\db{b}\sigma_0 = \strue$. Essentially the same proof applies to the other case that $\db{b}\sigma_0 = \sfalse$. Since $V$ includes $\fv(b)$, we also have $\db{b}\sigma'_0 = \strue$. Furthermore, since $V' \subseteq V$ and so $\sigma_0 \sim_{V'} \sigma'_0$ by the induction hypothesis, we get that $\db{c'}\sigma'_0 \in \State$.

Next we show the condition $(ii)$ under the assumption that $\db{b}\sigma = \strue$. As before, the proof of the other case $\db{b}\sigma = \sfalse$ is essentially the same. Since $d(v)$ includes $\fv(b)$ and $d'(v)$, we have $\db{b}\sigma' = \strue$ and $\sigma \sim_{d'(v)} \sigma'$. Also, because $\db{c}\sigma = \db{c'}\sigma$ and $\db{c}\sigma' = \db{c'}\sigma'$, both $\db{c'}\sigma$ and $\db{c'}\sigma'$ are in $\State$.
Thus, by induction hypothesis, $\db{c'}\sigma \sim_{\{v\}} \db{c'}\sigma'$, which implies that 
\[
\db{c}\sigma(v) = \db{c'}\sigma(v) = \db{c'}\sigma'(v) = \db{c}\sigma'(v),
\]
as desired.
 
\paragraph{\bf Case $c \equiv (\cwhile\,b\,\{c_0\})$.}
Let $(d_0,p_0,V_0) \defeq \cdb{c_0}$, and $F^\sharp$ be the operator in the abstract semantics of $c$ such that $(p,d,V)$ is the least fixed point of $F^\sharp$. Also, let $F$ be the operator in the concrete semantics of $c$ such that $\db{c}$ is the least fixed point of $F$. Now define
\[
T \defeq \{f \in [\State \to \State_\bot] \,\mid\,\text{for all $v \in \Var$}, {} \models \Delta(f,d(v),\{v\}) \ \text{and} {} \models \Delta(f,V,\emptyset)\}.
\]
We will show that (i) $T$ contains the empty function $\bot_{\State \to \State_\bot} \defeq \lambda \sigma.\,\text{undefined}$, (ii) it is closed under the least upper bounds of increasing chains, and (iii) the function $F$ maps functions in $T$ to some functions in the same set. These three imply that the least fixed point of $F$, namely, $\db{c}$, is in $T$, which gives the desired conclusion. 

The membership of $\bot_{\State \to \State_\bot}$ to $T$ is immediate, since we have
${} \models \Delta(\bot_{\State \to \State_\bot},U,U')$ for all $U,U' \subseteq \Var$. 

To prove the next requirement, namely, the closure under the least upper bounds of increasing chains, consider a chain $(f_n)_{n \in \N}$ in $T$, i.e., a sequence such that $f_n(\sigma) = f_{n+1}(\sigma)$ for all $n \in \N$ and $\sigma$ with $f_n(\sigma) \in \State$. Let $f_\infty$ be the least upper bound of the $f_n$'s (i.e., $f_\infty(\sigma) = f_n(\sigma)$ if $f_n(\sigma) \in \State$ and $f_\infty(\sigma) = \bot$ if $f_n(\sigma) = \bot$ for all $n \in \N$). As in all the other cases so far, we pick an arbitrary variable $v \in \Var$ and arbitrary states  $\sigma_0$, $\sigma_0'$, $\sigma$, and $\sigma'$ such that
\begin{align*}
\sigma_0 & \sim_V \sigma_0',
&
f_\infty(\sigma_0) & \in \State,
&
\sigma & \sim_{d(v)} \sigma',
&
f_\infty(\sigma),f_\infty(\sigma') & \in \State.
\end{align*}
We will show that $f_\infty(\sigma_0') \in \State$ and $f_\infty(\sigma) \sim_{\{v\}} f_\infty(\sigma')$, which correspond to what we have called conditions (i) and (ii) in the previous cases. Since $f_\infty(\sigma_0) \in \State$, there exists $n \in \N$ such that $f_n(\sigma_0) \in \State$. Because ${} \models \Delta(f_n,V,\emptyset)$ and $\sigma_0 \sim_V \sigma'_0$, we have $f_n(\sigma'_0) \in \State$, which implies that $f_\infty(\sigma'_0) = f_n(\sigma'_0) \in \State$, as desired. Our proof of the condition (ii) has a similar form. Since both $f_\infty(\sigma)$ and $f_\infty(\sigma')$ are in $\State$, there exists $n \in \N$ such that $f_\infty(\sigma) = f_n(\sigma)$ and $f_\infty(\sigma') = f_n(\sigma')$. By assumption, $\sigma \sim_{d(v)} \sigma'$, and $f_n \in T$. Thus, 
$f_n(\sigma) \sim_{\{v\}} f_n(\sigma')$, which gives the desired $f_\infty(\sigma) \sim_{\{v\}} f_\infty(\sigma')$.

It remains to show the last requirement, i.e., the closure under $F$. Pick an arbitrary $f \in T$. Consider a variable $v \in \Var$ and states  $\sigma_0$, $\sigma_0'$, $\sigma$, and $\sigma'$ such that
\begin{align*}
\sigma_0 & \sim_V \sigma_0',
&
F(f)(\sigma_0) & \in \State,
&
\sigma & \sim_{d(v)} \sigma',
&
F(f)(\sigma),F(f)(\sigma') & \in \State.
\end{align*}
We will show that $F(f)(\sigma_0') \in \State$ and $F(f)(\sigma) \sim_{\{v\}} F(f)(\sigma')$, while referring to these two desired properties as conditions (i) and (ii), as we have done before. 

Let us handle the condition (i) first.  If $\db{b}\sigma_0 = \sfalse$, we have $\db{b}\sigma_0' = \sfalse$, because $\sigma_0 \sim_V \sigma_0'$ and $\fv(b) \subseteq V$. Thus, in this case, $F(f)(\sigma_0') = \sigma'_0 \in \State$. If $\db{b}\sigma_0 = \strue$, then $\db{b}\sigma_0'$ is also $\strue$. Furthermore, in this case, by induction hypothesis, $\db{c_0}\sigma_0' \in \State$ since $V \supseteq V_0$, $\sigma_0 \sim_V \sigma_0'$,
and $\db{c_0}\sigma_0 \in \State$. Also, by induction hypothesis again, 
$\db{c_0}\sigma_0 \sim_V \db{c_0}\sigma'_0$, since 
 $V \supseteq  (d_0)_\cup(V)$ and $\sigma_0 \sim_V \sigma_0'$. Since $f \in T$ and $f(\db{c_0}\sigma_0) \in  \State$, we have $f(\db{c_0}\sigma'_0) \in \State$, which implies that $F(f)(\sigma'_0) \in \State$, as desired.

Next, we prove the condition (ii). If $\db{b}\sigma = \sfalse$, we have $\db{b}\sigma' = \sfalse$ since $\fv(b) \subseteq d(v)$ and $\sigma \sim_{d(v)} \sigma'$. Thus, in this case,  $F(f)(\sigma) = \sigma$ and $F(f)(\sigma') = \sigma'$. Also, $\{v\} \subseteq d(v)$, and so, $\sigma \sim_{d(v)} \sigma'$ implies that $F(f)(\sigma) = \sigma \sim_{\{v\}} \sigma' = F(f)(\sigma')$, as desired. Now assume that $\db{b}\sigma = \strue$. Then, $\db{b}\sigma' = \strue$ by the reason that $\fv(b) \subseteq d(v)$ and $\sigma \sim_{d(v)} \sigma'$. Also, $\db{c_0}\sigma$ and $\db{c_0}\sigma'$ are in $\State$, so that $F(f)(\sigma) = f(\db{c_0}\sigma)$ and $F(f)(\sigma') = f(\db{c_0}\sigma')$. Furthermore, since $d(v) \supseteq (d_0)_\cup(d(v))$ and $\sigma \sim_{d(v)} \sigma'$, we have
$\db{c_0}\sigma \sim_{d(v)} \db{c_0}\sigma'$. We then use the fact that $f \in T$ and $f(\db{c_0}\sigma), f(\db{c_0}\sigma') \in \State$, and conclude that $f(\db{c_0}\sigma) \sim_{\{v\}} f(\db{c_0}\sigma')$, which gives the desired $F(f)(\sigma) \sim_{\{v\}} F(f)(\sigma')$.
\end{proof}


\subsection{Proof of \cref{thm:soundness-phi}}
\label{sec:pf:thm:soundness-phi}

Let $\sseq$ be the following operator, which models sequential composition:
\begin{align*}
\sseq & : [\State \to \State_\bot] \times [\State \to \State_\bot] \to [\State \to \State_\bot]
\\
\sseq(f,g) & \defeq g^\dagger \circ f.
\end{align*}
Also, define an operator $\scond$ for modelling if commands as follows:
\begin{align*}
\scond & : [\State \to \B] \times [\State \to \State_\bot] \times [\State \to \State_\bot] \to [\State \to \State_\bot]
\\
\scond(h,f,g)(\sigma) & \defeq 
\begin{cases}
f(\sigma) & \text{if } h(\sigma)=\strue,
\\
g(\sigma) & \text{if } h(\sigma)=\sfalse.
\end{cases}
\end{align*}

\begin{proof}[Proof of \cref{thm:soundness-phi}]
  We prove the theorem by induction on the structure of $c$.
  Let ${ (p,d,V) \defeq \cdb{c} }$.   

  \paragraph{\bf Case $c \equiv \cskip$.}
  In this case, $\db{c}(\sigma) = \sigma$ for all $\sigma \in \State$, and $p(v) = \Var$ for all $v \in \Var$.
  To prove the conclusion, consider $v \in \Var$ and $\tau \in \State[p(v)^c] = \State[\emptyset]$.
  We should show $g \in \phi_{p(v), \{v\}}$, where
  \begin{align*}
    g(\sigma) &= 
    \begin{cases}
      (\pi_{\Var, \{v\}} \circ \db{c})(\sigma \oplus \tau)
      & \text{if $\db{c}(\sigma \oplus \tau) \in \State$}
      \\
      \text{undefined} & \text{otherwise}.
    \end{cases}
  \end{align*}
  Since $\db{c}(\sigma \oplus \tau) = \db{c}(\sigma) = \sigma \in \State$ for all $\sigma \in \State$,
  we have $g = \pi_{\Var,\{v\}}$.
  Thus, \cref{assm:projection} implies \[g = \pi_{\Var,\{v\}} \in \phi_{\Var, \{v\}} = \phi_{p(v), \{v\}}.\]

  \paragraph{\bf Case $c \equiv (x := e)$.}
  In this case, $\db{c}(\sigma) = \sigma[x \mapsto \db{e}\sigma]$ for all $\sigma \in \State$. Also,
  $p(v) = \Var$ if $v \not\equiv x$, and ${ \edb{e} }$ if $v \equiv x$.
  Consider $v \in \Var$ and $\tau \in \State[p(v)^c]$.
  We should show $g \in \phi_{p(v), \{v\}}$, where
  \begin{align*}
    g(\sigma) &= 
    \begin{cases}
      (\pi_{\Var, \{v\}} \circ \db{c})(\sigma \oplus \tau)
      & \text{if $\db{c}(\sigma \oplus \tau) \in \State$}
      \\
      \text{undefined} & \text{otherwise}.
    \end{cases}
  \end{align*}
  If $v \not\equiv x$, then
  $g(\sigma) = \pi_{\Var,\{v\}}( \db{c}(\sigma) ) = \pi_{\Var,\{v\}}( \sigma[x \mapsto \db{e}\sigma] ) = \pi_{\Var,\{v\}}( \sigma )$
  for all $\sigma \in \State$,
  where the first equality uses $p(v) = \Var$, and the last uses $v \not\equiv x$.
  Hence, \cref{assm:projection} implies \[g = \pi_{\Var,\{v\}} \in \phi_{\Var,\{v\}} = \phi_{p(v),\{v\}}.\]
  If $v \equiv x$, then
  $g(\sigma)
  = (\pi_{\Var, \{x\}} \circ \db{c})(\sigma \oplus \tau)
  = \pi_{\Var, \{x\}} ( (\sigma \oplus \tau)[x \mapsto \db{e}(\sigma \oplus \tau)] )
  = [x \mapsto \db{e}(\sigma \oplus \tau)]$
  for all $\sigma \in \State$.
  Since ${ \tau \in \State[(\edb{e})^c] }$ and  ${ p(v) = \edb{e} }$,
  \cref{assm:expression-analysis-soundness} implies
  \[ g = \lambda \sigma.\, [x \mapsto \db{e}(\sigma \oplus \tau)] \in \phi_{\edb{e}, \{x\}} = \phi_{p(v), \{v\}}. \]

  \paragraph{\bf Case $c \equiv (c';c'')$.}
  Let $(p',d',V') \defeq \cdb{c'}$ and $(p'',d'',V'') \defeq \cdb{c''}$. Then,
  \begin{align*}
    p(v) &= \Big( V' \cup (p')_\cap(d''(v))^c \cup (d')_\cup(p''(v)^c) \Big)^c
    \\
    &= \Big((p')_\cap(d''(v))\Big) \setminus \Big( V' \cup (d')_\cup(p''(v)^c) \Big)
     \; \text{ for all $v \in \Var$}.
  \end{align*}
  Also, we have $\db{c} = \sseq(\db{c'},\db{c''})$.
  To prove the conclusion, let $v \in \Var$.
  It suffices to apply \cref{lem:sound-seq-phi} to
  $f=\db{c'}$, $g=\db{c''}$, $K=p(v)$, $L=d''(v) \cap p''(v)$, $L'=d''(v)$, and $M=\{v\}$.
  What remains is to show the preconditions of the lemma:
  \begin{itemize}
  \item[(a)] ${}\models \Phi(\db{c'}, p(v), d''(v) \cap p''(v))$.
  \item[(b)] ${}\models \Phi(\db{c''}, d''(v) \cap p''(v), \{v\})$.
  \item[(c)] ${}\models \Delta(\db{c'}, p(v)^c, d''(v) \setminus (d''(v) \cap p''(v)))$.
  \item[(d)] ${}\models \Delta(\db{c''}, d''(v), \{v\})$.
  \end{itemize}
  We obtain (b) as follows: by induction hypothesis on $c''$, we have ${}\models \Phi(\db{c''}, p''(v), \{v\})$,
  and by the weakening lemma for $\Phi$ (\cref{lem:weakening-phi}), we have (b).
  We obtain (d) directly by \cref{thm:soundness-delta} on $c''$.
  For (a), consider induction hypothesis on $c'$, which says that ${}\models \Phi(\db{c'}, p'(w), \{w\})$ for all $w \in \Var$.
  By the merging lemma for $\Phi$ (\cref{lem:merging-phi}), we have ${}\models \Phi(\db{c'}, (p')_\cap(d''(v) \cap p''(v)), d''(v) \cap p''(v))$.
  Since \[p(v) \subseteq { (p')_\cap(d''(v)) } \subseteq { (p')_\cap(d''(v) \cap p''(v)) },\]
  we obtain (a) by the weakening lemma for $\Phi$ (\cref{lem:weakening-phi}).
  For (c), observe that
  \begin{align}
    \label{eq:seq-delta-sound}
    p(v)^c \supseteq V' \cup (d')_\cup(p''(v)^c) \; \text{ and } \;
    d''(v) \setminus (d''(v) \cap p''(v)) = p''(v)^c
  \end{align}
  where the second equality follows from $p''(v) \supseteq d''(v)^c$.
  By \cref{thm:soundness-delta} on $c'$, we have
  ${}\models \Delta(\db{c'}, V', \emptyset)$ and ${}\models \Delta(\db{c'}, d'(w), \{w\})$ for all $w \in \Var$.
  If $p''(v)^c = \emptyset$, then ${}\models \Delta(\db{c'}, V', p''(v)^c)$ holds,
  and if $p''(v)^c \neq \emptyset$, then ${}\models \Delta(\db{c'}, (d')_\cup(p''(v)^c), p''(v)^c)$ holds by the merging lemma for $\Delta$
  (\cref{lem:merging-delta}).
  By \cref{eq:seq-delta-sound} and the weakening lemma for $\Delta$ (\cref{lem:weakening-delta}), we obtain (c) for both cases.
  Note that we crucially used $p(v)^c \supseteq V'$ to handle the case $p''(v)^c = \emptyset$.

  \paragraph{\bf Case $c \equiv (\cif\,b\,\{c'\},\celse\,\{c''\})$.} 
  Let $(p',d',V') \defeq \cdb{c'}$ and $(p'',d'',V'') \defeq \cdb{c''}$. Then,
  \begin{align}
    \label{eq:cond-delta-sound}
    p(v) = \fv(b)^c \cap p'(v) \cap p''(v) \; \text{ for all $v \in \Var$}.
  \end{align}
  Also, $\db{c} = \scond(\db{b}, \db{c'}, \db{c''})$.
  To prove the conclusion, let $v \in \Var$.
  It suffices to apply \cref{lem:sound-cond-phi} to
  $f=\db{c'}$, $g=\db{c''}$, $K=p(v)$, $L=\{v\}$, and $b$.
  What remains is to show the preconditions of the lemma:
  \begin{itemize}
  \item[(a)] ${}\models \Phi(\db{c'}, p(v), \{v\})$.
  \item[(b)] ${}\models \Phi(\db{c''}, p(v), \{v\})$.
  \item[(c)] $p(v)^c \supseteq \fv(b)$.
  \end{itemize}
  We obtain (a) and (b) as follows:
  by induction hypothesis on $c'$ and $c''$, we have ${}\models \Phi(\db{c'}, p'(v), \{v\})$ and ${}\models \Phi(\db{c''}, p''(v), \{v\})$,
  and by \cref{eq:cond-delta-sound} and the weakening lemma for $\Phi$ (\cref{lem:weakening-phi}), we have (a) and (b).
  We obtain (c) directly by \cref{eq:cond-delta-sound}.
  
  \paragraph{\bf Case $c \equiv (\cwhile\,b\,\{c_0\})$.}
  The proof starts by decomposing $\db{c}$ and $\cdb{c}$ into smaller pieces.
  Let \[ (p_0, d_0, V_0) \defeq \cdb{c_0}. \]
  Define $F : [\State \to \State_\bot] \to [\State \to \State_\bot]$ and $F^\sharp : \aD \to \aD$ as in \cref{sec:setup} and \cref{f:asem}:
  \begin{align*}
    F(t)(\sigma) &\defeq
    \begin{cases}
      \sigma & \text{if $\db{b}\sigma = \sfalse$}
      \\
      t^\dagger(\db{c_0}\sigma) & \text{if $\db{b}\sigma = \strue$},
    \end{cases}
    \\
    F^\sharp(p,d,V) &\defeq \left(\hspace{-0.3em}
    \begin{array}{l}
      \lambda v.\, \fv(b)^c \cap (V_0 \cup (p_0)_\cap(d(v))^c \cup (d_0)_\cup(p(v)^c))^c,
      \\
      \lambda v.\, \fv(b) \cup V_0 \cup (d_0)_\cup(d(v)) \cup \{v\},
      \\
      \fv(b) \cup (d_0)_\cup(V) \cup V_0 
    \end{array}
    \hspace{-0.5em}\right).
  \end{align*}
  Define $t'_n \in [\State \to \State_\bot]$ and $(p'_n, d'_n, V'_n) \in \aD$ for $n \in \N \cup \{\infty\}$ as
  \begin{align*}
    t'_n &\defeq
    \begin{cases}
      \lambda \sigma.\,F^n(t_\bot)(\sigma)
      & \text{if $n \in \N$}
      \\
      \bigsqcup_{i \in \N} t'_i
      & \text{if $n = \infty$},
    \end{cases}
    &
    (p'_n, d'_n, V'_n) &\defeq
    \begin{cases}
      (F^\sharp)^n(p_\bot, d_\bot, V_\bot),
      & \text{if $n \in \N$}
      \\
      \bigsqcup_{i \in \N} (p'_i, d'_i, V'_i)
      & \text{if $n = \infty$},
    \end{cases}
  \end{align*}
  where $t_\bot = \lambda \sigma.\, \bot$ and $(p_\bot, d_\bot, V_\bot) = (\lambda v.\,\Var,\, \lambda v.\,\emptyset,\, \emptyset)$.
  Then, we have
  \begin{align*}
    \db{c} = t'_\infty \quad\text{and}\quad \cdb{c} = (p'_\infty, d'_\infty, V'_\infty).
  \end{align*}
  
  The proof is organized as follows.
  Define $T, T' \subseteq [\State \to \State_\bot]$ as
  \begin{align*}
    T  &\defeq \{ f \in [\State \to \State_\bot] \,\mid\, \forall v \in \Var.\, {}\models\Delta(f,d'_\infty(v),\{v\}) \land {}\models\Delta(f,V'_\infty,\emptyset)\},
    \\
    T' &\defeq \{ f \in [\State \to \State_\bot] \,\mid\, \forall v \in \Var.\, {}\models\Phi  (f,p'_\infty(v),\{v\}) \}.
  \end{align*}
  In \cref{thm:soundness-delta}, we proved
  \begin{align}
    \label{eq:loop-delta-sound}
    t'_n \in T \text{ for all $n \in \N \cup \{\infty\}$}.
  \end{align}
  In this theorem, our goal is to show $t'_\infty \in T'$.
  To do so, we prove the next three statements:
  \begin{itemize}
  \item[(a)] $t'_0 \in T'$.
  \item[(b)] If $t' \in T' \cap T$, then $F(t') \in T'$.
  \item[(c)] If $t'_n \in T'$ for all $n \in \N$, then $t'_\infty \in T'$.
  \end{itemize}
  It suffices to prove the three because (a), (b), and \cref{eq:loop-delta-sound} imply $t'_n \in T'$ for all $n \in \N$,
  and this and (c) imply $t'_\infty \in T'$.
  We now prove (a), (b), and (c) as follows.
  
  First, (a) follows directly from \cref{lem:sound-loop-base-phi}.
  
  Next, we prove (b). Consider $t' \in T' \cap T$.
  Our goal is to show ${}\models \Phi(F(t'), p'_\infty(v), \{v\})$ for all $v \in \Var$.
  Observe that \[ F(t') = \scond(\db{b}, \sseq(\db{c_0}, t'), \db{\cskip}). \]
  By \cref{thm:soundness-delta} and induction hypothesis on $c_0$, we have \[\db{c_0} \in \gamma(p_0, d_0, V_0),\]
  and by assumption, we have \[t' \in \gamma(p'_\infty, d'_\infty, V'_\infty) = T' \cap T.\]
  By applying to these the proofs of skip, sequential composition, and conditional cases, we have
  \[ {}\models \Phi(F(t'), p''(v), \{v\}) \; \text{ for all $v \in \Var$} \]
  where
  \[
  p''(v) = \fv(b)^c \cap \Big(V_0 \cup (p_0)_\cap(d'_\infty(v))^c \cup (d_0)_\cup(p'_\infty(v)^c)\Big)^c \cap \Var.
  \]
  Since $p''$ is the $p$ part of $F^\sharp(p'_\infty, d'_\infty, V'_\infty)$ and $(p'_\infty, d'_\infty, V'_\infty)$ is a fixed point of $F^\sharp$,
  we have $p''= p'_\infty$.
  Hence, we obtain
  ${}\models \Phi(F(t'), p'_\infty(v), \{v\})$ for all $v \in \Var$.
  This completes the proof of (b).
  
  Finally, we prove (c). Suppose that $t'_n \in T'$ for all $n \in \N$, and let $v \in \Var$.
  Our goal is to show 
  \[
  {}\models \Phi(t'_\infty, p'_\infty(v), \{v\}).
  \]
  Observe that \cref{lem:sound-loop-limit-phi} implies the goal
  when applied to $K = p'_\infty(v)$, $L=\{v\}$, and $\{f_n\}_{n \in \N} = \{t'_n\}_{n \in \N}$.
  Hence, it suffices to show the three preconditions of the lemma:
  \begin{itemize}
  \item $\{t'_n\}_{n \in \N}$ is an $\omega$-chain.
  \item For all $\tau \in \State[p'_\infty(v)^c]$ and $n \in \N$,
    the set $\{\sigma' \in \State[p'_\infty(v)] \mid t'_n(\sigma' \oplus \tau) \in \State\}$ is $\emptyset$ or $\State[p'_\infty(v)]$.
  \item For all $n \in \N$, we have ${} \models \Phi(t'_n, p'_\infty(v), \{v\})$.
  \end{itemize}
  The first precondition was already observed in \cref{sec:setup}.
  The third one holds by the assumption that $t'_n \in T'$ for all $n \in \N$.
  For the second one, it is enough to show the next two statements:
  \begin{itemize}
  \item[(i)] For all $U \subseteq \Var$ with $U \supseteq V'_\infty$, and for all $\tau \in \State[U]$ and $n \in \N$,
    the next set is $\emptyset$ or $\State[U^c]$:
    \[
    \{\sigma' \in \State[U^c] \mid t'_n(\sigma' \oplus \tau) \in \State\}.
    \]
  \item[(ii)] For all $v \in \Var$, \[p'_\infty(v)^c \supseteq V'_\infty.\]
  \end{itemize}
  We give the proof of the two statements below. This completes the proof of the while-loop case.
  
  \paragraph{Proof of (ii).}
  We prove a stronger statement: for all $n \in \N$ and $v \in \Var$, $p'_n(v)^c \supseteq V'_n.$
  This statement implies (ii) because
  $p'_\infty(v)^c = (\bigcap_{n \in \N} p'_n(v))^c = \bigcup_{n \in \N} p'_n(v)^c
  \supseteq \bigcup_{n \in \N} V'_n = V'_\infty.$
  We prove the statement by induction on $n$.
  For $n=0$, we have \[p'_n(v)^c = \Var^c = \emptyset \supseteq \emptyset = V'_n \text{ \; for all $v \in \Var$}.\]
  For $n>0$, let $v \in \Var$.
  By induction hypothesis, $p'_{n-1}(v)^c \supseteq V'_{n-1}$ holds.
  Using this, we have
  \begin{align*}
    p'_n(v)^c &= \fv(b) \cup V_0 \cup (p_0)_\cap(d'_{n-1}(v))^c \cup (d_0)_\cup(p'_{n-1}(v)^c)
    \\
    &\supseteq \fv(b) \cup V_0 \cup (d_0)_\cup(V'_{n-1}) = V'_n.
  \end{align*}
  This completes the proof of (ii).

  \paragraph{Proof of (i).}
  Consider $U \subseteq \Var$, $\tau \in \State[U]$, and $n \in \N$ such that $U \supseteq V'_\infty$.
  Let $\Sigma \defeq \{\sigma' \in \State[U^c] \mid t'_n(\sigma' \oplus \tau) \in \State\}$.
  If $\Sigma = \emptyset$, there is nothing left to prove.
  So assume $\Sigma \neq \emptyset$.
  To prove $\Sigma = \State[U^c]$, we need to show that $t'_n(\sigma' \oplus \tau) \in \State$ for any $\sigma' \in \State[U^c]$.
  Choose $\sigma' \in \State[U^c]$.
  We show $t'_n(\sigma' \oplus \tau) \in \State$ using the next two claims:
  \begin{itemize}
  \item[(iii)] For all $n \in \N$ and $\sigma \in \State$, 
    \begin{align}
      \label{eq:loop-unroll-sem}
      t'_n(\sigma) &=
      \begin{cases}
        \db{c_0^{(i)}}\sigma
        &\text{if } i \in I_n(\sigma)
        \\
        \bot &\text{otherwise},
      \end{cases}
      \\ \nonumber
      \text{where } c_0^{(i)} & \defeq (\cskip; c_0; \cdots; c_0) \text{ that has $i$ copies of $c_0$, and}
      \\ \nonumber
      I_n(\sigma)
      & {} \defeq \{i \in [0, n-1] \mid \db{c_0^{(i)}}\sigma \in \State
      \land \db{b}(\db{c_0^{(i)}}\sigma)= \sfalse
      \\ \nonumber
      & \phantom{{} \defeq \{i \in [0, n-1] \mid \db{c_0^{(i)}}\sigma \in \State}
         {} \land \db{b}(\db{c_0^{(i-1)}}\sigma) = \cdots = \db{b}(\db{c_0^{(0)}}\sigma) = \strue\}.
    \end{align}
    Note that \cref{eq:loop-unroll-sem} is well-defined since $I_n(\sigma)$ has at most one element.
        
  \item[(iv)] For all $n \in \N$,
    \[
      {}\models \Delta(\db{c_0^{(n)}}, V'_\infty, \fv(b)).
      \]
  \end{itemize}
  We give the proof of the two claims below, and for now we just assume them.
  
  Since $\Sigma \neq \emptyset$, there is some $\sigma'' \in \State[U^c]$ such that $t'_n(\sigma'' \oplus \tau) \in \State$.
  Since $t'_n(\sigma'' \oplus \tau) \in \State$, (iii) implies that
  \[{t'_n(\sigma'' \oplus \tau) = \db{c_0^{(m)}}(\sigma'' \oplus \tau) \in \State} \]
  for some $m \in I_n(\sigma'' \oplus \tau)$.
  Since $\sigma' \oplus \tau \sim_{V'_\infty} \sigma'' \oplus \tau$ (by $U \supseteq V'_\infty$)
  and $\smash{ \db{c_0^{(i)}}(\sigma'' \oplus \tau) \in \State }$ for all $i \in [0, m]$ (by $\smash{ \db{c_0^{(m)}} (\sigma'' \oplus \tau) \in \State }$),
  (iv) implies that
  \begin{align*}
    {\db{c_0^{(i)}}(\sigma' \oplus \tau) \in \State}
    \text{ \; and \; }
    {\db{c_0^{(i)}}(\sigma' \oplus \tau) \sim_{\fv(b)} \db{c_0^{(i)}}(\sigma'' \oplus \tau)}
    \text{ \; for all $i \in [0, m]$.}
  \end{align*}
  By combining these with $m \in I_n(\sigma'' \oplus \tau)$, we get $m \in I_n(\sigma' \oplus \tau)$.
  Hence, by (iii), we have \[{t'_n(\sigma' \oplus \tau) = \db{c_0^{(m)}}(\sigma' \oplus \tau) \in \State}.\]
  This completes the proof of (i).

  \paragraph{Proof of (iii).}
  We prove this by induction on $n$.
  For $n=0$, $t'_n(\sigma) = \bot$ and $I_n(\sigma) = \emptyset$ for all $\sigma \in \State$.
  Hence, \cref{eq:loop-unroll-sem} holds.
  For $n>0$, we have
  \begin{align*}
    t'_n(\sigma) &= F(t'_{n-1})(\sigma)
    \\ &=
    \begin{cases}
      \sigma & \text{if $\db{b}\sigma = \sfalse$}
      \\
      (t'_{n-1})^\dagger (\db{c_0}\sigma) & \text{if $\db{b}\sigma = \strue$}
    \end{cases}
    \\ &=
    \begin{cases}
      \sigma & \text{if $\db{b}\sigma = \sfalse$ \;$\cdots (*_1)$}
      \\
      \db{c_0^{(i)}}(\db{c_0}\sigma) &
      \text{if $\db{b}\sigma = \strue$, $\db{c_0}\sigma \in \State$ and $i \in I_{n-1}(\db{c_0}\sigma)$ \;$\cdots (*_2)$}
      \\
      \bot & \text{otherwise}
    \end{cases}
    \\ &=
    \begin{cases}
      \db{c_0^{(0)}}\sigma & \text{if $0 \in I_n(\sigma)$ \;$\cdots (*_1')$}
      \\
      \db{c_0^{(i+1)}}\sigma & \text{if $i+1 \in I_n(\sigma)$ and $i+1 \geq 1$ \;$\cdots (*_2')$}
      \\
      \bot & \text{otherwise}.
    \end{cases}
  \end{align*}
  The second equality is by the definition of $F$,
  the third by induction hypothesis,
  and the last by the following:
  for any $\sigma \in \State$ and $j\in\{1,2\}$, $(*_j)$ holds iff $(*_j')$ holds;
  and for any $i \in \N$, $\db{c_0}\sigma \in \State$ implies $\smash{ \db{c_0^{(i)}}(\db{c_0}\sigma) = \db{c_0^{(i+1)}}\sigma }$,
  and $\smash{ \db{c_0^{(i+1)}}\sigma \in \State }$ implies  $\db{c_0}\sigma \in \State$.
  Hence, \cref{eq:loop-unroll-sem} holds.
  This completes the proof of (iii).

  \paragraph{Proof of (iv).}
  To prove this, we prove a stronger statement: for all $n \in \N$, $\smash{ {}\models \Delta(\db{ c_0^{(n)}}, V'_{n+1}, \fv(b)) }.$
  This statement implies (iv) since $V'_{n} \subseteq V'_\infty$ for all $n \in \N$ and we have the weakening lemma for $\Delta$ (\cref{lem:weakening-delta}).
  We prove the statement by induction on $n$.
  For $n=0$, $\smash{ \db{c_0^{(n)}} = \db{\cskip} }$ and $V'_{n+1} = \fv(b) \cup V_0$.
  By \cref{thm:soundness-delta} on $\cskip$, we have ${}\models \Delta(\db{\cskip}, \{v\}, \{v\})$ for all $v \in \Var$,
  and then by the merging lemma for $\Delta$ (\cref{lem:merging-delta}), we have \[{}\models \Delta(\db{\cskip}, \fv(b), \fv(b)).\]
  Since $\smash{ V'_{n+1} \supseteq \fv(b) }$,
  $\smash{ {}\models \Delta(\db{c_0^{(n)}}, V'_{n+1}, \fv(b)) }$ holds by the weakening lemma for $\Delta$ (\cref{lem:weakening-delta}).
  Next, for $n>0$, $\smash{ \db{c_0^{(n)}} = \db{c_0; c_0^{(n-1)}} }$ and $\smash{ V'_{n+1} = \fv(b) \cup V_0 \cup (d_0)_\cup(V'_n) }$.
  By ${ \cdb{c_0}=(p_0, d_0, V_0) }$, \cref{thm:soundness-delta} on $c_0$, and induction hypothesis of the theorem (not that of the claim (iv)), we have
  \[ \db{c_0} \in \gamma(p_0, d_0, V_0).\]
  Also, by induction hypothesis of our strengthening of the claim (iv) and the weakening lemma for $\Delta$ (\cref{lem:weakening-delta}), 
  \[{}\models \Delta({ \db{c_0^{(n-1)}} }, V'_n, \{v\}) \text{ for all $v \in \fv(b)$}.\]
  By applying to these the proof of \cref{thm:soundness-delta} (on the sequential composition case), we have
  \[{}\models \Delta({ \db{c_0; c_0^{(n-1)}} }, V_0 \cup (d_0)_\cup(V'_n), \{v\}) \; \text{ for all $v \in \fv(b)$}.\]
  By the merging lemma for $\Delta$ (\cref{lem:merging-delta}), 
  $\smash{ {}\models \Delta({ \db{c_0^{(n)}} }, V_0 \cup (d_0)_\cup(V'_n), \fv(b)) }$ holds.
  Since $V'_{n+1}$ includes $V_0 \cup \smash{ (d_0)_\cup(V'_n) }$,
  we get $\smash{ {}\models \Delta(\db{c_0^{(n)}}, V'_{n+1}, \fv(b)) }$ by the weakening lemma for $\Delta$ (\cref{lem:weakening-delta}).
  This completes the proof of (iv).
  
  \paragraph{\bf Case $c \equiv (x:=\csample(\cname(\alpha, e),\cnor(e_1,e_2),\lambda y.e'))$.} 
  To prove the conclusion, consider $v \in \Var$ and $\tau \in \State[p(v)^c]$.
  We should show $g \in \phi_{p(v),\{v\}}$, where
  \[ g(\sigma) = \pi_{\Var, \{v\}} (\db{c}(\sigma \oplus \tau)) = [v \mapsto \db{c}(\sigma \oplus \tau)(v)]. \]
  We prove this by case analysis on $v$.

  First, suppose $v \not\in \{x\} \cup \{\val_\mu, \pr_\mu, \sampled_\mu \mid \mu \in \Name, \mu = (\alpha, \_) \}$.
  Then, $p(v) = \Var$ and
  \[ g(\sigma) = [v \mapsto \db{c}\sigma(v)] = [v \mapsto \sigma(v)] = \pi_{\Var, \{v\}}(\sigma) \]
  for all $\sigma \in \State[p(v)]$.
  Here the first equality follows from $\tau \in \State[\emptyset]$, and the second equality holds
  since $\db{c}$ does not change the value of $v$.
  Hence, by \cref{assm:projection}, $g = \pi_{\Var, \{v\}} \in \phi_{\Var, \{v\}} = \phi_{p(v), \{v\}}$.

  Next, suppose $v \in \{x\} \cup \{\val_\mu, \pr_\mu, \sampled_\mu \mid \mu \in \Name, \mu = (\alpha, \_)\}$.
  Then, we have $p(v)^c \supseteq \fv(e)$:
  if $e$ is a constant, $\fv(e) = \emptyset$ holds,
  and if $e$ is not a constant, the definition of $p(v)$ ensures this.
  Thus, there exists $\mu_0 \in \Name$ such that $\mathit{create\_name}(\alpha, \db{e}(\sigma \oplus \tau)) = \mu_0$ for all $\sigma \in \State[p(v)]$.
  We now do refined case analysis on $v$ using $\mu_0$.
  \begin{itemize}
  \item Case $v \in \{\val_\mu, \pr_\mu, \sampled_\mu \mid \mu \in \Name, \mu = (\alpha, \_), \mu \neq \mu_0\}$.
    In this case, 
    \[ g(\sigma) = [v \mapsto (\sigma \oplus \tau)(v)] = \pi_{\Var, \{v\}}(\sigma \oplus \tau) \]
    for all $\sigma \in \State[p(v)]$.
    Here the first equality holds since $\db{c}$ does not change the value of $v$.
    By \cref{assm:projection}, we have $\pi_{\Var, \{v\}} \in \phi_{\Var, \{v\}}$.
    Then, by \cref{assm:restriction}, we obtain $g \in \phi_{p(v), \{v\}}$.
    Note that this argument does not depend on the value of $p(v)$
    (which can be $\Var$, $\fv(e)^c \cap \edb{v+1}$, etc., depending on $e$ and $v$).
    
  \item Case $v \in \{x, \val_{\mu_0}\}$.
    Define $K_1 \defeq \edb{e'[\mu_0/y]}$. 
    Then, $p(v)^c \supseteq (\edb{e'[\mu_0/y]})^c = K_1^c$. So, there exist $\tau_1 \in \State[K_1^c]$ and $\tau_2 \in \State[p(v)^c \setminus K_1^c]$
    such that $\tau = \tau_1 \oplus \tau_2$.
    Let $h : \State[K_1] \to \State[\{v\}]$ be a function defined by
    \[
       h(\sigma') \defeq [v \mapsto \db{e'[\mu_0/y]}(\sigma' \oplus \tau_1)].
    \]
    Then,
    \begin{align*}
      g(\sigma)
      = \big[v \mapsto \db{e'[\mu_0/y]}(\sigma \oplus \tau)\big]
      = h(\sigma \oplus \tau_2)
    \end{align*}
    for all $\sigma \in \State[p(v)]$.
    By \cref{assm:expression-analysis-soundness}, we have $h \in \phi_{K_1, \{v\}}$,
    Then, by \cref{assm:restriction}, we obtain $g \in \phi_{p(v), \{v\}}$.

  \item Case $v \equiv \pr_{\mu_0}$. 
    In this case, $p(v) = \fv(e)^c \cap K$ for $K = \edb{\cpdfnor({\mu_0};e_1,e_2)}$.
    Since $\tau \in \State[p(v)^c]$ and $p(v)^c = (K \setminus \fv(e)^c) \uplus K^c$,
    there exist $\tau_1 \in \State[K \setminus \fv(e)^c]$ and $\tau_2 \in \State[K^c]$ such that $\tau = \tau_1 \oplus \tau_2$.
    Using $\tau_1$ and $\tau_2$, we have
    \begin{align*}
      g(\sigma)
      &= \big[v \mapsto \db{\cnor(e_1,e_2)}(\sigma \oplus \tau)\big((\sigma \oplus \tau)({\mu_0})\big)\big]
      \\&
      = \big[v \mapsto \db{\cpdfnor({\mu_0};e_1,e_2)}(\sigma \oplus \tau)\big]
      = h(\sigma \oplus \tau_1)
    \end{align*}
    for all $\sigma \in \State[p(v)]$,
    where $h : \State[K] \to \State[\{v\}]$ is defined by
    \[h(\sigma') = [v \mapsto \db{\cpdfnor({\mu_0};e_1,e_2)}(\sigma' \oplus \tau_2)].\]
    Here the first equality follows from the definition of $\db{c}$,
    the second equality holds because $\cpdfnor$ is the density function of a normal distribution,
    and the third equality comes from $\tau = \tau_1 \oplus \tau_2$.
    By \cref{assm:expression-analysis-soundness}, we have $h \in \phi_{K, \{v\}}$.
    Then, by \cref{assm:restriction}, we obtain $g \in \phi_{p(v), \{v\}}$.

  \item Case $v \equiv \sampled_{\mu_0}$.
    The proof is similar to the above case $v \equiv \pr_{\mu_0}$.
    In this case, $p(v) = \fv(e)^c \cap K$ for $K = \edb{\sampled_{\mu_0} + 1}$.
    As in the above case,
    there exist $\tau_1 \in \State[K \setminus \fv(e)^c]$ and $\tau_2 \in \State[K^c]$ such that $\tau = \tau_1 \oplus \tau_2$,
    and we have
    \begin{align*}
      g(\sigma)
      &= [v \mapsto (\sigma \oplus \tau)(\sampled_{\mu_0})+1]
      \\&
      = [v \mapsto \db{\sampled_{\mu_0}+1}(\sigma \oplus \tau)]
      = h(\sigma \oplus \tau_1)
    \end{align*}
    for all $\sigma \in \State[p(v)]$,
    where $h : \State[K] \to \State[\{v\}]$ is defined by
    \[h(\sigma') = [v \mapsto \db{\sampled_{\mu_0}+1}(\sigma' \oplus \tau_2)].\]
    By \cref{assm:expression-analysis-soundness}, we have $h \in \phi_{K, \{v\}}$.
    Then, by \cref{assm:restriction}, we obtain $g \in \phi_{p(v), \{v\}}$.
  \end{itemize}

  \paragraph{\bf Case $c \equiv \cobs(\cnor(e_1,e_2),r)$.}
  To prove the conclusion, consider $v \in \Var$ and $\tau \in \State[p(v)^c]$.
  We should show $g \in \phi_{p(v),\{v\}}$, where
  \[ g(\sigma) = \pi_{\Var, \{v\}} (\db{c}(\sigma \oplus \tau)) = [v \mapsto \db{c}(\sigma \oplus \tau)(v)]. \]
  We prove this by case analysis on $v$.

  First, suppose $v \not\equiv \like$.
  Then, $p(v) = \Var$ and
  \[ g(\sigma) = [v \mapsto \db{c}\sigma(v)] = [v \mapsto \sigma(v)] = \pi_{\Var, \{v\}}(\sigma) \]
  for all $\sigma \in \State[p(v)]$.
  Here the first equality is by $\tau \in \State[\emptyset]$, and the second equality holds
  since $\db{c}$ does not change the value of $v$.
  Hence, by \cref{assm:projection}, $g = \pi_{\Var, \{v\}} \in \phi_{\Var, \{v\}} = \phi_{p(v), \{v\}}$.

  Next, suppose $v \equiv \like$.
  Then, $p(v) = \edb{\like \times \cpdfnor(r;e_1,e_2)}$ and
  \begin{align*}
    g(\sigma)
    &= [v \mapsto (\sigma \oplus \tau)(\like) \cdot \db{\cnor(e_1,e_2)}(\sigma \oplus \tau)(r)]
    \\&
    = [v \mapsto \db{\like \times \cpdfnor(r;e_1;e_2)}(\sigma \oplus \tau)]
  \end{align*}
  for all $\sigma \in \State[p(v)]$.
  Here the first equality is by the definition of $\db{c}$,
  and the second equality holds because $\cpdfnor$ is the density function of a normal distribution.
  Hence, by \cref{assm:expression-analysis-soundness},
  we have $g \in \phi_{p(v),\{v\}}$.
\end{proof}


\subsection{Proofs of Lemmas for \cref{thm:soundness-phi}}

Here are the lemmas used to prove \cref{thm:soundness-phi}:

\begin{lemma}[Weakening; $\Delta$]
  \label{lem:weakening-delta}
  Let $f \in [\State \to \State_\bot]$ and $K,K',L,L' \subseteq \Var$. Then,
  \begin{align*}
    {}\models \Delta(f, K, L) \land (K \subseteq K') \land (L \supseteq L')
    &\implies {}\models \Delta(f, K', L').
  \end{align*}
\end{lemma}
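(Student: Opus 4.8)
The plan is to unfold the definition of $\Delta$ and check that the required condition for $(f,K',L')$ follows directly from the one assumed for $(f,K,L)$, using only the two inclusions $K \subseteq K'$ and $L \supseteq L'$. Recall that $\Delta(f,K,L)$ says: for all $\sigma,\sigma' \in \State$ with $\sigma \sim_K \sigma'$, we have $f(\sigma) \in \State \iff f(\sigma') \in \State$, and moreover $f(\sigma) \in \State \implies f(\sigma) \sim_L f(\sigma')$. So to establish $\Delta(f,K',L')$, I would take arbitrary $\sigma,\sigma' \in \State$ with $\sigma \sim_{K'} \sigma'$ and show the corresponding two conclusions with $L'$ in place of $L$.

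The key observations are two monotonicity facts about $\sim_{(-)}$. First, since $K \subseteq K'$, the relation $\sim_{K'}$ refines $\sim_K$: if $\sigma(v) = \sigma'(v)$ for all $v \in K'$, then in particular $\sigma(v) = \sigma'(v)$ for all $v \in K$, so $\sigma \sim_{K'} \sigma' \implies \sigma \sim_K \sigma'$. Second, since $L \supseteq L'$, the relation $\sim_L$ refines $\sim_{L'}$: $\sigma \sim_L \sigma' \implies \sigma \sim_{L'} \sigma'$. Concretely: from $\sigma \sim_{K'} \sigma'$ I would first derive $\sigma \sim_K \sigma'$ by the first fact, then invoke $\Delta(f,K,L)$ to obtain $f(\sigma) \in \State \iff f(\sigma') \in \State$ (which is exactly the first conclusion needed, unchanged) and, whenever $f(\sigma) \in \State$, $f(\sigma) \sim_L f(\sigma')$; finally, applying the second fact to $f(\sigma) \sim_L f(\sigma')$ yields $f(\sigma) \sim_{L'} f(\sigma')$, which is the second conclusion needed. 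That completes the verification of $\Delta(f,K',L')$.

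There is essentially no obstacle here; this is a purely definitional chase, and the only thing to be careful about is correctly tracking the direction of the inclusions — enlarging the input set $K$ strengthens the hypothesis of the implication (so the conclusion still fires), while shrinking the output set $L$ weakens the conclusion (so it is still derivable). I would write the proof in three or four lines, making these two $\sim$-monotonicity remarks explicit and then quoting $\Delta(f,K,L)$ once.
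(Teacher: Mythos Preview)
Your proposal is correct and essentially identical to the paper's own proof: both pick $\sigma \sim_{K'} \sigma'$, use $K \subseteq K'$ to derive $\sigma \sim_K \sigma'$, apply $\Delta(f,K,L)$, and then use $L' \subseteq L$ to weaken $f(\sigma) \sim_L f(\sigma')$ to $f(\sigma) \sim_{L'} f(\sigma')$.
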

\begin{proof}
Consider $\sigma,\sigma' \in \State$ with $\sigma \sim_{K'} \sigma'$. Then, $\sigma \sim_{K} \sigma'$ because $K \subseteq K'$. Since ${} \models \Delta(f,K,L)$, 
\[
(f(\sigma) \in \State \iff f(\sigma') \in \State)\ \text{and}\
(f(\sigma) \in \State \implies f(\sigma) \sim_L f(\sigma')).
\]
Note that the conclusion of the second conjunct implies $f(\sigma) \sim_{L'} f(\sigma')$ since $L' \subseteq L$. From what we have just shown, the desired conclusion ${} \models \Delta(f,K',L')$ follows.
\end{proof}

\begin{lemma}[Weakening; $\Phi$]
  \label{lem:weakening-phi}
  Let $f \in [\State \to \State_\bot]$ and $K,K',L,L' \subseteq \Var$. Then,
  \begin{align*}
    {}\models \Phi(f, K, L) \land (K \supseteq K') \land (L \supseteq L')
    &\implies {}\models \Phi(f, K', L').
  \end{align*}
\end{lemma}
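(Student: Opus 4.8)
The plan is to unfold the definition of the smoothness abstraction $\Phi$ and directly exhibit, for each admissible choice of auxiliary state $\tau$, how the partial function witnessing $\models \Phi(f,K',L')$ is obtained from the one witnessing $\models \Phi(f,K,L)$ by a combination of restriction (shrinking the input variable set from $K$ to $K'$) and projection (shrinking the output variable set from $L$ to $L'$). The two structural assumptions available, \cref{assm:restriction} and \cref{assm:projection} (together with \cref{assm:composition} to glue them), are precisely tailored to these two operations, so the proof is essentially a bookkeeping exercise: track which part of the fixed input state $\tau \in \State[\Var \setminus K']$ gets absorbed into the ``$K \setminus K'$'' slot and confirm that the resulting composite partial function lies in $\phi_{K',L'}$.

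Concretely, I would first fix an arbitrary $\tau' \in \State[\Var \setminus K']$ and decompose it as $\tau' = \tau \oplus \rho$ where $\tau \in \State[\Var \setminus K]$ and $\rho \in \State[K \setminus K']$ — this is possible because $K' \subseteq K$, so $\Var \setminus K' = (\Var \setminus K) \uplus (K \setminus K')$. By $\models \Phi(f,K,L)$ applied to this $\tau$, the partial function $g_0 : \State[K] \rightharpoonup \State[L]$ with $\dom(g_0) = \{\sigma \mid f(\sigma \oplus \tau) \in \State\}$ and $g_0(\sigma) = (\pi_{\Var,L} \circ f)(\sigma \oplus \tau)$ satisfies $\phi_{K,L}$. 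Next I would apply \cref{assm:restriction} with the distinguished state $\rho \in \State[K \setminus K']$ to obtain $g_1 : \State[K'] \rightharpoonup \State[L]$, $g_1(\sigma) = g_0(\sigma \oplus \rho)$ whenever defined, with $g_1 \in \phi_{K',L}$. Finally, since $L \supseteq L'$, I would post-compose with the projection $\pi_{L,L'}$: by \cref{assm:projection} we have $\pi_{L,L'} \in \phi_{L,L'}$, and by \cref{assm:composition} the composite $\pi_{L,L'} \circ g_1$ lies in $\phi_{K',L'}$. The last thing to verify is that $\pi_{L,L'} \circ g_1$ is exactly the partial function prescribed by the definition of $\Phi$ for $(f,K',L')$ and the fixed $\tau'$: its domain is $\{\sigma \in \State[K'] \mid f(\sigma \oplus \tau') \in \State\}$ (using $\sigma \oplus \rho \oplus \tau = \sigma \oplus \tau'$) and its value at $\sigma$ is $(\pi_{\Var,L'} \circ f)(\sigma \oplus \tau')$ (using $\pi_{L,L'} \circ \pi_{\Var,L} = \pi_{\Var,L'}$). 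Since $\tau'$ was arbitrary, this gives $\models \Phi(f,K',L')$.

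The only mild subtlety — and the step I would be most careful about — is the handling of the undefined cases: restriction and projection are operations on \emph{partial} functions, and I need the domain manipulations to line up exactly, i.e. that the ``if defined, else undefined'' clauses in \cref{assm:restriction}, \cref{assm:projection}, and \cref{assm:composition} compose to give precisely $\{\sigma : f(\sigma \oplus \tau') \in \State\}$ as the domain, with no spurious shrinkage or enlargement. This is routine but worth spelling out, because a mismatch here would be exactly the kind of soundness bug the paper warns about. Everything else is immediate from the three assumptions and the associativity/compatibility of the $\oplus$ and $\pi$ operations.
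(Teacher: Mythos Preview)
Your proposal is correct and follows essentially the same approach as the paper's proof: both decompose the fixed state over $(K')^c$ into its $K^c$ and $K\setminus K'$ parts, invoke $\models\Phi(f,K,L)$ on the $K^c$ part to obtain a partial function in $\phi_{K,L}$, then apply \cref{assm:restriction} for the input-side shrinking and \cref{assm:projection} together with \cref{assm:composition} for the output-side shrinking. Your naming differs (your $\tau,\rho,g_0$ correspond to the paper's $\tau_2,\tau_1,h$) and you spell out the intermediate restricted function $g_1$ and the domain bookkeeping more explicitly, but the argument is the same.
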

\begin{proof}
  We prove the lemma using \cref{assm:restriction,assm:projection,assm:composition}.
  Consider $\tau \in \State[(K')^c]$, and let $g$ be the following partial function:
  \begin{align*}
  g & : \State[K'] \rightharpoonup \State[L'],
  &
  g(\sigma') \defeq
  \begin{cases}
  (\pi_{\Var,L'} \circ f)(\sigma' \oplus \tau) & \text{if}\ f(\sigma' \oplus \tau) \in \State
  \\
  \text{undefined} & \text{otherwise}.
  \end{cases}
  \end{align*}
  We should show $g \in \phi_{K',L'}$. Note that $(K')^c \supseteq K^c$. Thus, there exist $\tau_1 \in \State[(K')^c \setminus K^c]$
  and $\tau_2 \in \State[K^c]$ such that $\tau = \tau_1 \oplus \tau_2$. Define a partial function $h : \State[K] \to \State[L]$ by
  \[
  h(\sigma'') \defeq 
  \begin{cases}
  (\pi_{\Var,L} \circ f)(\sigma'' \oplus \tau_2) & \text{if}\ f(\sigma'' \oplus \tau_2) \in \State
  \\
  \text{undefined} & \text{otherwise}.
  \end{cases}
  \] 
  Then, since ${} \models \Phi(f,K,L)$, we have $h \in \phi_{K,L}$. Note that for all $\sigma' \in \State[K']$,
  \[
  g(\sigma') = (\pi_{L,L'} \circ h)(\sigma' \oplus \tau_1).
  \]
  By \cref{assm:restriction,assm:projection,assm:composition}, the above equation implies $g \in \phi_{K',L'}$, as desired.
\end{proof}

\begin{lemma}[Merging; $\Delta$]
  \label{lem:merging-delta}
  Let $f \in [\State \to \State_\bot]$ and $K,K',L,L' \subseteq \Var$. Then,
  \begin{align*}
    {}\models \Delta(f, K, L) \land {}\models \Delta(f, K', L') &\implies {}\models \Delta(f, K \cup K', L \cup L').
  \end{align*}
\end{lemma}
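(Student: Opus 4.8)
The plan is to unfold the definition of $\Delta$ and chase the two conditions directly, since this is a purely set-theoretic statement about a single function $f$. Recall that ${}\models \Delta(f,K,L)$ means: for all $\sigma,\sigma' \in \State$ with $\sigma \sim_K \sigma'$, we have $(f(\sigma) \in \State \iff f(\sigma') \in \State)$ and $(f(\sigma) \in \State \implies f(\sigma) \sim_L f(\sigma'))$. So to prove ${}\models \Delta(f, K \cup K', L \cup L')$ I would fix $\sigma,\sigma' \in \State$ with $\sigma \sim_{K \cup K'} \sigma'$ and verify the two clauses.

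First I would observe the monotonicity of the equivalence relations: since $K \cup K' \supseteq K$, the hypothesis $\sigma \sim_{K \cup K'} \sigma'$ implies $\sigma \sim_K \sigma'$ (and likewise $\sigma \sim_{K'} \sigma'$). Applying ${}\models \Delta(f,K,L)$ to the pair $\sigma,\sigma'$ then gives $f(\sigma) \in \State \iff f(\sigma') \in \State$, which is already the first clause for $K \cup K'$ (we only needed one of the two hypotheses here). For the second clause, assume $f(\sigma) \in \State$; then also $f(\sigma') \in \State$. From ${}\models \Delta(f,K,L)$ we get $f(\sigma) \sim_L f(\sigma')$, and from ${}\models \Delta(f,K',L')$ (using $\sigma \sim_{K'} \sigma'$) we get $f(\sigma) \sim_{L'} f(\sigma')$. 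Since $f(\sigma)$ and $f(\sigma')$ agree on all variables in $L$ and on all variables in $L'$, they agree on $L \cup L'$, i.e.\ $f(\sigma) \sim_{L \cup L'} f(\sigma')$. This establishes both clauses, so ${}\models \Delta(f, K \cup K', L \cup L')$ holds.

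There is essentially no obstacle here — the lemma is a routine consequence of the definition, relying only on the facts that $\sim_{(-)}$ is antimonotone in its subscript and that $\sim_L \cap \sim_{L'} = \sim_{L \cup L'}$ as relations on states. The only thing to be mildly careful about is that the first clause (termination-agreement) needs just one of the two antecedent abstractions, while the second clause (output-agreement) genuinely combines both, so I would present the two clauses separately rather than trying to handle them uniformly.
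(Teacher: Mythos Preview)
Your proof is correct and essentially identical to the paper's own argument: both fix $\sigma \sim_{K\cup K'} \sigma'$, deduce $\sigma \sim_K \sigma'$ and $\sigma \sim_{K'} \sigma'$, use one hypothesis for the termination-agreement clause, and combine both hypotheses to conclude $f(\sigma) \sim_{L\cup L'} f(\sigma')$.
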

\begin{proof}
  Consider $\sigma,\sigma' \in \State$ with $\sigma \sim_{K \cup K'} \sigma'$. Then, $\sigma \sim_K \sigma'$, and by the assumption that ${} \models \Delta(f,K,L)$, we have
  \[
  f(\sigma) \in \State \iff f(\sigma') \in \State.
  \]
  It remains to show that if $f(\sigma),f(\sigma') \in \State$, then $f(\sigma) \sim_{L \cup L'} f(\sigma')$. Assume $f(\sigma),f(\sigma') \in \State$. 
  Since $\sigma \sim_{K \cup K'} \sigma'$ and we have ${} \models \Delta(f,K,L)$ and ${} \models \Delta(f,K',L')$ by assumption, 
  \[
  f(\sigma) \sim_L f(\sigma')\ \text{and}\ f(\sigma) \sim_{L'} f(\sigma').
  \]
  This implies that $f(\sigma) \sim_{L \cup L'} f(\sigma')$, as desired.
\end{proof}

\begin{lemma}[Merging; $\Phi$]
  \label{lem:merging-phi}
  Let $f \in [\State \to \State_\bot]$ and $K,K',L,L' \subseteq \Var$. Then,
  \begin{align*}
    {}\models \Phi(f, K, L) \land {}\models \Phi(f, K', L') &\implies {}\models \Phi(f, K \cap K', L \cup L').
  \end{align*}
\end{lemma}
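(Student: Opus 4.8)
The plan is to prove the Merging lemma for $\Phi$ by unfolding the definition of $\models\Phi$ and reducing it to a pairing argument via \cref{assm:pairing} (together with \cref{assm:restriction}). Recall that $\models\Phi(f,K,L)$ means: for every $\tau\in\State[\Var\setminus K]$, the partial function $g_\tau:\State[K]\rightharpoonup\State[L]$ obtained by fixing the $\Var\setminus K$ part of the input to $\tau$ and projecting the output onto $L$ lies in $\phi_{K,L}$. So to establish $\models\Phi(f,K\cap K',L\cup L')$, I would fix an arbitrary $\tau\in\State[\Var\setminus(K\cap K')]$ and show that the associated partial function $g:\State[K\cap K']\rightharpoonup\State[L\cup L']$ belongs to $\phi_{K\cap K',L\cup L'}$, where $\dom(g)=\{\sigma\in\State[K\cap K']\mid f(\sigma\oplus\tau)\in\State\}$ and $g(\sigma)=(\pi_{\Var,L\cup L'}\circ f)(\sigma\oplus\tau)$.

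First I would split the output: since $\State[L\cup L']$ is isomorphic to $\State[L\cup L']$ and we want to view $g$ as a pairing, I would write $g=\langle g_1,g_2\rangle$ where $g_1(\sigma)=(\pi_{\Var,L\setminus L'}\circ f)(\sigma\oplus\tau)$ maps into $\State[L\setminus L']$ and $g_2(\sigma)=(\pi_{\Var,L'}\circ f)(\sigma\oplus\tau)$ maps into $\State[L']$ — these have disjoint domains $L\setminus L'$ and $L'$ whose union is $L\cup L'$, and $\dom(g_1)=\dom(g_2)=\dom(g)$ so the pairing condition's domain-intersection clause degenerates nicely. By \cref{assm:pairing}, it then suffices to show $g_1\in\phi_{K\cap K',L\setminus L'}$ and $g_2\in\phi_{K\cap K',L'}$. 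For $g_2$, I would use the hypothesis $\models\Phi(f,K',L')$: fix the extension of $\tau$ — namely, since $\Var\setminus(K\cap K')=(\Var\setminus K')\uplus(K'\setminus K)$, decompose $\tau=\tau'\oplus\tau''$ with $\tau'\in\State[\Var\setminus K']$ and $\tau''\in\State[K'\setminus K]$; the hypothesis gives that $\sigma'\mapsto(\pi_{\Var,L'}\circ f)(\sigma'\oplus\tau')$ lies in $\phi_{K',L'}$, and then \cref{assm:restriction} (restricting from $K'$ down to $K\cap K'$ by fixing the $K'\setminus K$ part to $\tau''$) yields $g_2\in\phi_{K\cap K',L'}$. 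A symmetric argument using $\models\Phi(f,K,L)$ and the weakening lemma for $\Phi$ (\cref{lem:weakening-phi}, to shrink the output from $L$ to $L\setminus L'$) together with \cref{assm:restriction} gives $g_1\in\phi_{K\cap K',L\setminus L'}$.

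The main obstacle I anticipate is bookkeeping around how the various partial functions share domains and how the projections compose — in particular making sure the pairing $\langle g_1,g_2\rangle$ genuinely reconstructs $g$ (rather than something with a smaller domain), which relies on $g_1$ and $g_2$ having exactly the same domain, and making sure the chain of restrictions/weakenings is applied to the right variable sets in the right order (the asymmetry that we restrict $K$ or $K'$ down to $K\cap K'$ but the output sets $L,L'$ need not be comparable is the delicate point). One clean way to sidestep some of this is to handle $g_1$ by first applying $\models\Phi(f,K,L)$ then \cref{lem:weakening-phi} to get $\models\Phi(f,K,L\setminus L')$ and $\models\Phi(f,K\cap K',L\setminus L')$ in one step (weakening handles both the input-shrinking and output-shrinking), and similarly get $\models\Phi(f,K\cap K',L')$ for $g_2$; then the only genuinely new ingredient is the pairing step. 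I would present it in that order: (1) unfold $\models\Phi$ and fix $\tau$; (2) obtain $\models\Phi(f,K\cap K',L\setminus L')$ and $\models\Phi(f,K\cap K',L')$ from the hypotheses via \cref{lem:weakening-phi}; (3) decompose the target function as a pairing of two functions sharing a common domain; (4) conclude via \cref{assm:pairing}.

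\begin{proof}
We use \cref{assm:pairing} and the weakening lemma for $\Phi$ (\cref{lem:weakening-phi}).
Assume $\models\Phi(f,K,L)$ and $\models\Phi(f,K',L')$, and set $K_0\defeq K\cap K'$ and $L_0\defeq L\cup L'$.
By \cref{lem:weakening-phi}, since $K\supseteq K_0$ and $L\supseteq L\setminus L'$ and $L\supseteq L\cap L'$, we obtain
$\models\Phi(f,K_0,L\setminus L')$ and $\models\Phi(f,K_0,L\cap L')$;
and since $K'\supseteq K_0$ and $L'\supseteq L'$, we obtain $\models\Phi(f,K_0,L')$.

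Now fix $\tau\in\State[\Var\setminus K_0]$. Let $g:\State[K_0]\rightharpoonup\State[L_0]$ be given by
$\dom(g)\defeq\{\sigma\in\State[K_0]\mid f(\sigma\oplus\tau)\in\State\}$ and
$g(\sigma)\defeq(\pi_{\Var,L_0}\circ f)(\sigma\oplus\tau)$ for $\sigma\in\dom(g)$.
We must show $g\in\phi_{K_0,L_0}$.

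Define two partial functions $g_1:\State[K_0]\rightharpoonup\State[L\setminus L']$ and $g_2:\State[K_0]\rightharpoonup\State[L']$ by
\[
  g_1(\sigma)\defeq(\pi_{\Var,L\setminus L'}\circ f)(\sigma\oplus\tau),\qquad
  g_2(\sigma)\defeq(\pi_{\Var,L'}\circ f)(\sigma\oplus\tau),
\]
both with domain $\{\sigma\in\State[K_0]\mid f(\sigma\oplus\tau)\in\State\}=\dom(g)$.
Since $\models\Phi(f,K_0,L\setminus L')$ holds, applying its defining condition to this very $\tau$ gives $g_1\in\phi_{K_0,L\setminus L'}$.
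Likewise, since $\models\Phi(f,K_0,L')$ holds, applying it to $\tau$ gives $g_2\in\phi_{K_0,L'}$.
Because $(L\setminus L')\cap L'=\emptyset$ and $(L\setminus L')\cup L'=L\cup L'=L_0$, and because $\dom(g_1)=\dom(g_2)=\dom(g)$, we have $\langle g_1,g_2\rangle=g$ by the definition of pairing.
Hence \cref{assm:pairing} yields $g=\langle g_1,g_2\rangle\in\phi_{K_0,L_0}$.

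Since $\tau\in\State[\Var\setminus K_0]$ was arbitrary, $\models\Phi(f,K\cap K',L\cup L')$, as required.
\end{proof}
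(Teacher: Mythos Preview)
Your proof is correct and follows the same approach as the paper: weaken both hypotheses to the common input set $K\cap K'$ via \cref{lem:weakening-phi}, then apply \cref{assm:pairing}. Your decomposition of $L\cup L'$ into the disjoint pieces $L\setminus L'$ and $L'$ is in fact slightly more careful than the paper's own presentation, which pairs functions into $\State[L]$ and $\State[L']$ directly and thereby glosses over the disjointness precondition of \cref{assm:pairing} (the unused derivation of $\models\Phi(f,K_0,L\cap L')$ in your writeup can simply be dropped).
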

\begin{proof}
  Uses the weakening lemma for $\Phi$ (\cref{lem:weakening-phi}), we have
  \[
  {}\models \Phi(f,K \cap K', L)\ \text{and}\ {}\models \Phi(f,K \cap K',L').
  \]
  This and \cref{assm:pairing} then imply the desired conclusion. Concretely, for all $\tau \in \State[(K \cap K')^c]$, if $g$, $g_1$, and $g_2$
  are the following partial functions
  \begin{align*}
  g & : \State[K \cap K'] \rightharpoonup \State[L \cup L'],
  &
  g(\sigma') & \defeq 
  \begin{cases}
  (\pi_{\Var, L \cup L'} \circ f)(\sigma' \oplus \tau) & \text{if}\ f(\sigma'\oplus \tau) \in \State
  \\
  \text{undefined} & \text{otherwise},
  \end{cases}
  \\
  g_1 & : \State[K \cap K'] \rightharpoonup \State[L],
  &
  g_1(\sigma') & \defeq 
  \begin{cases}
  (\pi_{\Var, L} \circ f)(\sigma' \oplus \tau) & \text{if}\ f(\sigma'\oplus \tau) \in \State
  \\
  \text{undefined} & \text{otherwise},
  \end{cases}
  \\
  g_2 & : \State[K \cap K'] \rightharpoonup \State[L'],
  &
  g_2(\sigma') & \defeq 
  \begin{cases}
  (\pi_{\Var, L'} \circ f)(\sigma' \oplus \tau) & \text{if}\ f(\sigma'\oplus \tau) \in \State
  \\
  \text{undefined} & \text{otherwise},
  \end{cases}
  \end{align*}
  then $g_1 \in \phi_{K\cap K',L}$, $g_2 \in \phi_{K \cap K',L'}$, and $g = \langle g_1, g_2 \rangle$, so that 
  by \cref{assm:pairing}, we have $g \in \phi_{K \cap K', L \cup L'}$ as desired.
\end{proof}

\begin{lemma}[Sequence]
  \label{lem:sound-seq-phi}
  Let $f,g \in [\State \to \State_\bot]$ and $K,L,L',M \subseteq \Var$. Then,
  \begin{gather*}
    {}\models \Phi(f, K, L)
    \land {}\models \Phi(g, L, M)
    \land {}\models \Delta(f, K^c, L' \setminus L)
    \land {}\models \Delta(g, L', M)
    \implies {}\models \Phi(\sseq(f,g), K, M).
  \end{gather*}
\end{lemma}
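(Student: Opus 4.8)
The plan is to prove $\models\Phi(\sseq(f,g),K,M)$ directly from the definition of $\Phi$. Fix an arbitrary $\tau \in \State[\Var\setminus K]$ and let $h \defeq (\pi_{\Var,M}\circ\sseq(f,g))$ restricted to $\{\sigma\in\State[K] \mid \sseq(f,g)(\sigma\oplus\tau)\in\State\}$, so that I must show $h\in\phi_{K,M}$. The idea is to factor $h$ through an intermediate partial function on $\State[L']$: observe that, on the domain of $h$, $\sseq(f,g)(\sigma\oplus\tau) = g(f(\sigma\oplus\tau))$, so $h$ decomposes as ``first apply $f$ to get the $L'$-part of the intermediate state, then apply $g$ and project to $M$''. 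The key point is that, thanks to $\models\Delta(f,K^c,L'\setminus L)$, the $L'\setminus L$ part of the intermediate state $f(\sigma\oplus\tau)$ does not depend on $\sigma\in\State[K]$ — it is a fixed value determined by $\tau$ — so the intermediate function we feed to $g$ really only varies over the $L$ coordinates, which is exactly what $\models\Phi(f,K,L)$ controls.

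Concretely, I would set up the following. Let $\tau_L \in \State[L'\setminus L]$ be the constant value that the $L'\setminus L$ coordinates of $f(\sigma\oplus\tau)$ take for all $\sigma$ in the domain (well-definedness and constancy coming from $\models\Delta(f,K^c,L'\setminus L)$ applied to pairs $\sigma\oplus\tau, \sigma'\oplus\tau$, together with the $\Delta$-clause that being defined is also determined by $K^c$); if the domain is empty the claim is trivial by \cref{assm:strictness}. Define $f_L : \State[K]\rightharpoonup\State[L]$ by $f_L(\sigma) \defeq (\pi_{\Var,L}\circ f)(\sigma\oplus\tau)$ on the same domain; by $\models\Phi(f,K,L)$ we get $f_L\in\phi_{K,L}$. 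Next, from $\models\Delta(g,L',M)$ and $L'\supseteq L$, together with $\models\Phi(g,L,M)$, I want a partial function $g_M:\State[L]\rightharpoonup\State[M]$, namely $g_M(\rho)\defeq(\pi_{\Var,M}\circ g)(\hat\rho)$ where $\hat\rho\in\State[\Var]$ is any extension of $\rho\oplus\tau_L$ (the $L'$-coordinates being $\rho\oplus\tau_L$, the rest arbitrary); $\models\Delta(g,L',M)$ guarantees this is independent of the choice of extension and that definedness only depends on the $L'$-coordinates, and then $\models\Phi(g,L,M)$ — after restricting the fixed $L'\setminus L$ coordinates to $\tau_L$, which is exactly the ``$\tau$'' quantified in the definition of $\Phi(g,L,M)$ — yields $g_M\in\phi_{L,M}$. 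Finally I would check the pointwise identity $h = g_M\circ f_L$ on domains: for $\sigma$ in the domain of $h$, $f(\sigma\oplus\tau)$ is a state whose $L$-part is $f_L(\sigma)$ and whose $L'\setminus L$-part is $\tau_L$, so applying $g$ and projecting to $M$ gives exactly $g_M(f_L(\sigma))$; conversely, definedness matches because $\Delta(g,L',M)$ says $g$ is defined at $f(\sigma\oplus\tau)$ iff it is defined at any state agreeing with it on $L'$. Then \cref{assm:composition} gives $h = g_M\circ f_L \in \phi_{K,M}$, completing the proof.

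The main obstacle I expect is the bookkeeping around the intermediate function $g_M$: one has to be careful that the domain manipulations line up so that $g_M\circ f_L$ (standard composition of partial functions, which is ``undefined'' unless $\sigma\in\dom(f_L)$ and $f_L(\sigma)\in\dom(g_M)$) really has the same graph as $h$, rather than just agreeing where both are defined. This is where $\models\Delta(f,K^c,L'\setminus L)$ is used for definedness (the $K^c$-determined ``defined iff'' clause of $\Delta$) and $\models\Delta(g,L',M)$ is used to transfer definedness of $g$ from the actual intermediate state to its $L'$-restriction. A secondary subtlety is correctly invoking $\models\Phi(g,L,M)$: its definition quantifies over all $\tau'\in\State[\Var\setminus L]$ and asserts $\phi_{L,M}$ of the induced function; I only need the instance $\tau' = \tau_L \oplus (\text{anything on }\Var\setminus L')$, and I should note that the induced function does not depend on the $\Var\setminus L'$ part by $\models\Delta(g,L',M)$, so choosing $\tau_L$ together with an arbitrary completion is legitimate. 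Once these definedness matters are pinned down, the rest is a direct appeal to \cref{assm:composition} (plus \cref{assm:strictness} for the degenerate empty-domain case).
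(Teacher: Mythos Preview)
Your proposal is correct and follows essentially the same route as the paper's proof: factor the target partial function $h$ as a composition $g_M\circ f_L$ (the paper calls these $h_2\circ h_1$), pick the fixed $L^c$-part of the intermediate state from a reference point $\sigma'$ with $f(\sigma'\oplus\tau)\in\State$, and use $\models\Delta(f,K^c,L'\setminus L)$ together with $\models\Delta(g,L',M)$ to show $f(\sigma\oplus\tau)\sim_{L'}\hat\rho$ so that both definedness and value of $g$ transfer. The only cosmetic differences are that the paper takes the whole $\pi_{\Var,L^c}(f(\sigma'\oplus\tau))$ as its $\tau_2$ (rather than isolating $\tau_L$ and invoking $\Delta(g,L',M)$ for independence of the remaining coordinates), and that the paper handles the empty-domain case uniformly via \cref{assm:composition} (choosing any $\tau_2$ so that $h_2\circ h_1$ is the empty partial function) instead of a separate appeal to \cref{assm:strictness}; also, the hypothesis $L'\supseteq L$ you mention in passing is not needed and is not assumed in the lemma.
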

\begin{proof}
  Consider $f,g \in [\State \to \State_\bot]$ and $K,L,L',M \subseteq \Var$ that satisfy the given conditions:
  \[
  {}\models \Phi(f, K, L), 
  \quad
  {}\models \Phi(g, L, M),
  \quad
  {}\models \Delta(f, K^c, L' \setminus L),
  \quad 
  \text{and}
  \quad 
  {}\models \Delta(g, L', M). 
  \]
  To prove the conclusion, pick an arbitrary $\tau \in \State[K^c]$.
  We have to show $h \in \phi_{K,M}$, where
  \begin{align*}
    h(\sigma) &=
    \begin{cases}
      \pi_{\Var, M}((g^\dagger \circ f)(\sigma \oplus \tau)) & \text{if $(g^\dagger \circ f)(\sigma \oplus \tau) \in \State$}
      \\
      \text{undefined} & \text{otherwise}.
    \end{cases}
  \end{align*}
  Observe that since ${}\models \Phi(f, K, L)$ and ${}\models \Phi(g, L, M)$, we have $h_1 \in \phi_{K,L}$ and $h_2 \in \phi_{L,M}$
  for any $\tau_1 \in \State[K^c]$ and $\tau_2 \in \State[L^c]$,
  where $h_1$ and $h_2$ are parameterised by $\tau_1$ and $\tau_2$, and defined by
  \begin{align*}
    h_1 & : \State[K] \rightharpoonup \State[L],
    &
    h_1(\sigma) & \defeq
    \begin{cases}
      \pi_{\Var, L}(f(\sigma \oplus \tau_1)) & \text{if $f(\sigma \oplus \tau_1) \in \State$}
      \\
      \text{undefined} & \text{otherwise},
    \end{cases}
    \\
    h_2 & : \State[L] \rightharpoonup \State[M],
    &
    h_2(\sigma) & \defeq
    \begin{cases}
      \pi_{\Var, M}(g(\sigma \oplus \tau_2)) & \text{if $g(\sigma \oplus \tau_2) \in \State$}
      \\
      \text{undefined} & \text{otherwise}.
    \end{cases}
  \end{align*}
  Given these, it suffices to show the claim that $h = h_2 \circ h_1$ for some $\tau_1$ and $\tau_2$:
  if the claim holds, then we have $h = h_2 \circ h_1 \in \phi_{K,M}$ by \cref{assm:composition}, since $h_1 \in \phi_{K,L}$ and $h_2 \in \phi_{L,M}$.
  We prove the claim by case analysis on $f(- \oplus \tau)$.

  \paragraph{\bf Case $f(\sigma \oplus \tau) \not\in \State$ for all $\sigma \in \State[K]$.}
  In this case, we set $\tau_1 \defeq \tau$ and pick any $\tau_2 \in \State[L^c]$.
  Then, for all $\sigma \in \State[K]$, $h(\sigma)$ and $(h_2 \circ h_1)(\sigma)$ are both undefined, as desired.
  Note that the latter term is undefined since $h_1(\sigma)$ is undefined.

  \paragraph{\bf Case $f(\sigma' \oplus \tau) \in \State$ for some $\sigma' \in \State[K]$.}
  In this case, we set $\tau_1 \defeq \tau$ and $\tau_2 \defeq \pi_{\Var, L^c}(f(\sigma' \oplus \tau))$.
  To show $h = h_2 \circ h_1$, consider any $\sigma \in \State[K]$.
  If $f(\sigma \oplus \tau) \not\in \State$, then by the same argument for the above case,
  $h(\sigma)$ and $(h_2 \circ h_1)(\sigma)$ are both undefined.
  So, assume that $f(\sigma \oplus \tau) \in \State$.
  Then, 
  \begin{align}
    \label{eq:sound-seq-phi-1}
    h(\sigma) &=
    \begin{cases}
      \pi_{\Var, M}(g(\sigma_1)) & \text{if $g(\sigma_1) \in \State$}
      \\
      \text{undefined} & \text{otherwise},
    \end{cases}
    &
    (h_2 \circ h_1)(\sigma) &=
    \begin{cases}
      \pi_{\Var, M}(g(\sigma_2)) & \text{if $g(\sigma_2) \in \State$}
      \\
      \text{undefined} & \text{otherwise},
    \end{cases}
  \end{align}
  where
  \begin{align*}
    \sigma_1 &= f(\sigma \oplus \tau) \in \State,
    &
    \sigma_2 &= \pi_{\Var,L}(f(\sigma \oplus \tau)) \oplus \pi_{\Var,L^c}(f(\sigma' \oplus \tau)) \in \State.
  \end{align*}
  Our goal is to show that $h(\sigma)$ and $(h_2 \circ h_1)(\sigma)$ are both undefined, or they are both defined and are the same.
  By ${}\models \Delta(g, L', M)$ and \cref{eq:sound-seq-phi-1}, it suffices to show $\sigma_1 \sim_{L'} \sigma_2$.
  To prove this, we show a stronger statement: $\sigma_1 \sim_L \sigma_2$ and $\sigma_1 \sim_{L' \setminus L} \sigma_2$.
  The former relation holds since $\pi_{\Var, L}(\sigma_1) = \pi_{\Var, L}(f(\sigma \oplus \tau)) = \pi_{\Var,L}(\sigma_2)$.
  The latter relation is equivalent to $f(\sigma \oplus \tau) \sim_{L' \setminus L} f(\sigma' \oplus \tau)$,
  and this holds by ${}\models \Delta(f, K^c, L' \setminus L)$ and $\sigma \oplus \tau \sim_{K^c} \sigma' \oplus \tau$.
  Hence, $h=h_2 \circ h_1$ as desired.
\end{proof}

\begin{lemma}[Conditional]
  \label{lem:sound-cond-phi}
  Let $f,f' \in [\State \to \State_\bot]$ and $K,L \subseteq \Var$. Then, for any boolean expression~$b$,
  \begin{align*}
    {}\models \Phi(f, K, L) \land {}\models \Phi(f', K, L) \land (K^c \supseteq \fv(b))
    &\implies {}\models \Phi(\scond(\db{b}, f, f'), K, L).
  \end{align*}
\end{lemma}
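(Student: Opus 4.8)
The plan is to prove Lemma~\ref{lem:sound-cond-phi} directly from the definition of $\Phi$, by a case analysis on the value of the boolean guard $b$, which is constant once we fix the inputs outside $K$. Concretely, suppose ${}\models\Phi(f,K,L)$, ${}\models\Phi(f',K,L)$, and $\fv(b)\subseteq K^c$. To show ${}\models\Phi(\scond(\db{b},f,f'),K,L)$, I would fix an arbitrary $\tau\in\State[K^c]$ and form the partial function $h:\State[K]\rightharpoonup\State[L]$ obtained by restricting $\scond(\db{b},f,f')$ to input part $\tau$ and output part $L$, i.e.\ $h(\sigma)\defeq (\pi_{\Var,L}\circ\scond(\db{b},f,f'))(\sigma\oplus\tau)$ on its natural domain. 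The goal is $h\in\phi_{K,L}$.

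The key observation is that since $\fv(b)\subseteq K^c$ and $\db{b}(\sigma\oplus\tau)$ depends only on the program-variable components of $\sigma\oplus\tau$ that lie in $\fv(b)$, and all of those are fixed by $\tau$, the value $\db{b}(\sigma\oplus\tau)$ is independent of $\sigma\in\State[K]$. (Here I would appeal to the standard fact, implicit in the semantics of expressions, that $\db{b}$ depends only on $\fv(b)$; this is the same kind of reasoning already used repeatedly in the proof of \cref{thm:soundness-delta}.) Call this constant value $\beta\in\B$. Then for every $\sigma\in\State[K]$ we have $\scond(\db{b},f,f')(\sigma\oplus\tau)=f(\sigma\oplus\tau)$ if $\beta=\strue$, and $=f'(\sigma\oplus\tau)$ if $\beta=\sfalse$. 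Hence $h$ coincides \emph{exactly} (same domain, same values) with the partial function $g$ obtained by restricting $f$ (resp.\ $f'$) to input part $\tau$ and output part $L$.

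In the case $\beta=\strue$, the hypothesis ${}\models\Phi(f,K,L)$ gives precisely that this $g$ belongs to $\phi_{K,L}$, so $h=g\in\phi_{K,L}$; in the case $\beta=\sfalse$, the hypothesis ${}\models\Phi(f',K,L)$ gives the same conclusion. Since $\tau$ was arbitrary, ${}\models\Phi(\scond(\db{b},f,f'),K,L)$ follows by definition. Note that, unlike the sequence case (\cref{lem:sound-seq-phi}), no appeal to the $\phi$-assumptions beyond the two given $\Phi$-hypotheses is needed: there is no composition or pairing, only a restriction-free identification of partial functions.

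The main (and only) obstacle is the bookkeeping needed to justify that $\db{b}(\sigma\oplus\tau)$ is genuinely independent of $\sigma$, i.e.\ making precise the claim that boolean expressions depend only on their free variables; this is routine but must be stated carefully because $\sigma\oplus\tau$ ranges over a state space indexed by all of $\Var$, and $b$ may in principle mention program variables in $K^c$ only. Once that is pinned down, matching the domains of $h$ and $g$ is immediate from the definition of $\scond$, and the rest is a two-line case split. I would present the free-variable fact either by citing the corresponding lemma used in \cref{sec:pf:thm:soundness-delta} or by a one-line induction on $b$.
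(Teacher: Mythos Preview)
Your proposal is correct and follows essentially the same approach as the paper: fix $\tau\in\State[K^c]$, observe that $\db{b}(\sigma\oplus\tau)$ is constant in $\sigma$ because $\fv(b)\subseteq K^c$, and then in each branch identify the induced partial function with the one coming from $f$ or $f'$, which lies in $\phi_{K,L}$ by hypothesis. The paper likewise makes no appeal to the structural assumptions on $\phi$ here, and treats the free-variable fact about $\db{b}$ as immediate rather than proving it separately.
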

\begin{proof}
  Let $f$, $f'$, $K$, $L$, and $b$ be the functions, sets and a boolean expression such that
  \[
      {}\models \Phi(f, K, L),
      \quad
      {}\models \Phi(f', K, L),
      \quad
      \text{and}
      \quad
      K^c \supseteq \fv(b).
  \]
  Consider $\tau \in \State[K^c]$. Define $f'' \defeq \scond(\db{b},f,f')$, and also partial functions $g$, $g'$, and $g''$ as follows:
  \begin{align*}
  g & : \State[K] \rightharpoonup \State[L],
  &
  g(\sigma) & \defeq 
  \begin{cases}
  (\pi_{\Var,L} \circ f)(\sigma \oplus \tau) & \text{if}\ f(\sigma \oplus \tau) \in \State
  \\
  \text{undefined} & \text{otherwise},
  \end{cases}
  \\
  g' & : \State[K] \rightharpoonup \State[L],
  &
  g'(\sigma) & \defeq 
  \begin{cases}
  (\pi_{\Var,L} \circ f')(\sigma \oplus \tau) & \text{if}\ f'(\sigma \oplus \tau) \in \State
  \\
  \text{undefined} & \text{otherwise},
  \end{cases}
  \\
  g'' & : \State[K] \rightharpoonup \State[L],
  &
  g''(\sigma) & \defeq 
  \begin{cases}
  (\pi_{\Var,L} \circ f)(\sigma \oplus \tau) & \text{if}\ f(\sigma \oplus \tau) \in \State
  \\
  \text{undefined} & \text{otherwise},
  \end{cases}  
  \end{align*}
  We should show $g'' \in \phi_{K,L}$. Since $K^c \supseteq \fv(b)$, either 
  $\db{b}(\sigma \oplus \tau) = \strue$ for all $\sigma \in \State[K]$ or 
  $\db{b}(\sigma \oplus \tau) = \sfalse$ for all $\sigma \in \State[K]$. In the former case,
  $g'' = g$, and in the latter case, $g'' = g'$. Since both $g$ and $g'$ are in $\phi_{K,L}$,
  we have the desired $g'' \in \phi_{K,L}$ in both cases.
\end{proof}

\begin{lemma}[Loop; base]
  \label{lem:sound-loop-base-phi}
  Let $K,L \subseteq \Var$. Then,
  \begin{align*}
    {}\models \Phi((\lambda \sigma \in \State.\,\bot), K, L).
  \end{align*}
\end{lemma}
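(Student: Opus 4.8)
The plan is to unfold the definition of the smoothness abstraction $\Phi$ and reduce the claim directly to the Strictness assumption (\cref{assm:strictness}). Fix $K, L \subseteq \Var$ and write $f_\bot \defeq (\lambda \sigma \in \State.\, \bot)$. To establish ${} \models \Phi(f_\bot, K, L)$, I must show that for every $\tau \in \State[\Var \setminus K]$ the partial function $g : \State[K] \rightharpoonup \State[L]$ determined by $\dom(g) = \{\sigma \in \State[K] \mid f_\bot(\sigma \oplus \tau) \in \State\}$ and $g(\sigma) = (\pi_{\Var, L} \circ f_\bot)(\sigma \oplus \tau)$ for $\sigma \in \dom(g)$ belongs to $\phi_{K,L}$.

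The key observation is that $f_\bot(\sigma \oplus \tau) = \bot \notin \State$ for every $\sigma$ and $\tau$, so $\dom(g) = \emptyset$; that is, $g$ is exactly the completely undefined function $\lambda \sigma \in \State[K].\, \text{undefined}$, regardless of the choice of $\tau$. Now \cref{assm:strictness} states precisely that $(\lambda \sigma \in \State[K].\, \text{undefined}) \in \phi_{K,L}$. Hence $g \in \phi_{K,L}$, and since $\tau$ was arbitrary, ${} \models \Phi(f_\bot, K, L)$, which is the claim.

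There is essentially no obstacle here: the only point requiring a moment's care is to confirm that the partial function obtained from $f_\bot$ by fixing the $\Var \setminus K$-coordinates to $\tau$ and projecting the output onto $L$ is the nowhere-defined map, which is immediate because $f_\bot$ never lands in $\State$. This lemma is just the base case that feeds into the Kleene-iteration argument for loops in the proof of \cref{thm:soundness-phi}; the substantive work (in particular the closure-under-chains reasoning, which relies on \cref{lem:sound-loop-limit-phi}) happens in that theorem's while-loop case, not here.
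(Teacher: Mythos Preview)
Your proof is correct and follows essentially the same approach as the paper: unfold the definition of $\Phi$, observe that the induced partial function $g$ is nowhere defined because $f_\bot$ never returns a state, and invoke \cref{assm:strictness}. The paper's version is just a terser rendering of exactly this argument.
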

\begin{proof}
  Consider $\tau \in \State[K^c]$. Define a partial function $g : \State[K] \rightharpoonup \State[L]$ by
  \begin{align*}
  g(\sigma) & \defeq 
  \begin{cases}
  (\pi_{\Var,L} \circ (\lambda \sigma \in \State.\bot))(\sigma) & \text{if}\  (\lambda \sigma \in \State.\bot)(\sigma) \in \State
  \\
  \text{undefined} & \text{otherwise}
  \end{cases}
  \\
  & =
  \text{undefined}.
  \end{align*}
  Then, $g \in \phi_{K,L}$ by \cref{assm:strictness}.
\end{proof}

\begin{lemma}[Loop; limit]
  \label{lem:sound-loop-limit-phi}
  Let $K,L \subseteq \Var$ and $\{f_n \in [\State \to \State_\bot]\}_{n \in \N}$ be an $\omega$-chain
  (i.e., $f_n \sqsubseteq f_{n+1}$ for all $n \in \N$).
  Here we write $f \sqsubseteq g$ if $f(\sigma) \sqsubseteq g(\sigma)$ for all $\sigma \in \State$.
  Suppose that for any $\tau \in \State[K^c]$ and $n \in \N$, the set
  \(
  \{ \sigma \in \State[K] \mid f_n(\sigma \oplus \tau) \in \State \}
  \)
  is either $\emptyset$ or $\State[K]$.
  Then,
  \begin{align*}
    \bigwedge_{n \in \N} {}\models \Phi(f_n, K, L)
    &\implies {}\models \Phi(\bigsqcup_{n \in \N} f_n, K, L).
  \end{align*}
\end{lemma}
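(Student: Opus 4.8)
\textbf{Proof proposal for \cref{lem:sound-loop-limit-phi}.}
The plan is to unfold the definition of ${}\models\Phi$ and reduce the claim to a statement about the family $\phi_{K,L}$, then exploit the special ``all-or-nothing'' domain hypothesis to show that the partial function induced by $\bigsqcup_n f_n$ is actually equal (as a partial function) to one of the partial functions induced by a single $f_n$. First I would fix an arbitrary $\tau \in \State[K^c]$, set $f_\infty \defeq \bigsqcup_{n\in\N} f_n$, and define the partial function $g_\infty : \State[K] \rightharpoonup \State[L]$ by $g_\infty(\sigma) \defeq (\pi_{\Var,L}\circ f_\infty)(\sigma\oplus\tau)$ when $f_\infty(\sigma\oplus\tau)\in\State$ and undefined otherwise; likewise define $g_n$ from $f_n$ for each $n$. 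By hypothesis each $g_n \in \phi_{K,L}$, and the goal is $g_\infty \in \phi_{K,L}$.

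The key step is to observe that $\dom(g_\infty) = \bigcup_{n\in\N}\dom(g_n)$, because $f_\infty(\xi)\in\State$ iff $f_n(\xi)\in\State$ for some $n$ (and then $f_m(\xi)=f_n(\xi)$ for all $m\ge n$ since $\{f_n\}$ is an $\omega$-chain), and moreover on this common domain $g_\infty$ agrees with whichever $g_n$ is already defined there. Now invoke the all-or-nothing hypothesis: for each $n$, $\dom(g_n)$ is either $\emptyset$ or all of $\State[K]$. Hence $\dom(g_\infty)$ is also either $\emptyset$ or all of $\State[K]$. In the empty case, $g_\infty = \lambda\sigma.\text{undefined}$, which lies in $\phi_{K,L}$ by \cref{assm:strictness}. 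In the non-empty case, there is some $N$ with $\dom(g_N)=\State[K]$; then $\dom(g_\infty)=\State[K]=\dom(g_N)$, and since $g_\infty$ and $g_N$ agree on the common domain, we get $g_\infty = g_N$, which lies in $\phi_{K,L}$ by hypothesis. Either way $g_\infty\in\phi_{K,L}$, and since $\tau$ was arbitrary, ${}\models\Phi(f_\infty,K,L)$.

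The main obstacle — really the only place needing care — is the bookkeeping that $g_\infty$ genuinely \emph{equals} $g_N$ as a partial function rather than merely extending it: one must check both that no point of $\State[K]$ lies in $\dom(g_\infty)\setminus\dom(g_N)$ (immediate once $\dom(g_N)=\State[K]$) and that the values coincide, which follows from $f_\infty(\sigma\oplus\tau)=f_N(\sigma\oplus\tau)$ for $\sigma\in\dom(g_N)$, itself a consequence of the chain condition $f_N\sqsubseteq f_\infty$ together with $f_N(\sigma\oplus\tau)\in\State$. I would also remark on why the all-or-nothing hypothesis is indispensable here: without it, $\dom(g_\infty)$ could be a strictly increasing union of proper subsets, in which case $g_\infty$ need not coincide with any single $g_n$, and indeed for smoothness properties lacking \cref{assm:admissibility} (e.g.\ $\phi^{(d')}$, Lipschitz continuity) the conclusion would fail; the hypothesis is precisely what our conservative handling of loop conditions (dropping $\fv(b)$ from the smooth set) buys us, as discussed in \cref{remark:admissibility}. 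No limit argument, no closure under increasing unions of $\phi_{K,L}$, and hence no admissibility assumption is needed.
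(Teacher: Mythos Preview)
Your proposal is correct and follows essentially the same approach as the paper: fix $\tau$, define the induced partial functions $g_n$ and $g_\infty$, use the all-or-nothing hypothesis to show $g_\infty$ coincides with some single $g_n$, and conclude. The only cosmetic difference is that in the empty-domain case the paper observes $g_\infty = g_n$ for any $n$ (both being the empty partial function, which already lies in $\phi_{K,L}$ by hypothesis) rather than invoking \cref{assm:strictness}; your appeal to strictness is unnecessary but harmless.
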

\begin{proof}
  Consider an $\omega$-chain $\{f_n \in [\State \to \State_\bot]\}_{n \in \N}$ such that ${} \models \Phi(f_n,K,L)$ for all $n$. 
  Pick an arbitrary $\tau \in \State[K^c]$. Let $f_\infty \defeq \bigsqcup_{n \in \N} f_n$, and define partial functions $g_\infty$ and $g_n$ for all $n$ as follows:
  \begin{align*}
  g_\infty & : \State[K] \rightharpoonup \State[L],
  &
  g_\infty(\sigma) & \defeq 
  \begin{cases}
  (\pi_{\Var,L} \circ f_\infty)(\sigma \oplus \tau) & \text{if}\ f_\infty(\sigma \oplus \tau) \in \State
  \\
  \text{undefined} & \text{otherwise},
  \end{cases}
  \\
  g_n & : \State[K] \rightharpoonup \State[L],
  &
  g_n(\sigma) & \defeq 
  \begin{cases}
  (\pi_{\Var,L} \circ f_n)(\sigma \oplus \tau) & \text{if}\ f_n(\sigma \oplus \tau) \in \State
  \\
  \text{undefined} & \text{otherwise}.
  \end{cases}
  \end{align*}
  Then, $\{g_n\}_{n \in \N}$ is an $\omega$-chain when we order $g_n$'s by graph inclusion, and $g_\infty$ is the least upper bound of this
  chain for the same order. We will show that $g_\infty = g_n$ for some $n \in \N$ by case analysis on 
  $\Sigma_n \defeq \{ \sigma \in \State[K] \mid f_n(\sigma \oplus \tau) \in \State \}$. Note that
  this implies the desired $g_\infty \in \phi_{K,L}$ because $g_n \in \phi_{K,L}$ for all $n \in \N$.
  If $\Sigma_n = \emptyset$ for all $n \in \N$, then $g = g_n$ for any $n$, since both $g$ and $g_n$ are the same empty partial function.
  Otherwise, by the assumption of the lemma, $\Sigma_n = \State[K]$ for some $n$. This means that $g_n$ is the total function, and so $g_{n} = g_{m}$ for all $m \geq n$, which implies that $g = g_n$, as desired.
\end{proof}
}  
\AtEndDocument{

\section{Deferred Results in \S\ref{:4:inst}}

\subsection{Proof of \cref{thm:families-satisfy-assumptions}}

\begin{proof}[Proof of \cref{thm:families-satisfy-assumptions}]
We go through the assumptions, and show that they are satisfied by $\phi^{(d)}$ and $\phi^{(l)}$.

\paragraph{\bf Case of \cref{assm:projection}}        
        Let $K,L \subseteq \Var$ such that $L \subseteq K$. The projection $\pi_{K,L}$ is total and has an open set as
        its domain. Furthermore, the projection $\pi_{K,L}$ is differentiable and $1$-Lipschitz continuous. Since Lipschitz
        continuity implies local Lipschitz continuity, we have both $\pi_{K,L} \in \phi_{K,L}^{(d)}$
        and $\pi_{K,L} \in \phi_{K,L}^{(l)}$, as desired.

\paragraph{\bf Case of \cref{assm:pairing}}
        Let $K,L_0,L_1 \subseteq \Var$ such that $L_0 \cap L_1 = \emptyset$. Consider
        $f_0,g_0 \in [\State[K] \rightharpoonup \State[L_0]]$ and
        $f_1,g_1 \in [\State[K] \rightharpoonup \State[L_1]]$ such that all of the following hold:
        \[
        f_0 \in \phi^{(d)}_{K,L_0},
        \qquad
        f_1 \in \phi^{(d)}_{K,L_1},
        \qquad
        g_0 \in \phi^{(l)}_{K,L_0},
        \qquad\text{and}\qquad
        g_1 \in \phi^{(l)}_{K,L_1}.
        \]
        Let 
        \begin{align*}
        L & \defeq L_0 \cup L_1;
        \\
        f & : \State[K] \rightharpoonup \State[L],
        &
        f(\sigma) & \defeq 
        \begin{cases}
        f_0(\sigma) \oplus f_1(\sigma) & \text{if } \sigma \in \dom(f_0) \cap \dom(f_1),
        \\
        \text{undefined} & \text{otherwise};
        \end{cases}
        \\
        g & : \State[K] \rightharpoonup \State[L],
        &
        g(\sigma) & \defeq 
        \begin{cases}
        g_0(\sigma) \oplus g_1(\sigma) & \text{if } \sigma \in \dom(g_0) \cap \dom(g_1),
        \\
        \text{undefined} & \text{otherwise}.
        \end{cases}  
        \end{align*}
        We should show that $f$ and $g$ satisfy $\phi^{(d)}_{K,L}$ and $\phi^{(l)}_{K,L}$, respectively. In both cases, $\dom(f)$ and $\dom(g)$ are the intersections of two open sets, so that they are open as required. 
        
        To prove the differentiability of $f$, consider $\sigma \in \dom(f)$. Let $h_0$ and $h_1$ be the linear functions in $[\State[K] \rightharpoonup \State[L_0]]$ and $[\State[K] \rightharpoonup \State[L_1]]$, respectively, such that
        their domains are open and contain $0$, and for all $i \in \{0,1\}$,
        \[
                \lim_{\sigma' \to 0} \frac{\norm{f_i(\sigma + \sigma') - f_i(\sigma) - h_i(\sigma')}}{\norm{\sigma'}} = 0.
        \]
        Let $h$ be the linear function in $[\State[K] \rightharpoonup \State[L]]$ defined by
        \[
        		h(\sigma) \defeq 
		\begin{cases}
		 h_0(\sigma) \oplus h_1(\sigma) & \text{if } \sigma \in \dom(h_0) \cap \dom(h_1);
		 \\
		 \text{undefined} & \text{otherwise}.
		 \end{cases}
        \]
        Then, $\dom(h)$ is open and contains $\sigma$. Furthermore, 
	\begin{align*}
	                & 
	                \lim_{\sigma' \to 0} \frac{\norm{f(\sigma + \sigma') - f(\sigma) - h(\sigma')}}{\norm{\sigma'}}
	                \\ 
	                &
	                \qquad {} =
	                \lim_{\sigma' \to 0} \frac{
	                \sqrt{\norm{f_0(\sigma + \sigma') - f_0(\sigma) - h_0(\sigma')}^2 + 
	                \norm{f_1(\sigma + \sigma') - f_1(\sigma) - h_1(\sigma')}^2}
	                }{\norm{\sigma'}} 
	                	\\ 
	                &
	                \qquad {} =
	                \lim_{\sigma' \to 0} 
	                \sqrt{
	                \left(\frac{\norm{f_0(\sigma + \sigma') - f_0(\sigma) - h_0(\sigma')}}{\norm{\sigma'}}\right)^2 +
	                \left(\frac{\norm{f_1(\sigma + \sigma') - f_1(\sigma) - h_1(\sigma')}}{\norm{\sigma'}}\right)^2}
	                \\
	                &
	                \qquad {} =
	                \sqrt{
	                \left(\lim_{\sigma' \to 0} \frac{\norm{f_0(\sigma + \sigma') - f_0(\sigma) - h_0(\sigma')}}{\norm{\sigma'}}\right)^2 +
	                \left(\lim_{\sigma' \to 0} \frac{\norm{f_1(\sigma + \sigma') - f_1(\sigma) - h_1(\sigma')}}{\norm{\sigma'}}\right)^2} 
	                \\
	                &
	                \qquad {} = 0.
	\end{align*}
	Thus, $f$ is differentiable at $\sigma$, as desired.
	
	It remains to prove the local Lipschitzness of $g$. Pick $\sigma \in \dom(g)$. Then, $g_0$ and $g_1$ are defined at $\sigma$
	and they are locally Lipschitz. Thus, there exist open sets $O_0 \subseteq \dom(g_0)$ and $O_1 \subseteq \dom(g_1)$
	and constants $B_0, B_1 > 0$ such that $\sigma$ belongs to both $O_0$ and $O_1$, and for all $\sigma_0,\sigma_0' \in O_0$ and $\sigma_1,\sigma_1' \in O_1$,
	\[
	\norm{g_0(\sigma_0) - g_0(\sigma_0')} \leq B_0 \norm{\sigma_0 - \sigma_0'}
	\quad\text{and}\quad
	\norm{g_1(\sigma_1) - g_1(\sigma_1')} \leq B_1 \norm{\sigma_1 - \sigma_1'}.
	\] 
	Let $O \defeq O_0 \cap O_1$. The set $O$ is open, and contains $\sigma$. Furthermore, for all $\sigma',\sigma'' \in O$, 
	\begin{align*}
	\norm{g(\sigma') - g(\sigma'')}
	& {} = \sqrt{\norm{g_0(\sigma') - g_0(\sigma'')}^2 + \norm{g_1(\sigma') - g_1(\sigma'')}^2}
	\\
	& {} \leq  \sqrt{B_0^2\norm{\sigma' - \sigma''}^2 + B_1^2 \norm{\sigma' -\sigma''}^2}
	\\
	& {} = \sqrt{B_0^2 + B_1^2} \cdot \norm{\sigma' - \sigma''}.
	\end{align*}
	Thus, $g$ is Lipschitz in $O$, as desired.

\paragraph{\bf Case of \cref{assm:restriction}}
        Consider $K,K',L \subseteq \Var$ with $K \subseteq K'$, 
        and $\tau \in \State[K'\setminus K]$. Let $f$ and $g$ be partial functions in $[\State[K'] \rightharpoonup \State[L]]$ such
        that $f \in \phi^{(d)}_{K',L}$ and $g \in \phi^{(l)}_{K',L}$. Let
        \begin{align*}
                f_1 & : \State[K] \rightharpoonup \State[L],
                &
                f_1(\sigma) & \defeq 
                \begin{cases}
                f(\sigma \oplus \tau) & \text{if } \sigma \oplus \tau \in \dom(f)
                \\
                \text{undefined} & \text{otherwise},
                \end{cases}
                \\
                g_1 & : \State[K] \rightharpoonup \State[L],
                &
                g_1(\sigma) & \defeq 
                \begin{cases} 
                g(\sigma \oplus \tau) & \text{if } \sigma \oplus \tau \in \dom(g)
                \\
                \text{undefined} & \text{otherwise}.
                \end{cases}
        \end{align*}
        We should show that $f_1$ and $g_1$ satisfy $\phi^{(d)}_{K,L}$ and $\phi^{(l)}_{K,L}$, respectively. Note that
        \[
                \dom(f_1) = \{\sigma \,\mid\,\sigma \oplus \tau \in \dom(f)\}
                \quad\text{and}\quad
                \dom(g_1) = \{\sigma \,\mid\,\sigma \oplus \tau \in \dom(g)\}.
        \]
        These two sets are open since $\dom(f)$ and $\dom(g)$ are open and for any open $O$, 
        the slice $\{\sigma \in \State[K] \mid \sigma \oplus \tau \in O\}$ is open. 
        Let $\sigma_0 \in \dom(f_1)$ and $\sigma_1 \in \dom(g_1)$. We will show
        that $f_1$ is differentiable at $\sigma_0$, and $g_1$ is Lipschitz
        in an open neighbourhood of $\sigma_1$.

        Since $f$ is differentiable and $\sigma_0 \oplus \tau \in \dom(f)$, 
        there exists a linear map $h : \State[K'] \rightharpoonup \State[L]$ such that $\dom(h)$ is open and contains $0$, and
        \[
                \lim_{\sigma' \to 0} \frac{\norm{f(\sigma_0 \oplus \tau + \sigma') - f(\sigma_0 \oplus \tau) - h(\sigma')}}{\norm{\sigma'}} = 0.
        \]
        Let $\tau_0 \defeq \lambda v \in K' \setminus K.\,0$, and $h_1$ be the partial function from $\State[K]$ to $\State[L]$ defined by
        \begin{align*}
                h_1(\sigma) \defeq 
                \begin{cases}
                h(\sigma \oplus \tau_0) & \text{if } \sigma \oplus \tau_0 \in \dom(h),
                \\
                \text{undefined} & \text{otherwise}.
                \end{cases}
        \end{align*}
        Then, $h_1$ is linear, its domain is open (since taking a slice of an open set in $\State[K] \cong \R^{|K'|}$ by fixing some coordinate variables gives an open set), and 
        \begin{align*}
                \lim_{\sigma'' \to 0} \frac{\norm{f_1(\sigma_0 + \sigma'') - f_1(\sigma_0) - h_1(\sigma'')}}{\norm{\sigma''}} 
                & {} =
                \lim_{\sigma'' \to 0} 
                \frac{\norm{f(\sigma_0 \oplus \tau + \sigma'' \oplus \tau_0) - 
                            f(\sigma_0 \oplus \tau) - h(\sigma'' \oplus \tau_0)}}{\norm{\sigma'' \oplus \tau_0}}
                \\
                & {} = 0.
        \end{align*}
        This means that $f_1$ is differentiable at $\sigma_0$.

        Since $g$ is locally Lipschitz and $\sigma_1 \oplus \tau \in \dom(g)$, 
        there exists an open subset $O$ of $\dom(g)$ such that $O$ contains $\sigma_1 \oplus \tau$
        and $g$ is Lipschitz in $O$, that is, there exists a real number $B > 0$ such that
        \[
                \norm{g(\sigma) - g(\sigma')} \leq B \cdot \norm{\sigma - \sigma'}
        \]
        for all $\sigma,\sigma' \in O$. Let
        \[
                O' \defeq \{\sigma \in \State[K] \,\mid\, \sigma \oplus \tau \in O\}.
        \]
        Then, $O'$ is open, and it contains $\sigma_1$. Furthermore, for all $\sigma,\sigma' \in O'$,
        \begin{align*}
                \norm{g_1(\sigma) - g_1(\sigma')} 
                = \norm{g(\sigma \oplus \tau) - g(\sigma' \oplus \tau)} 
                \leq B \cdot \norm{\sigma \oplus \tau - \sigma' \oplus \tau}
                = B \cdot \norm{\sigma - \sigma'}.
        \end{align*}
        Thus, $g_1$ is Lipschitz in $O'$, as desired.

\paragraph{\bf Case of \cref{assm:composition}}
        For the composition condition, we handle the differentiability case only. The other case can be proved similarly.
        Consider 
        \[
                K,L,M \subseteq \Var,\quad
                f \in [\State[K] \rightharpoonup \State[L]],\quad\text{and}\quad
                g \in [\State[L] \rightharpoonup \State[M]].
        \]
        Assume that $f \in \phi^{(d)}_{K,L}$ and $g \in \phi^{(d)}_{L,M}$. 
        Let $h$ be the standard composition of partial functions $g$ and $f$.
        We should show that $h \in \phi^{(d)}_{K,M}$ as well, that is, $\dom(h)$ is open and $h$
        is differentiable on its domain. Note that
        \[
                \dom(h) = \dom(f) \cap f^{-1}(\dom(g)).
        \]
        Since $g \in \phi^{(d)}_{L,M}$, the set $\dom(g)$ is open. Because  
        $f \in \phi^{(d)}_{K,L}$, $\dom(f)$ is open and $f$ is continuous on its domain. The latter implies
        that $f^{-1}(\dom(g))$ is open as well.  Thus, the intersection of $\dom(f)$ and 
        $f^{-1}(\dom(g))$ is open as desired. The differentiability of $h$ on its domain
        holds since the restriction of $f$ to $\dom(h)$ gives a differentiable total function from $\dom(h)$ to
        $\dom(g)$, that of $g$ to $\dom(g)$ is also a differentiable total function, and
        the composition of two differentiable functions is differentiable.
        
\paragraph{\bf Case of \cref{assm:strictness}}
        The empty set is open, and the empty function is jointly differentiable and locally Lipschitz continuous. 
        Thus, the strictness assumption holds for both predicate families.
\end{proof}
}  
\AtEndDocument{

\section{Deferred Results in \S\ref{sec:var-sel-alg}}

\subsection{Proof of \cref{thm:soundness-reparam}}

\begin{proof}[Proof of \cref{thm:soundness-reparam}]
  Suppose that the algorithm returns $\pi$ (without an error message).
  We should show the conclusion that $\pi$ is simple and satisfies \cref{cond:dens-diff,cond:val-diff}.

  We make several observations before proving the conclusion.
  Let 
  \begin{align*}
  ({\mathbb{p}_m}, {\mathbb{d}_m}, {\mathbb{V}_m}) & \defeq \cdb{c_m},
  &
  ({\mathbb{p}_g}, {\mathbb{d}_g}, {\mathbb{V}_g}) & \defeq \cdb{c_g},
  &
  (\overline{\mathbb{p}_g}, \overline{\mathbb{d}_g}, \overline{\mathbb{V}_g}) & \defeq \cdb{\ctr{c_g}{\pi}},
  \end{align*}
  and $K \subseteq \Var$ be the set defined in \cref{eq:var-sel-alg-check1}. Also, let $S_r$ be the set of names that the algorithm uses to construct the returned reparameterisation plan $\pi$.
  Then, by the algorithm, we have the two inclusions in \cref{eq:var-sel-alg-check1} and \cref{eq:var-sel-alg-check2},
  and also $\pi = \pi_0[S_r]$. In addition, $S_r \subseteq K$ since
  \[
  S_r \subseteq \{(\alpha, i) \in \Name \mid \text{for all $i' \in \N$, $(\alpha,i') \in \Name \implies (\alpha,i') \in K$}\} \subseteq K.
  \]

  We now prove the conclusion in three parts.
  
  \paragraph{\bf First part:} We show that $\pi$ is simple. To show this, consider $(n,d,l), (n',d',l') \in \NameEx \times \DistEx \times \LamEx$
  such that $n = \cname(\alpha,e)$ and $n'=\cname(\alpha,e')$ for some $\alpha \in \Str$, $e$, and $e'$.
  Suppose that $(n,d,l) \in \dom(\pi)$.
  We should show $(n',d',l') \in \dom(\pi)$.
  Since $(n,d,l) \in \dom(\pi) = \dom(\pi_0[S_r])$, we have $(n,d,l) \in \dom(\pi_0)$ and $(\alpha, \_) \in S_r$.
  This implies that $(n',d',l') \in \dom(\pi_0)$ because $\pi_0$ is simple.
  Since $n'=\cname(\alpha,\_)$ and $(\alpha, \_) \in S_r$, we have $(n',d',l') \in \dom(\pi)$ as desired.

  \paragraph{\bf Second part:} We show that $\pi$ satisfies \cref{cond:dens-diff} in three steps.

  {\vspace{1mm} \it First step:} We prove $\theta \cup \repname(\pi) \subseteq K$.
  To do so, observe that the $S$ in the algorithm always satisfies the following property:
  \begin{align}
    \label{eq:var-sel-alg-I-invariant}
    (\alpha,i) \in S \implies (\alpha,i') \in S \quad\text{for any $(\alpha,i), (\alpha,i') \in \Name$}.
  \end{align}
  We can prove this by induction:
  the initial $S$
  (i.e., $S = \{(\alpha, i) \in \Name \mid {}$for all $i' \in \N$, $(\alpha,i') \in \Name \implies$ $(\alpha,i') \in K \}$) satisfies the property,
  and each update of $S$ (i.e., $S \leftarrow S \setminus \{(\alpha,i) \in \Name\}$ for some $(\alpha,\_) \in S$) preserves the property.
  From this, we obtain $\repname(\pi) = \repname(\pi_0) \cap S$:
  \begin{align*}
    \repname(\pi)
    &= \repname(\pi_0[S])
    \\
    &= \{(\alpha,i) \in \Name \mid \exists e,d,l.\, (\cname(\alpha,e),d,l) \in \dom(\pi_0[S])\}
    \\
    &= \big\{(\alpha,i) \in \Name \mid \exists e,d,l.\, \big((\cname(\alpha,e),d,l) \in \dom(\pi_0) \land \exists i'.\, (\alpha, i') \in S\big) \big\}
    \\
    &= \big\{(\alpha,i) \in \Name \mid \big(\exists e,d,l.\, \big((\cname(\alpha,e),d,l) \in \dom(\pi_0)\big) \land \big(\exists i'.\, (\alpha, i') \in S\big) \big\}
    \\
    &= \{(\alpha,i) \in \Name \mid (\cname(\alpha,\_),\_,\_) \in \dom(\pi_0)\} \cap \{(\alpha,i) \in \Name \mid (\alpha, \_) \in S \}
    \\
    &= \repname(\pi_0) \cap S,
  \end{align*}
  where the second and third equalities use the definitions of $\repname(-)$ and $\pi_0[S]$, respectively,
  and the last equality uses \cref{eq:var-sel-alg-I-invariant}.
  Since $\repname(\pi) \subseteq S \subseteq K$ and $\theta \subseteq K$ by \cref{eq:var-sel-alg-check1} (the inclusion of $\theta$),
  we get $\theta \cup \repname(\pi) \subseteq K$ as desired.

  {\vspace{1mm} \it Second step:} We prove that for all $u \in \{\like\} \cup \{\pr_\mu \mid \mu \in \Name\}$ and $v \in \{\pr_\mu \mid \mu \in \Name\}$,
  the following functions (which are total since $c_m$ and $c_g$ always terminate) are differentiable with respect to the variables in $\theta \cup \repname(\pi)$ jointly:
  \begin{align}
    \label{eq:var-sel-alg-dens}
    \begin{array}{r@{}l}
      (\sigma_\theta, \sigma_n) &{} \in \State[\theta] \times \State[\Name] \longmapsto \db{c_m}(\sigma_{p\setminus \theta} \oplus \sigma_\theta \oplus \sigma_n \oplus g(\sigma_n))(u),
      \\[0.3em]
      (\sigma_\theta, \sigma_n) &{} \in \State[\theta] \times \State[\Name] \longmapsto \db{c_g}(\sigma_{p\setminus \theta} \oplus \sigma_\theta \oplus \sigma_n \oplus g(\sigma_n))(v),
    \end{array}
  \end{align}
  where $\sigma_{p\setminus \theta} \defeq (\lambda v \in \PVar \setminus \theta.\, 0)$ and the function $g : \State[\Name] \to \State[\AVar]$ takes $\sigma_n$ and returns $\sigma_a$ such that $\sigma_a$ maps $\like$ to $1$, $\pr_\mu$ to $\cN(\sigma_n(\mu); 0,1)$, $\val_\mu$ to $\sigma_n(\mu)$, and all the other variables to $0$. The state $\sigma_{p\setminus \theta} \oplus g(\sigma_n)$ is the very initialisation used in \cref{eq:density-of-c}.
  This step consists of two substeps.

  {First substep:}
  We first show that
  for all $u \in \{\like\} \cup \{\pr_\mu \mid \mu \in \Name\}$ and $v \in \{\pr_\mu \mid \mu \in \Name\}$,
  the functions $f_m, f_g : \State[\theta] \times \State[\Name] \times \State[\AVar] \to \R$
  are differentiable with respect to the variables in $\theta \cup \repname(\pi)$ jointly:
  \begin{align}
    \label{eq:var-sel-alg-dens-fixed}
    \begin{array}{r@{}l}
      f_m(\sigma_\theta, \sigma_n, \sigma_a) \defeq \db{c_m}(\sigma_{p\setminus \theta} \oplus \sigma_\theta \oplus \sigma_n \oplus \sigma_{a})(u),
      \\[0.3em]
      f_g(\sigma_\theta, \sigma_n, \sigma_a) \defeq \db{c_g}(\sigma_{p\setminus \theta} \oplus \sigma_\theta \oplus \sigma_n \oplus \sigma_{a})(v).
    \end{array}
  \end{align}
  Note that in $f_m$ and $f_g$, the $\AVar$ part does not depend on the $\Name$ part (unlike in \cref{eq:var-sel-alg-dens}).
  For the proof, pick arbitrary $u \in \{\like\} \cup \{\pr_\mu \mid \mu \in \Name\}$ and $v \in \{\pr_\mu \mid \mu \in \Name\}$.
  Then, $\theta \cup \repname(\pi) \subseteq K \subseteq \mathbb{p}_m(u) \cap \mathbb{p}_g(v)$,
  where the first inclusion is from the above result and the second from \cref{eq:var-sel-alg-check1} (the definition of $K$).
  By the soundness of differentiability analysis (\cref{thm:soundness-analysis}), 
  ${}\models \Phi(\db{c_m}, \mathbb{p}_m(u), \{u\})$ and ${}\models \Phi(\db{c_g}, \mathbb{p}_g(v), \{v\})$.
  From this, and by the weakening lemma of $\Phi$ with $\theta \cup \repname(\pi) \subseteq \mathbb{p}_m(u) \cap \mathbb{p}_g(v)$
  (\cref{lem:weakening-phi}), we have
  ${}\models \Phi(\db{c_m}, \theta \cup \repname(\pi), \{u\})$ and
  ${}\models \Phi(\db{c_g}, \theta \cup \repname(\pi), \{v\})$.
  Hence, the functions in \cref{eq:var-sel-alg-dens-fixed} are differentiable with respect to $\theta \cup \repname(\pi)$ jointly as desired,
  by the definition of $\Phi$ (\cref{:1:abs}) and
  the definition of ``$f : \State[L] \to \R$ for $L \subseteq \Var$ is differentiable with respect to $L'\subseteq L$ jointly''
  (\cref{sec:grad-estm-prog-trans}).

  {Second substep:}
  We now prove that the claim of the second step follows from the first substep just proved.
  Pick any $u \in \{\like\} \cup \{\pr_\mu \mid \mu \in \Name\}$.
  We should show that the first function in \cref{eq:var-sel-alg-dens} is differentiable with respect to $\theta \cup \repname(\pi)$.
  Note that we should also show the same for the second function (for any $v$), but the proof is similar to the first function so we omit this case.
  To prove the claim for the first function,
  pick any $\xi_{n,0}' \in \State[\Name \setminus \repname(\pi)]$ and $\sigma_{a,0} \in \State[\AVar]$.
  Define $f' : \State[\theta] \times \State[\repname(\pi)] \times \State[\AVar]$ as
  \begin{align*}
    f'(\sigma_\theta, \xi_n, \sigma_a) \defeq f(\sigma_\theta, \xi_n \oplus \xi_{n,0}', \sigma_a).
  \end{align*}
  Then, by \cref{lem:used-properties}-\cref{lem:used-properties-2} and \cref{lem:used-properties}-\cref{lem:used-properties-3},
  \begin{align*}
    f'(\sigma_\theta, \xi_n, \sigma_a) &\defeq
    \begin{cases}
      f(\sigma_\theta, \xi_n \oplus \xi_{n,0}', \sigma_{a,0})
      & \text{if $(\sigma_\theta, \xi_n) \in U$}
      \\
      \smath{proj}(\sigma_a)
      & \text{if $(\sigma_\theta, \xi_n) \notin U$}
    \end{cases}
  \end{align*}
  for some $U \subseteq \State[\theta] \times \State[\repname(\pi)]$ and some projection map $\smath{proj} : \State[\AVar] \to \R$.
  Also, since $f$ is differentiable with respect to $\theta \cup \repname(\pi)$,
  $f'(-,-,\sigma_a) : \State[\theta] \times \State[\repname(\pi)]$ is differentiable and thus continuous for all $\sigma_a \in \State[\AVar]$.
  From these, \cref{lem:split-func-cont} is applicable to $f'$,
  implying that $U$ should be either $\emptyset$ or $\State[\theta] \times \State[\repname(\pi)]$.
  We now consider $f'' : \State[\theta] \times \State[\repname(\pi)] \to \R$ defined by
  \begin{align*}
    f''(\sigma_\theta, \xi_n) \defeq f'(\sigma_\theta, \xi_n, g(\xi_n) \oplus g(\xi_{n,0}')),
  \end{align*}
  where $g$ is extended to accept a substate in $\Statesub[\Name]$ and return a substate for the corresponding auxiliary part.
  Then, to prove the claim, it suffices to show that $f''$ is differentiable (since $\xi_{n,0}'$ was chosen arbitrarily).
  We do case analysis on $U$.
  If $U = \State[\theta] \times \State[\repname(\pi)]$, then
  \[f''(\sigma_\theta,  \xi_n) = f(\sigma_\theta, \xi_n \oplus \xi_{n,0}', \sigma_{a,0}) \quad\text{for all $\sigma_\theta$ and $\xi_n$};\]
  since $f$ is differentiable with respect to $\theta \cup \repname(\pi)$, $f''$ is differentiable.
  If $U = \emptyset$, then
  \[f''(\sigma_\theta,  \xi_n) = \smath{proj}(g(\xi_n) \oplus g(\xi_{n,0}')) \quad\text{for all $\sigma_\theta$ and $\xi_n$};\]
  since $g$, $\oplus$, and $\smath{proj}$ are all differentiable (because $g$ only uses projection and the density of the standard normal distribution),
  $f''$ is differentiable.
  Hence, $f''$ is differentiable in both cases, and this shows the claim of the second step.
  
  {\vspace{1mm} \it Third step:} We prove that $\pi$ satisfies \cref{cond:dens-diff},
  i.e., the functions in \cref{cond:dens-diff} are differentiable with respect to $\theta \cup \repname(\pi)$ jointly.
  This holds because:
  $c_m$ and $c_g$ do not have a double-sampling error, so
  each function in \cref{cond:dens-diff} is a multiplication of some of the functions in \cref{eq:var-sel-alg-dens} (for different $u$ and $v$);
  the functions in \cref{eq:var-sel-alg-dens} are differentiable with respect to  $\theta \cup \repname(\pi)$ jointly (by the above result);
  and multiplication preserves differentiability.
  
  \paragraph{\bf Third part:} We show that $\pi$ satisfies \cref{cond:val-diff},
  i.e., the functions in \cref{cond:val-diff} are differentiable with respect to $\theta$ jointly.
  For this, it suffices to show the claim that for all
  $v \in { \{\pr_\mu, \val_\mu \mid \mu \in \Name\} }$
  and $\sigma_n \in \State[\Name]$,
  the following function is differentiable with respect to $\theta$ jointly:
  \begin{align}
    \label{eq:var-sel-alg-val}
    \sigma_\theta &{} \in \State[\theta] \longmapsto
    \db{\ctr{c_g}{\pi}}(\sigma_{p \setminus \theta} \oplus \sigma_\theta \oplus \sigma_n \oplus g(\sigma_n))(v).
  \end{align}
  This implies \cref{cond:val-diff} because:
  $\ctr{c_g}{\pi}$ does not have a double-sampling error, so
  each function in \cref{cond:dens-diff} is either a multiplication or a pairing of the function in \cref{eq:var-sel-alg-val} (for different $v$);
  and multiplication and pairing preserve differentiability.
  To show the claim, consider any
  $v \in { \{\pr_\mu, \val_\mu \mid \mu \in \Name\} }$.
  Then, $\theta \subseteq \overline{\mathbb{p}_g}(v)$ by \cref{eq:var-sel-alg-check2}.
  By the soundness of differentiability analysis (\cref{thm:soundness-analysis}), 
  ${}\models \Phi(\db{\ctr{c_g}{\pi}}, \overline{\mathbb{p}_g}(v), \{v\})$.
  From this, and by the weakening lemma of $\Phi$ with  $\theta \subseteq \overline{\mathbb{p}_g}(v)$ (\cref{lem:weakening-phi}), we have
  ${}\models \Phi(\db{\ctr{c_g}{\pi}}, \theta, \{v\})$.
  Hence, the function in \cref{eq:var-sel-alg-val} is differentiable with respect to $\theta$ jointly.
  This completes the overall proof.
\end{proof}

\begin{lemma}
  \label{lem:split-func-cont}
  Let $f : \R^n \times \R^m \to \R$ be a function such that
  \begin{align*}
    f(x,y) &=
    \begin{cases}
      f_1(x) & \text{if $x \in U$}
      \\
      f_2(y) & \text{if $x \notin U$}
    \end{cases}
  \end{align*}
  for some $f_1 : \R^n \to \R$, $f_2 : \R^m \to \R$, and $U \subseteq R^n$.
  Suppose that $f_2(\R^m) = \R$ and $f(-,y) : \R^n \to \R$ is continuous for all $y \in \R^m$.
  Then, $U$ is either $\emptyset$ or $\R^n$.
\end{lemma}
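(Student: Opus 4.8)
\textbf{Proof proposal for \cref{lem:split-func-cont}.}
The plan is to argue by contradiction. Suppose $U$ is neither $\emptyset$ nor all of $\R^n$; I will derive that $f_2$ is constant, which contradicts $f_2(\R^m) = \R$ (surjectivity onto $\R$ forces $f_2$ to take at least two distinct values).

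First I would record a purely topological fact: since $\R^n$ is connected, any subset $U \subseteq \R^n$ that is nonempty and a proper subset must have nonempty boundary $\partial U = \overline{U} \cap \overline{U^c}$. Indeed, if $\partial U$ were empty then $\overline{U}$ and $\overline{U^c}$ would be disjoint nonempty closed sets whose union is $\R^n$, contradicting connectedness. Fix a point $x_0 \in \partial U$. By the definition of the boundary, every neighbourhood of $x_0$ meets both $U$ and $U^c$, so there exist sequences $a_k \to x_0$ with $a_k \in U$ and $b_k \to x_0$ with $b_k \in U^c$.

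Next I would split on whether $x_0 \in U$ or $x_0 \in U^c$, using for each fixed $y \in \R^m$ the continuity of $f(-, y) : \R^n \to \R$. If $x_0 \in U$, then $f(x_0, y) = f_1(x_0)$, whereas $f(b_k, y) = f_2(y)$ for every $k$; continuity gives $f_2(y) = \lim_{k} f(b_k, y) = f(x_0, y) = f_1(x_0)$, so $f_2$ is the constant $f_1(x_0)$. If instead $x_0 \in U^c$, then $f(x_0, y) = f_2(y)$, whereas $f(a_k, y) = f_1(a_k)$; continuity gives $f_2(y) = \lim_{k} f_1(a_k)$, and since the right-hand side does not depend on $y$, $f_2$ is again constant. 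Either way $f_2$ is constant, contradicting the hypothesis. Hence $U \in \{\emptyset, \R^n\}$.

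I expect the only point needing a little care to be the topological observation that a nonempty proper subset of a connected space has nonempty boundary, together with the extraction of the two approximating sequences from a boundary point; everything else is a routine application of sequential continuity, with no genuine computation involved.
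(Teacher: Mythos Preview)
Your proof is correct and follows essentially the same approach as the paper's: argue by contradiction, pick a boundary point $x_0$ of $U$ (nonempty by connectedness of $\R^n$), and split on whether $x_0 \in U$. The only cosmetic difference is that the paper phrases the contradiction as ``there exists $y$ making $f(-,y)$ discontinuous at $x_0$'' (using $f_2(\R^m)=\R$ to choose such a $y$), whereas you equivalently deduce that continuity for \emph{all} $y$ forces $f_2$ to be constant; these are contrapositives of one another.
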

\begin{proof}
  Here is a sketch of the proof. We prove the lemma by contradiction.
  Suppose that $U$ is neither $\emptyset$ nor $\R^n$.
  Then, the boundary of $U$ (i.e., $\mathrm{bd}(U) \subseteq \R^n$) is nonempty,
  since the boundary of a set is empty if and only if the set is both open and closed,
  and since $\emptyset$ and $\R^n$ are the only subsets of $\R^n$ that are both open and closed.
  Let $x \in \mathrm{bd}(U)$ and consider two cases: $x \in U$ or $x \notin U$.
  In each of the two cases, we can show that there exists $y \in \R^m$ such that $f(-,y)$ is not continuous at $x$.
  When showing the discontinuity, we use the following:
  $x \in \mathrm{bd}(U)$; the specific way that $f$ is defined (in terms of $U$, $f_1$, and $f_2$); and
  the assumption that $f_2(\R^m) = \R$.
  By assumption, $f(-,y)$ should be continuous over the entire $\R^n$, so we get contradiction.
\end{proof}
}  
\AtEndDocument{

\section{Deferred Results in \S\ref{sec:impl-eval}}
\label{sec:impl-eval-more}

\subsection{Deferred Experiment Details and Results}

\begin{table*}[!ht]
  \centering
  \renewcommand{\arraystretch}{.90}
  \aboverulesep=0.2ex
  \belowrulesep=0.2ex
  \small
  \caption{%
    Pyro examples used in experiments and their key features (continued from \cref{t:examples}).
    The last five columns show the total number of code lines (excluding comments),
    loops, sample commands, observe commands, and learnable parameters
    (declared explicitly by \mytt{pyro.param} or implicitly by a neural network module).
    Each number is the sum of the counts in the model and guide.
  }
  \label{t:examples-more}
  \vspace{-0.5em}
  \begin{tabular}{llrrrrr}
    \toprule
    Name & Probabilistic model & LoC
    & $\cwhile$ & $\csample$ & $\cobs$ & \mytt{param}
    \\ \midrule
    \mytt{dpmm} & Dirichlet process mixture models
    & 27 
    & 0 & 6 & 1 & 4 
    \\
    \mytt{vae} & Variational autoencoder (VAE)
    & 35 
    & 0 
    &   2 & 1 & 5 
    \\
    \mytt{csis} & Inference compilation
    & 38 
    & 0 
    &   2 & 2 & 5 
    \\
    \mytt{br} & Bayesian regression
    & 42 
    & 0 
    &  10 & 1 & 5 
    \\
    \mytt{lda} & Amortised latent Dirichlet allocation
    & 57 
    & 0 
    &   8 & 1 & 5 
    \\
    \mytt{prodlda} & Probabilistic topic modelling
    & 58 
    & 0 & 2 & 1 & 5
    \\
    \mytt{ssvae} & Semi-supervised VAE
    & 60 
    & 0 
    & 4 & 1 & 7 
    \\ \bottomrule
  \end{tabular}
  \vspace{-0.5em}
\end{table*}

\begin{table*}[!ht]
  \centering
  \renewcommand{\arraystretch}{.90}
  \aboverulesep=0.2ex
  \belowrulesep=0.2ex
  \small
  \caption{%
    Results of smoothness analyses (continued from \cref{tab:analysis-result}).
    ``Manual'' and ``Ours'' denote the number of continuous random variables and learnable parameters
    in which the density of the program is smooth, computed by hand and by our analyser.
    ``Time'' denotes the runtime of our analyser in seconds.
    ``\#CRP'' denotes the total number of continuous random variables and learnable parameters in the program.
    \mytt{-m} and \mytt{-g} denote model and guide.
    We consider $\{(\alpha,i) \in \Name\}$ as one random variable for each $\alpha \in \Name$.
  }
  \label{tab:analysis-result-more}
  \vspace{-0.5em}
  \begin{tabular}{lrrrrrrr}
    \toprule
    & \multicolumn{3}{c}{Differentiable}
    & \multicolumn{3}{c}{Locally Lipschitz}
    \\ \cmidrule(lr){2-4} \cmidrule(lr){5-7}
    \multicolumn{1}{l}{Name}
    & \multicolumn{1}{r}{Manual}
    & \multicolumn{1}{r}{Ours}
    & \multicolumn{1}{l}{Time}
    & \multicolumn{1}{r}{Manual}
    & \multicolumn{1}{r}{Ours}
    & \multicolumn{1}{l}{Time}
    & \multicolumn{1}{r}{\#CRP}
    \\ \midrule
    \mytt{dpmm-m}    &  2 &  2 & 0.002 &  2 &  2 & 0.002 &  2 
    \\
    \mytt{dpmm-g}    &  6 &  6 & 0.003 &  6 &  6 & 0.003 &  6 
    \\
    \mytt{vae-m}     &  3 &  3 & 0.002 &  3 &  3 & 0.003 &  3 
    \\
    \mytt{vae-g}     &  4 &  4 & 0.002 &  4 &  4 & 0.002 &  4 
    \\
    \mytt{csis-m}    &  1 &  1 & 0.001 &  1 &  1 & 0.001 &  1 
    \\
    \mytt{csis-g}    &  \hltorg{2} &  \hltorg{2} & 0.004 &  6 &  6 & 0.004 &  6 
    \\
    \mytt{br-m}      &  5 &  5 & 0.002 &  5 &  5 & 0.002 &  5 
    \\
    \mytt{br-g}      & 10 & 10 & 0.004 & 10 & 10 & 0.004 & 10 
    \\
    \mytt{lda-m}     &  3 &  3 & 0.002 &  3 &  3 & 0.002 &  3 
    \\
    \mytt{lda-g}     &  7 &  7 & 0.007 &  7 &  7 & 0.007 &  7 
    \\
    \mytt{prodlda-m} &  2 &  2 & 0.008 &  2 &  2 & 0.007 &  2 
    \\
    \mytt{prodlda-g} &  5 &  5 & 0.007 &  5 &  5 & 0.006 &  5 
    \\
    \mytt{ssvae-m}   &  3 &  3 & 0.004 &  3 &  3 & 0.003 &  3 
    \\
    \mytt{ssvae-g}   &  6 &  6 & 0.007 &  6 &  6 & 0.009 &  6 
    \\ \bottomrule
  \end{tabular}
  \vspace{1.5em}
\end{table*}

\begin{table*}[!ht]
  \centering
  \renewcommand{\arraystretch}{.90}
  \aboverulesep=0.2ex
  \belowrulesep=0.2ex
  \small
  \caption{%
    Results of variable selections (continued from \cref{tab:reparam-result}).
    ``Ours-Time'' denote the runtime of our variable selector in seconds.
    ``Ours-Sound'' and ``Pyro $\setminus$ Ours'' denote the number of random variables in the example
    that are in $\pi_\ours$, and that are in $\pi_0$ but not in $\pi_\ours$, respectively,
    where $\pi_\ours$ and $\pi_0$ denote the reparameterisation plans given by our variable selector and by Pyro.
    ``Pyro $\setminus$ Ours'' is partitioned into ``Sound'' and ``Unsound'':
    the latter denotes the number of random variables that make \cref{cond:dens-lips} or \cref{cond:val-lips} violated when added to $\pi_\ours$,
    and the former denotes the number of the rest.
    ``\#CR'' and ``\#DR'' denote the total number of continuous and discrete random variables in the example.
    We consider $\{(\alpha,i) \in \Name\}$ as one random variable for each $\alpha \in \Name$.
  }
  \label{tab:reparam-result-more}
  \vspace{-0.5em}
  \begin{tabular}{lrrrrr|r}
    \toprule
    & \multicolumn{2}{c}{Ours} & \multicolumn{2}{c}{Pyro $\setminus$ Ours}
    & \multicolumn{1}{c|}{}
    \\ \cmidrule(lr){2-3} \cmidrule(lr){4-5}
    Name
    & Time
    & Sound
    & Sound
    & Unsound
    & \#CR
    & \#DR
    \\ \midrule
    \mytt{dpmm}    & 0.007 & 2 & 0 & 0 & 2 & 1 
    \\
    \mytt{vae}     & 0.004 & 1 & 0 & 0 & 1 & 0 
    \\
    \mytt{csis}    & 0.014 & 1 & 0 & 0 & 1 & 0 
    \\
    \mytt{br}      & 0.009 & 5 & 0 & 0 & 5 & 0 
    \\
    \mytt{lda}     & 0.011 & 3 & 0 & 0 & 3 & 1 
    \\
    \mytt{prodlda} & 0.018 & 1 & 0 & 0 & 1 & 0 
    \\
    \mytt{ssvae}   & 0.013 & 1 & 0 & 0 & 1 & 1 
    \\ \bottomrule
  \end{tabular}
  \vspace{-0.5em}
\end{table*}
}   


\end{document}